\documentclass[12pt]{article} 



\tolerance=1000


\usepackage{amssymb,amsfonts,amsmath,latexsym,amsthm}
\usepackage[usenames]{color}
\usepackage[]{graphicx}
\usepackage[space]{grffile}
\usepackage{mathrsfs}   
\usepackage{caption}
\usepackage{subcaption}
\usepackage{setspace}
\usepackage{verbatim}
\usepackage{url}
\usepackage{bm}
\usepackage{extarrows}
\usepackage{enumitem}
\usepackage{dsfont}
\usepackage{epstopdf}
\usepackage[colorlinks,citecolor=black,linkcolor=black,urlcolor=black]{hyperref}
\usepackage[authoryear,round]{natbib}
\usepackage{tikz}
\usetikzlibrary{fit}					
\usetikzlibrary{backgrounds}	

\renewcommand*{\arraystretch}{0.85}  
\captionsetup{skip=2pt}  

\captionsetup{font=small}



\theoremstyle{plain}
\newtheorem{theorem}{Theorem}[section]

\newtheorem{proposition}[theorem]{Proposition}
\newtheorem{condition}[theorem]{Condition}

\theoremstyle{definition}

\numberwithin{equation}{section}


\renewcommand{\L}{\mathcal{L}}
\newcommand{\ps}{\texttt{\small\textup{+}}}
\newcommand{\bc}{\bm{:}}
\newcommand{\Id}{\mathrm{I}}
\renewcommand{\tt}{\texttt}
\newcommand{\Diag}{\mathrm{Diag}}


\DeclareMathOperator*{\Bernoulli}{Bernoulli}

\DeclareMathOperator*{\Ga}{Gamma}
\DeclareMathOperator*{\Poisson}{Poisson}
\DeclareMathOperator*{\NegBin}{NegBin}
\DeclareMathOperator*{\LogNormal}{LogNormal}
\DeclareMathOperator*{\LNP}{LNP}
\DeclareMathOperator*{\Geometric}{Geometric}

\DeclareMathOperator*{\Var}{Var}

\DeclareMathOperator*{\diag}{diag}

\newcommand{\T}{\mathtt{T}}
\newcommand{\R}{\mathbb{R}}
\newcommand{\Z}{\mathbb{Z}}
\newcommand{\E}{\mathbb{E}}
\renewcommand{\Pr}{\mathbb{P}}
\newcommand{\I}{\mathds{1}}

\newcommand{\N}{\mathcal{N}}



\newcommand{\branch}[4]{
\left\{
	\begin{array}{ll}
		#1  & \mbox{if } #2 \\
		#3 & \mbox{if } #4
	\end{array}
\right.
}

\def\app#1#2{%
  \mathrel{%
    \setbox0=\hbox{$#1\sim$}%
    \setbox2=\hbox{%
      \rlap{\hbox{$#1\propto$}}%
      \lower1.3\ht0\box0%
    }%
    \raise0.25\ht2\box2%
  }%
}

\graphicspath{{figures/}}

\usepackage[authoryear,round]{natbib}

\allowdisplaybreaks

\newcommand{\blind}{0}

\addtolength{\oddsidemargin}{-.5in}%
\addtolength{\evensidemargin}{-.5in}%
\addtolength{\textwidth}{1in}%
\addtolength{\textheight}{1.3in}%
\addtolength{\topmargin}{-.8in}%

\begin{document}

\def\spacingset#1{\renewcommand{\baselinestretch}%
{#1}\small\normalsize} \spacingset{1}


\if0\blind
{
  \title{\bf Inference in generalized bilinear models}
  \author{Jeffrey W. Miller\thanks{
	J.W.M.\ was supported by the Department of Defense grant W81XWH-18-1-0357,
    the National Institutes of Health grant R01GM083084, and
	the Zhu Family Center for Global Cancer Prevention.
	S.L.C.\ was supported by National Cancer Institute grant 1R01CA227156-01, the Department of Defense grant W81XWH-18-1-0357, the American Brain Tumor Association, the Wong Family Awards in Translational Oncology, and the LUNGstrong Fund.
	S.L.C.\ is also affiliated with the Department of Biostatistics at Harvard University and the Broad Institute of MIT and Harvard.
	}\hspace{.2cm}\\
    Department of Biostatistics, Harvard University \\
    and \\
    Scott L. Carter \\
    Division of Computational Biology, Dana--Farber Cancer Institute}
  \maketitle
} \fi

\if1\blind
{
  \bigskip
  \bigskip
  \bigskip
  \begin{center}
    {\LARGE\bf Inference in generalized bilinear models}
\end{center}
  \medskip
} \fi

\bigskip
\begin{abstract}
Latent factor models are widely used to discover and adjust for hidden variation in modern applications.  However, most methods do not fully account for uncertainty in the latent factors, which can lead to miscalibrated inferences such as overconfident p-values.  In this article, we develop a fast and accurate method of uncertainty quantification in generalized bilinear models, which are a flexible extension of generalized linear models to include latent factors as well as row covariates, column covariates, and interactions.  In particular, we introduce delta propagation, a general technique for propagating uncertainty among model components using the delta method.  Further, we provide a rapidly converging algorithm for \textit{maximum a posteriori} GBM estimation that extends earlier methods by estimating row and column dispersions.  In simulation studies, we find that our method provides approximately correct frequentist coverage of most parameters of interest.  We demonstrate on RNA-seq gene expression analysis and copy ratio estimation in cancer genomics.
\end{abstract}

\noindent%
{\it Keywords:}  
batch effects,
factor analysis,
Fisher information,
negative-binomial regression,
uncertainty quantification.
\vfill

\spacingset{1.6}
\newpage

\section{Introduction}
\label{section:introduction}



Latent factor models have become an essential tool for analyzing and adjusting for hidden sources of variation in complex data.
Building on generalized linear model theory, 
generalized bilinear models (GBMs) provide a flexible framework incorporating 
latent factors along with row covariates, column covariates, and interactions to analyze matrix data
\citep{choulakian1996generalized,gabriel1998generalised,de2000gbms,perry2013degrees,hoff2015multilinear,buettner2017f}.
However, uncertainty quantification is a persistent problem in large complex models, and 
GBMs are particularly challenging due to the
non-linearity of multiplicative terms,
constraints and strong dependencies among parameters, 
inapplicability of normal model theory,
and the fact that the number of parameters grows with the data.


Most latent factor methods do not fully account for uncertainty in the latent factors.
For example, to remove batch effects in gene expression analysis, several methods first estimate a factorization $U V^\T$ and then 
treat $V$ as a known matrix of covariates, accounting for uncertainty only in $U$  
\citep{leek2007capturing,leek2008general,sun2012multiple,risso2014normalization}.
In copy number variation detection, it is common to treat the estimated $U V^\T$ as known and subtract it off \citep{fromer2012discovery,krumm2012copy,jiang2015codex}.
In principle, Bayesian inference provides full uncertainty quantification \citep{carvalho2008high},
however, Markov chain Monte Carlo tends to be slow in large parameter spaces with strong dependencies, as in the case of GBMs.
Variational Bayes approaches are faster \citep{stegle2010bayesian,buettner2017f,babadi18gatk},
but rely on factorized approximations that tend to underestimate uncertainty.
Meanwhile, the classical method of inverting the Fisher information matrix is computationally prohibitive in large GBMs.



In this article, we introduce a novel method for uncertainty quantification in GBMs, focusing on the case of count data with negative binomial outcomes.
The basic idea is to propagate uncertainty between model components using the delta method,
which can be done analytically using closed-form expressions involving the gradient and the Fisher information; we refer to this as \textit{delta propagation}.
The method facilitates computation of p-values and confidence intervals with approximately correct frequentist properties in GBMs.
Further, we provide an algorithm for \textit{maximum a posteriori} GBM estimation
that extends previous work by 
estimating row- and column-specific dispersion parameters,
improving numerical stability,
and explicitly handling identifiability constraints.

In a suite of simulation studies, we find that our methods perform favorably in terms of
consistency, frequentist coverage, computation time, algorithm convergence, and robustness to the outcome distribution.
We then apply our methods to gene expression analysis,
(a) comparing performance with DESeq2 \citep{love2014moderated} on a benchmark dataset of RNA-seq samples from lymphoblastoid cell lines,
and (b) testing for age-related genes using RNA-seq data from the Genotype-Tissue Expression (GTEx) project \citep{mele2015human}.
Finally, we apply our methods to copy ratio estimation in cancer genomics,
comparing performance with the Genome Analysis Toolkit (GATK) \citep{gatk}
on whole-exome sequencing data from the Cancer Cell Line Encyclopedia \citep{ghandi2019next}.

The article is organized as follows.
In Section~\ref{section:model}, we define the GBM model and we address identifiability, interpretability, and residuals.
In Sections~\ref{section:estimation} and \ref{section:inference}, we describe our estimation and inference methods, respectively.
In Section~\ref{section:theory}, we establish theoretical results,
and Section~\ref{section:simulations} contains simulation studies.
In Sections~\ref{section:gene-expression} and \ref{section:cancer}, we apply our methods to gene expression analysis
and copy ratio estimation in cancer genomics.
The supplementary material contains a discussion of previous work and challenges,
additional empirical results, mathematical derivations and proofs,
and step-by-step algorithms.

\section{Model}
\label{section:model}

In this section, we define the class of models considered in this paper
and we provide conditions guaranteeing identifiability and interpretability of the model parameters.
For $i \in \{1,\ldots,I\}$ and $j \in \{1,\ldots,J\}$, suppose $Y_{i j}\in\R$ is a random variable such that 
\begin{equation}
\label{equation:model-univariate}
g(\E(Y_{i j})) = \sum_{k = 1}^K x_{i k} a_{j k} + \sum_{\ell = 1}^L b_{i \ell} z_{j \ell} + \sum_{k = 1}^K \sum_{\ell = 1}^L x_{i k} c_{k \ell} z_{j \ell}
+ \sum_{m = 1}^M u_{i m} d_{m m} v_{j m}
\end{equation}
where $x_{i k}$ and $z_{j\ell}$ are observed covariates, $a_{j k}$, $b_{i\ell}$, $c_{k \ell}$, $u_{i m}$, $d_{m m}$, and $v_{j m}$ are parameters to be estimated,
and $g(\cdot)$ is a smooth function such that $g'$ is positive, referred to as the link function.
In matrix form, denoting $\bm{Y} = (Y_{i j})\in\R^{I\times J}$, Equation~\ref{equation:model-univariate} is equivalent to
\begin{equation}
\label{equation:model}
g(\E(\bm{Y})) = X A^\T + B Z^\T + X C Z^\T + U D V^\T
\end{equation}
where $g$ is applied element-wise to the matrix $\E(\bm{Y})$.
To be able to use capital $Y$ to denote scalar random variables, we use bold $\bm{Y}$ to denote the data matrix. 
We refer to this as a \textit{generalized bilinear model} (GBM), following the terminology of \citet{choulakian1996generalized}.

In the genomics applications in Sections~\ref{section:gene-expression} and \ref{section:cancer},
we use negative binomial outcomes with log link $g(\mu) = \log(\mu)$, and
the role of each piece is as follows (see Figure~\ref{figure:model}):
$Y_{i j}$ is the read count for feature $i$ in sample $j$,
$X\in\R^{I\times K}$ contains feature covariates and $A\in\R^{J\times K}$ contains the corresponding coefficients,
$Z\in\R^{J\times L}$ contains sample covariates and $B\in\R^{I\times L}$ contains the corresponding coefficients,
$C\in\R^{K\times L}$ contains intercepts and coefficients for interactions between the $x$'s and $z$'s, and
$U D V^\T$ is a low-rank matrix that captures latent effects due, for example, to unobserved covariates such as batch.

\begin{figure}
  \centering
  \includegraphics[width=1\textwidth]{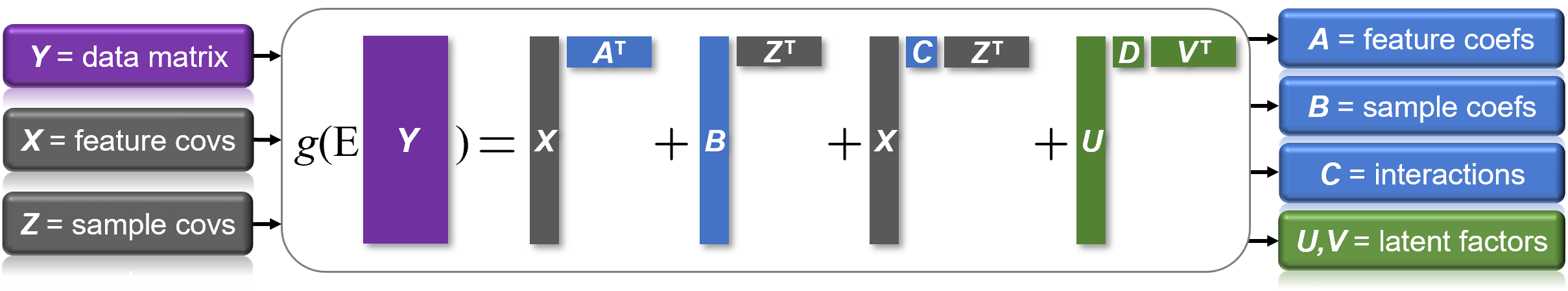}
  \caption{Schematic of the generalized bilinear model (GBM) structure.}
  \label{figure:model}
\end{figure}

\subsection{Identifiability and interpretation}
\label{section:identifiability}

For identifiability and interpretability, we impose certain constraints (Conditions~\ref{condition:identifiability} and \ref{condition:interpretation}).
The only constraints on the covariates are that $X$ and $Z$ are full-rank, centered, and include a column of ones for intercepts; see below.
The rest of the constraints are enforced on the parameters during estimation.
We write $\Id$ to denote the identity matrix to distinguish it from the number of features, $I$.

\begin{condition}[Identifiability constraints]
\label{condition:identifiability}
Assume the following constraints:
\begin{enumerate}[label=\textup{(\alph*)}, ref=\ref{condition:identifiability}(\alph*)]
\item\label{condition:invertible} $X^\T X$ and $Z^\T Z$ are invertible,
\item\label{condition:orthogonal} $X^\T B = 0$, $Z^\T A = 0$, $X^\T U = 0$, and $Z^\T V = 0$,
\item\label{condition:orthonormal} $U^\T U = \Id$ and $V^\T V = \Id$,
\item\label{condition:diagonal} $D$ is a diagonal matrix such that $d_{1 1} > d_{2 2} > \cdots > d_{M M} > 0$, and
\item\label{condition:sign} the first nonzero entry of each column of $U$ is positive,
\end{enumerate}
where $A\in \R^{J\times K}$, $B\in\R^{I\times L}$, $C\in\R^{K\times L}$, $D\in\R^{M\times M}$, $U\in\R^{I\times M}$, $V\in\R^{J\times M}$, 
$X\in\R^{I\times K}$, $Z\in\R^{J\times L}$, and $M < \min\{I,J\}$.
\end{condition}

In Theorem~\ref{theorem:identifiability}, we show that Condition~\ref{condition:identifiability} is sufficient to guarantee identifiability 
of $A$, $B$, $C$, $D$, $U$, and $V$ in any GBM satisfying Equation~\ref{equation:model}. More precisely, letting
\begin{align} \label{equation:eta}
\eta(A,B,C,D,U,V) := X A^\T + B Z^\T + X C Z^\T + U D V^\T
\end{align}
for some fixed full-rank $X$ and $Z$, Theorem~\ref{theorem:identifiability} shows that $\eta(\cdot)$ is a one-to-one function on the set of parameters satisfying Condition~\ref{condition:identifiability}.
\begin{condition}[Interpretability constraints]
\label{condition:interpretation}
Assume that \textup{(a)} $x_{i 1} = 1$ and $z_{j 1} = 1$ for all $i,j$, and \textup{(b)} $\sum_{i=1}^I x_{i k} = 0$ and $\sum_{j=1}^J z_{j \ell} = 0$ for all $k\geq 2$, $\ell \geq 2$.
\end{condition}
\noindent When Condition~\ref{condition:interpretation}(a) holds, we can rearrange the right-hand side of Equation~\ref{equation:model-univariate} as
\begin{align*}
c_{1 1} + a_{j 1} + b_{i 1} + \sum_{k = 2}^K (c_{k 1} + a_{j k}) x_{i k} + \sum_{\ell = 2}^L (c_{1 \ell} + b_{i \ell}) z_{j \ell}  
+ \sum_{k = 2}^K \sum_{\ell = 2}^L c_{k \ell} x_{i k} z_{j \ell} + \sum_{m = 1}^M u_{i m} d_{m m} v_{j m}. 
\end{align*}
Using this and assuming Conditions~\ref{condition:identifiability} and \ref{condition:interpretation}, 
we show in Theorem~\ref{theorem:interpretation} that the interpretation of each parameter is:
(1) $c_{1 1}$ is the overall intercept, $c_{1 1} = \frac{1}{I J}\sum_{i = 1}^I \sum_{j = 1}^J g(\E(Y_{i j}))$,
(2) $a_{j 1}$ is a sample-specific offset and $b_{i 1}$ is a feature-specific offset,
(3) $c_{k 1}$ is the mean effect of the $k$th feature covariate and $a_{j k}$ is the sample-specific offset of this effect,
(4) $c_{1 \ell}$ is the mean effect of the $\ell$th sample covariate and $b_{i \ell}$ is the feature-specific offset of this effect, and
(5) $c_{k \ell}$ is the effect of the interaction $x_{i k} z_{j \ell}$ for $k\geq 2$ and $\ell \geq 2$.


GBMs can be decomposed in terms of the sum-of-squares of each component's contribution to the overall model fit,
enabling one to interpret the proportion of variation explained by each component.
Specifically, in Theorem~\ref{theorem:sum-of-squares}, we show that 
$$ \mathrm{SS}(X A^\T + B Z^\T + X C Z^\T + U D V^\T) = \mathrm{SS}(X A^\T) + \mathrm{SS}(B Z^\T) + \mathrm{SS}(X C Z^\T) + \mathrm{SS}(U D V^\T) $$
whenever Condition~\ref{condition:orthogonal} holds,
where $\mathrm{SS}(Q) := \sum_{i,j} q_{i j}^2$ denotes the sum of squares of the entries of a matrix.
This extends a similar result by \citet{takane1991principal}.

\subsection{Outcome distributions}
\label{section:outcome}

For the distribution of $Y_{i j}$, we focus on discrete exponential dispersion families \citep{jorgensen1987exponential,agresti2015foundations}.
Specifically, suppose $Y_{i j}\sim f(y \mid \theta_{i j},r_{i j})$ where for $y\in\Z$,
\begin{equation}
\label{equation:edf}
f(y\mid\theta,r) = \exp(\theta y - r\kappa(\theta)) h(y,r)
\end{equation}
is a probability mass function for all $\theta\in\Theta$ and $r\in\mathcal{R}$;
this is referred to as a \textit{discrete EDF}.
Here, $\Theta\subseteq\R$, $\mathcal{R}\subseteq (0,\infty)$, and $\Z$ denotes the integers.  
For any discrete EDF, the mean and variance are
$\E(Y) = r\kappa'(\theta)$ and $\Var(Y) = r\kappa''(\theta)$ \citep{jorgensen1987exponential};
also see Section~\ref{section:edfs}.
We refer to $r$ as the inverse dispersion parameter, and $1/r$ is the dispersion.
Discrete EDFs can be translated into standard EDFs of the form $\exp(r[\theta y - \kappa(\theta)]) h(y,r)$ via the transformation $y \mapsto r y$.
We refer to a GBM with EDF outcomes as an EDF-GBM.

\textbf{Negative binomial (NB) outcomes.}
In the applications in Sections~\ref{section:gene-expression} and \ref{section:cancer},
we use negative binomial outcomes: $Y_{i j}\sim \NegBin(\mu_{i j},r_{i j})$
where $\mu_{i j}$ is the mean and $1/r_{i j}$ is the dispersion.
This is a discrete EDF as in Equation~\ref{equation:edf} with 
$\theta = \log(\mu/(\mu + r))$ and $\kappa(\theta) = -\log(1 - \exp(\theta))$;
thus, $\E(Y) = \mu$ and $\mathrm{Var}(Y) = \mu + \mu^2/r$.
The NB distribution is an overdispersed Poisson
since if $Y|\lambda \sim \Poisson(\lambda)$ and $\lambda\sim \Ga(r,\,r/\mu)$, then
integrating out $\lambda$, we have $Y\sim\NegBin(\mu,r)$.
We refer to a GBM with NB outcomes as an NB-GBM.

We parametrize the dispersions as $1/r_{i j} = \exp(s_i + t_j + \omega)$
and work in terms of $S = (s_1,\ldots,s_I)^\T\in\R^I$, $T = (t_1,\ldots,t_J)^\T\in\R^J$,
and $\omega\in\R$,
subject to the identifiability constraints $\frac{1}{I}\sum_i e^{s_i} = 1$ and $\frac{1}{J}\sum_j e^{t_j} = 1$.
Note that this makes $\frac{1}{I J}\sum_{i,j} 1/r_{i j} = \exp(\omega)$.



\subsection{Residuals and adjusting out selected effects}
\label{section:residuals}

Residuals are useful for many purposes, such as visualization, model criticism, and downstream analyses.
We define GBM residuals as $\varepsilon_{i j} := g(Y_{i j} + \epsilon) - \eta_{i j}$
where $\eta = \eta(A,B,C,D,U,V) \in\R^{I \times J}$ as in Equation~\ref{equation:eta}
and $\epsilon$ is a small constant to make $\varepsilon_{i j}$ well-defined; for NB-GBMs, we use $\epsilon = 1/8$ as a default.
A model-based estimate of the variance of $\varepsilon_{i j}$ is given by $\sigma^2_{i j} g'(\mu_{i j})^2$
where $\mu_{i j}$ and $\sigma_{i j}^2$ are the mean and variance of $Y_{i j}$ under the model.
This formula can be derived either from the Fisher information or from a first-order Taylor approximation to $g$.
It turns out that the corresponding precisions $w_{i j} := 1/(\sigma^2_{i j} g'(\mu_{i j})^2)$
play a key role in our GBM estimation and inference algorithms.
In the NB case with $g(\mu) = \log(\mu)$, these residual precisions are $w_{i j} = r_{i j} \mu_{i j} / (r_{i j} + \mu_{i j})$.

Often, it is useful to adjust out some effects but not others.
Let $\mathcal{R}_x$, $\mathcal{R}_z$, and $\mathcal{R}_u$ be the indices of the columns of $X$, $Z$, and $U$ (or $V$) that one does not wish to adjust out.
We define the partial residuals $\varepsilon^\mathcal{R}_{i j} := \eta^\mathcal{R}_{i j} + \varepsilon_{i j}$
where $\eta^\mathcal{R}\in\R^{I \times J}$ is defined as in Equation~\ref{equation:eta} but with 
$x_{i k}$, $z_{j \ell}$, and $u_{i m}$ replaced by 
$x_{i k} \I(k\in \mathcal{R}_x)$, $z_{j \ell} \I(\ell\in \mathcal{R}_z)$, and $u_{i m} \I(m\in \mathcal{R}_u)$.

\section{Estimation}
\label{section:estimation}

We provide a general algorithm for estimating the parameters of a discrete EDF-GBM (Equations~\ref{equation:model} and \ref{equation:edf}),
and we augment the algorithm to estimate the NB-GBM dispersion parameters as well.
Here, we only give an outline; see Section~\ref{section:estimation-details} for step-by-step details. 



\noindent\textbf{Inputs.}
The required inputs are $\bm{Y}\in\Z^{I\times J}$, $X\in\R^{I\times K}$, $Z\in\R^{J\times L}$, and $M\in\{0,1,2,\ldots\}$.
Optional inputs are the maximum step size $\rho > 0$,
prior precisions $\lambda_a,\lambda_b,\lambda_c,\lambda_d,\lambda_u,\lambda_v > 0$,
prior means and precisions $m_s,m_t,\lambda_s,\lambda_t$ for the log-dispersions in the NB-GBM case,
and the convergence criterion.
As defaults, we use $\rho = 5$, $\lambda_a = \lambda_b = \lambda_c = \lambda_d = \lambda_u = \lambda_v = 1$,
$m_s = m_t = 0$, $\lambda_s = \lambda_t = 1$, convergence tolerance $\tau = 10^{-6}$ for the relative change in log-likelihood+log-prior,
and a maximum of $50$ iterations.

\noindent\textbf{Preprocessing.}
    \begin{enumerate}[label=\textup{(\arabic*)},noitemsep,topsep=0pt]
    \item Ensure that $X$ and $Z$ satisfy Condition~\ref{condition:invertible} and Condition~\ref{condition:interpretation}.
    \item Unless the covariates are already on a common scale in terms of units, standardize $X$ and $Z$
    such that $\frac{1}{I}\sum_{i=1}^I x_{i k}^2 = 1$ and $\frac{1}{J}\sum_{j=1}^J z_{j \ell}^2 = 1$ for all $k\geq 2$ and $\ell\geq 2$.
    \item Precompute the pseudoinverses $X^\ps = (X^\T X)^{-1} X^\T$ and $Z^\ps = (Z^\T Z)^{-1} Z^\T$.
    \end{enumerate}
\bigskip
\noindent\textbf{Initialization.} 
    \begin{enumerate}[label=\textup{(\arabic*)},noitemsep,topsep=0pt]
    \item Solve for $A$, $B$, and $C$ to minimize the sum-of-squares of the GBM residuals $\varepsilon_{i j}$.
    \item Randomly initialize $D$, $U$, and $V$ by computing the truncated singular value decomposition (of rank $M$) of a random matrix with i.i.d.\ $\N(0,10^{-16})$ entries.
    \item In the case of NB-GBMs, iteratively update $S$, $T$, and $\omega$ for a few iterations.
    \end{enumerate}
\bigskip
\noindent\textbf{Iteration.}
In each iteration, we cycle through the components of the model, updating each in turn using an optimization-projection step,
consisting of an unconstrained optimization step and a likelihood-preserving projection onto the constrained parameter space.
We use a bounded, regularized version of Fisher scoring to perform the unconstrained optimization step 
for each of $A$, $B$, $C$, $D$, $U D$, and $V D$, separately, holding all the other parameters fixed.
For a generic parameter vector $\beta$, the (unbounded) regularized Fisher scoring step is
$\beta \gets \beta + (\E(-\nabla_{\!\beta}^2 \L) + \lambda \Id)^{-1} (\nabla_{\!\beta}\L - \lambda \beta)$
where $\L$ is the log-likelihood and $\lambda>0$ is a regularization parameter. 
This arises from optimizing the log-likelihood plus the log-prior,
where the prior on $\beta$ is $\pi(\beta) = \N(\beta \mid 0,\lambda^{-1}\Id)$, since then the gradient and Fisher information are
$\nabla_{\!\beta} (\L + \log\pi) = \nabla_{\!\beta} \L - \lambda \beta$ and
$\E(-\nabla_{\!\beta}^2 (\L + \log\pi)) = \E(-\nabla_{\!\beta}^2 \L) + \lambda \Id$.
Since these Fisher scoring steps occasionally diverge,
for numerical stability we bound them using 
\begin{align}
\label{equation:generic-fisher-update}
\begin{split}
\xi &\gets (\E(-\nabla_{\!\beta}^2 \L) + \lambda \Id)^{-1} (\nabla_{\!\beta}\L - \lambda \beta) \\
\beta &\gets \beta + \xi \min\{1,\, \rho \sqrt{\mathrm{dim}(\xi)}/\|\xi\|\}
\end{split}
\end{align}
where $\|\xi\| = (\sum_i |\xi_i|^2)^{1/2}$ is the Euclidean norm.
The idea is that $\xi \min\{1,\, \rho \sqrt{\mathrm{dim}(\xi)}/\|\xi\|\}$ points in the same direction as $\xi$, 
but its root-mean-square is capped at $\rho$.
Similarly, for $S$ and $T$ in the NB-GBM, we use bounded regularized Newton steps
for the unconstrained optimizations, since the (expected) Fisher information for $S$ and $T$ is not closed form.

See Section~\ref{section:estimation-details} for full step-by-step details.
See Section~\ref{section:theory} for time complexity analysis.



\section{Inference}
\label{section:inference}

In this section, we introduce our methodology for computing approximate standard errors for the parameters of a GBM.
Since the standard technique of inverting the Fisher information matrix does not work well on GBMs,
we develop a novel technique for propagating uncertainty from one part of the model to another; see Section~\ref{section:delta-propagation}.
We provide an outline of our inference algorithm in Section~\ref{section:inference-outline}, and
step-by-step details are in Section~\ref{section:inference-details}.

\subsection{Delta propagation method}
\label{section:delta-propagation}

In fixed-dimension parametric models, the asymptotic covariance of the maximum likelihood estimator is equal to the inverse of the Fisher information matrix.
Thus, classically, approximate standard errors are given by the square roots of the diagonal entries of the inverse Fisher information.
However, in GBMs, inverting the full (constraint-augmented) Fisher information does not work well for two reasons:
(1) it is computationally intractable for large data matrices, and
(2) it does not yield well-calibrated standard errors in terms of coverage, presumably because the number of parameters grows with the amount of data.
Meanwhile, the inverse Fisher information for each component individually (for instance, $F_a^{-1}$ for $A$) 
is computationally efficient, but severely underestimates uncertainty since it treats the other components as known,
and thus, can be thought of as representing the conditional uncertainty in each component given the other components.

We propose a general technique for approximating, for each model component, the additional variance due to uncertainty in the other components.
By adding this to the conditional variance (that is, $\diag(F_a^{-1})$ in the case of $A$),
we obtain approximate variances that are better calibrated, empirically.
The basic idea is to write the estimator for each component as a function of the other components,
and propagate the variance of the other components through this function using the same idea as the delta method.

In general, suppose we have a model with parameters $\theta\in\R^d$ and $\nu\in\R^k$,
and we wish to quantify the uncertainty in $\theta$ due to uncertainty in $\nu$.
Suppose the true values are $\theta_0$ and $\nu_0$, and let $\hat\theta$ and $\hat\nu$ be the maximum likelihood estimators.
Define
$$ h(\nu) := \theta_0 + F(\theta_0,\nu)^{-1} g(\theta_0,\nu) $$
where $g(\theta,\nu) := \nabla_\theta \mathcal{L}$ is the gradient of the log-likelihood and
$F(\theta,\nu) := \E(-\nabla_\theta^2 \mathcal{L})$ is the Fisher information matrix for $\theta$, evaluated at $(\theta,\nu)$.
The interpretation is that $\hat\theta \approx h(\hat\nu)$, since 
$h(\hat\nu)$ is a Fisher scoring step on $\theta$ starting at $\theta_0$, when the current estimate of $\nu$ is $\hat\nu$.

By a Taylor approximation to $h$ at $\nu_0$, we have
$h(\hat\nu) \approx h(\nu_0) + h'(\nu_0)^\T (\hat\nu - \nu_0)$
where $h'(\nu)\in\R^{k\times d}$ such that $h'(\nu)_{i j} = \partial h_j / \partial \nu_i$.
Define the random element $\varphi = (h(\nu_0),h'(\nu_0))$, where the randomness comes from the data.
Assuming $\hat\nu|\varphi\approx\N(\nu_0,\Sigma_\nu)$ for some $\Sigma_\nu$, 
we have $h(\hat\nu)|\varphi \approx \N(h(\nu_0),\, h'(\nu_0)^\T \Sigma_\nu h'(\nu_0))$.
Meanwhile, under standard regularity conditions, $\E(h(\nu_0)) = \theta_0$ and $\mathrm{Cov}(h(\nu_0)) = F(\theta_0,\nu_0)^{-1}$.
Then, by the law of total covariance, 
$$ \mathrm{Cov}(h(\hat\nu)) = \mathrm{Cov}(\E(h(\hat\nu)|\varphi)) + \E(\mathrm{Cov}(h(\hat\nu)|\varphi))
\approx F(\theta_0,\nu_0)^{-1} + \E(h'(\nu_0)^\T \Sigma_\nu h'(\nu_0)). $$
The interpretation of this decomposition is that $F(\theta_0,\nu_0)^{-1}$ represents the uncertainty in $\theta$ given $\nu$,
and $\E(h'(\nu_0)^\T \Sigma_\nu h'(\nu_0))$ represents the uncertainty in $\theta$ due to uncertainty in $\nu$.
Since $\hat\theta \approx h(\hat\nu)$, plugging in empirical estimates leads to the approximation
\begin{align}
\label{equation:covariance-decomposition}
\mathrm{Cov}(\hat\theta) \approx F(\hat\theta,\hat\nu)^{-1} + \hat{h}'(\hat\nu)^\T \hat{\Sigma}_\nu \hat{h}'(\hat\nu),
\end{align}
where $\hat{h}(\nu) = \hat\theta + F(\hat\theta,\nu)^{-1} g(\hat\theta,\nu)$.

To compute the Jacobian matrix $h'(\nu)^\T$, observe that the $i$th column of $h'(\nu)^\T$ is
\begin{align}
\label{equation:delta-prop-jacobian}
\frac{\partial h}{\partial\nu_i} 
= \frac{\partial F^{-1}}{\partial\nu_i} g + F^{-1} \frac{\partial g}{\partial\nu_i} 
= -F^{-1} \frac{\partial F}{\partial\nu_i} F^{-1} g + F^{-1} \frac{\partial g}{\partial\nu_i}
\end{align}
by \citet[Eqn C.21]{bishop2006pattern},
where $F = F(\theta_0,\nu)$, $g = g(\theta_0,\nu)$,
and $\partial/\partial\nu_i$ is applied element-wise.
Equation~\ref{equation:delta-prop-jacobian} also holds for $\partial\hat{h}/\partial\nu_i$, but with $\hat\theta$ in place of $\theta_0$;
this facilitates computing $\hat{h}'(\hat\nu)$ in Equation~\ref{equation:covariance-decomposition}.
To obtain standard errors for each element of $\theta$, computation is simplified by the fact that we only need the diagonal of $\mathrm{Cov}(\hat\theta)$,
and to simplify computation even further we use a diagonal matrix for $\hat\Sigma_\nu$ when we apply this technique to GBMs.
Additionally, since we use \textit{maximum a posteriori} estimates, we use the regularized Fisher information and the gradient of the log-posterior in 
the formulas above.




\subsection{Outline of inference algorithm}
\label{section:inference-outline}

Here we outline our procedure for computing GBM standard errors;
see Section~\ref{section:inference-details} for step-by-step details.
The strategy is to handle $U$ and $V$ jointly by inverting the regularized, constraint-augmented Fisher information matrix for $(U,V)$
and then employ delta propagation for $A$, $B$, $C$, $S$, and $T$; see Figure~\ref{figure:delta-prop}.
Empirically, we find that some of the delta propagation terms are negligible, thus, these have been excluded.

\begin{figure}
\centering
\tikzstyle{var}=[circle,
                thick,
                draw=black,
                fill=white]

\begin{tikzpicture}[>=latex] 
  \matrix[row sep=2.5ex,column sep=2.5ex] {
        & & \node (C) [var] {$C$}; & & \\
        \node (A) [var] {$A$}; & & \node (UV) [var] {$U,V$}; & & \node (B) [var] {$B$}; \\
        & \node (S) [var] {$S$}; & & \node (T) [var] {$T$}; & \\
    };
    
    \path[->]
        (A) edge[thick] (C)
        (B) edge[thick] (C)
        (UV) edge[thick] (A)
        (UV) edge[thick] (B)
        (A) edge[thick] (S)
        (A) edge[thick] (T)
        (B) edge[thick] (S)
        (B) edge[thick] (T)
        (UV) edge[thick] (S)
        (UV) edge[thick] (T)
;
        
\end{tikzpicture}
\caption{Diagram of uncertainty propagation scheme for GBM inference.} 
\label{figure:delta-prop}
\end{figure}
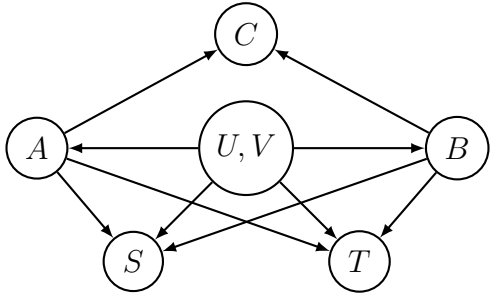

For notational convenience, we vectorize the parameter matrices as follows.
For $Q\in\R^{m\times n}$, define $\mathrm{vec}(Q) := (q_{1 1},q_{2 1},\ldots,q_{m 1},q_{1 2},q_{2 2},\ldots,q_{m 2},\ldots\ldots,q_{m n})\in\R^{m n}$.
Denote $\vec{a} := \mathrm{vec}(A^\T)$, $\vec{b} := \mathrm{vec}(B^\T)$, $\vec{c} := \mathrm{vec}(C)$, $\vec{d} := \diag(D)$, $\vec{u} := \mathrm{vec}(U^\T)$, and $\vec{v} := \mathrm{vec}(V^\T)$.
It is easier to work with the vectorized transposes of $A$, $B$, $U$, and $V$ (that is, $\mathrm{vec}(A^\T)$ rather than $\mathrm{vec}(A)$),
since then the Fisher information matrices have a block diagonal structure.
We write $F_a$, $F_b$, $F_c$, $F_d$, $F_u$, and $F_v$ to denote the regularized Fisher information
for $\vec{a}$, $\vec{b}$, $\vec{c}$, $\vec{d}$, $\vec{u}$, and $\vec{v}$, respectively,
for instance, $F_a = \E(-\nabla_{\!\vec{a}}^2\,(\L+\log \pi))$.
We write $F_{s,\mathrm{obs}}$ and $F_{t,\mathrm{obs}}$ for the regularized, observed Fisher information for $S$ and $T$,
that is, $F_{s,\mathrm{obs}} = -\nabla_{\! S}^2\,(\L+\log \pi)$.

\bigskip
\noindent\textbf{Inputs.}
The required inputs are $\bm{Y}$, $X$, $Z$, 
and the estimates of $A$, $B$, $C$, $D$, $U$, $V$, $S$, $T$, and $\omega$.
Optional inputs are the prior parameters ($\lambda_a,\lambda_b,\lambda_c,\lambda_d,\lambda_u,\lambda_v,\lambda_s,\lambda_t,m_s,m_t$).

\bigskip
\noindent\textbf{Compute conditional uncertainty for each component.}
    \begin{enumerate}[label=\textup{(\arabic*)},noitemsep,topsep=0pt]
    \item Compute $F_{c}^{-1}$ and the diagonal blocks of $F_{a}^{-1}$, $F_{b}^{-1}$, $F_{u}^{-1}$, and $F_{v}^{-1}$.
    \item Compute the diagonals of $F_{s,\mathrm{obs}}^{-1}$ and $F_{t,\mathrm{obs}}^{-1}$.
    \end{enumerate}
\bigskip
\noindent\textbf{Compute joint uncertainty in $(U,V)$ accounting for constraints.}
    \begin{enumerate}[label=\textup{(\arabic*)},noitemsep,topsep=0pt]
    \item Compute $\tilde{F}_{(u,v)}$, the regularized constraint-augmented Fisher information for $\big[\begin{smallmatrix}\vec{u}\\\vec{v}\end{smallmatrix}\big]$.
    \item Compute $\diag(\tilde{F}_{(u,v)}^{-1})$. (It is key to do this in a computationally efficient way.)
    \item Define $\widehat{\mathrm{var}}_{u}$ and $\widehat{\mathrm{var}}_{v}$ to be the entries of $\diag(\tilde{F}_{(u,v)}^{-1})$ corresponding to $U$ and $V$.
    \end{enumerate}
\bigskip
\noindent\textbf{Propagate uncertainty between components using delta propagation.}
    \begin{enumerate}[label=\textup{(\arabic*)},noitemsep,topsep=0pt]
    \item Propagate uncertainty in $(U,V)$ to $A$ and $B$, to obtain $\widehat{\mathrm{var}}_{(u,v)\to a}$ and $\widehat{\mathrm{var}}_{(u,v)\to b}$, the additional variance of the estimators of $A$ and $B$ due to uncertainty in $(U,V)$.
    \item Propagate uncertainty in $A$ and $B$ through to $C$, to obtain $\widehat{\mathrm{var}}_{(a,b)\to c}$.
    \item Propagate uncertainty in $A$, $B$, $U$, $V$ to $S$ and $T$, to get $\widehat{\mathrm{var}}_{(a,b,u,v)\to s}$ and $\widehat{\mathrm{var}}_{(a,b,u,v)\to t}$.
    \end{enumerate}
\bigskip
\noindent\textbf{Compute approximate standard errors.}
    \begin{enumerate}[label=\textup{(\arabic*)},noitemsep,topsep=0pt]
    \item $\hat{\mathrm{se}}_a \gets \mathrm{sqrt}(\diag(F_{a}^{-1}) + \widehat{\mathrm{var}}_{(u,v)\to a})$ ~~and~~
          $\hat{\mathrm{se}}_b \gets \mathrm{sqrt}(\diag(F_{b}^{-1}) + \widehat{\mathrm{var}}_{(u,v)\to b})$
    \item $\hat{\mathrm{se}}_c \gets \mathrm{sqrt}(\diag(F_{c}^{-1}) + \widehat{\mathrm{var}}_{(a,b)\to c})$
    \item $\hat{\mathrm{se}}_u \gets \mathrm{sqrt}(\widehat{\mathrm{var}}_{u})$ ~~and~~
          $\hat{\mathrm{se}}_v \gets \mathrm{sqrt}(\widehat{\mathrm{var}}_{v})$
    \item $\hat{\mathrm{se}}_s \gets \mathrm{sqrt}(\diag(F_{s,\mathrm{obs}}^{-1}) + \widehat{\mathrm{var}}_{(a,b,u,v)\to s})$ ~~and~~
          $\hat{\mathrm{se}}_t \gets \mathrm{sqrt}(\diag(F_{t,\mathrm{obs}}^{-1}) + \widehat{\mathrm{var}}_{(a,b,u,v)\to t})$
    \end{enumerate}
Here, $\mathrm{sqrt}(\cdot)$ is the element-wise square root.
We do not provide standard errors for $D$ and $\omega$, since it seems difficult to estimate them without non-negligible bias.
See Section~\ref{section:inference-details} for the complete step-by-step algorithm.
See Section~\ref{section:theory} for computation time complexity.




\section{Theory}
\label{section:theory}

In this section, we provide theoretical results on GBMs. 
The proofs are in Section~\ref{section:proofs}.


\begin{theorem}[Identifiability]
\label{theorem:identifiability}
If $(A_1,B_1,C_1,D_1,U_1,V_1,X,Z)$ and $(A_2,B_2,C_2,D_2,U_2,V_2,\allowbreak X,Z)$ satisfy Condition~\ref{condition:identifiability} and
\begin{equation}
\label{equation:identifiability}
X A_1^\T + B_1 Z^\T + X C_1 Z^\T + U_1 D_1 V_1^\T = X A_2^\T + B_2 Z^\T + X C_2 Z^\T + U_2 D_2 V_2^\T,
\end{equation}
then $A_1 = A_2$, $B_1 = B_2$, $C_1 = C_2$, $D_1 = D_2$, $U_1 = U_2$, and $V_1 = V_2$.
In particular, for any GBM satisfying Equation~\ref{equation:model} for some $X$, $Z$, and $M$,
if Condition~\ref{condition:identifiability} holds, then $A$, $B$, $C$, $D$, $U$, and $V$ are identifiable
in the sense that they are uniquely determined by the distribution of $\bm{Y}$;
in fact, they are uniquely determined by $\E(\bm{Y})$.
\end{theorem}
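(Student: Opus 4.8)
The plan is to exploit the orthogonality constraints in Condition~\ref{condition:orthogonal} to decouple the four additive terms of $\eta$, so that each term is recovered as an explicit function of $\eta$ alone; the low-rank term is then pinned down by recognizing it as a compact singular value decomposition that Conditions~\ref{condition:orthonormal}--\ref{condition:sign} render unique.

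First I would introduce the pseudoinverses $X^\ps = (X^\T X)^{-1}X^\T$ and $Z^\ps = (Z^\T Z)^{-1}Z^\T$, which are well-defined by Condition~\ref{condition:invertible}, together with the orthogonal projections $P_X = X X^\ps$ and $P_Z = Z Z^\ps$ and their complements $P_X^\perp = \Id - P_X$, $P_Z^\perp = \Id - P_Z$. The key observation is that Condition~\ref{condition:orthogonal} sorts the four terms of $\eta$ into mutually orthogonal blocks according to whether their columns lie in $\C(X)$ or $\C(X)^\perp$ and their rows lie in $\C(Z)$ or $\C(Z)^\perp$. For instance, $X A^\T$ has columns in $\C(X)$ and, since $Z^\T A = 0$, rows orthogonal to $\C(Z)$; analogously $B Z^\T$, $X C Z^\T$, and $U D V^\T$ are placed into the remaining three blocks using $X^\T B = 0$, and $X^\T U = 0$ and $Z^\T V = 0$. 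Applying the four projection products $P_X(\cdot)P_Z$, $P_X(\cdot)P_Z^\perp$, $P_X^\perp(\cdot)P_Z$, and $P_X^\perp(\cdot)P_Z^\perp$ to both sides of Equation~\ref{equation:identifiability} then isolates $X C Z^\T$, $X A^\T$, $B Z^\T$, and $U D V^\T$ respectively, yielding $XC_1Z^\T = XC_2Z^\T$, $XA_1^\T = XA_2^\T$, $B_1 Z^\T = B_2 Z^\T$, and $U_1 D_1 V_1^\T = U_2 D_2 V_2^\T$. I would then obtain $A_1 = A_2$, $B_1 = B_2$, and $C_1 = C_2$ by left-multiplying by $X^\ps$ and/or right-multiplying by $(Z^\ps)^\T$ and using $X^\ps X = \Id$ and $Z^\ps Z = \Id$.

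It remains to treat the low-rank term, which is the crux of the argument. Writing $W := U_1 D_1 V_1^\T = U_2 D_2 V_2^\T$, I would form $W W^\T = U_m D_m^2 U_m^\T$ for $m = 1,2$, using $V_m^\T V_m = \Id$ from Condition~\ref{condition:orthonormal}. By Condition~\ref{condition:diagonal} each $D_m^2$ has strictly decreasing, strictly positive diagonal, so both expressions are spectral decompositions of the symmetric matrix $WW^\T$ with the same ordered list of nonzero eigenvalues; hence $D_1 = D_2$, and because these eigenvalues are distinct, each column of $U_1$ matches the corresponding column of $U_2$ up to sign. Condition~\ref{condition:sign} then removes the sign ambiguity, giving $U_1 = U_2$. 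Finally, left-multiplying $U_1 D_1 V_1^\T = U_1 D_1 V_2^\T$ by $U_1^\T$ and then by $D_1^{-1}$ (invertible by Condition~\ref{condition:diagonal}) yields $V_1 = V_2$.

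The main obstacle is the SVD uniqueness in the previous paragraph: the sign and rotational ambiguities are precisely what Conditions~\ref{condition:orthonormal}--\ref{condition:sign} are designed to eliminate, and one must invoke the spectral theorem carefully. In particular, the strict inequalities in Condition~\ref{condition:diagonal} are essential, since repeated singular values would permit orthogonal mixing within an eigenspace and break identifiability. The final claim of the theorem follows at once: since $g' > 0$, the link $g$ is strictly increasing and hence invertible, so $\eta = g(\E(\bm{Y}))$ is determined by $\E(\bm{Y})$, which is in turn determined by the distribution of $\bm{Y}$.
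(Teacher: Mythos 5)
Your proof is correct, and it rests on the same two pillars as the paper's: the orthogonality constraints in Condition~\ref{condition:orthogonal} decouple the four additive terms, and the distinct-singular-value plus sign conditions pin down the factorization $U D V^\T$. The execution differs in two respects. First, where the paper peels terms off sequentially (left-multiply Equation~\ref{equation:identifiability} by $X^\T$ and invert $X^\T X$ to kill $B Z^\T$ and $U D V^\T$, then right-multiply by $Z$ to extract $C$, and so on), you apply the four projector products $P_X(\cdot)P_Z$, $P_X(\cdot)P_Z^\perp$, $P_X^\perp(\cdot)P_Z$, $P_X^\perp(\cdot)P_Z^\perp$ to isolate all four terms at once; this is algebraically equivalent but makes the ``four mutually orthogonal blocks'' structure explicit, which is a nice conceptual bonus (it is essentially the same orthogonality that drives Theorem~\ref{theorem:sum-of-squares}). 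Second, for the low-rank term the paper simply cites the uniqueness of the compact SVD with distinct singular values, whereas you derive it from the spectral theorem applied to $W W^\T = U_m D_m^2 U_m^\T$; your version is more self-contained and makes transparent exactly where the strict inequalities in Condition~\ref{condition:diagonal} and the sign convention in Condition~\ref{condition:sign} are used. One small point worth stating explicitly in your spectral argument: the columns of $U_m$ are unit vectors (by Condition~\ref{condition:orthonormal}), so they are genuine eigenvectors of $W W^\T$ for the distinct nonzero eigenvalues $d_{11}^2 > \cdots > d_{MM}^2 > 0$, and hence determined up to sign; with that noted, the argument is complete.
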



\begin{theorem}[Interpretation of parameters]
\label{theorem:interpretation}
If Conditions~\ref{condition:identifiability} and \ref{condition:interpretation} hold
and $\mu_{i j} := \E(Y_{i j})$ satisfies Equation~\ref{equation:model-univariate}, then:
\begin{enumerate}[label=\textup{(\alph*)}, ref=\ref{theorem:interpretation}(\alph*)]
\item $\sum_{j = 1}^J a_{j k} = 0$ and $\sum_{i = 1}^I b_{i \ell} = 0$ for all $k\in\{1,\ldots,K\}, \ell\in\{1,\ldots,L\}$,
\item $\sum_{i = 1}^I u_{i m} = 0$ and $\sum_{j = 1}^J v_{j m} = 0$ for all $m\in\{1,\ldots,M\}$,
\item\label{theorem:interpretation:item:c} $\frac{1}{I}\sum_{i = 1}^I g(\mu_{i j}) = c_{1 1} + a_{j 1} + \sum_{\ell=2}^L c_{1 \ell} z_{j\ell}$ for all $j\in\{1,\ldots,J\}$,
\item $\frac{1}{J}\sum_{j = 1}^J g(\mu_{i j}) = c_{1 1} + b_{i 1} + \sum_{k=2}^K c_{k 1} x_{i k}$ for all $i\in\{1,\ldots,I\}$, and
\item $\frac{1}{I J}\sum_{i = 1}^I \sum_{j = 1}^J g(\mu_{i j}) = c_{1 1}$.
\end{enumerate}
\end{theorem}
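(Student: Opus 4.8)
The plan is to derive the five identities by averaging the rearranged mean equation and repeatedly invoking the two complementary sets of constraints: the centering of the covariates (Condition~\ref{condition:interpretation}(b)) and the orthogonality of the parameter blocks to $X$ and $Z$ (Condition~\ref{condition:orthogonal}). No asymptotics or regularity conditions are needed; everything is algebraic.

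First I would dispatch parts (a) and (b), which are immediate consequences of Condition~\ref{condition:orthogonal} combined with the columns of ones in $X$ and $Z$. The constraint $Z^\T A = 0$ reads $\sum_{j} z_{j\ell}\, a_{jk} = 0$ for every $\ell$ and $k$; taking $\ell = 1$ and using $z_{j1} = 1$ from Condition~\ref{condition:interpretation}(a) gives $\sum_j a_{jk} = 0$. Symmetrically, $X^\T B = 0$ with $k = 1$ and $x_{i1} = 1$ gives $\sum_i b_{i\ell} = 0$, which proves (a). Applying the same argument to $X^\T U = 0$ and $Z^\T V = 0$ (again reading off the first row, which corresponds to the ones) yields $\sum_i u_{im} = 0$ and $\sum_j v_{jm} = 0$, establishing (b).

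Next, for (c) I would start from the rearranged expression for $g(\mu_{ij})$ displayed just after Condition~\ref{condition:interpretation} and average over $i$ from $1$ to $I$. The point is that every term except $c_{11} + a_{j1} + \sum_{\ell\geq 2} c_{1\ell}\, z_{j\ell}$ carries a factor that averages to zero: the terms involving $x_{ik}$ with $k\geq 2$ vanish because $\frac1I\sum_i x_{ik} = 0$ by Condition~\ref{condition:interpretation}(b); the term $\frac1I\sum_i b_{i1}$ and each $\frac1I\sum_i b_{i\ell}$ vanish by part (a); and $\frac1I\sum_i u_{im} = 0$ by part (b) kills the latent term. Part (d) is the mirror image: averaging over $j$ and using $\frac1J\sum_j z_{j\ell} = 0$, together with $\sum_j a_{jk} = 0$ and $\sum_j v_{jm} = 0$. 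Finally, (e) follows by averaging the identity of (c) over $j$ (equivalently (d) over $i$): the $a_{j1}$ term drops by (a) and the $c_{1\ell}\, z_{j\ell}$ terms drop by centering, leaving $c_{11}$.

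There is no genuine obstacle here; the statement is essentially a bookkeeping exercise. The only thing to be careful about is matching each surviving or vanishing term to the correct constraint, since the centering conditions eliminate the covariate-weighted terms while the orthogonality conditions, through parts (a)--(b), eliminate the parameter-averaged terms, and these two mechanisms act on complementary halves of the expansion. I would also verify the interaction terms $c_{k\ell}\, x_{ik}\, z_{j\ell}$ separately, as they vanish under either average by virtue of carrying a centered covariate in each index.
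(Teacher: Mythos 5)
Your proposal is correct and follows essentially the same route as the paper: parts (a)--(b) from the orthogonality constraints evaluated against the columns of ones, then (c)--(e) by averaging the rearranged mean equation and letting the centering conditions and parts (a)--(b) annihilate the complementary halves of the expansion. No gaps.
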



For a matrix $Q\in\R^{m\times n}$, we write $\mathrm{SS}(Q) := \sum_{i=1}^m \sum_{j=1}^n q_{i j}^2$ for the sum of squares.

\begin{theorem}[Sum-of-squares decomposition]
\label{theorem:sum-of-squares}
If Condition~\ref{condition:orthogonal} holds, then
$$ \mathrm{SS}(X A^\T + B Z^\T + X C Z^\T + U D V^\T) = \mathrm{SS}(X A^\T) + \mathrm{SS}(B Z^\T) + \mathrm{SS}(X C Z^\T) + \mathrm{SS}(U D V^\T). $$
\end{theorem}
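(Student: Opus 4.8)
The plan is to recognize $\mathrm{SS}(Q) = \mathrm{tr}(Q^\T Q)$ as the squared norm induced by the Frobenius inner product $\langle P, Q\rangle := \mathrm{tr}(P^\T Q)$, and then to show that the four summands $X A^\T$, $B Z^\T$, $X C Z^\T$, and $U D V^\T$ are mutually orthogonal under this inner product. Once orthogonality is established, the decomposition is just the Pythagorean theorem: expanding $\mathrm{SS}$ of the sum by bilinearity produces the four ``diagonal'' terms $\mathrm{SS}(X A^\T) + \mathrm{SS}(B Z^\T) + \mathrm{SS}(X C Z^\T) + \mathrm{SS}(U D V^\T)$ plus twice the sum of the six pairwise inner products, and the latter all vanish.

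So the work reduces to checking that each of the six cross terms is zero, using only Condition~\ref{condition:orthogonal}, namely $X^\T B = 0$, $Z^\T A = 0$, $X^\T U = 0$, and $Z^\T V = 0$. For each pair I would write the inner product as a single trace and then use the cyclic invariance of the trace to move a factor equal to one of these four zero matrices (or its transpose) into an adjacent position. For instance, $\langle X A^\T, B Z^\T\rangle = \mathrm{tr}(A X^\T B Z^\T) = 0$ directly from $X^\T B = 0$; $\langle X A^\T, U D V^\T\rangle = \mathrm{tr}(A X^\T U D V^\T) = 0$ from $X^\T U = 0$; $\langle B Z^\T, U D V^\T\rangle = \mathrm{tr}(B^\T U D V^\T Z) = 0$ from $V^\T Z = (Z^\T V)^\T = 0$; and $\langle X C Z^\T, U D V^\T\rangle = \mathrm{tr}(Z C^\T X^\T U D V^\T) = 0$ from $X^\T U = 0$.

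The two cross terms that need slightly more care are those pairing $X C Z^\T$ with $X A^\T$ and with $B Z^\T$, since the vanishing factor is not adjacent at first glance. For $\langle X A^\T, X C Z^\T\rangle = \mathrm{tr}(A X^\T X C Z^\T)$, I would cycle the trace to $\mathrm{tr}(Z^\T A X^\T X C)$ and invoke $Z^\T A = 0$. Similarly, $\langle B Z^\T, X C Z^\T\rangle = \mathrm{tr}(Z B^\T X C Z^\T) = \mathrm{tr}(B^\T X C Z^\T Z)$, which vanishes because $B^\T X = (X^\T B)^\T = 0$. This exhausts all six pairs, and each of the four equalities in Condition~\ref{condition:orthogonal} gets used at least once.

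I do not expect a serious obstacle here; the only thing to get right is the bookkeeping of the cyclic permutations so that the vanishing factor lands next to its neighbor. One point worth flagging is that the argument never uses Conditions~\ref{condition:invertible}, \ref{condition:orthonormal}, \ref{condition:diagonal}, or \ref{condition:sign}: the decomposition holds under the orthogonality condition alone, consistent with the hypotheses as stated.
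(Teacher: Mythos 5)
Your proof is correct and takes essentially the same route as the paper's: both expand $\mathrm{SS}(\cdot)=\mathrm{tr}((\cdot)^\T(\cdot))$ bilinearly and kill the cross terms using Condition~\ref{condition:orthogonal} together with the cyclic property of the trace. The only cosmetic difference is that the paper eliminates some cross terms already at the level of the matrix product $Q^\T Q$ before taking the trace, whereas you handle all six pairs uniformly as Frobenius inner products.
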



Theorem~\ref{theorem:projections} shows that the projections we use in the estimation algorithm are likelihood-preserving.
The idea is that, for example, $\tilde{A}$ is the result of an unconstrained optimization step on $A$,
and we project $(\tilde{A},C)$ to $(A_1,C_1)$ to enforce the constraints in Condition~\ref{condition:identifiability} without affecting the likelihood.
To interpret items \ref{item:UD-proj} and \ref{item:VD-proj} of Theorem~\ref{theorem:projections}, note that in the algorithm, we optimize with respect to $G := U D$ and $H := V D$, rather than $U$ and $V$.
We write $Q^\ps$ to denote the pseudoinverse.
When $(Q^\T Q)^{-1}$ exists, $Q^\ps = (Q^\T Q)^{-1} Q^\T$.

\begin{theorem}[Likelihood-preserving projections]
\label{theorem:projections}
Suppose $(A,B,C,D,U,V,X,Z)$ satisfies Condition~\ref{condition:identifiability}.
Fix $X$ and $Z$ and define $\eta(\cdot)$ as in Equation~\ref{equation:eta}.
\begin{enumerate}
\item\label{item:A-proj} Let $\tilde{A}\in\R^{J\times K}$. Define $A_1 = \tilde{A} - Z (Z^\ps \tilde{A})$ and $C_1 = C + (Z^\ps \tilde{A})^\T$.
Then $\eta(A_1,B,\allowbreak C_1,D,U,V) = \eta(\tilde{A},B,C,D,U,V)$ and $(A_1,B,C_1,D,U,V)$ satisfies Condition~\ref{condition:identifiability}.
\item\label{item:B-proj} Let $\tilde{B}\in\R^{I\times L}$. Define $B_1 = \tilde{B} - X (X^\ps \tilde{B})$ and $C_1 = C + (X^\ps \tilde{B})$.
Then $\eta(A,B_1,\allowbreak C_1,D,U,V) = \eta(A,\tilde{B},C,D,U,V)$ and $(A,B_1,C_1,D,U,V)$ satisfies Condition~\ref{condition:identifiability}.
\item\label{item:UD-proj} Let $\tilde{G} \in\R^{I\times M}$.  Define $G_0 = \tilde{G} - X (X^\ps \tilde{G})$ and let $U_1 D_1 V_1^\T$ be the compact SVD (of rank $M$) of $G_0 V^\T$. Assume the singular values are distinct and positive, and choose the SVD in such a way that Conditions~\ref{condition:diagonal} and \ref{condition:sign} are satisfied. 
Define $A_0 = A + V (X^\ps \tilde{G})^\T$,  $A_1 = A_0 - Z (Z^\ps A_0)$, and $C_1 = C + (Z^\ps A_0)^\T$.
Then $\eta(A_1,B,C_1,\allowbreak D_1,U_1,V_1) = \eta(A,B,C,\Id,\tilde{G},V)$ and $(A_1,B,C_1,D_1,U_1,V_1)$ satisfies Condition~\ref{condition:identifiability}.
\item\label{item:VD-proj} Let $\tilde{H} \in\R^{J\times M}$.  Define $H_0 = \tilde{H} - Z (Z^\ps \tilde{H})$ and let $U_1 D_1 V_1^\T$ be the compact SVD (of rank $M$) of $U H_0^\T$. Assume the singular values are distinct and positive, and choose the SVD in such a way that Conditions~\ref{condition:diagonal} and \ref{condition:sign} are satisfied. 
Define $B_0 = B + U (Z^\ps \tilde{H})^\T$,  $B_1 = B_0 - X (X^\ps B_0)$, and $C_1 = C + X^\ps B_0$.
Then $\eta(A,B_1,C_1,D_1,U_1,V_1) = \eta(A,B,C,\Id,U,\tilde{H})$ and $(A,B_1,C_1,D_1,U_1,V_1)$ satisfies Condition~\ref{condition:identifiability}.
\end{enumerate}
\end{theorem}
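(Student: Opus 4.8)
The plan is to exploit the fact that the likelihood enters only through the linear predictor $\eta$: since $g(\E(\bm{Y})) = \eta(A,B,C,D,U,V)$ and the projections leave the dispersion parameters $S,T,\omega$ untouched, establishing that each projection preserves the \emph{value} of $\eta(\cdot)$ immediately gives that it preserves $\E(\bm{Y})$ and hence the likelihood. So for every item I would verify two things separately: (i) the claimed identity $\eta(\text{new}) = \eta(\text{old})$, and (ii) that the new tuple satisfies each clause of Condition~\ref{condition:identifiability}. I would treat items \ref{item:A-proj}--\ref{item:B-proj} as base cases and then reduce items \ref{item:UD-proj}--\ref{item:VD-proj} to them.

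For item \ref{item:A-proj}, only the $X A^\T + X C Z^\T$ part of $\eta$ is affected, so I would expand $X A_1^\T + X C_1 Z^\T$ using $A_1^\T = \tilde{A}^\T - (Z^\ps \tilde{A})^\T Z^\T$ and $C_1 = C + (Z^\ps \tilde{A})^\T$; the two copies of $X(Z^\ps \tilde{A})^\T Z^\T$ cancel and leave $X \tilde{A}^\T + X C Z^\T$, as claimed. For the constraints, the only clause not inherited verbatim from the hypotheses is $Z^\T A_1 = 0$ in Condition~\ref{condition:orthogonal}, and this follows from the projector identity $Z^\T Z (Z^\T Z)^{-1} = \Id$, giving $Z^\T A_1 = Z^\T \tilde{A} - Z^\T Z (Z^\ps \tilde{A}) = 0$. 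Item \ref{item:B-proj} is the mirror image with $X$ and $Z$ interchanged and $X^\T B_1 = 0$ verified the same way.

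For item \ref{item:UD-proj}, the key structural observation is that, under the assumption that $G_0 V^\T$ has $M$ distinct positive singular values, the rank-$M$ compact SVD reconstructs it exactly, i.e.\ $U_1 D_1 V_1^\T = G_0 V^\T$. I would then split $G_0 V^\T = \tilde{G} V^\T - X(X^\ps \tilde{G}) V^\T$, absorb the cross term $X(X^\ps \tilde{G}) V^\T$ into the bilinear part by noting that $A_0 = A + V(X^\ps \tilde{G})^\T$ contributes exactly $X A_0^\T = X A^\T + X (X^\ps \tilde{G}) V^\T$, and observe that the map $(A_0,C)\mapsto(A_1,C_1)$ is precisely the projection of item \ref{item:A-proj} applied to $A_0$. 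Combining these, the cross term cancels, the term $\tilde{G} V^\T$ survives, and the base-case identity supplies $X A_1^\T + X C_1 Z^\T = X A_0^\T + X C Z^\T$, yielding $\eta(A_1,B,C_1,D_1,U_1,V_1) = \eta(A,B,C,\Id,\tilde{G},V)$.

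I expect the constraint verification in item \ref{item:UD-proj} to be the main obstacle, specifically showing $X^\T U_1 = 0$ and $Z^\T V_1 = 0$ in Condition~\ref{condition:orthogonal}. The clean route is to use the SVD relations $U_1 = G_0 V^\T V_1 D_1^{-1}$ and $V_1 = V G_0^\T U_1 D_1^{-1}$ (valid since $D_1$ is invertible): then $X^\T U_1 = (X^\T G_0) V^\T V_1 D_1^{-1}$ vanishes because $X^\T G_0 = X^\T \tilde{G} - X^\T X (X^\ps \tilde{G}) = 0$, and $Z^\T V_1 = (Z^\T V) G_0^\T U_1 D_1^{-1}$ vanishes because $Z^\T V = 0$ by the original hypotheses. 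The orthonormality in Condition~\ref{condition:orthonormal} and the clauses of Conditions~\ref{condition:diagonal} and \ref{condition:sign} hold by the construction of the SVD, while $Z^\T A_1 = 0$ is inherited from the item \ref{item:A-proj} reduction. Item \ref{item:VD-proj} is handled symmetrically, interchanging the roles of $(X,U,G,A)$ and $(Z,V,H,B)$.
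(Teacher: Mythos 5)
Your proposal is correct and follows essentially the same route as the paper's proof: direct expansion and cancellation for items 1--2, the identity $U_1D_1V_1^\T = G_0V^\T$ (exact rank-$M$ reconstruction) for items 3--4, and the same SVD relations $U_1 = G_0 V^\T V_1 D_1^{-1}$ and $V_1 = V G_0^\T U_1 D_1^{-1}$ combined with $X^\T G_0 = 0$ and $Z^\T V = 0$ to verify the orthogonality constraints. Your framing of items 3--4 as a reduction to the item-1 projection applied to $A_0$ is a slightly cleaner organization of what the paper computes inline, but the argument is the same.
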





Next, we provide the computation time complexity of our estimation and inference algorithms in Sections~\ref{section:estimation} and \ref{section:inference}.
To simplify the expressions, here we assume 
\begin{align}
\label{equation:cost-dimension-bound}
\max\{K^2,L^2,M\} \leq \min\{I,J\}.
\end{align}
For the estimation algorithm,
preprocessing and initialization take $O(I J \max\{K,L,M\})$ time, and 
Table~\ref{table:estimation-time-complexity} summarizes the computation time for updating each model component.
The table first breaks out the time required to compute $\eta$ (Equation~\ref{equation:eta}), which is a prerequisite within each other update, 
and then lists the time required for each update given $\eta$.
In total, it takes $O(I J \max\{K^2,L^2,M^2\})$ time to perform each overall iteration.

For the inference algorithm, Table~\ref{table:inference-time-complexity} shows the computation time 
assuming Equation~\ref{equation:cost-dimension-bound} and also assuming $I \geq J$.  These are one-time costs since there are not repeated iterations.
When $M > 0$, the most expensive operation tends to be computing the joint uncertainty in $(U,V)$,
and as $J$ grows this dominates the cost.
We have experimented extensively but have not found a faster alternative that provides well-calibrated standard errors.


\begingroup
\renewcommand*{\arraystretch}{1.2}
\begin{table}  
\centering
\small
\caption{Computation time complexity of each update in the estimation algorithm.}%
\begin{tabular}{|l|c|}%
\hline
Operation & Time complexity \\
\hline
Computing $\eta$  &  $O(I J \max\{K,L,M\})$ \\ 
Updating $A$  &  $O(I J K^2)$ \\
Updating $B$  &  $O(I J L^2)$ \\
Updating $C$  &  $O(I J \max\{K^2,L^2\})$ \\
Updating $D$, $U$, and $V$  &  $O(I J M^2)$ \\
Updating $S$ and $T$  &  $O(I J)$ \\
\hline
Total per iteration  &  $O(I J \max\{K^2,L^2,M^2\})$ \\
\hline
\end{tabular}
\label{table:estimation-time-complexity}
\end{table}
\endgroup


\begingroup
\renewcommand*{\arraystretch}{1.2}
\begin{table}  
\centering
\small
\caption{Computation time complexity of the inference algorithm.}%
\begin{tabular}{|l|c|}%
\hline
Operation & Time complexity \\
\hline
Preprocessing  &  $O(I J \max\{K,L,M\})$ \\ 
Conditional uncertainty for each component  &  $O(I J \max\{K^2,L^2,M^2\})$  \\
Joint uncertainty in $(U,V)$ accounting for constraints  &  $O(I J^2 M^3)$  \\
Propagate uncertainty between components  &  $O(I J \max\{K^3, L^3, M^3\})$  \\
Compute approximate standard errors  &  $O(I J)$  \\
\hline
Total  &  $O(I J \max\{K^3, L^3, J M^3\})$ \\
\hline
\end{tabular}
\label{table:inference-time-complexity}
\end{table}
\endgroup

\section{Simulations}
\label{section:simulations}

In this section, we present simulation studies assessing
(a) consistency and statistical efficiency, 
(b) accuracy of standard errors,
(c) computation time and algorithm convergence, and
(d) robustness to the outcome distribution.
See Section~\ref{section:simulation-details} for more simulation results.

In each simulation run, the data are generated as follows;
see Section~\ref{section:simulation-details} for full details.
We generate the covariates using one of three schemes, \texttt{Normal}, \texttt{Gamma}, or \texttt{Binary},
then we generate the true parameters using either a \texttt{Normal} or \texttt{Gamma} scheme,
and finally we generate the outcome data using the log link and a
\texttt{NB} (negative binomial), \texttt{LNP} (log-normal Poisson), \texttt{Poisson}, or \texttt{Geometric} distribution.
For brevity, we refer to each combination of choices by the triplet of outcomes/covariates/parameters, 
for instance, \texttt{NB/Binary/Normal}.

\begin{figure}
  \centering
  \includegraphics[trim=0.6cm 0 1.5cm 0, clip, height=0.2\textheight]{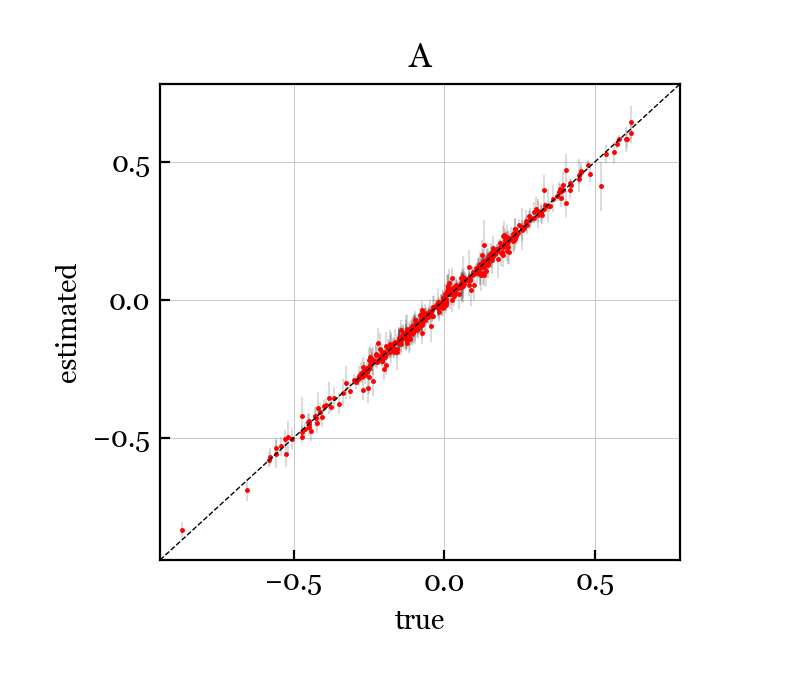}
  \includegraphics[trim=1.2cm 0 1.5cm 0, clip, height=0.2\textheight]{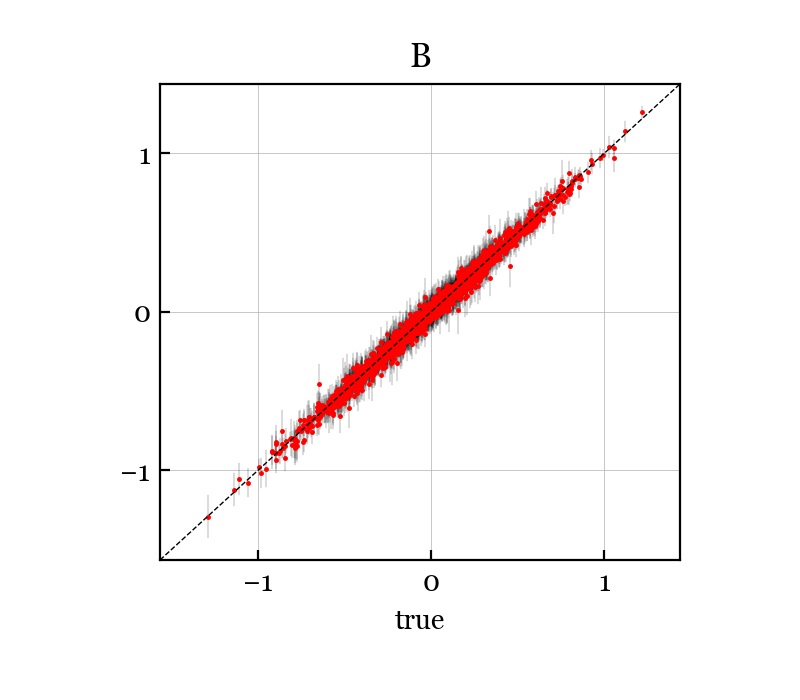}
  \includegraphics[trim=1.0cm 0 1.5cm 0, clip, height=0.2\textheight]{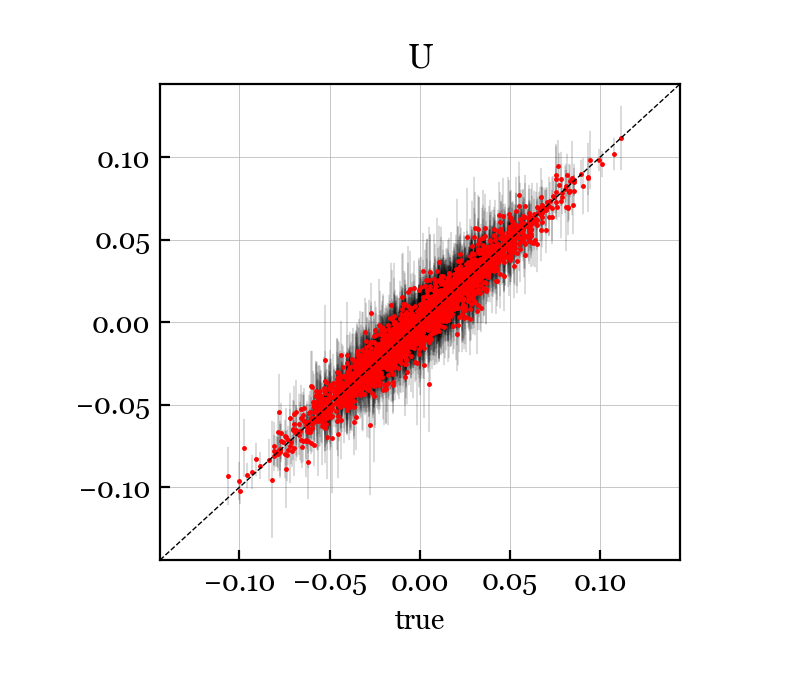}
  \includegraphics[trim=1.2cm 0 1.5cm 0, clip, height=0.2\textheight]{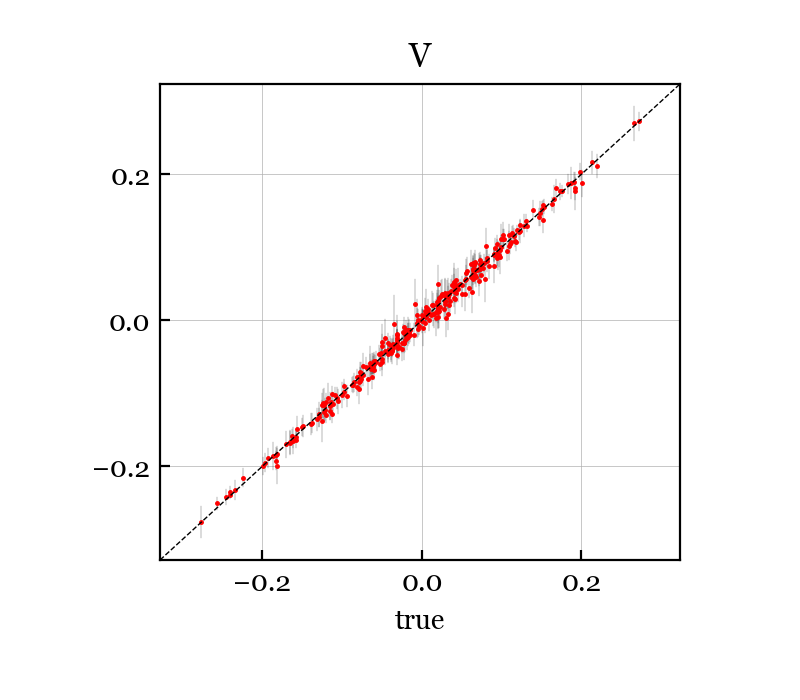}
  \includegraphics[trim=0.6cm 0 1.5cm 0, clip, height=0.2\textheight]{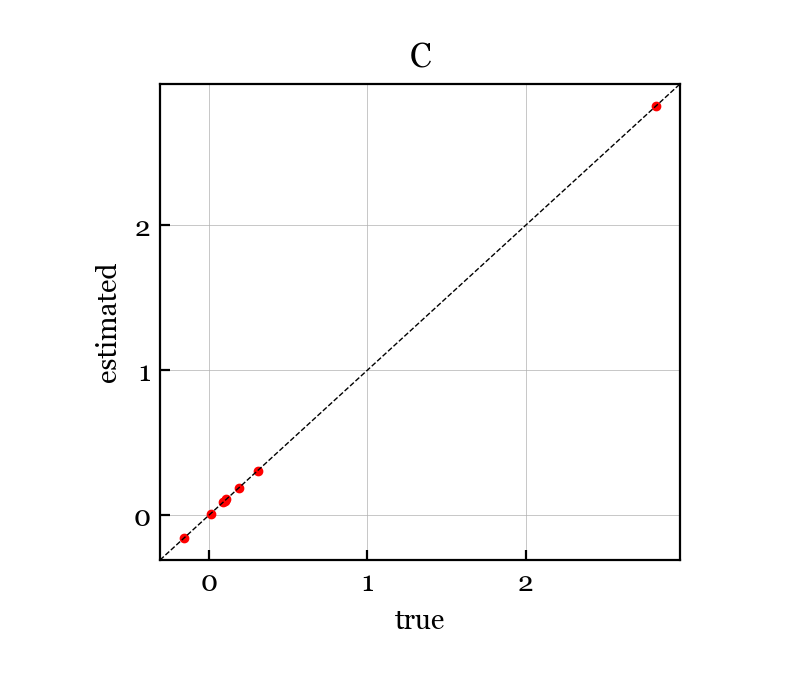}
  \includegraphics[trim=1.2cm 0 1.5cm 0, clip, height=0.2\textheight]{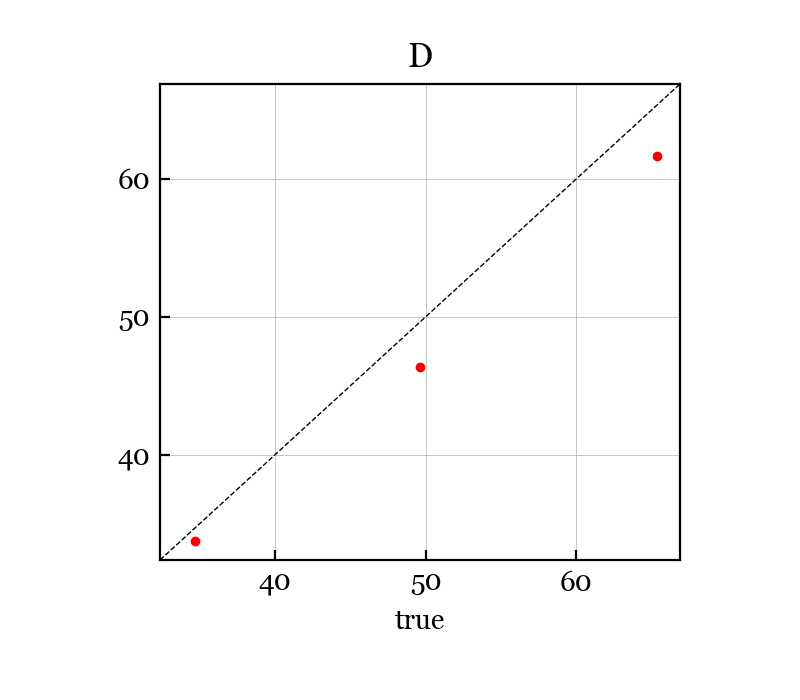}
  \includegraphics[trim=1.0cm 0 1.5cm 0, clip, height=0.2\textheight]{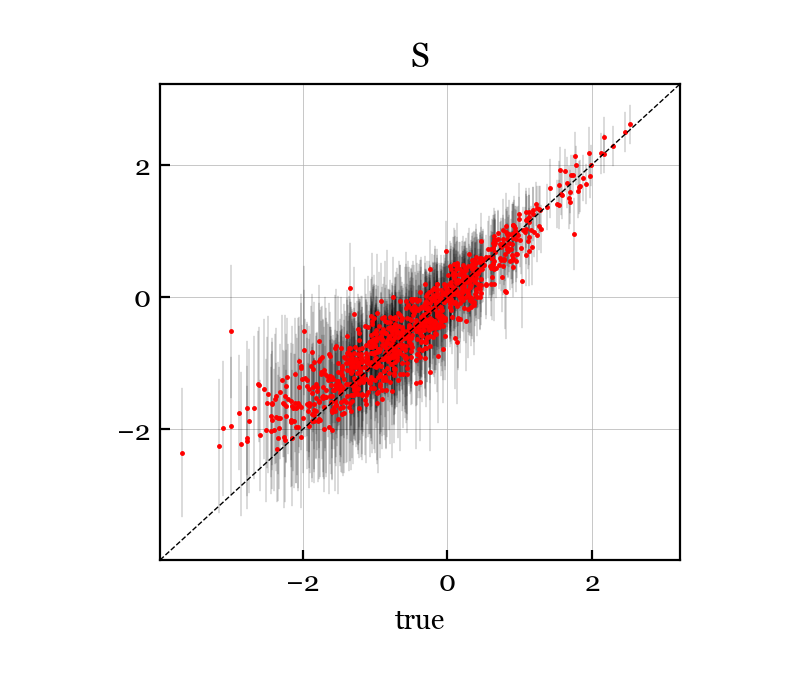}
  \includegraphics[trim=1.2cm 0 1.5cm 0, clip, height=0.2\textheight]{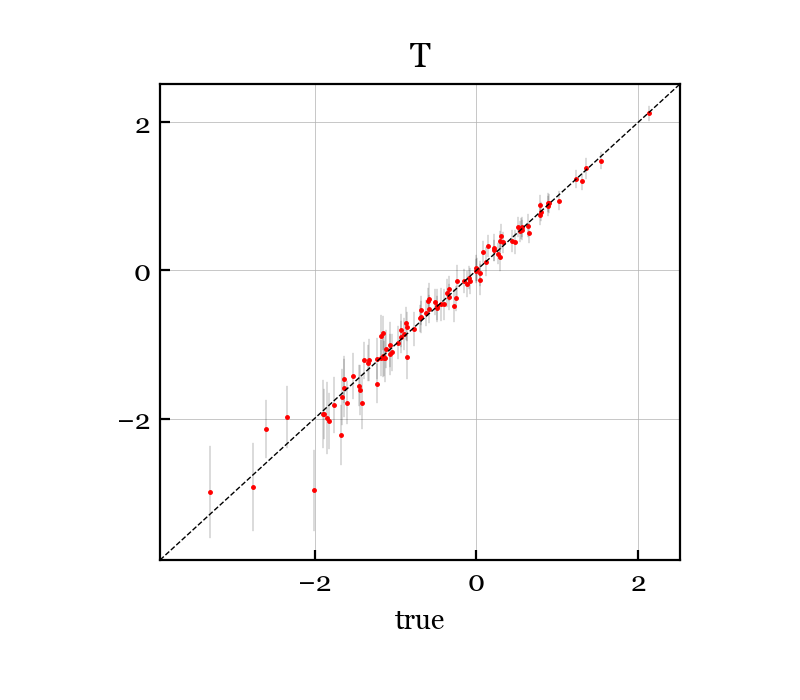}
  \caption{Scatterplots of estimated versus true parameters for a typical simulated data matrix.}
  \label{figure:scatterplot}
\end{figure}

\textbf{Typical example.} 
Figure~\ref{figure:scatterplot} shows scatterplots of the estimated versus true parameters for an \texttt{NB/Normal/Normal} simulation with $I = 1000$, $J = 100$, $K = 4$, $L = 2$, and $M = 3$. Each dot represents a single univariate parameter, for example, the plot for $A$ contains $J K$ dots, one for each entry $a_{j k}$.
The error bars are $\pm 2 \,\hat{\mathrm{se}}$.
Visually, the estimates are close to true values, and the standard errors look appropriate.
Since the likelihood and prior are invariant to permutations and sign changes of the latent factors, in this section 
we permute and flip signs to find the correct assignment to the true latent factors.
Note that the $s_i$ estimates are biased upward when the true value of $s_i$ is very low; 
this is because very low values of $s_i$ make row $i$ roughly Poisson, and in this case 
any value of $s_i$ from $-\infty$ to $\approx -2$ could yield a reasonable fit.
The prior on $s_i$ prevents the estimate from diverging, but also leads to an upward bias when the true value is very low.

\subsection{Consistency and statistical efficiency}
\label{section:consistency}

In many applications, $I$ is much larger than $J$.
One would hope that for any $J$, the estimates of $A$, $C$, $V$, and $T$ would be consistent as $I\to\infty$
since then these parameters have fixed dimension.
Meanwhile, one cannot hope for consistency in $B$, $U$, and $S$ as $I\to\infty$. 

\begin{figure}
  \centering
  \includegraphics[trim=0.8cm 0 1.3cm 0, clip, height=0.2\textheight]{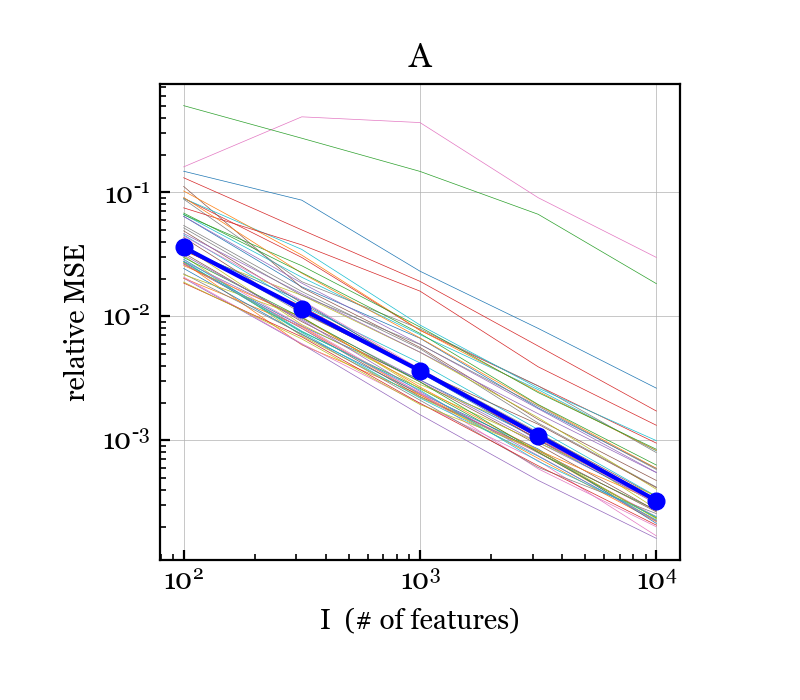}
  \includegraphics[trim=1.3cm 0 1.3cm 0, clip, height=0.2\textheight]{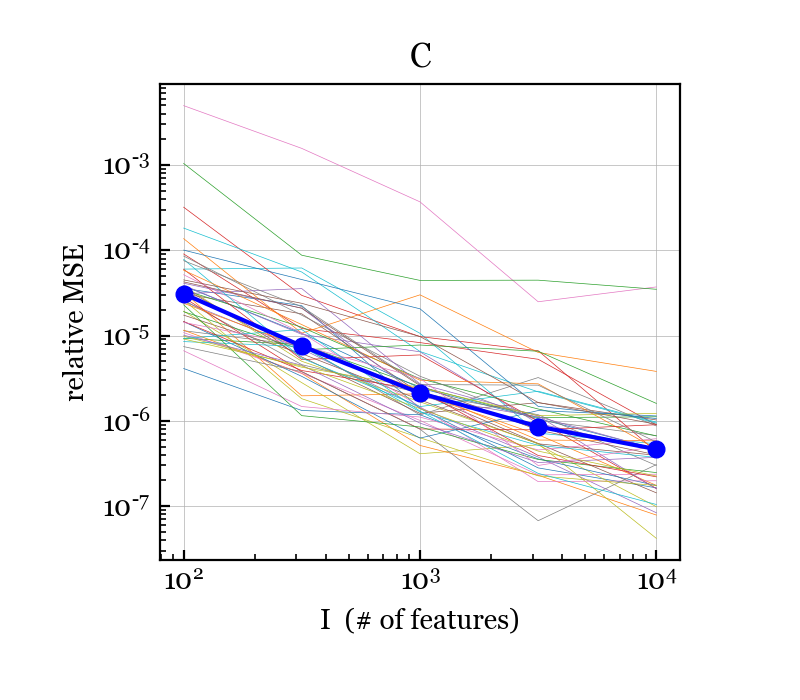}
  \includegraphics[trim=1.3cm 0 1.3cm 0, clip, height=0.2\textheight]{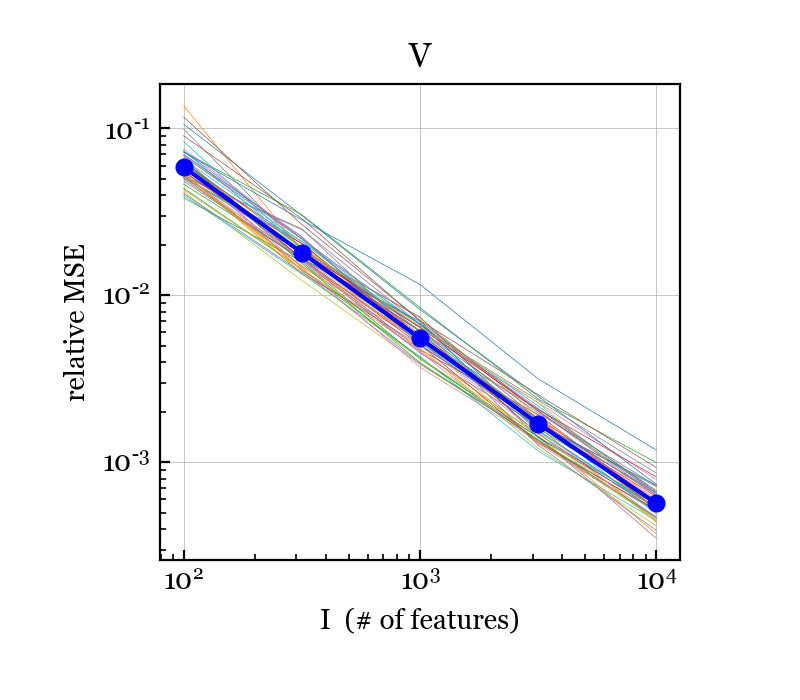}
  \includegraphics[trim=1.3cm 0 1.3cm 0, clip, height=0.2\textheight]{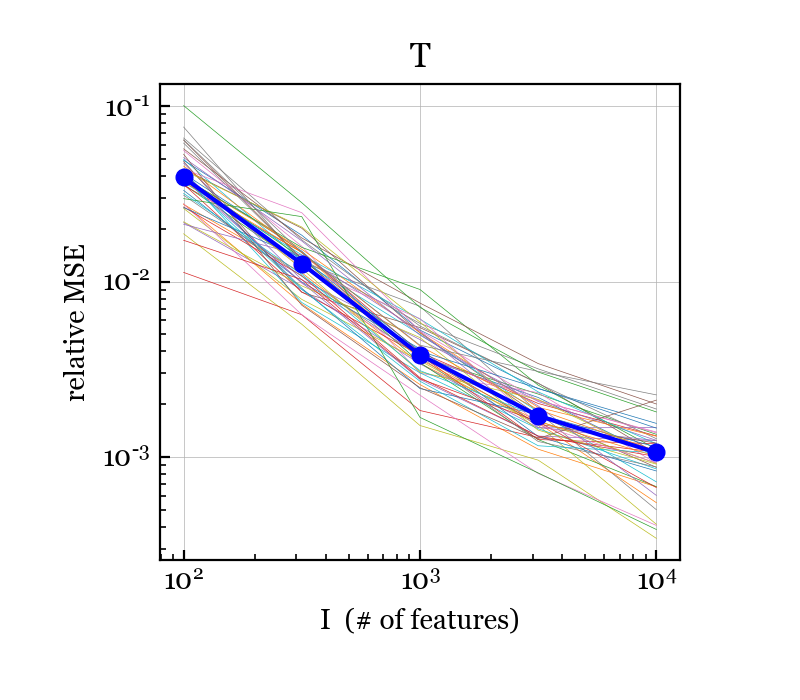}
  \caption{Relative mean-squared error between estimated and true parameter values. Each plot contains $50$ thin lines, one for each run, along with the median over the $50$ runs (thick blue line).}
  \label{figure:consistency}
\end{figure}

To assess consistency and efficiency, 
for each $I\in\{100,316,1000,3162,10000\}$
we use the \texttt{NB/Normal/Normal} simulation scheme to generate $50$ data matrices with $J = 100$, $K = 4$, $L = 2$, and $M = 3$,
each with a different set of covariates and parameters.
For each data matrix, we run our NB-GBM estimation algorithm with convergence tolerance $\tau = 10^{-8}$.
Figure~\ref{figure:consistency} shows the relative mean-squared error (MSE) between the estimates and the true values for $A$, $C$, $V$, and $T$.
For $T$, we measure the relative MSE in the dispersion parametrization (rather than log-dispersion) since there is little difference between, say, $t_j = -5$ and $t_j = -100$; both make column $j$ approximately Poisson distributed.

We see that for $A$, $C$, $V$, and $T$, the relative MSE is decreasing to zero, suggesting that the estimates of these parameters are consistent as $I\to\infty$. Further, for $A$ and $V$, the relative MSE appears to be $O(1/I)$, which is the optimal rate of convergence
even for fixed-dimension parametric models.
For $B$, $U$, and $S$ (Figure~\ref{figure:consistency-all}), the relative MSE hovers around a small nonzero value, but does not appear to be trending to zero, as expected.
For $D$ and $\omega$  (Figure~\ref{figure:consistency-all}), the relative MSE is small and the trend is suggestive but less clear.

\subsection{Accuracy of standard errors}
\label{section:coverage}

Next, we assess the accuracy of the standard errors produced by our inference algorithm,
in terms of the coverage.
Ideally, a 95\% confidence interval would contain the true parameter 95\% of the time, 
but even when the model is correct, this is not guaranteed since intervals are usually based on an approximation to the distribution of an estimator.
To assess coverage, for each $I\in\{100,1000,10000\}$,
we use the \texttt{NB/Normal/Normal} scheme to generate $50$ data matrices with $J = 100$, $K = 4$, $L = 2$, and $M = 3$,
each with a different set of covariates and parameters.
For each data matrix, we run our NB-GBM estimation algorithm (with tolerance $\tau = 10^{-8}$)
and then we run our NB-GBM inference algorithm to obtain approximate standard errors.
We construct Wald-type confidence intervals for each univariate parameter, for example,
the 95\% confidence interval for $a_{j k}$ is $\hat{a}_{j k} \pm 1.96\, \hat{\mathrm{se}}$ where
$\hat{\mathrm{se}}$ is the approximate standard error for $\hat{a}_{j k}$.


Figure~\ref{figure:coverage} shows actual coverage versus target coverage,
estimated by combining across all $50$ runs and across all entries of each parameter matrix/vector.
Perfect coverage would be a straight line on the diagonal. 
We exclude $c_{1 1}$, $D$, and $\omega$ in these coverage results since it seems challenging to estimate them without non-negligible bias, skewing the results.
We see in Figure~\ref{figure:coverage} that the actual coverage for $A$, $B$, $C$, $U$, and $S$ is excellent at every target coverage level from 0\% to 100\%.
For $V$ and $T$, the coverage is reasonably good for the smaller values of $I$ but appears to degrade when $I$ increases.


\begin{figure}
  \centering
  \includegraphics[trim=0.8cm 0 1.3cm 0, clip, height=0.2\textheight]{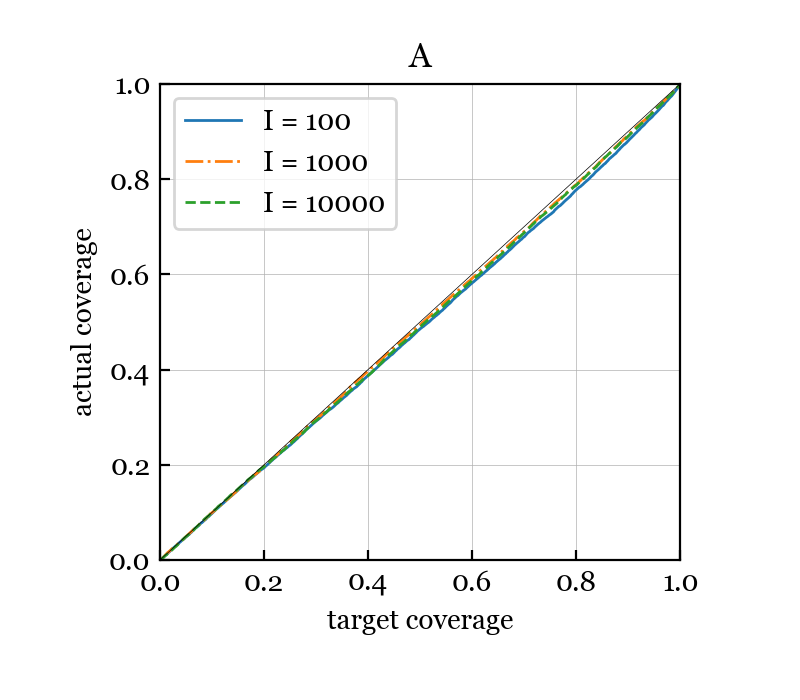}
  \includegraphics[trim=1.3cm 0 1.3cm 0, clip, height=0.2\textheight]{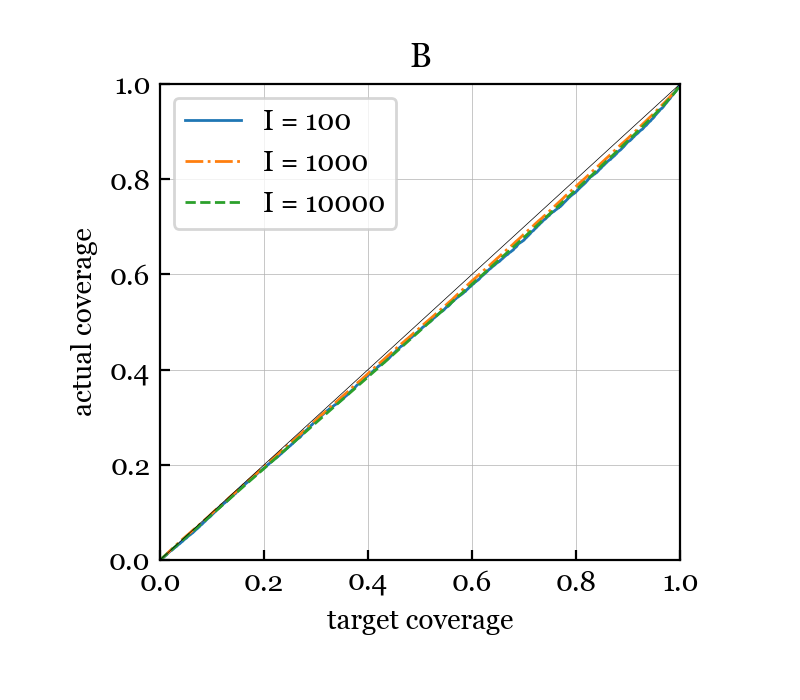}
  \includegraphics[trim=1.3cm 0 1.3cm 0, clip, height=0.2\textheight]{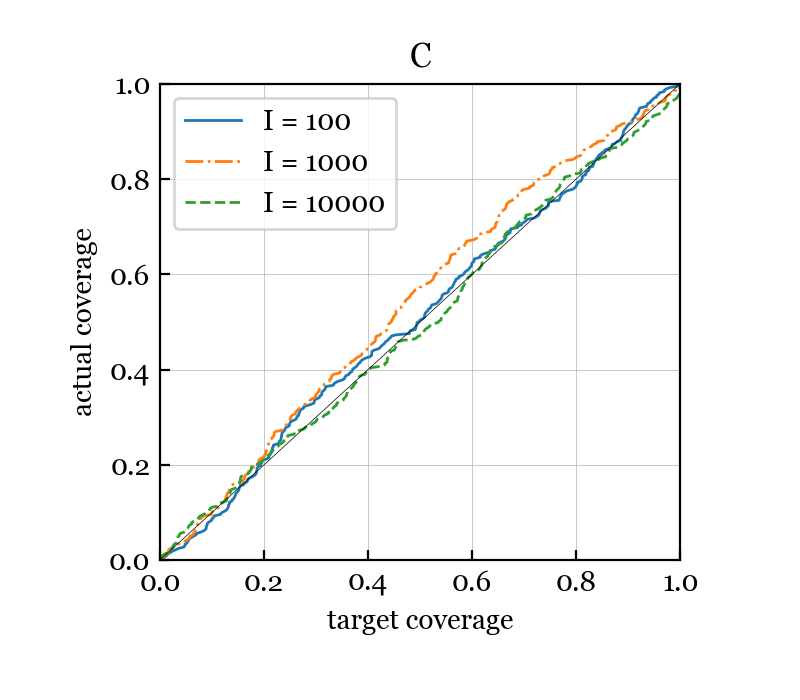}\\
  \includegraphics[trim=0.8cm 0 1.3cm 0, clip, height=0.2\textheight]{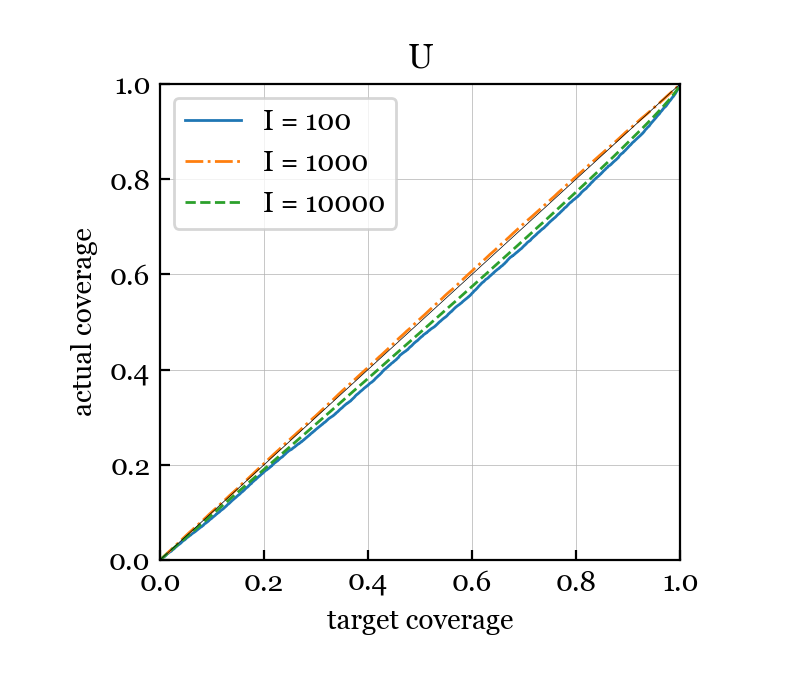}
  \includegraphics[trim=1.3cm 0 1.3cm 0, clip, height=0.2\textheight]{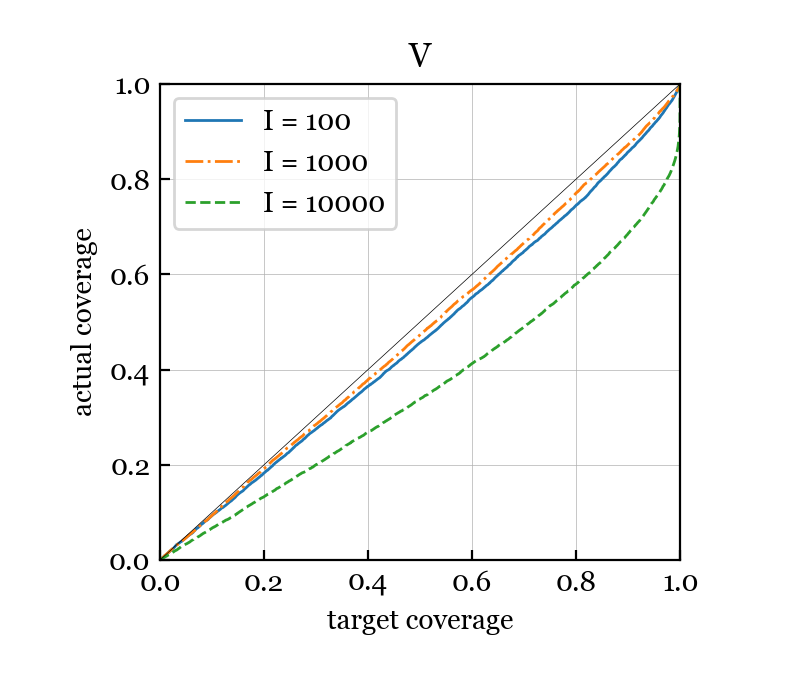}
  \includegraphics[trim=1.3cm 0 1.3cm 0, clip, height=0.2\textheight]{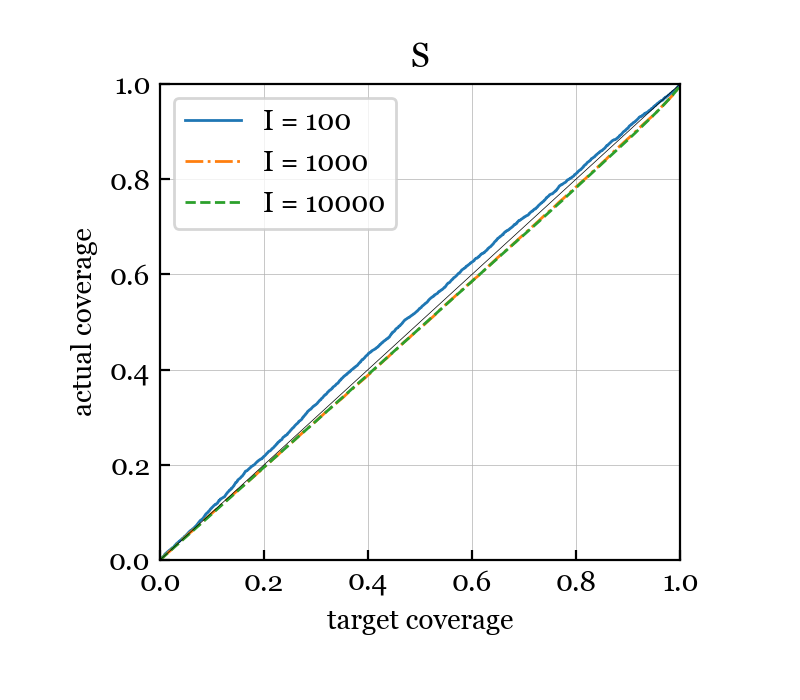}
  \includegraphics[trim=1.3cm 0 1.3cm 0, clip, height=0.2\textheight]{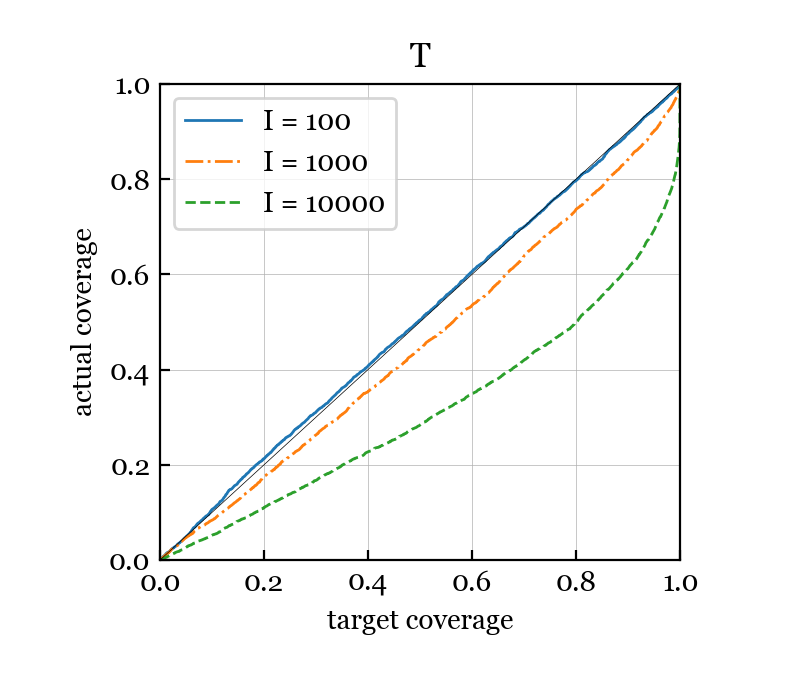}
  \caption{Coverage of confidence intervals for the entries of each parameter matrix/vector.}
  \label{figure:coverage}
\end{figure}

\subsection{Computation time and algorithm convergence}
\label{section:convergence}

\textbf{Computation time.}
For each combination of $I\in\{100,316,1000,3162,10000\}$ and $J\in\{15,30,60,120,240\}$, 
we generate $10$ data matrices using the \texttt{NB/Normal/Normal} simulation scheme with $K=4$, $L=2$, and $M=3$.
For each $I$ and $J$, Figure~\ref{figure:computation-time} shows the average computation time per iteration of the estimation algorithm,
along with the average computation time for the inference algorithm.
These empirical results agree with our theory in Section~\ref{section:theory} showing that the time per iteration scales like $I J$ (that is, linearly with the size of the data matrix) and the time for inference scales like $I J^2$.

\begin{figure}
  \centering
  \includegraphics[trim=0.5cm 0 2.65cm 0, clip, height=1.6in]{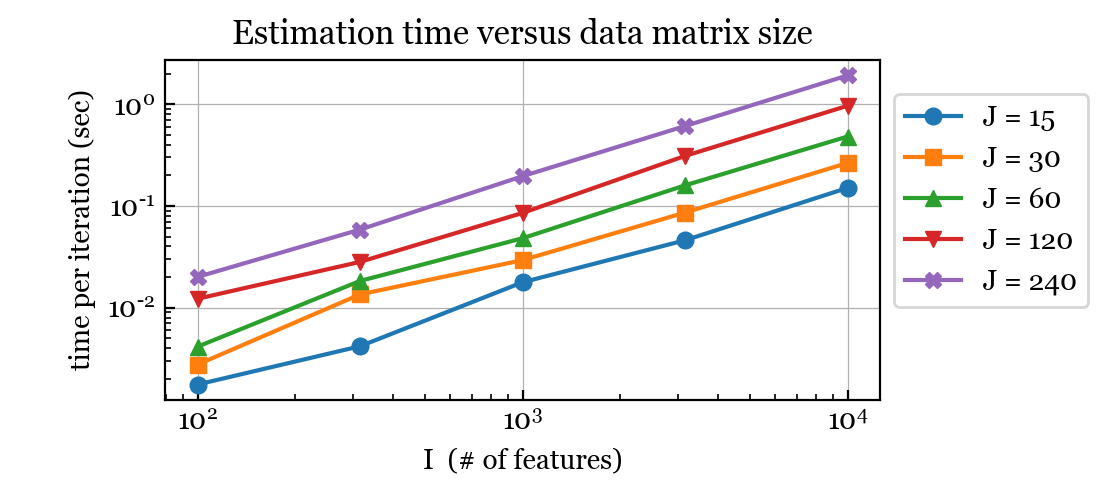}
  \includegraphics[trim=0.5cm 0 0      0, clip, height=1.6in]{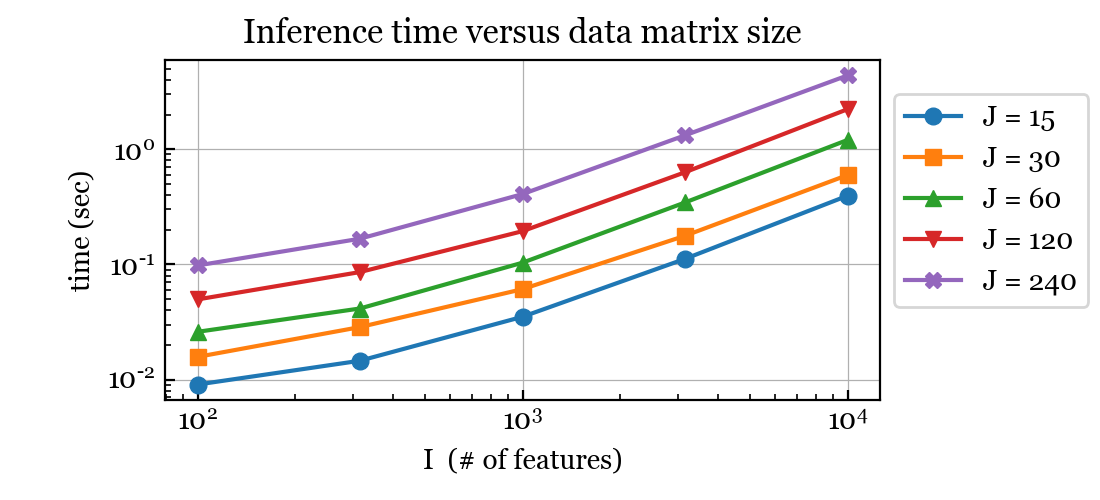}
  \caption{Computation time of our GBM algorithms as a function of $I$ and $J$.}
  \label{figure:computation-time}
\end{figure}

\textbf{Algorithm convergence.}
Next, we evaluate the number of iterations required for the estimation algorithm to converge.
Similarly to before, for $I\in\{100,1000,10000\}$, we run the \texttt{NB/Normal/Normal} scheme $25$ times with $J=100$, $K=4$, $L=2$, and $M=3$.
Figure~\ref{figure:algorithm-convergence} shows the log-likelihood+log-prior (plus a constant) versus iteration number for each simulation run.
In these simulations, the log-likelihood+log-prior levels off after around 5 or fewer iterations,
indicating that the algorithm converges rapidly.


\subsection{Robustness to the outcome distribution}
\label{section:robustness}

To assess the robustness of the NB-GBM to the assumption that the outcome distribution is negative binomial,
we rerun the experiments in Sections~\ref{section:consistency} and \ref{section:coverage}
using the following data simulation schemes:
(a) \texttt{LNP/Normal/Normal}
(b) \texttt{Poisson/Normal/Normal}, and
(c) \texttt{Geometric/Normal/Normal}.
The results in Figures~\ref{figure:LogNormalPoisson-outcomes}, \ref{figure:Poisson-outcomes}, and \ref{figure:Geometric-outcomes}
show that the algorithms are quite robust to misspecification of the outcome distribution.

\section{Application to gene expression analysis}
\label{section:gene-expression}

In this section, we evaluate our GBM algorithms on RNA-seq gene expression data.
An RNA-seq dataset consists of a matrix of counts in which entry $(i,j)$ is the number of high-throughput sequencing reads that were mapped to gene $i$ for sample $j$.
These read counts are related to gene expression level, but there are many biases -- both sample-specific and gene-specific. 
More generally, there are often significant sources of unwanted variation---both biological and technical---that obscure the signal of interest.
Most methods use pipelines that adjust for each bias sequentially, rather than in an integrated way.
GBMs enable one to use a single coherent model that adjusts for gene covariates and sample covariates as well as unobserved factors such as batch effects.

\subsection{Comparing to DESeq2 on lymphoblastoid cell lines}
\label{section:pickrell}

As a test of our GBM methods, we compare with DESeq2 \citep{love2014moderated}, a leading method for RNA-seq differential expression analysis.  
We first consider a benchmark dataset used by \citet{love2014moderated} consisting of 161 samples from lymphoblastoid cell lines \citep{pickrell2010understanding}.
We use the subset of 20,815 genes with nonzero median count across samples.

In both DESeq2 and the GBM, we adjust for two sample covariates: sequencing center (Argonne or Yale) and cDNA concentration.
To adjust for GC content, which is often the most important gene covariate, 
in the GBM we construct the $X$ matrix using a natural cubic spline basis with knots at the 2.5\%, 25\%, 50\%, 75\%, and 97.5\% quantiles of GC content.
DESeq2 does not have a built-in capacity to adjust for gene covariates, so for DESeq2, 
we adjust for GC using their recommended approach of pre-computing normalization factors using CQN \citep{hansen2012removing},
which uses the same spline basis.
Since DESeq2 does not adjust for latent factors, we first set $M = 0$ for direct comparison; later, we set $M = 2$. 

It is natural to use negative binomial (NB) outcomes for sequencing data since the technical variability is close to Poisson \citep{marioni2008rna}, and biological variability introduces overdispersion \citep{robinson2010edger}.
These modeling choices yield an NB-GBM with $I = 20{,}815$, $J = 161$, $K = 7$, and $L = 3$.
DESeq2 also uses an NB model, so the main difference between DESeq2 and this particular GBM is the 
way that the parameters and standard errors are estimated.
Using a 1.8GHz processor, 
GBM estimation and inference took 42 seconds, whereas DESeq2+CQN took 105 seconds.

\begin{figure}
  \centering
  \includegraphics[trim=0.6cm 0 1.1cm 0, clip, width=0.325\textwidth]{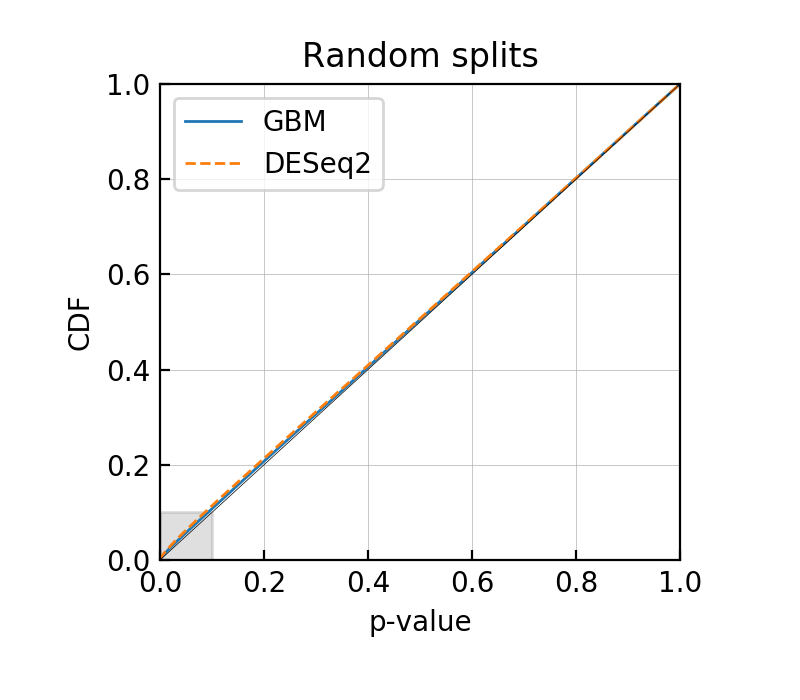}
  \includegraphics[trim=0.6cm 0 1.1cm 0, clip, width=0.325\textwidth]{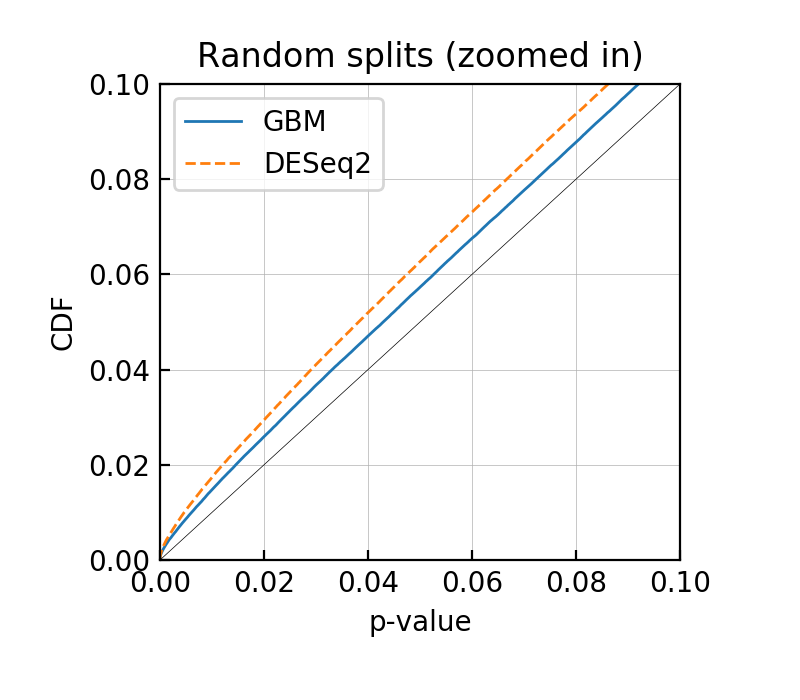}
  \includegraphics[trim=0.6cm 0 1.1cm 0, clip, width=0.325\textwidth]{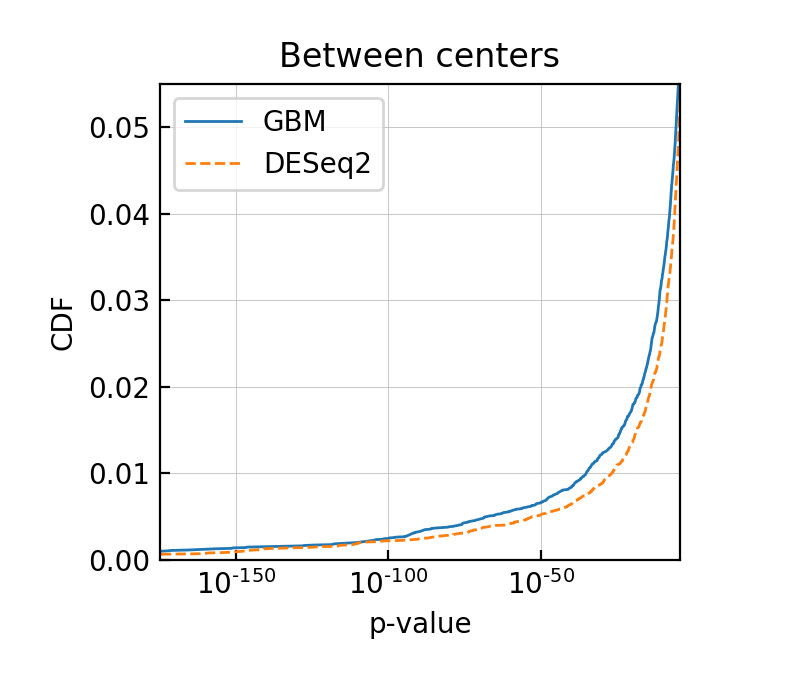}
  \caption{Comparison of p-value distributions on Pickrell data using DESeq2 and the GBM.
  (Left) CDFs of p-values for mock null comparisons over 50 random splits.
  (Middle) Same, but zoomed in to region shaded in left-hand figure.
  (Right) CDFs of p-values for testing for a difference between sequencing centers.
  The x-axis ends at the significance threshold for controlling FWER at 0.05.}
  \label{figure:pickrell-pvalues}
\end{figure}

\textbf{Correctness of p-values under mock null comparisons.}
First, we assess the calibration of p-values for testing for differential expression between two conditions.
Under the null hypothesis of no difference, the p-values would ideally be uniformly distributed on $[0,1]$.
Since the Pickrell samples appear to be relatively homogeneous (when adjusting for sequencing center and cDNA concentration),
we can assess how well this ideal is attained 
by randomly splitting the samples into two groups and testing for differential expression.

To this end, we add a sample covariate $z_{j 4}$ consisting of a dummy variable for the assignment of samples to the two random groups.
Thus, the null hypothesis of no difference for gene $i$ is $b_{i 4} = 0$ and the alternative is $b_{i 4} \neq 0$.
The (two-sided) p-value for gene $i$ is 
$p_i := 2 \big(1 - \Phi(|\hat{b}_{i 4} / \hat{\mathrm{se}}(b_{i 4})|)\big)$
where $\Phi(x)$ is the standard normal CDF (cumulative distribution function).
Figure~\ref{figure:pickrell-pvalues} shows the p-value CDF over all genes,
aggregating over 50 random splits into two groups containing 80 and 81 samples, respectively.
Both DESeq2 and the GBM yield p-values that are very close to the ideal uniform distribution.
This indicates that both methods are accurately controlling the false positive rate.

\textbf{Sensitivity to detect actual differences.}
To compare sensitivity,
we test for differential expression between sequencing centers by computing p-values 
$p_i = 2 \big(1 - \Phi(|\hat{b}_{i \ell} / \hat{\mathrm{se}}(b_{i \ell})|)\big)$
where $\ell$ is the index of the sequencing center covariate.
Using Bonferroni to control the family-wise error rate (FWER) at 0.05,
the number of genes detected as differentially expressed by the GBM and DESeq2 are 1038 and 892, respectively.
Figure~\ref{figure:pickrell-pvalues} shows the lower tail of the p-value CDFs.
In these results, the GBM yields equal or greater sensitivity.

\begin{figure}
  \centering
  \includegraphics[trim=1cm 0.5cm 1.8cm 2cm, clip, height=0.4\textheight]{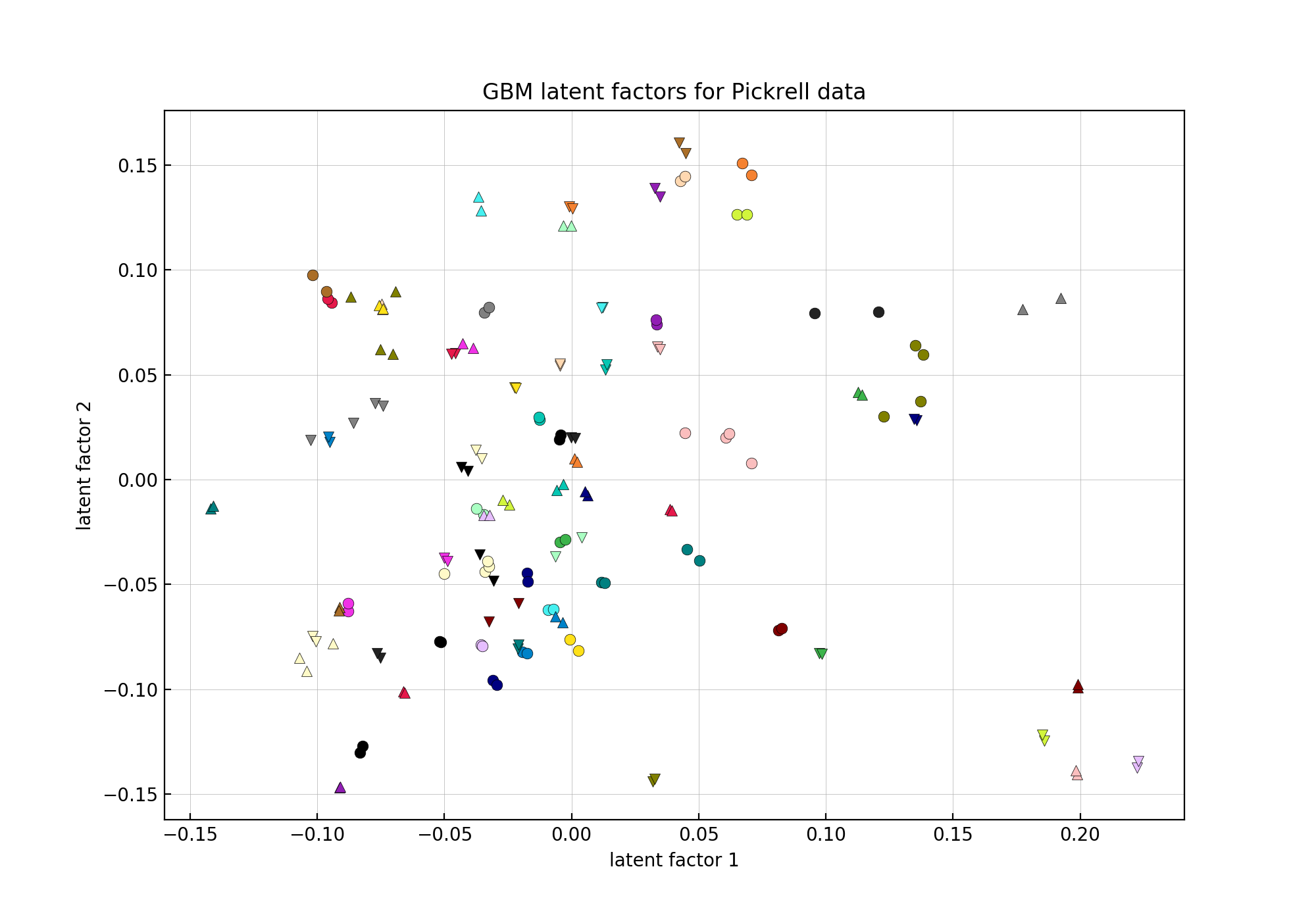}
  \caption{Visualization of Pickrell data using GBM latent factors, adjusting for covariates. Each dot represents one of the 161 samples, and 
  each subject ID is indicated by a different combination of color and shape (by cycling through 32 colors and 3 shapes).}
  \label{figure:pickrell-latent-factors}
\end{figure}

\textbf{Visualization using GBM latent factors.}
The latent factors of the GBM provide a model-based approach to visualizing high-dimensional count data, while adjusting for covariates.
To illustrate, we modify the model to use $M = 2$. 
Figure~\ref{figure:pickrell-latent-factors} shows a scatterplot of $v_{j 2}$ versus $v_{j 1}$ for the estimated $V$ matrix.
Observe that this yields very tightly grouped clusters of samples from the same subject.
This is analogous to plotting the first two scores in principal components analysis (PCA).
Thus, for comparison, Figure~\ref{figure:pickrell-pca} shows the PCA plots based on (a) log-transformed TPMs (Transcripts per Million), 
specifically, $\log(\texttt{TPM}_{i j}+1)$,
and (b) the  variance stabilizing transform (VST) in the DESeq2 package, using the GC adjustment from CQN.
The DESeq2 model does not estimate latent factors, which is why PCA is used in DESeq2.
The TPM plot is very noisy in terms of subject ID clusters. The VST plot is better than TPMs, but still not quite as clean as the GBM plot.



Overall, in terms of sensitivity, controlling false positives, computation time, and visualization,
these results suggest that the GBM performs very well.
When $J$ is very small, it may be beneficial to augment the GBM to use DESeq2-like shrinkage estimates for $s_i$.



\subsection{Analyzing GTEx data for aging-related genes}
\label{section:gtex}

Next, we test our methods on an application of scientific interest, 
using RNA-seq data from the Genotype-Tissue Expression (GTEx) project \citep{mele2015human}, 
consisting of 8{,}551 samples from 30 tissues in the human body, obtained from 544 subjects.
We apply the GBM to find genes whose expression changes with age, adjusting for technical biases.
See \citet{jia2018analysis} and \citet{zeng2020transcriptome} for studies of age-related genes using GTEx.

We use the GTEx RNA-seq data from recount2 \citep{collado2017reproducible},\footnote{Downloaded from \url{https://jhubiostatistics.shinyapps.io/recount} on 8/7/2020.} normalized using the \texttt{scale\_counts} function in the \texttt{recount} R library.
We use the subset of 8{,}551 samples that passed GTEx quality control,
and the subset of genes in chromosomes 1--22 that have an HGNC-approved gene symbol and have nonzero median across all samples.

\begin{figure}
  \centering
  \includegraphics[trim=0.5cm 0cm 0.5cm 0cm, clip, height=0.5\textheight]{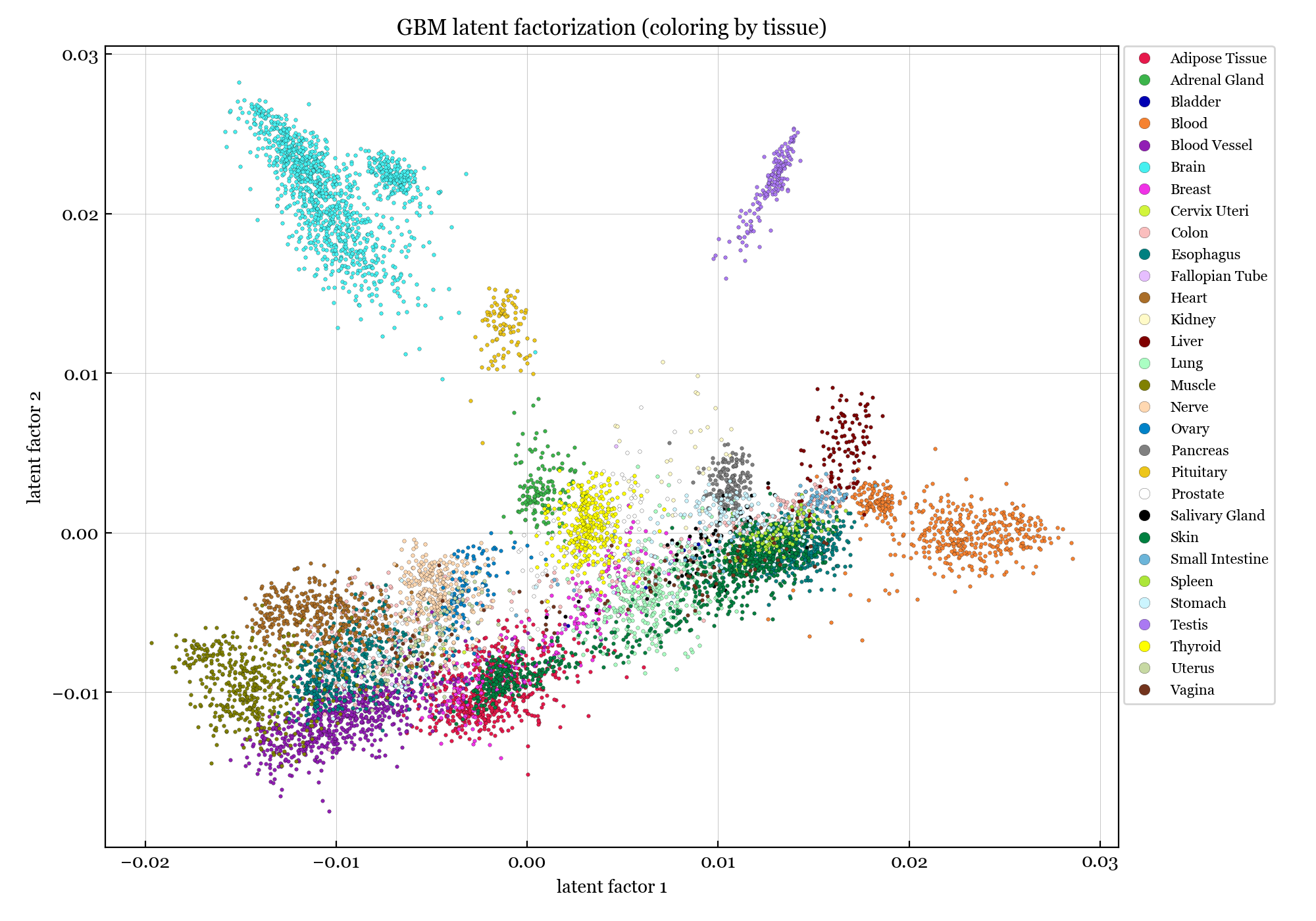}
  \caption{Visualization of GTEx data using NB-GBM latent factors, adjusting for covariates. Each dot represents one of the 8{,}551 samples, and the color indicates the tissue type.}
  \label{figure:gtex-smts}
\end{figure}

To visualize the samples, we take a random subset of 5{,}000 genes and estimate an NB-GBM with two latent factors and no sample covariates.
For the gene covariates, we use $\log(\texttt{length}_i)$, $\texttt{gc}_i$, and $(\texttt{gc}_i - \overline{\texttt{gc}})^2$ where $\texttt{length}_i$ is the sum of the exon lengths and $\texttt{gc}_i$ is the GC content of gene $i$.  Thus, in this initial model for visualization, $I = 5{,}000$, $J = 8{,}551$, $K = 4$, $L = 1$, and $M = 2$.
Figure~\ref{figure:gtex-smts} shows the latent factors ($v_{j 2}$ versus $v_{j 1}$), similarly to Figure~\ref{figure:pickrell-latent-factors}.
The samples tend to fall into clusters according to the tissue from which they were taken.
Some tissues, such as brain and blood, clearly contain two or more subclusters which turn out to correspond to subtissue types (Figure~\ref{figure:gtex-smtsd}).
Meanwhile, when more latent factors are used (that is, $M > 2$), some clusters that overlap in Figure~\ref{figure:gtex-smts} become well-separated in higher latent dimensions.
For comparison, running PCA on the log TPMs is not nearly as clear in terms of tissue/subtissue clusters (Figure~\ref{figure:gtex-pca-tpms}).

\textbf{Testing for age-related genes.}
To find genes that are related to aging, we add subject age as a sample covariate.  
Each gene then has a coefficient describing how its expression changes with age, and we compute a p-value for each gene to test whether its coefficient is nonzero.
Due to the heterogeneity of tissue/subtissue types, we analyze each subtissue type separately.
To perform both exploratory analysis and valid hypothesis testing, we used a random subset of 108 subjects during an exploratory model-building phase and then used the remaining 436 subjects during a testing phase with the selected model.

In the exploratory phase, we considered adjusting for various technical sample covariates and gene covariates, and varied $M$ from 0 to 10.  For each model and each subtissue type, we used the GBM to find the set of genes exhibiting a significant association with age, controlling FWER at 0.05 using the Bonferroni correction.  To score the relevance of each of these gene sets in terms of aging biology,
we computed its F1 score for overlap with the set of aging-related genes identified by 
\citet{de2009meta}.\footnote{From \url{https://genomics.senescence.info/gene_expression/signatures.html} on 8/11/2020.}
Based on this exploratory analysis, we chose to keep $\log(\texttt{length}_i)$, $\texttt{gc}_i$, and $(\texttt{gc}_i - \overline{\texttt{gc}})^2$ as gene covariates, and use \texttt{smexncrt} (exonic rate, the fraction of reads that map within exons)
as well as \texttt{age} (subject age, coded as a numerical value in $\{25, 35, \ldots, 75\}$) as sample covariates.
For each subtissue, we chose the $M$ that yielded the highest F1 score on the exploratory data.

In the testing phase, we apply the selected model for each subtissue to test for age-associated genes.
For illustration, we present results for the ``Heart - Left Ventricle'' subtissue (Heart-LV), which had the highest F1 score across all subtissues
on the exploratory data.
We ran the GBM on the 176 Heart-LV samples in the test set,
using the 19{,}853 genes with nonzero median across these samples, with $M = 3$ based on the exploratory phase.
Thus, in this model, $I = 19{,}853$, $J = 176$, $K = 4$, $L = 3$, and $M = 3$.

\begin{figure}
  \centering
  \includegraphics[trim=0.6cm 0 1.1cm 0, clip, width=0.325\textwidth]{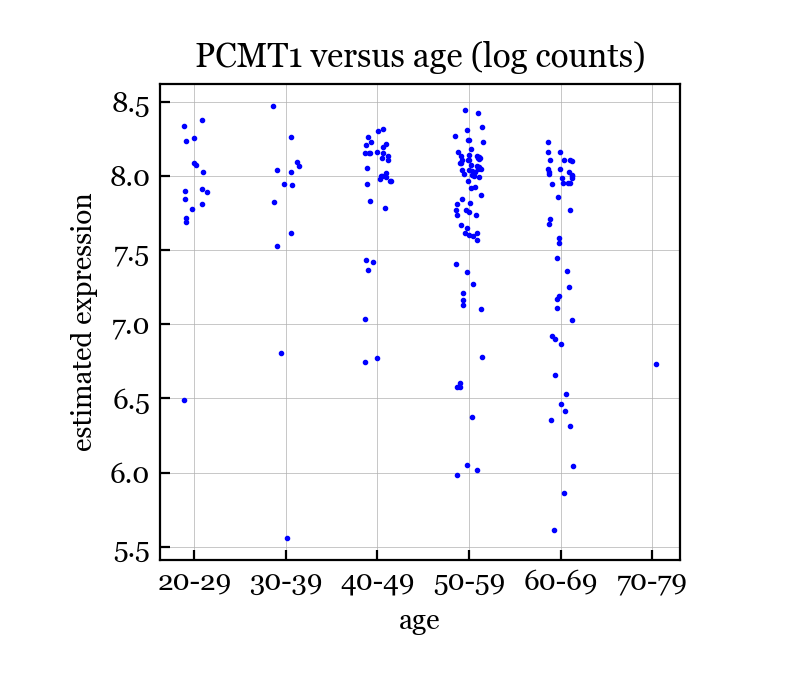}
  \includegraphics[trim=0.6cm 0 1.1cm 0, clip, width=0.325\textwidth]{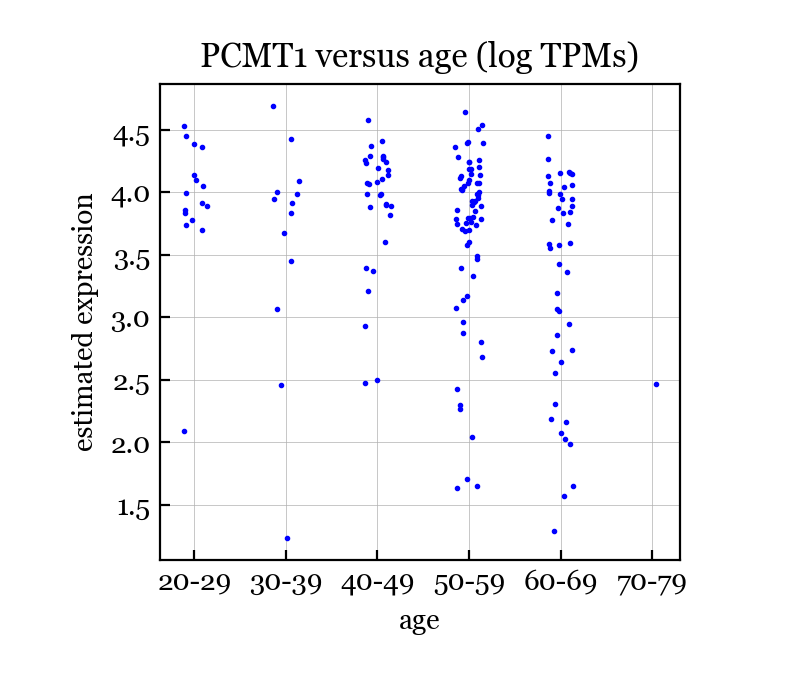}
  \includegraphics[trim=0.6cm 0 1.1cm 0, clip, width=0.325\textwidth]{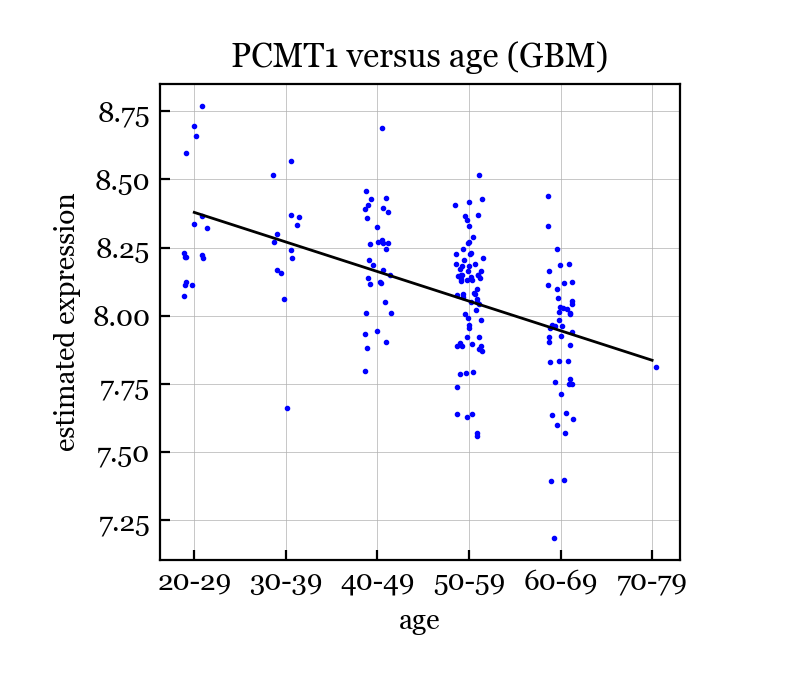}
  \caption{Estimated PCMT1 expression based on log counts, log TPMs, and GBM residuals.}
  \label{figure:pcmt1}
\end{figure}

\textbf{Results.}
We found 2{,}444 genes to be significantly associated with age in Heart-LV, controlling FWER at 0.05.
For comparison, simple linear regression on the log TPMs yields only 1 significant gene;
thus, the GBM has much greater power than a simple standard approach.
To validate the biological relevance of the GBM hits, 
we compare with what is known from the aging literature.
First, the top GBM hit for Heart-LV is PCMT1 (Entrez gene ID 5110) with a p-value of $1.1\times 10^{-47}$.
PCMT1 is involved in the repair and degradation of damaged proteins, and is a well-known aging gene,
being one of 307 human genes in the GenAge database (build version 20) from \citet{tacutu2018human}.
Figure~\ref{figure:pcmt1} shows the estimated expression of PCMT1 versus age for the Heart-LV samples.
The GBM-estimated expression exhibits a clear downward linear trend with age.
For comparison, Figure~\ref{figure:pcmt1} shows that the log TPMs are considerably noiser and the trend is much less clear.
Simple linear regression on the log TPMs yields a p-value of $1.1\times 10^{-3}$ for PCMT1, which does not reach the Bonferroni significance threshold of $0.05/I = 2.5\times 10^{-6}$.
Here, we define the GBM-estimated expression as the partial residual
$\hat{c}_{1 1} + \hat{a}_{j 1} + \hat{b}_{i 1} + (\hat{c}_{1 \ell} + \hat{b}_{i \ell}) z_{j \ell} + \varepsilon_{i j}$
where $\ell$ is the index of the $\texttt{age}$ column in $Z$, and
$\varepsilon_{i j}$ is the GBM residual (Section~\ref{section:residuals}).

To evaluate the GBM hits altogether for biological relevance to aging, we test for enrichment of Gene Ontology (GO) term gene sets using DAVID v6.8 \citep{huang2009bioinformatics,huang2009systematic}.
We run DAVID on the top 1000 GBM hits for Heart-LV, using all 19{,}853 tested genes as the background list. (DAVID allows at most 1000 genes.)
Tables~\ref{table:go-bp-terms} and \ref{table:go-cc-terms} show the top 20 enriched GO terms in the Biological Process and Cellular Component categories.
These results are highly consistent with known aging biology \citep{lopez2013hallmarks}.



\section{Application to cancer genomics}
\label{section:cancer}

Next, we apply the GBM to estimate copy ratios for sequencing data from cancer cell lines.
Copy ratio estimation is an essential step in detecting somatic copy number alterations (SCNAs), 
that is, duplications or deletions of segments of the genome.
The input data is a matrix of counts where entry $(i,j)$ is the number of reads from sample $j$ that map to target region $i$ of the genome.
The goal is to estimate the copy ratio of each region, that is,
the relative concentration of copies of that region in the original DNA sample.

Simple estimates based on row and column normalization are very noisy and are contaminated by significant technical biases.
State-of-the-art methods employ a panel of normals (that is, sequencing samples from non-cancer tissues)
to estimate technical biases using principal components analysis (PCA), and then
use linear regression to remove these biases from the cancer samples of interest.
We take an analogous approach, first running a GBM on a panel of normals, and then running a GBM on the cancer samples 
using a feature covariate matrix $X$ that includes the $U$ matrix estimated from the panel of normals.

To assess performance, we compare with the state-of-the-art method provided by the Broad Institute's Genome Analysis Toolkit (GATK)
\citep{gatk} on the 326 whole-exome sequencing samples from the Cancer Cell Line Encyclopedia (CCLE) \citep{ghandi2019next}.
These samples are from a wide range of cancer types, including lung, breast, colon, prostate, brain, and many others. 
We use the subset of 180{,}495 target regions that are in chromosomes 1--22 and have nonzero median count across the 326 samples.

Since there are essentially no normal samples in the CCLE dataset, we create a panel of pseudo-normals by
taking a random subset of 163 samples as a training set and de-segmenting them to adjust for copy number alterations;
see Section~\ref{section:cancer-details} for details.  The remaining 163 samples are used as a test set.
For the GBM, we use $\log(\texttt{length}_i)$, $\texttt{gc}_i$, and $(\texttt{gc}_i - \overline{\texttt{gc}})^2$ as region covariates,
no sample covariates, and 5 latent factors.
Thus, $I = 180{,}495$, $J = 163$, $K = 4$, $L = 1$, and $M = 5$ on the training set, while
$I = 180{,}495$, $J = 163$, $K = 9$, $L = 1$, and $M = 0$ on the test set.
We define the GBM copy ratio estimates as the exponentiated residuals $\tilde{Y}_{i j}/\hat{\mu}_{i j}$ where $\tilde{Y}_{i j} := Y_{i j} + 1/8$;
see Section~\ref{section:residuals}.
The GBM took 10 minutes and 4.3 minutes to run on the training and test sets, respectively, 
while GATK took 3.3 minutes and 28 minutes on training and test, respectively.
The slowness of GATK on the test set is likely due to having to run it separately on every test sample.


\begin{figure}
  \centering
  \includegraphics[trim=2.0cm 0cm 0.5cm 0cm, clip, width=0.95\textwidth]{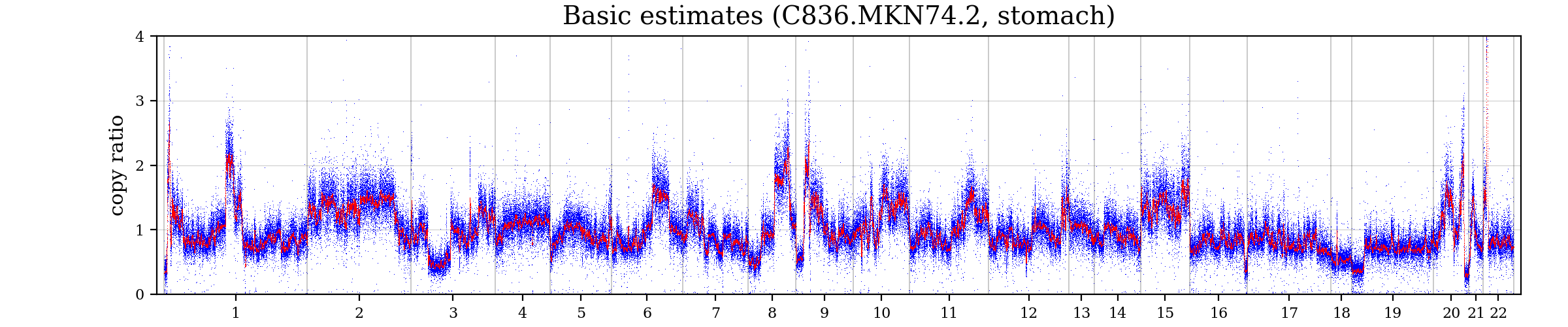} \\
  \includegraphics[trim=2.0cm 0cm 0.5cm 0cm, clip, width=0.95\textwidth]{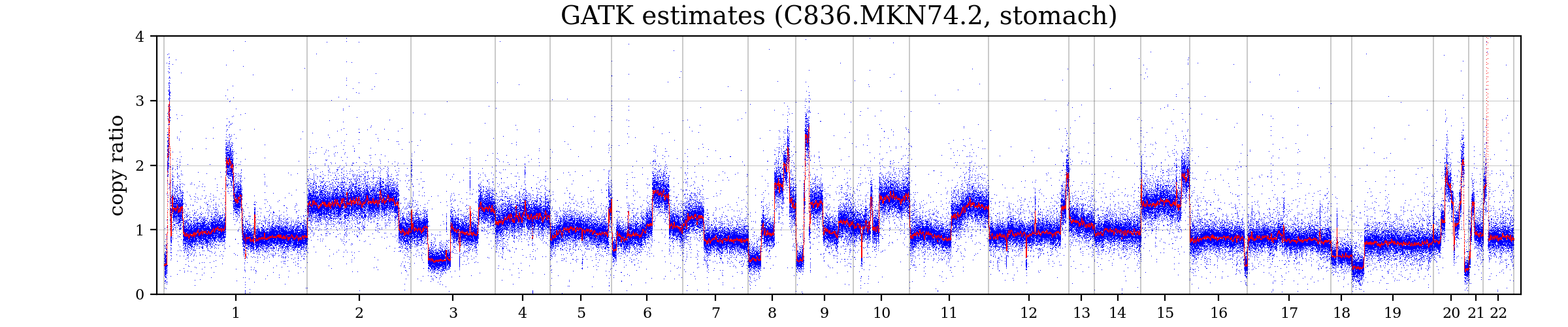} \\
  \includegraphics[trim=2.0cm 0cm 0.5cm 0cm, clip, width=0.95\textwidth]{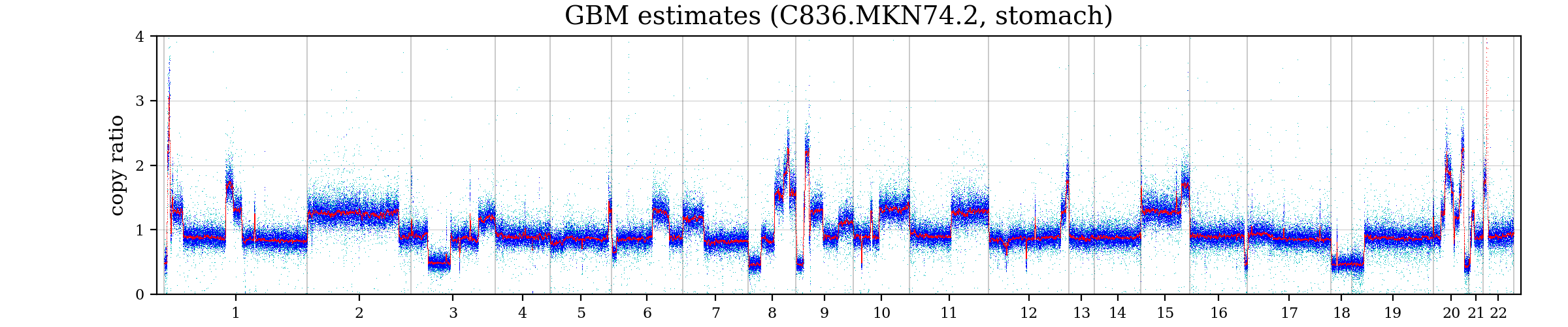} \\ \vspace{0.5em}
  \caption{Copy ratio estimates for an illustrative sample from the CCLE data. 
  The x-axis is genomic position, and each blue dot is the estimate for one region; moving averages are in red.
  For the GBM, regions with high and low precision estimates are plotted in blue and cyan, respectively.}
  \label{figure:ccle-example}
\end{figure}

Figure~\ref{figure:ccle-example} shows the GBM and GATK copy ratio estimates for an illustrative sample from the test set.
As a baseline, we also show the simple normalization-based estimates defined as $\tilde{Y}_{i j} / (\alpha_i \beta_j)$ 
where $\alpha_i = \frac{1}{J}\sum_{j=1}^J \tilde{Y}_{i j}$ and $\beta_j = \frac{1}{I}\sum_{i=1}^I \tilde{Y}_{i j} / \alpha_i$.
A major advantage of the GBM is that it provides uncertainty quantification.
Here, the estimated precision (that is, the inverse variance) of each log copy ratio estimate is $w_{i j}$ (Section~\ref{section:residuals}).
In the GBM plot in Figure~\ref{figure:ccle-example}, this is illustrated by using cyan for the regions with low relative precision;
see Section~\ref{section:cancer-details}.
By downweighting regions with low estimated precision, downstream analyses such as SCNA detection can be made more accurate.

\begin{figure}
  \centering
  \includegraphics[trim=0.6cm 0 1.1cm 0, clip, width=0.4\textwidth]{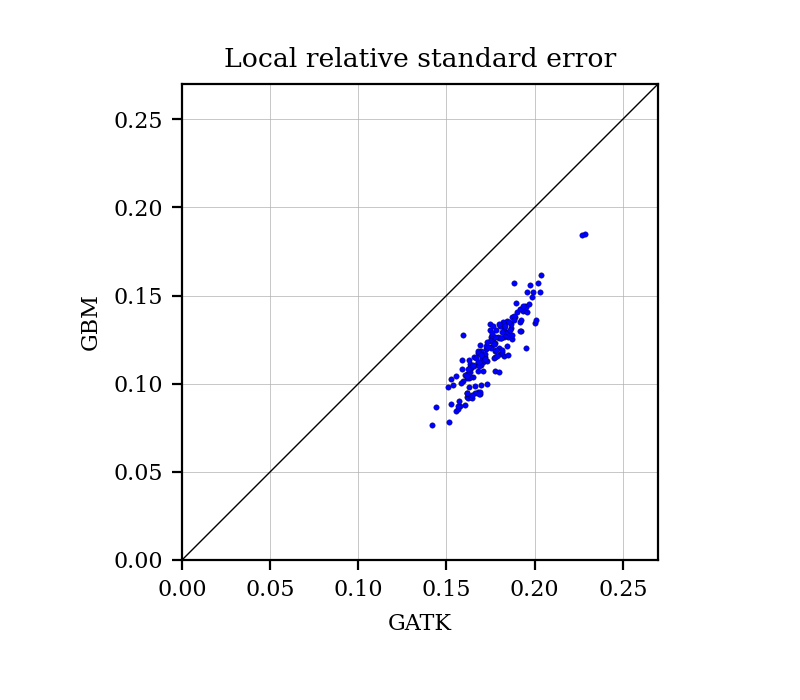} 
  \includegraphics[trim=0.6cm 0 1.1cm 0, clip, width=0.4\textwidth]{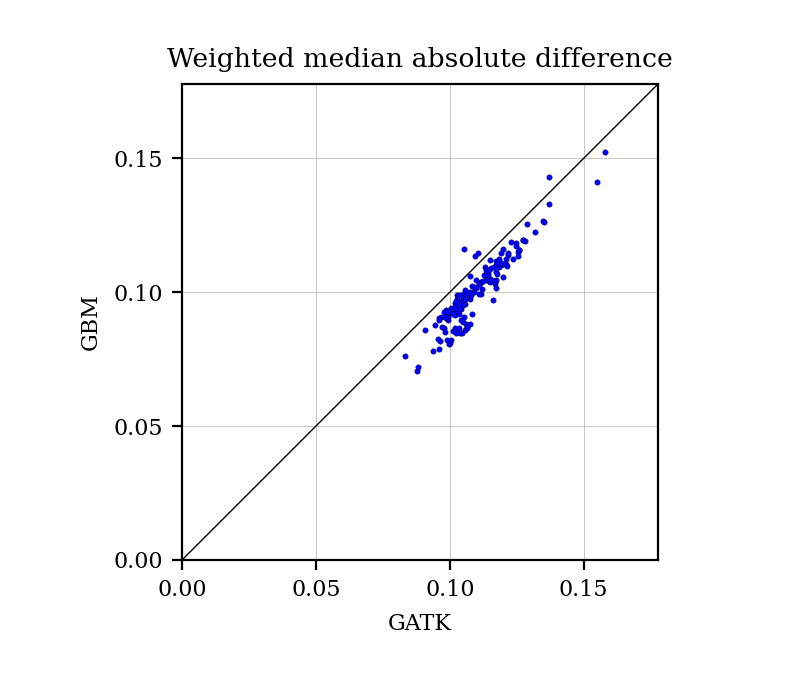}
  \caption{Performance of the GBM versus GATK on the 163 test samples from the CCLE whole-exome sequencing dataset.
  The GBM exhibits better performance in terms of both metrics.}
  \label{figure:ccle-metrics}
\end{figure}

To quantify performance, Figure~\ref{figure:ccle-metrics} compares the GBM and GATK in terms of two performance metrics.
The local relative standard error quantifies the variability of the log copy ratio estimates around a weighted moving average,
accounting for the precision of each estimate.
Meanwhile, the weighted median absolute difference quantifies the typical magnitude of the slope of a weighted moving average.
On these data, the GBM exhibits better performance in terms of both metrics;
see Section~\ref{section:cancer-details} for details.
Figures~\ref{figure:ccle-heatmap-gatk} and \ref{figure:ccle-heatmap-gbm} show the GBM and GATK copy ratio estimates for all 163 test samples.
The GBM estimates are visibly less noisy than the GATK estimates.

Overall, the GBM appears to perform very well in terms of removing technical biases and denoising,
particularly when using uncertainty quantification to downweight low precision regions.
The improved performance appears to be due to (a) model-based uncertainty quantification and (b) using a robust probabilistic model for count data.



\section{Conclusion}
Generalized bilinear models provide a flexible framework for the analysis of matrix data,
and the delta propagation method enables accurate GBM uncertainty quantification in modern applications.
In future work, it would be interesting to
extend to the more general model of \citet{gabriel1998generalised},
provide theoretical guarantees for delta propagation, and
try applying delta propagation to other models.

\if0\blind
{
\section*{Acknowledgments}
We would like to thank Jonathan Huggins, Will Townes, Mehrtash Babadi, Samuel Lee, David Benjamin, Robert Klein, Samuel Markson, Philipp H\"{a}hnel, and Rafael Irizarry for many helpful conversations.
} \fi

\spacingset{1}
\bibliographystyle{abbrvnatcaplf}
\small
\bibliography{refs}
\normalsize
\spacingset{1.6}


\newpage
\setcounter{page}{1}
\setcounter{section}{0}
\setcounter{table}{0}
\setcounter{figure}{0}
\renewcommand{\theHsection}{SIsection.\arabic{section}}
\renewcommand{\theHtable}{SItable.\arabic{table}}
\renewcommand{\theHfigure}{SIfigure.\arabic{figure}}
\renewcommand{\thepage}{S\arabic{page}}  
\renewcommand{\thesection}{S\arabic{section}}   
\renewcommand{\thetable}{S\arabic{table}}   
\renewcommand{\thefigure}{S\arabic{figure}}

\bigskip
\bigskip
\bigskip
\begin{center}
{\LARGE\bf Supplementary material for ``Inference in generalized bilinear models''}
\end{center}
\medskip

\section{Discussion}
\label{section:discussion}

\subsection{Previous work}
\label{section:previous-work}

There is an extensive literature on models involving an unknown low-rank matrix, going by a variety of names including latent factor models, factor analysis models, multiplicative models, bi-additive models, and bilinear models.
In particular, a large number of models can be viewed as special cases of generalized bilinear models (GBMs).
Since a full review is beyond the scope of this article, we settle for covering the main threads in the literature.


\subsubsection{Normal bilinear models without covariates.}
Principal components analysis (PCA) is equivalent to maximum likelihood estimation in a GBM with only the $U D V^\T$ term ($K=0$, $L=0$, $M>0$), 
assuming normally distributed outcomes with common variance $\sigma^2$.
PCA (or equivalently, the SVD) is often performed after centering the rows and columns of the data matrix, and from a model-based perspective,
this is equivalent to including intercepts ($K=1$, $L=1$, $M>0$):
\begin{align}
\label{equation:ammi}
Y_{i j} = c + a_i + b_j + \sum_{m=1}^M u_{i m} d_m v_{j m} + \varepsilon_{i j}
\end{align}
where $\varepsilon_{i j}$ is a normal residual. 
Similarly, scaling the rows and columns is analogous to using a rank-one factorization of the variance,
that is, $\varepsilon_{i j} \sim \mathcal{N}(0,\sigma_i^2\sigma_j^2)$.

\textbf{Estimation.} 
Equation~\ref{equation:ammi} is often called the AMMI (additive main effects and multiplicative interaction) model,
and a range of techniques for using it have been developed \citep{gauch1988model}.
The least squares fit of an AMMI model can be obtained by first fitting the linear terms $c + a_i + b_j$ ignoring the non-linear term,
and then estimating the non-linear term $\sum_{m=1}^M u_{i m} d_m v_{j m}$ using PCA on the residuals \citep{gilbert1963non,gollob1968statistical,mandel1969partitioning,gabriel1978least}.
Estimation is more difficult when each entry is allowed to have a different variance, that is, when $\varepsilon_{i j} \sim \mathcal{N}(0,\sigma_{i j}^2)$ with $\sigma_{i j}^2$ known;
this is sometimes called a weighted AMMI model \citep{van1995multiplicative}.
To handle this, \citet{gabriel1979lower} develop the criss-cross method of estimation, 
which successively fits the non-linear terms $m = 1,\ldots,M$, one by one, using weighted least squares.

\textbf{Hypothesis testing for model selection.}
While most applications of PCA only use the estimates, without any uncertainty quantification,
statistical research on the AMMI model has largely focused on hypothesis testing for which factors $m$ to include in the model.
Early contributions on testing in this model were made by 
\citet{fisher1923studies}, \citet{cochran1943comparison}, \citet{tukey1949one}, \citet{williams1952interpretation}, \citet{mandel1961non}, \citet{gollob1968statistical}, and \citet{mandel1969partitioning}.
Methods of this type are very widely used, particularly in the study of genotype-environment interactions in agronomy;
see reviews by \citet{freeman1973statistical}, \citet{gauch2006statistical}, and \citet{gauch2008statistical}.

\textbf{Confidence regions for parameters.}
Uncertainty quantification for the AMMI model parameters has also been studied.
Asymptotic covariance formulas for the least squares estimates have been given by 
\citet{goodman1990analysis}, \citet{chadoeuf1991asymptotic}, \citet{dorkenoo1993etude} and \citet{denis1996asymptotic} 
for the AMMI model and various special cases.
These results are based on inverting the constraint-augmented Fisher information matrix \citep{aitchison1958maximum,silvey1959lagrangian};
we use the same technique in Section~\ref{section:constraint-augmented-fisher}
to estimate standard errors for $U$ and $V$ in our more general GBM model and we extend it using delta propagation.


\subsubsection{Normal bilinear models with covariates.}

\textbf{Estimation.}
In a wide-ranging article, \citet{tukey1962future} discussed the possibility of combining regression with factor analysis,
by factoring the matrix of residuals after adjusting for covariates.   
Indeed, for the case of normal outcomes with common variance,
\citet[Cor 3.1]{gabriel1978least} showed that when using a model of the form $\bm{Y} = X A^\T + B Z^\T + U D V^\T + \bm{\varepsilon}$,
the least squares fit can be obtained simply by first fitting $A$ and $B$ using regression (ignoring $U D V^\T$), then fitting $U D V^\T$ to the residuals.
This can be viewed as a generalization of the AMMI estimation procedure.
\citet{takane1991principal} extend the results of \citet{gabriel1978least} by first fitting 
$\bm{Y} = X A^\T + B Z^\T + X C Z^\T + \bm{\varepsilon}$ using least squares,
and then using PCA to analyze the residuals as well as each fitted component of the model, that is,
$X \hat{A}^\T$, $\hat{B} Z^\T$, $X \hat{C} Z^\T$, $\hat{\bm{\varepsilon}}$, and combinations thereof.

In a complementary direction, reduced-rank regression \citep{davies1982procedures} and CANDELINC \citep{carroll1980candelinc}
use least squares to fit models of the form 
$\bm{Y} = X A^\T + \bm{\varepsilon}$ and $\bm{Y} = X C Z^\T + \bm{\varepsilon}$, respectively, where $A$ and $C$ are constrained to be low-rank.

\textbf{Hypothesis testing and confidence regions.}
In the case of normal outcomes with common variance $\sigma^2$,
for the model with $\bm{Y} = X A^\T + B Z^\T + U D V^\T + \bm{\varepsilon}$,
\citet{perry2013degrees} show how to perform inference for univariate entries of $A$ and $B$ (and univariate linear projections, more generally)
accounting for the uncertainty in $U D V^\T$ via an estimate of the degrees of freedom associated with the latent factors.
Further, \citet{perry2013degrees} show that the problem can be reduced to the covariate-free case, $\bm{Y} = U D V^\T + \bm{\varepsilon}$,
enabling one to use results on hypothesis testing in the AMMI model \citep{gollob1968statistical,mandel1969partitioning}
which provide estimates of the degrees of freedom.
However, this approach appears to rely on the assumption of normal outcomes with common variance.

\subsubsection{Generalized bilinear models without covariates.}
In many applications, it is unreasonable to use a normal outcome model.
A classical approach is to transform the data and then apply a normal outcome model,
however, as discussed by \citet{van1995multiplicative}, there is unlikely to be a transformation that simultaneously achieves
(a) approximate normality, (b) common variance, and (c) additive effects.

A more principled approach is to extend the generalized linear model (GLM) framework to handle latent factors,
as suggested by \citet{gower1989discussion}.
Goodman's RC models are early contributions in this direction
\citep{goodman1979simple,goodman1981association,goodman1986some,goodman1991measures},
consisting of count models with multinomial or Poisson outcomes where 
$\log(\E(Y_{i j})) = c + a_i + b_j + \sum_{m=1}^M u_{i m} d_m v_{j m}$.
More generally, \citet{van1995multiplicative} develops the generalized AMMI (GAMMI) model,
which is a GLM version of the AMMI model in Equation~\ref{equation:ammi}, specifically,
$g(\E(Y_{i j})) = c + a_i + b_j + \sum_{m=1}^M u_{i m} d_m v_{j m}$.
\citet{van1995multiplicative} introduces a coordinate descent algorithm and discusses approaches for choosing $M$,
however, he does not consider uncertainty quantification for parameters, does not estimate dispersion parameters,
and only demonstrates the method on very small datasets ($11\times 5$ and $17\times 12$).

Correspondence analysis \citep{benzecri1973analyse,greenacre1984theory} is an SVD-based exploratory analysis method for matrices of categorical data,
and has been reinvented under many names, such as reciprocal averaging and dual scaling \citep{de2000gbms}.
Correspondence analysis bears resemblence to estimation methods for the GAMMI model, 
however, it is primarily descriptive in perspective, and thus typically does not involve quantification of uncertainty.

\subsubsection{Generalized bilinear models with covariates.}

\citet{choulakian1996generalized} defines a class of GBMs of the same form as in this article, where
$g(\E(\bm{Y})) = X A^\T + B Z^\T + X C Z^\T + U D V^\T$
and $g$ is the (a) canonical, (b) identical, or (c) logarithmic link function.
For the case of no covariates (that is, the GAMMI model), 
\citet{choulakian1996generalized} proposes an estimation algorithm that involves univariate Fisher scoring updates,
which is attractive for its simplicity, but may exhibit slow convergence or failure to converge due to strong dependencies among parameters.
While the defined model class is general, some limitations of the paper by \citet{choulakian1996generalized} are that 
uncertainty quantification is not addressed, the estimation algorithm is for the special case of GAMMI,
a single common dispersion is assumed and estimation of dispersion is not addressed,
identifiability constraints are not enforced, 
no initialization procedure is provided, 
and only very small datasets are considered ($10\times 7$ and $11\times 11\times 2$). 

\citet{gabriel1998generalised} considers a very general class of models of the form $g(\E(\bm{Y})) = \sum_{k=1}^K X_k \Theta_k Z_k$,
where $X_k$ and $Z_k$ are observed matrices (for instance, covariates) and $\Theta_k$ is a low-rank matrix of parameters for each $k = 1,\ldots,K$.
He extends the criss-cross estimation algorithm of \citet{gabriel1979lower} to this model.
While the model of \citet{gabriel1998generalised} is very elegant, some limitations 
are that estimation is performed using a vectorization approach that is computationally prohibitive on large matrices, 
uncertainty quantification is not addressed,
a common dispersion parameter is assumed for all entries, 
and only very small datasets are considered ($10\times 9$ and $17\times 2$).
Also, it is not clear what identifiability constraints are assumed on the $\Theta_k$ matrices.

In recent work, \citet{townes2019generalized} considers a model of the form 
$g(\E(\bm{Y})) = X A^\T + B Z^\T + U D V^\T + \bm{1}\delta^\T$ 
where $\bm{1}$ is a vector of ones and $\delta\in\R^J$ is a vector of fixed column-specific offsets.
\citet{townes2019generalized} derives diagonal approximations to Fisher scoring updates for 
$\ell_2$-penalized maximum likelihood estimation, and 
in a postprocessing stage, enforces orthogonality constraints to aid interpretability.
Other differences compared to the present work are that only estimation is considered (uncertainty quantification is not addressed)
and overdispersion parameters are not estimated.

The overview by \citet{de2000gbms} provides an interesting and insightful discussion of several threads in the literature.

\subsubsection{Recent applications of bilinear models.}
Several authors have used bilinear models or GBMs in genetics and genomics, usually to remove unwanted variation such as batch effects.
However, most of these methods do not fully account for uncertainty in the latent factors, 
which may lead to miscalibrated inferences such as overconfident p-values.
For example, to remove batch effects in gene expression analysis, several approaches involve first estimating $U D V^\T$ and then 
treating $V$ as a known matrix of covariates, accounting for uncertainty only in $U D$ using standard regression
\citep{leek2007capturing,leek2008general,sun2012multiple,risso2014normalization};
this is also done to adjust for population structure in genetic association studies \citep{price2006principal}.
In copy number variation detection, it is common to simply treat the estimated $U D V^\T$ as known and subtract it off 
\citep{fromer2012discovery,krumm2012copy,jiang2015codex}.

\citet{carvalho2008high} use a Bayesian sparse factor analysis model with covariates, employing evolutionary stochastic search for model selection
and Markov chain Monte Carlo (MCMC) for posterior inference within models.
\citet{stegle2010bayesian}, \citet{buettner2017f}, and \citet{babadi18gatk} use complex hierarchical models that can be viewed as Bayesian GBMs
with additional prior structure, and they employ variational methods for approximate posterior inference.
Another application in which bilinear models have seen recent use is longitudinal relational data such as networks, 
for which \citet{hoff2015multilinear} employs an interesting Bayesian model with $\bm{Y}_t = A X_t B^\T + \bm{\varepsilon}$,
where $X_t$ is a matrix of observed covariates that depend on time $t$.


\subsection{Challenges and solutions}
\label{section:challenges}

Estimation and inference in large GBMs is complicated by a number of nontrivial challenges.
In this section, we discuss several issues and how we resolve them.


\textbf{Estimating the dispersion parameters.} 
There are several issues with estimating the negative binomial (NB) dispersions $1/r_{i j}$.
First, since there is insufficient information to estimate all $I J$ dispersions individually, 
we use the rank-one parametrization $1/r_{i j} = \exp(s_i + t_j + \omega)$.
Second, the choice of identifiability constraints matters --- the natural choice of contraints, $\sum_i s_i = 0$ and $\sum_j t_j = 0$, 
leads to noticeably biased estimates of $s_i$ and $t_j$, particularly for higher values; see Figure~\ref{figure:log-dispersion-constraint}.
Instead, we constrain $\frac{1}{I}\sum_i e^{s_i} = 1$ and $\frac{1}{J}\sum_j e^{t_j} = 1$, which effectively mitigates this bias, empirically.
Third, the maximum likelihood estimates sometimes exhibit a severe downward bias, particularly for low values of log-dispersion;
we use a simple heuristic bias correction to deal with this.
Fourth, to avoid arithmetic underflow/overflow in the log-dispersion update steps,
we develop carefully constructed expressions for the gradient and Hessian. 
Finally, to prevent occasional lack of convergence due to oscillating estimates, we employ an adaptive maximum step size.


\textbf{Inapplicability of standard GLM methods.}
Since $X A^\T + B Z^\T + X C Z^\T$ is linear in the parameters, one could vectorize and write it as 
$\mathrm{vec}(X A^\T + B Z^\T + X C Z^\T) = \tilde{X} \beta$
where $\beta = (\mathrm{vec}(A)^\T,\mathrm{vec}(B)^\T,\mathrm{vec}(C)^\T)^\T \in \R^{J K + I L + K L}$
and $\tilde{X}\in\R^{I J \times (J K + I L + K L)}$ is a function of $X$ and $Z$.
In principle, one could then apply standard GLM estimation methods for estimating $\beta$ to construct a joint update to $(A,B,C)$.
However, this vectorization approach is only computationally feasible for small data matrices since computing the matrix inverse
$(\tilde{X}^\T \tilde{X})^{-1}$ takes on the order of $(J K + I L + K L)^3$ time.
Further, this update would need to be done repeatedly since $D$, $U$, $V$, $S$, $T$, and $\omega$ also need to be simultaneously estimated,
and the vectorization approach does not help estimate these parameters.


\textbf{Inapplicability of the singular value decomposition.} 
At first glance, it might appear that the singular value decomposition (SVD) would make it straightforward to estimate $U$, $D$, and $V$ 
given the other parameters.
However, when used for estimation, the SVD implicitly assumes that every entry has the same variance.
This is far from true in GBMs, and consequently, naively using the SVD to update $U D V^\T$ leads to poor estimation accuracy.
The criss-cross algorithm of \citet{gabriel1979lower} yields a low-rank matrix factorization that accounts for entry-specific variances, 
and our algorithm provides another way of doing this while adjusting for covariates in a GBM.
In our algorithm, we only directly use the SVD for enforcing the identifiability constraints, not for estimation of $U D V^\T$.

\textbf{Computational efficiency.} 
The genomics applications in Sections~\ref{section:gene-expression} and \ref{section:cancer}
involve large count matrices $\bm{Y}\in\R^{I\times J}$ where the number of features $I$ is on the order of $10^4$ to $10^6$
and the number of samples $J$ can be as large as $10^4$ or more.
Consequently, computational efficiency is essential for practical usage of the method.
For estimation, we exploit the special structure of the GBM to derive computationally efficient Fisher scoring updates to each component of the model.
For inference, we develop a novel method for efficiently propagating uncertainty between components of the model.
Assuming $\max\{K^2,L^2,M\} \leq J \leq I$, 
our estimation algorithm takes $O(I J \max\{K^2,L^2,M^2\})$ time per iteration, and our inference algorithm requires $O(I J \max\{K^3, L^3, J M^3\})$ time,
making them computationally feasible on large data matrices.

\textbf{Numerical stability.} Using a good choice of initialization is crucial for numerical stability. 
To initialize the estimation algorithm, we analytically solve for values of $A$, $B$, and $C$ to approximate the data matrix and then, for NB-GBMs, we iteratively update $S$ and $T$ for a few iterations.
Even with a good initialization, optimization methods occasionally diverge.
In a large GBM, there are so many parameters that even occasional divergences cause the algorithm to fail with high probability.
We reduce the frequency of divergences to be negligible by enforcing a bound on the norm of the optimization steps;
see Section~\ref{section:estimation}.

\textbf{Enforcing identifiability constraints.} Rather than performing constrained optimization steps, 
we use a combination of unconstrained optimization steps and likelihood-preserving projections onto the constrained parameter space.
Although the construction of likelihood-preserving projections in a GBM is not obvious,
we show that they can be efficiently computed using simple linear algebra operations. 
This optimization-projection approach has a number of advantages; see Section~\ref{section:enforcing-constraints} for further discussion.

\textbf{Dependencies in latent factors.} 
Optimizing the latent factor term $U D V^\T$ is challenging 
due to the dependencies among $U$, $D$, and $V$ as well as the orthonormality contraints $U^\T U = \Id$ and $V^\T V = \Id$.
Consequently, updating $U$, $D$, and $V$ individually does not seem to work well.
To resolve this issue, we relax the dependencies and constraints by defining $G := U D$ and $H := V D$, and updating $G$, $H$, and $D$ separately.

\textbf{Prior / regularization.} To improve estimation accuracy in a high-dimensional setting,
we place independent normal priors on the entries of $A$, $B$, $C$, $D$, $U$, $V$, $S$, and $T$,
and use \textit{maximum a posteriori} (MAP) estimation, which is equivalent to $\ell_2$-penalization/shrinkage for this choice of prior.
An additional benefit of using priors is that it improves the 
numerical stability of the estimation algorithm.
See Section~\ref{section:priors} for prior details.

\subsection{Enforcing the GBM identifiability constraints}
\label{section:enforcing-constraints}

It might seem preferable to perform unconstrained optimization throughout the estimation algorithm until convergence, and then enforce the identifiability constraints as a postprocessing step. However, in general, this would not converge to a local optimum in the constrained space because the prior does not have the same invariance properties as the likelihood.  Thus, we maintain the constraints throughout the algorithm by applying a projection at each step.

When updating each component ($A$, $B$, $C$, $D$, $U$, $V$, $S$, and $T$), 
rather than using a constrained optimization step such as equality-constrained Newton's method \citep{boyd2004convex}, we use an unconstrained optimization step followed by a likelihood-preserving projection onto the constrained space.  

It is crucial to preserve the likelihood when projecting onto the constrained space, since otherwise the projection might undo all the gains obtained by the unconstrained optimization step --- in short, otherwise we might end up ``taking one step forward and two steps back.''
To this end, we employ likelihood-preserving projections for each component of the GBM.
By Theorem~\ref{theorem:projections}, the likelihood is invariant under these operations
and the projected values satisfy the identifiability constraints.
The optimization-projection approach has several major advantages.
\begin{enumerate}
\item In the likelihood surface, there can be strong dependencies among the parameters within each row of $A$, $B$, $U$, and $V$, 
whereas the between-row dependencies are much weaker (specifically, they have zero Fisher cross-information). Thus, it is desirable to optimize each row jointly, 
however, this is complicated by the fact that the constraints create dependencies between rows.
Consequently, using equality-constrained Newton appears to be computationally infeasible since it would require a joint update of each parameter matrix in entirety.
\item Since each optimization-projection step modifies multiple components of the GBM, it effectively performs a joint update on multiple components.
For instance, the likelihood-preserving projection for $A$ also modifies $C$, so the optimization-projection step on $A$ is effectively a joint update to $A$ and $C$.
This has the effect of enlarging the constrained space within which each update takes place, improving convergence.
\item For $U$ and $V$, the constrained space is particular difficult to optimize over since it involves 
not only within-column linear dependencies ($X^\T U = 0$ and $Z^\T V = 0$), but also quadratic dependencies within and between columns
($U^\T U = \Id$ and $V^\T V = \Id$). The optimization-projection approach makes it easy to handle these constraints.
\item It is straightforward to perform unconstrained optimization for each component separately,
and the projections that we derive turn out to be very easy to apply.
\end{enumerate}



\section{Additional simulation results and details}
\label{section:simulation-details}

We present additional simulation results and details supplementing Section~\ref{section:simulations}.

\subsubsection*{Simulating covariates, parameters, and data}
\label{section:simulating-data}

Here, we provide the details of how the simulation data are generated.
First, the covariates are generated using a copula model as follows.
We describe the procedure for the feature covariate matrix $X\in\R^{I\times K}$; 
the sample covariate matrix $Z\in\R^{J\times L}$ is generated in the same way but with $J$ and $L$ in place of $I$ and $K$.
We generate a random covariance matrix $\Sigma = Q^\T Q$ where the entries of 
$Q\in\R^{K\times K}$ are $q_{k k'}\sim \N(0,1)$ i.i.d.,
and then we compute the resulting correlation matrix $\tilde\Sigma\in\R^{K\times K}$ 
by setting $\tilde\Sigma_{k k'} = \Sigma_{k k'} / \sqrt{\Sigma_{k k} \Sigma_{k' k'}}$.
We generate $(\tilde x_{i 1},\ldots,\tilde x_{i K})^\T \sim \N(0,\tilde\Sigma)$ i.i.d.\ for $i=1,\ldots,I$,
and define $X\in\R^{I\times K}$ by setting $x_{i k} = h(F^{-1}(\Phi(\tilde x_{i k})))$
where $h(x) = \mathrm{sign}(x) \min\{100,|x|\}$,
$\Phi(x)$ is the standard normal CDF, and $F^{-1}$ is the generalized inverse CDF for the desired marginal distribution,
which we take to be
$\N(0,1)$ for the \texttt{Normal} scheme,
$\Ga(2,\sqrt{2})$ for the \texttt{Gamma} scheme,
and $\Bernoulli(1/2)$ for the \texttt{Binary} scheme.
Finally, we standardize $X$ by setting $x_{i 1} = 1$ for all $i$
and centering/scaling so that $\sum_{i=1}^I x_{i k} = 0$ and $\frac{1}{I}\sum_{i=1}^I x_{i k}^2 = 1$ for $k\geq 2$.

The true parameters $A_0$, $B_0$, $C_0$, $D_0$, $U_0$, $V_0$, $S_0$, $T_0$, and $\omega_0$ are then generated as follows.
First, we generate matrices
$\tilde{A}$, $\tilde{B}$, and $\tilde{C}$ with i.i.d.\ entries as follows:
(\texttt{Normal} scheme)
$\tilde{a}_{j k} \sim \N(0, 1/(4 K))$,
$\tilde{b}_{i\ell} \sim \N(0, 1/(4 L))$, and
$\tilde{c}_{k\ell} \sim \N(0, 1/(K L)) + 3\,\I(k=1,\ell=1)$, or
(\texttt{Gamma} scheme)
$\tilde{a}_{j k} \sim \Ga(2,\, 2 \sqrt{2 K}))$,
$\tilde{b}_{i\ell} \sim \Ga(2,\, 2 \sqrt{2 L})$, and
$\tilde{c}_{k\ell} \sim \Ga(2,\, \sqrt{2 K L}) + 3\,\I(k=1,\ell=1)$.
These distributions are defined so that the scale of the entries of $X\tilde{A}^\T$, $\tilde{B}Z^\T$, and $X\tilde{C}Z^\T$ is not affected by $K$ and $L$.

Then we set
$A_0 = \tilde{A} - Z (Z^\ps \tilde{A})$,
$B_0 = \tilde{B} - X (X^\ps \tilde{B})$, and 
$C_0 = \tilde{C}$.
Next, we set 
$U_0 = \tilde{U} - X (X^\ps \tilde{U})$ and
$V_0 = \tilde{V} - Z (Z^\ps \tilde{V})$
where $\tilde{U}\in\R^{I\times M}$ and $\tilde{V}\in\R^{J\times M}$ are sampled uniformly from their respective Stiefel manifolds,
that is, uniformly subject to $\tilde{U}^\T \tilde{U} = \Id$ and $\tilde{V}^\T \tilde{V} = \Id$.
The diagonal entries of $D_0$ are evenly spaced from $\sqrt{I} + \sqrt{J}$ to $2(\sqrt{I} + \sqrt{J})$; 
this scaling is motivated by the Marchenko--Pastur law for the distribution of singular values of $I\times J$ random matrices \citep{marchenko1967distribution}.
For the log-dispersion parameters ($S_0$, $T_0$, and $\omega_0$), 
we generate $\tilde{s}_i,\tilde{t}_j\sim\N(0,1)$ i.i.d., and set $\omega_0 = -2.3$,
$s_{0 i} = \tilde{s}_i - \log(\frac{1}{I}\sum_{i=1}^I \exp(\tilde{s}_i))$, and
$t_{0 j} = \tilde{t}_j - \log(\frac{1}{J}\sum_{j=1}^J \exp(\tilde{t}_j))$,

Given the true parameters and covariates, the data matrix $\bm{Y}\in\{0,1,2,\ldots\}^{I\times J}$ is generated as follows.
We compute the mean matrix $\mu_0 := g^{-1}(X A_0^\T + B_0 Z^\T + X C_0 Z^\T + U_0 D_0 V_0^\T)$, 
where the inverse link function $g^{-1}(x) = e^{x}$ is applied element-wise,
and we compute the inverse dispersions $r_{0 i j} := \exp(-s_{0 i} - t_{0 j} - \omega_0)$.
Then we sample $Y_{i j} \sim \mathcal{D}(\mu_{0 i j},r_{0 i j})$ where
$\mathcal{D}(\mu,r) = \NegBin(\mu,r)$ in the \texttt{NB} scheme, 
$\mathcal{D}(\mu,r) = \LNP(\mu,\,\log(1/r+1))$ in the \texttt{LNP} scheme,
$\mathcal{D}(\mu,r) = \Poisson(\mu)$ in the \texttt{Poisson} scheme, or
$\mathcal{D}(\mu,r) = \Geometric(1/(\mu+1))$ in the \texttt{Geometric} scheme,
so that $\E(Y_{i j}) = \mu_{0 i j}$ in each case.
Here, for $y\in\{0,1,2,\ldots\}$, $\Geometric(p)$ has p.m.f.\ $f(y) = (1-p)^y p$, whereas $\LNP(\mu,\sigma^2)$ has p.m.f.
\begin{align}
\label{equation:lnp}
f(y) = \int \Poisson(y | \lambda) \LogNormal(\lambda \mid \log(\mu) - \tfrac{1}{2}\sigma^2, \; \sigma^2)\, d\lambda
\end{align}
for $\mu>0$ and $\sigma^2>0$.  These outcome distributions are defined so that in each case, if $Y\sim\mathcal{D}(\mu,r)$ then $\E(Y) = \mu$.
Further, in the \texttt{LNP} case, $\mathrm{Var}(Y) = \mu + \mu^2 / r$, so the interpretation of $r$ is the same as in the \texttt{NB} case.

\subsubsection*{Consistency and statistical efficiency -- Details on Section~\ref{section:consistency}}
\label{section:consistency-details}

In these simulations, to accurately measure the trend with increasing $I$, 
we generate the covariates, true parameters, and data with $I = 10000$
and project them onto the lower-dimensional spaces for smaller $I$ values; 
for $\bm{Y}$, $X$, and $S_0$ this projection simply consists of taking the first $I$ rows/entries, 
$Z$ and $T_0$ are unaffected by the projection, 
and $A_0$, $B_0$, $C_0$, $D_0$, $U_0$, and $V_0$ are projected by matching the first $I$ rows of the mean matrix $\mu_0$.

We use the relative MSE rather than the MSE to facilitate interpretability, since this puts the errors on a common scale
that does not depend on the magnitude of the parameters.
For instance, the relative MSE for $A$ is defined as 
$$ \mathrm{MSE}_\mathrm{rel}(A,A_0) = \frac{\sum_{j=1}^J \sum_{k=1}^K |a_{j k} - a_{0 j k}|^2}{\sum_{j=1}^J \sum_{k=1}^K |a_{0 j k}|^2} $$
where $A$ is the estimate and $A_0$ is the true value.

Figure~\ref{figure:consistency-all} shows the relative MSE plots for all GBM parameter components.
For $A$, $C$, $V$, and $T$, the relative MSE appears to be decreasing to zero.
The trend for $D$ and $\omega$ is suggestive but not as clear,
making it difficult to gauge whether $D$ and $\omega$ are likely to be consistent based on these experiments.
The relative MSEs for $B$, $U$, and $S$ are small but do not appear to be going to zero as $I\to\infty$ with $J$ fixed;
this is expected since for these parameters the amount of data informing each univariate entry is fixed.


\begin{figure}
  \centering
  \includegraphics[trim=0.6cm 0 1.5cm 0, clip, width=0.325\textwidth]{figures/sim/results-J=100-K=4-L=2-M=3-NegativeBinomial-Normal-Normal/err_A.png}
  \includegraphics[trim=0.6cm 0 1.5cm 0, clip, width=0.325\textwidth]{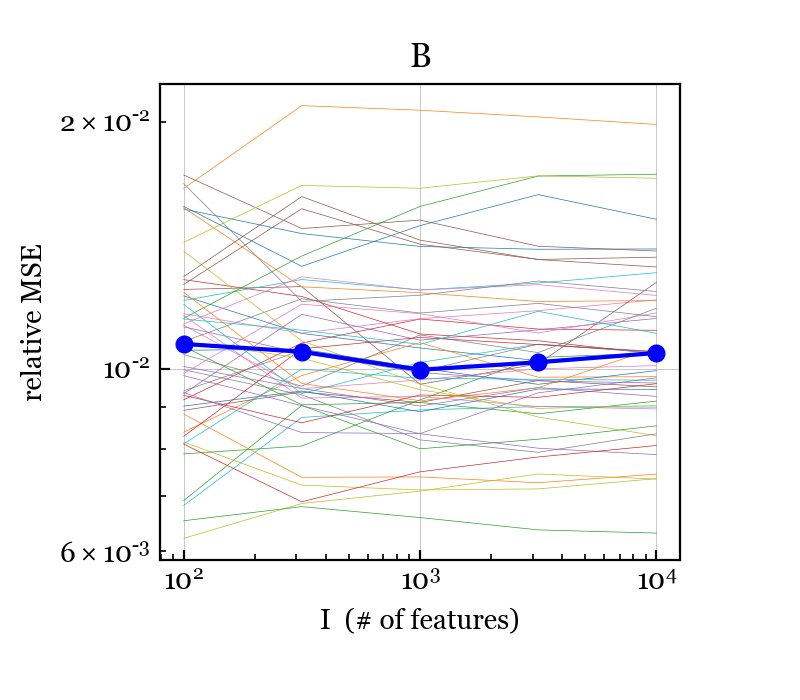}
  \includegraphics[trim=0.6cm 0 1.5cm 0, clip, width=0.325\textwidth]{figures/sim/results-J=100-K=4-L=2-M=3-NegativeBinomial-Normal-Normal/err_C.png}\\
  \includegraphics[trim=0.6cm 0 1.5cm 0, clip, width=0.325\textwidth]{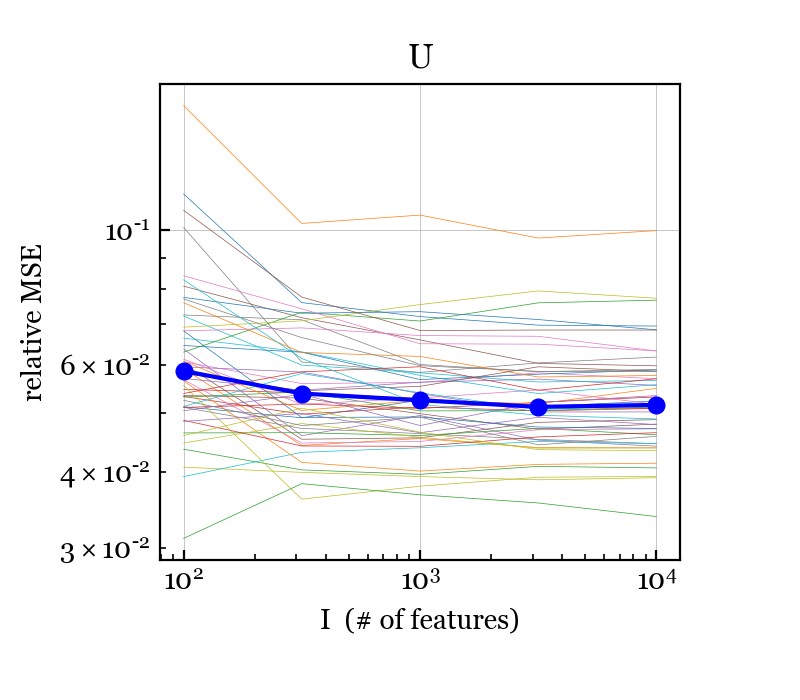}
  \includegraphics[trim=0.6cm 0 1.5cm 0, clip, width=0.325\textwidth]{figures/sim/results-J=100-K=4-L=2-M=3-NegativeBinomial-Normal-Normal/err_V.png}
  \includegraphics[trim=0.6cm 0 1.5cm 0, clip, width=0.325\textwidth]{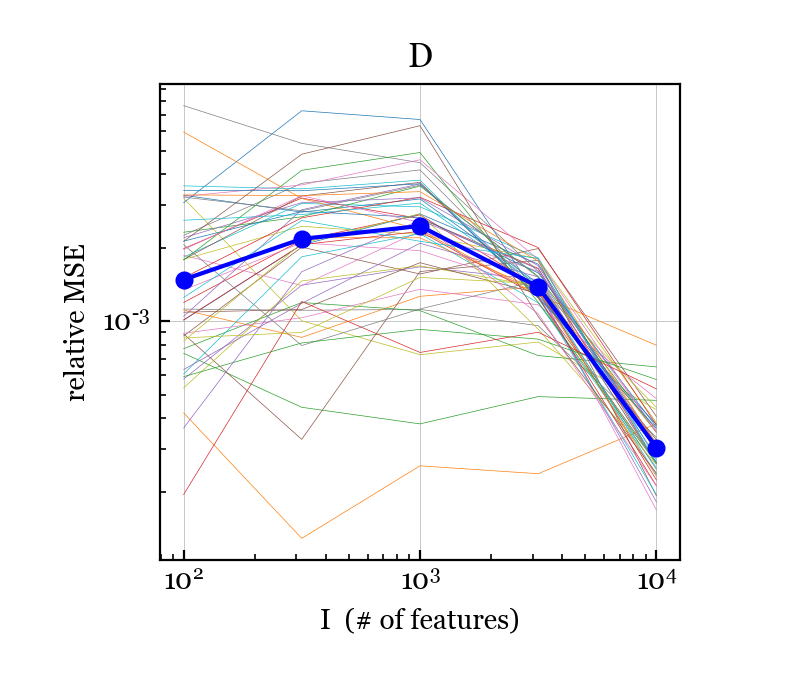}\\
  \includegraphics[trim=0.6cm 0 1.5cm 0, clip, width=0.325\textwidth]{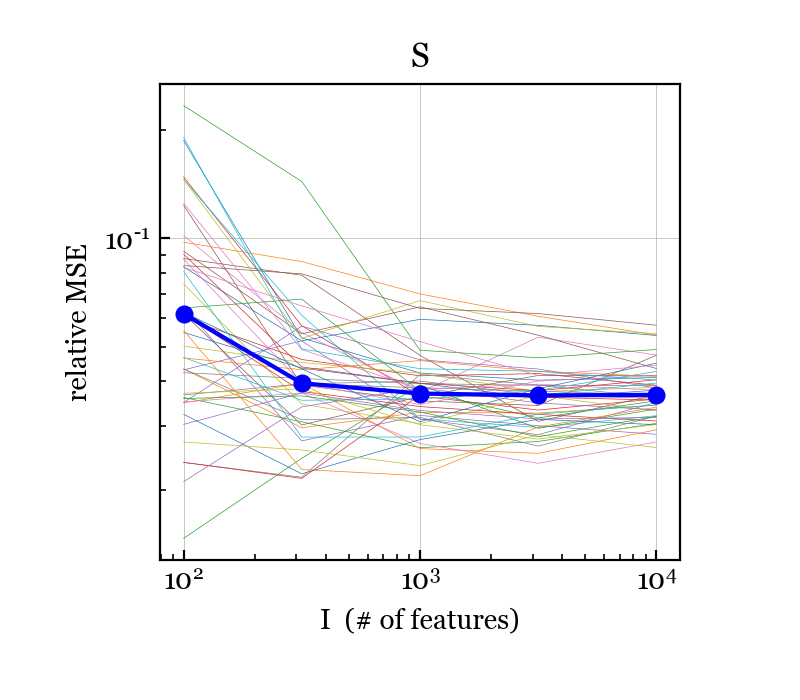}
  \includegraphics[trim=0.6cm 0 1.5cm 0, clip, width=0.325\textwidth]{figures/sim/results-J=100-K=4-L=2-M=3-NegativeBinomial-Normal-Normal/err_T.png}
  \includegraphics[trim=0.6cm 0 1.5cm 0, clip, width=0.325\textwidth]{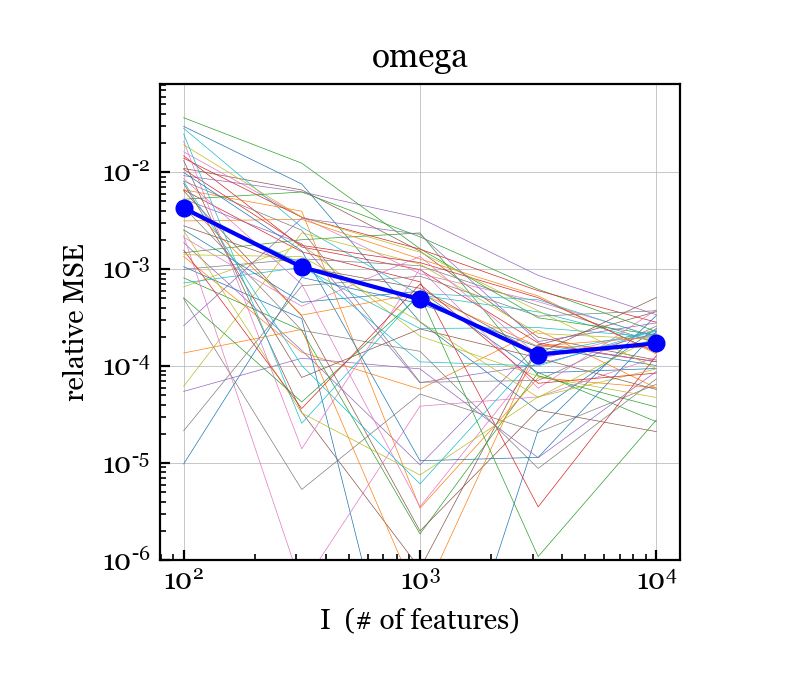}
  \caption{Relative mean-squared error between estimated and true parameter values. Each plot contains $50$ thin lines, one for each run, along with the median over the $50$ runs (thick blue line).}
  \label{figure:consistency-all}
\end{figure}

\subsubsection*{Accuracy of standard errors -- Details on Section~\ref{section:coverage}}
\label{section:coverage-details}

To estimate the actual coverage at every target coverage level from 0\% to 100\%, we use the fact that the actual coverage of a 
$100\times (1-\alpha) \%$ interval for some parameter $\theta$ can be written as
$$ \Pr(|\hat{\theta} - \theta_0| < z_{\alpha/2}\,\hat{\mathrm{se}}) = \Pr\Big(1 - 2\big(1 - \Phi(|\hat{\theta} - \theta_0|/\hat{\mathrm{se}})\big) < 1-\alpha\Big)$$
where $z_{\alpha/2} = \Phi^{-1}(1-\alpha/2)$ and $\Phi(x)$ is the $\N(0,1)$ CDF.
Thus, since $1-\alpha$ is the target coverage, the curve of actual coverage versus target coverage 
is simply the CDF of the random variable $1 - 2\big(1 - \Phi(|\hat{\theta} - \theta_0|/\hat{\mathrm{se}})\big)$.
The plots in Figure~\ref{figure:coverage} are empirical CDFs of this random variable, aggregating across all entries of each parameter matrix/vector.

\begin{figure}
  \centering
  \includegraphics[trim=0.3cm  0 5.5cm 0, clip, height=1.6in]{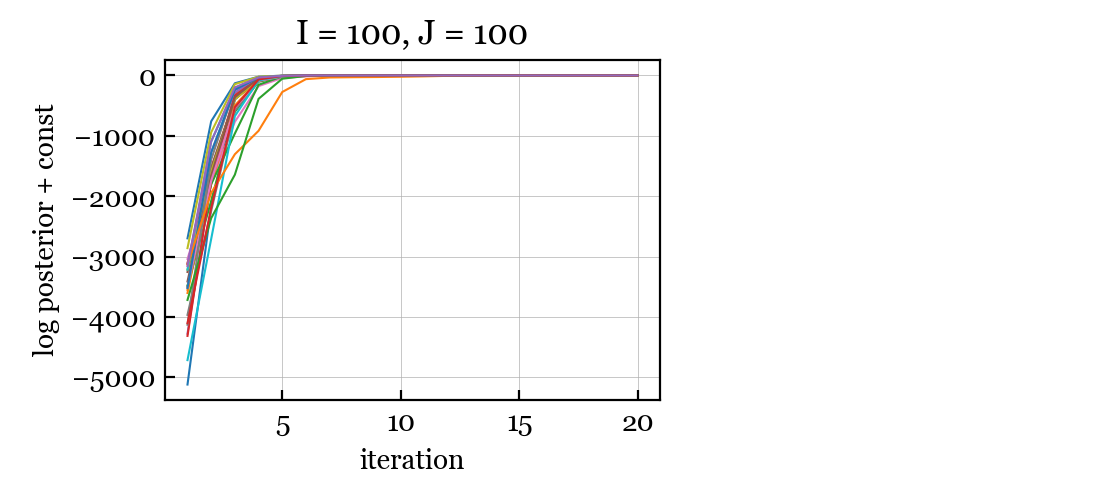}
  \includegraphics[trim=0.75cm 0 5.5cm 0, clip, height=1.6in]{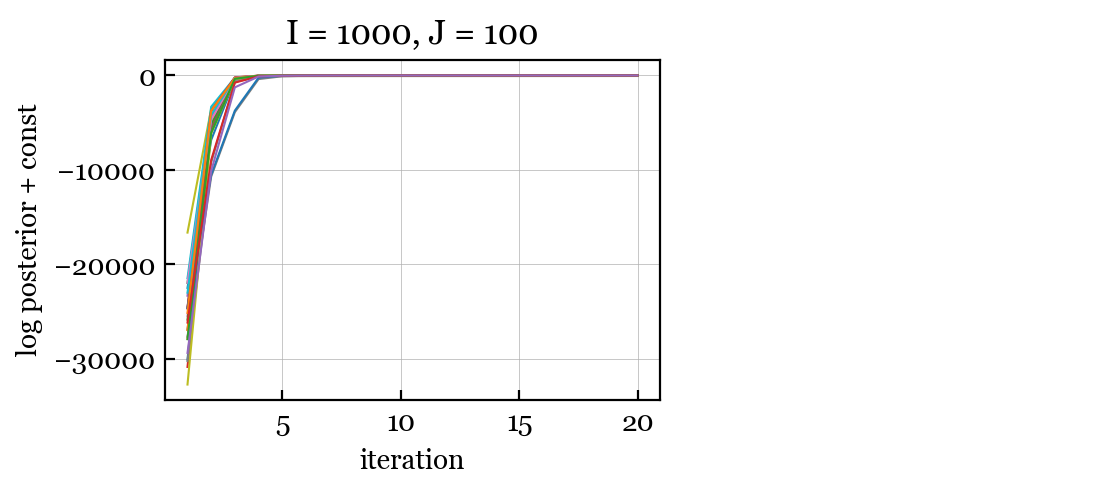}
  \includegraphics[trim=0.5cm  0 5.5cm 0, clip, height=1.6in]{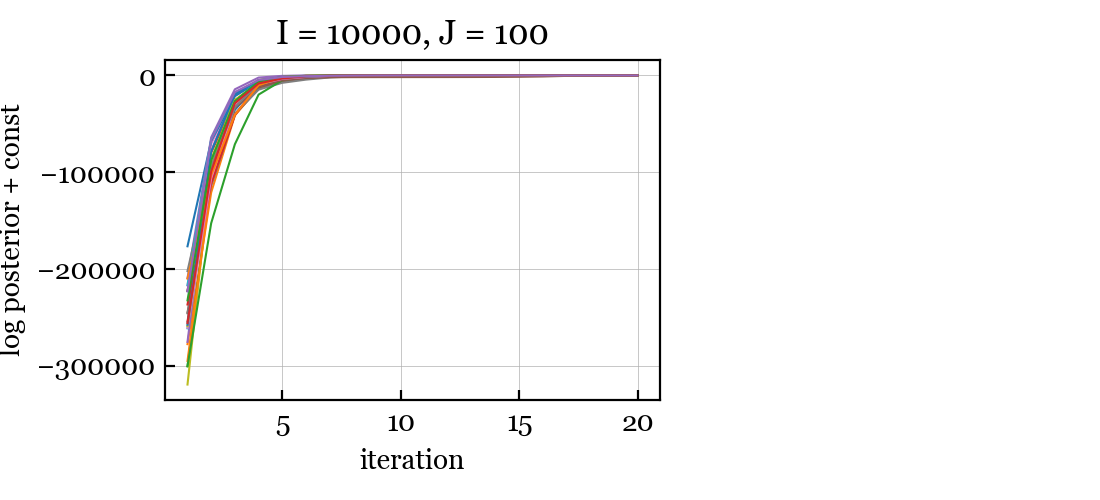}
  \caption{Log posterior density (plus constant) versus iteration for the estimation algorithm.}
  \label{figure:algorithm-convergence}
\end{figure}

\subsubsection*{Algorithm convergence -- Details on Section~\ref{section:convergence}}
Figure~\ref{figure:algorithm-convergence} shows the results of the algorithm convergence experiments described in Section~\ref{section:convergence}.
For each $I$, the plot shows 25 curves of the log-likelihood+log-prior (plus a constant) versus iteration number,
one for each of the 25 simulation runs.
For visual interpretability, we add a constant to each curve such that the final value after iteration $50$ is equal to zero.
Based on our experiments, the estimation algorithm converges rapidly.

\subsubsection*{Robustness to the outcome distribution -- Details on Section~\ref{section:robustness}}
Figures~\ref{figure:LogNormalPoisson-outcomes}, \ref{figure:Poisson-outcomes}, and \ref{figure:Geometric-outcomes}
show the results of the robustness experiments in Section~\ref{section:robustness}. 
With LNP outcomes (Figure~\ref{figure:LogNormalPoisson-outcomes}), the results are very similar to when the true outcome distribution is actually NB (Figures~\ref{figure:consistency-all} and \ref{figure:coverage}).
With Poisson outcomes (Figure~\ref{figure:Poisson-outcomes}), the estimation accuracy is even better than with NB outcomes, presumably 
because there is less variability and the Poisson distribution is a limiting case of NB when the dispersion goes to zero.
The coverage with Poisson outcomes is essentially the same as NB, slightly better for some parameters and slightly worse for others.
With Geometric outcomes (Figure~\ref{figure:Geometric-outcomes}), the same parameters appear to be consistently estimated, however, the estimation accuracy in terms of relative MSE 
is worse by roughly a factor of 10.
Compared to NB outcomes, the coverage with Geometric outcomes is similar for some parameters and slightly worse for others.
Overall, it appears that the estimation and inference algorithms are quite robust to misspecification of the outcome distribution.

\begin{figure}
  \centering
  \includegraphics[trim=0.6cm 0 1.5cm 0, clip, height=0.2\textheight]{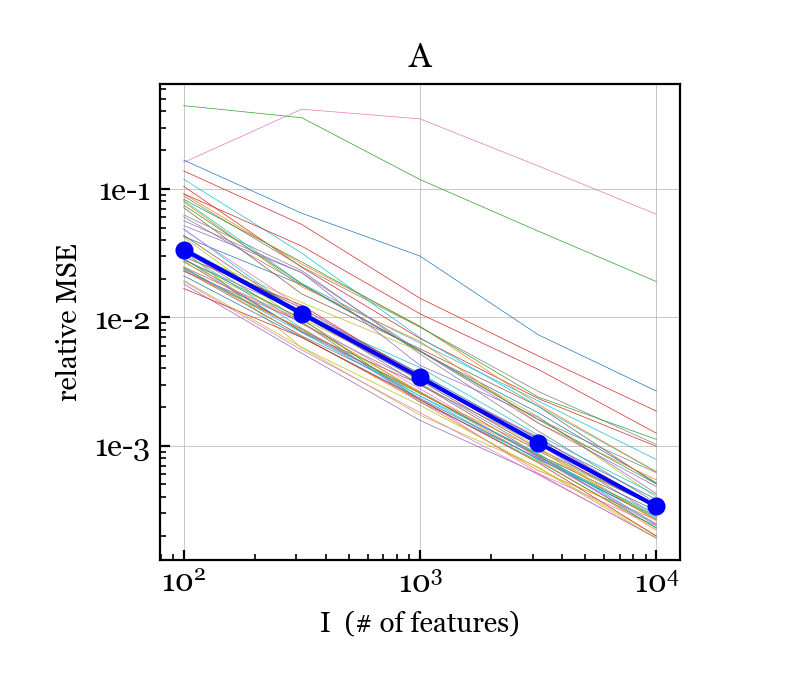}
  \includegraphics[trim=1.1cm 0 1.5cm 0, clip, height=0.2\textheight]{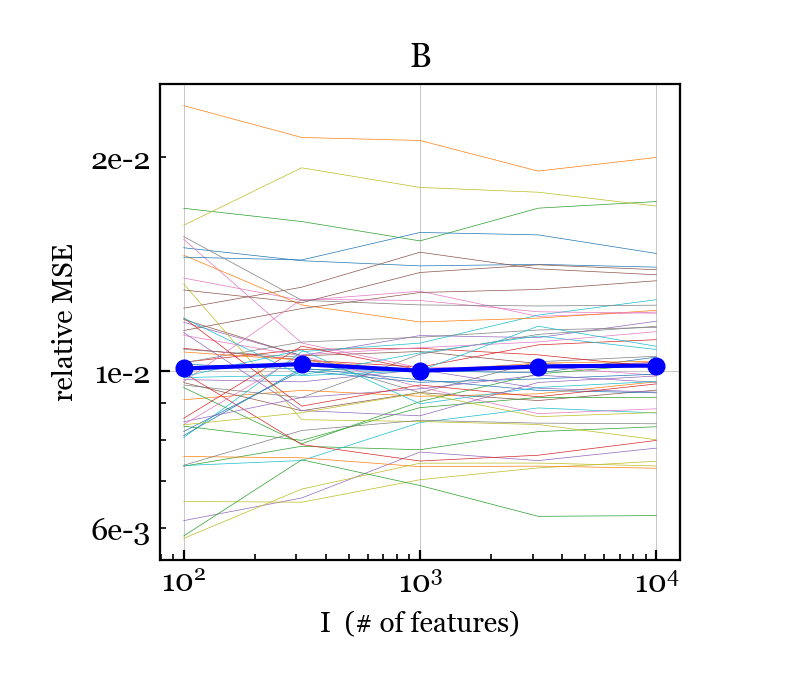}
  \includegraphics[trim=1.1cm 0 1.5cm 0, clip, height=0.2\textheight]{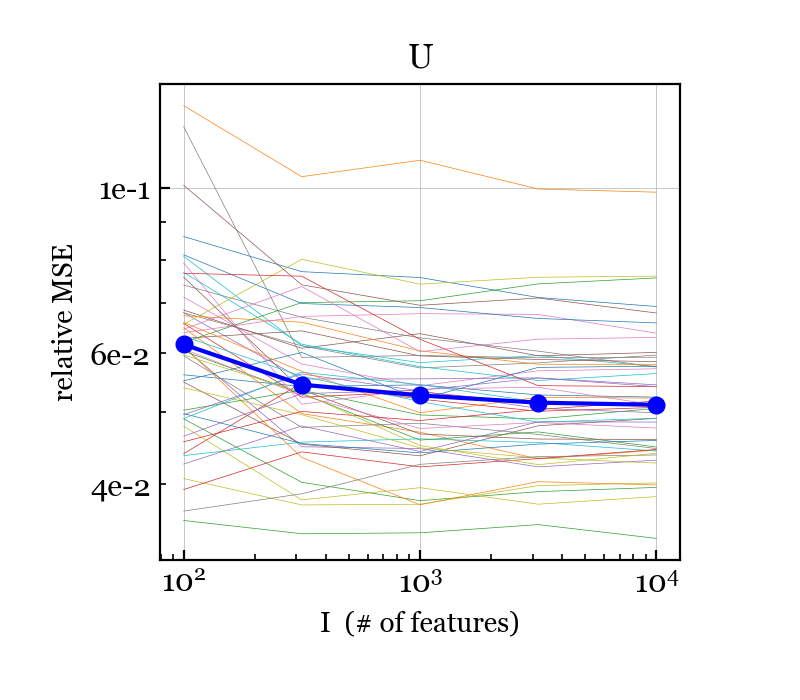}
  \includegraphics[trim=1.1cm 0 1.3cm 0, clip, height=0.2\textheight]{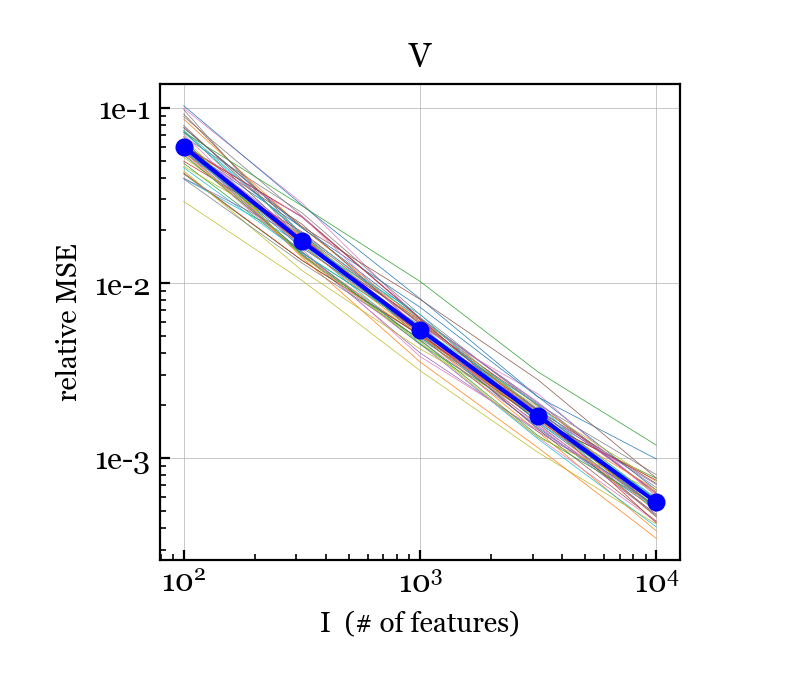}\\
  \includegraphics[trim=0.6cm 0 1.3cm 0, clip, height=0.2\textheight]{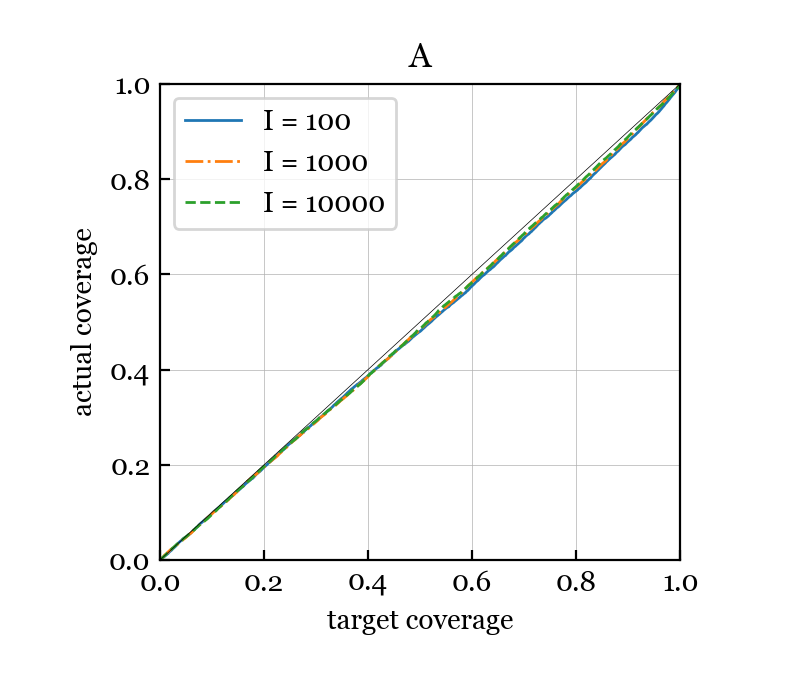}
  \includegraphics[trim=1.3cm 0 1.3cm 0, clip, height=0.2\textheight]{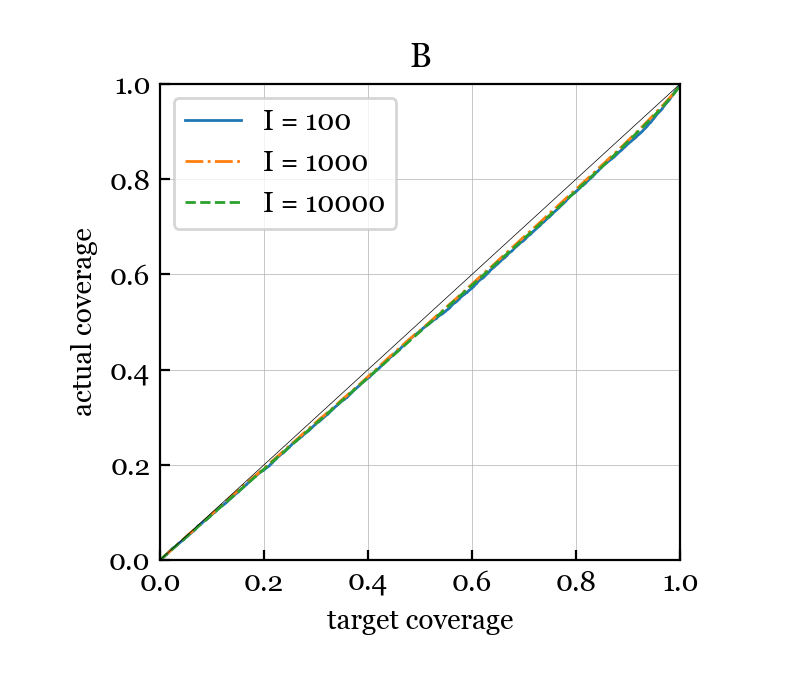}
  \includegraphics[trim=1.3cm 0 1.3cm 0, clip, height=0.2\textheight]{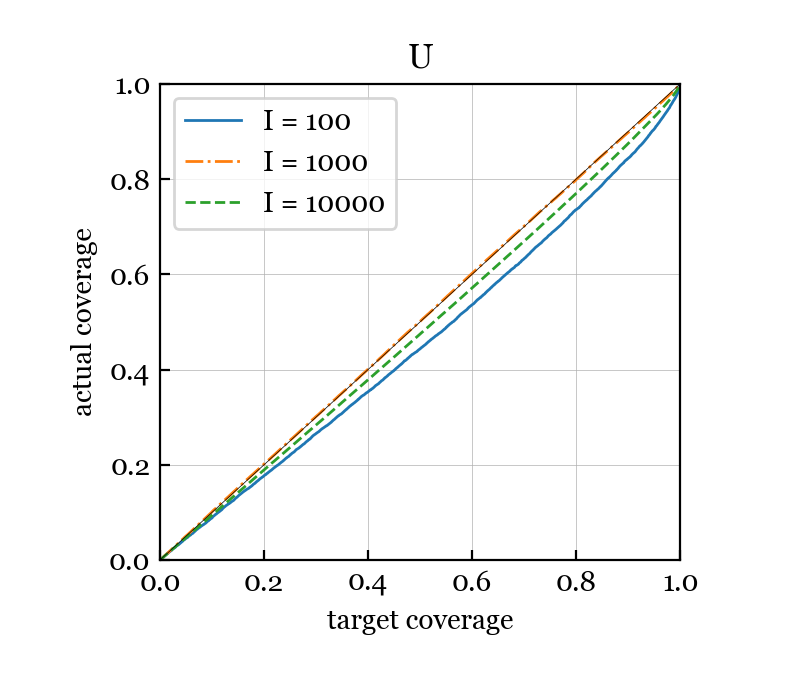}
  \includegraphics[trim=1.3cm 0 1.3cm 0, clip, height=0.2\textheight]{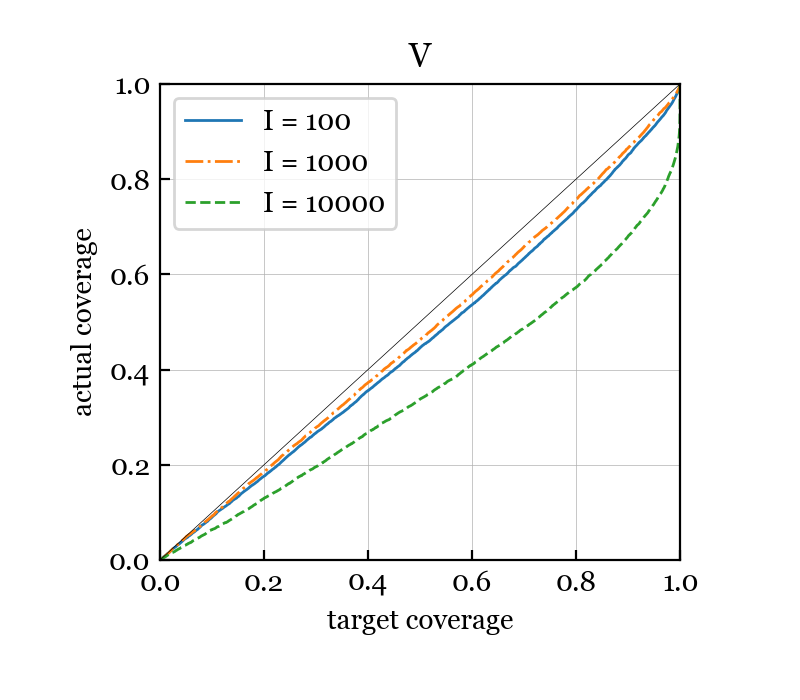}
  \caption{Robustness to outcome: $K = 4$, $L = 2$, $M = 3$, \tt{LNP/Normal/Normal}.}
  \label{figure:LogNormalPoisson-outcomes}
\end{figure}

\begin{figure}
  \centering
  \includegraphics[trim=0.6cm 0 1.5cm 0, clip, height=0.2\textheight]{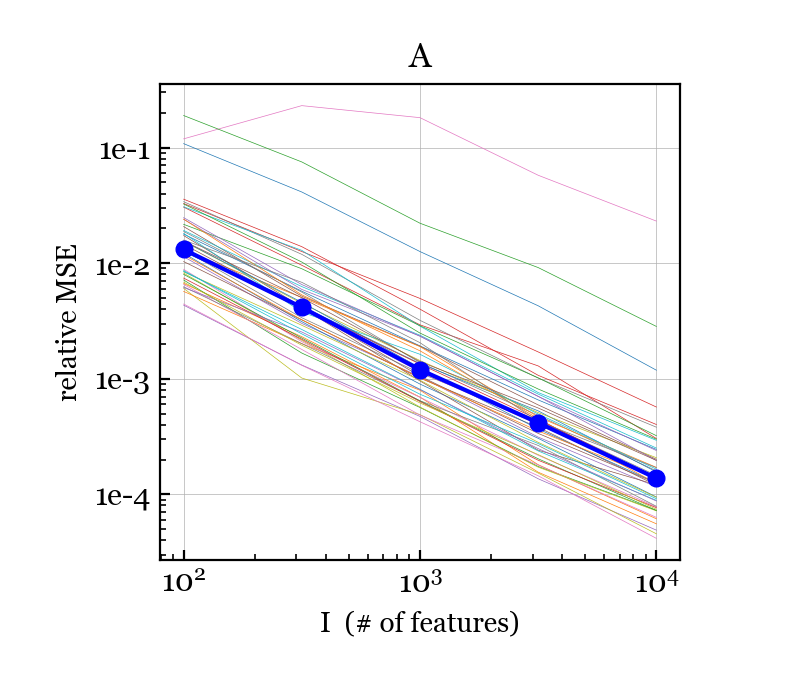}
  \includegraphics[trim=1.1cm 0 1.5cm 0, clip, height=0.2\textheight]{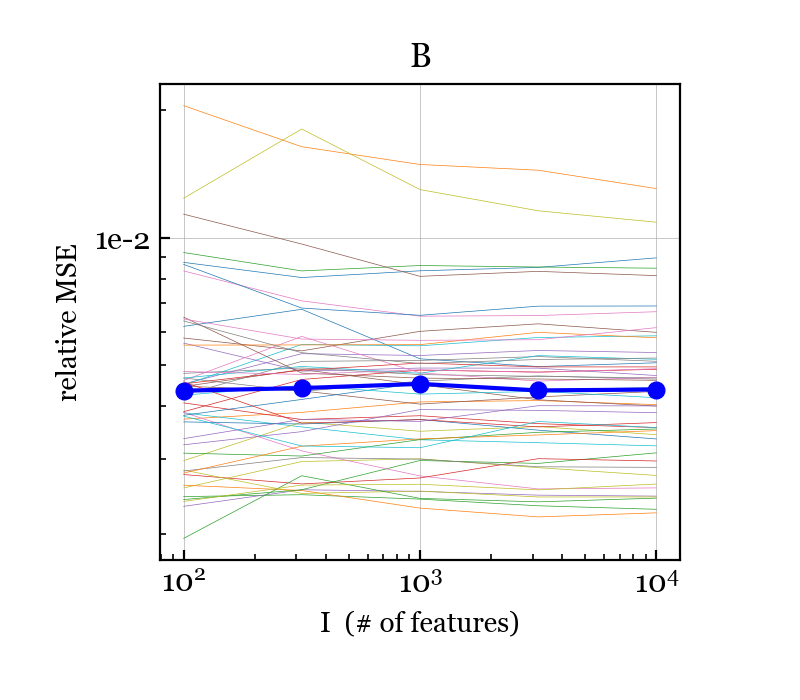}
  \includegraphics[trim=1.1cm 0 1.5cm 0, clip, height=0.2\textheight]{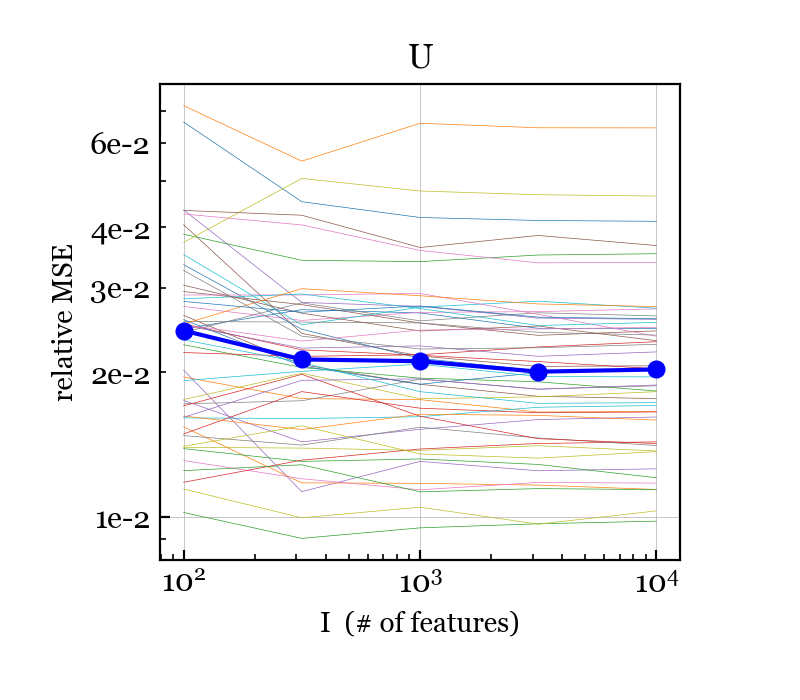}
  \includegraphics[trim=1.1cm 0 1.3cm 0, clip, height=0.2\textheight]{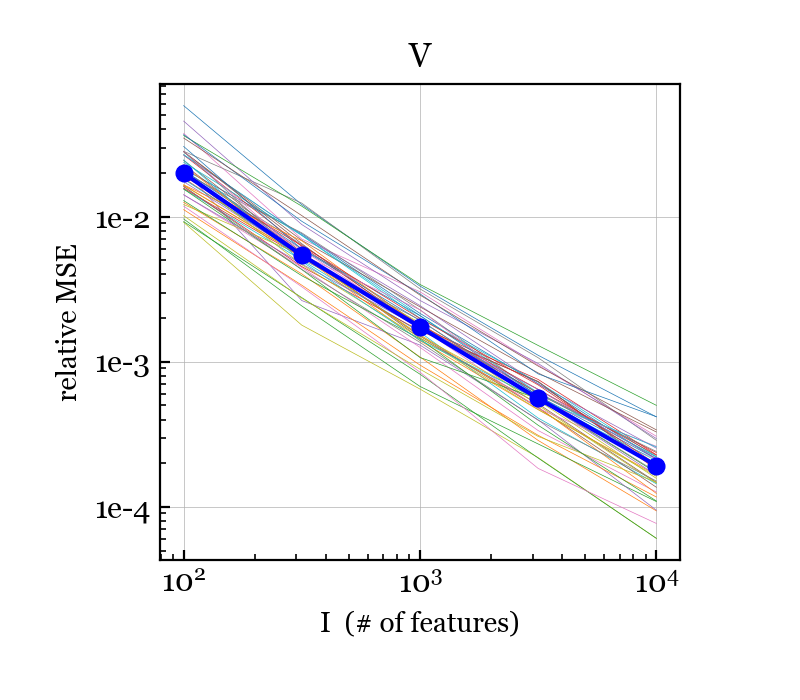}\\
  \includegraphics[trim=0.6cm 0 1.3cm 0, clip, height=0.2\textheight]{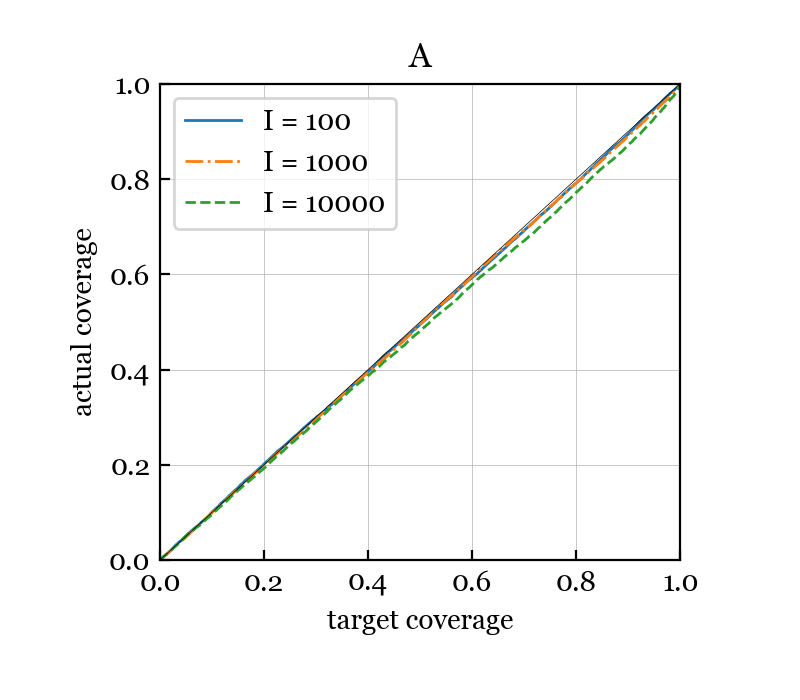}
  \includegraphics[trim=1.3cm 0 1.3cm 0, clip, height=0.2\textheight]{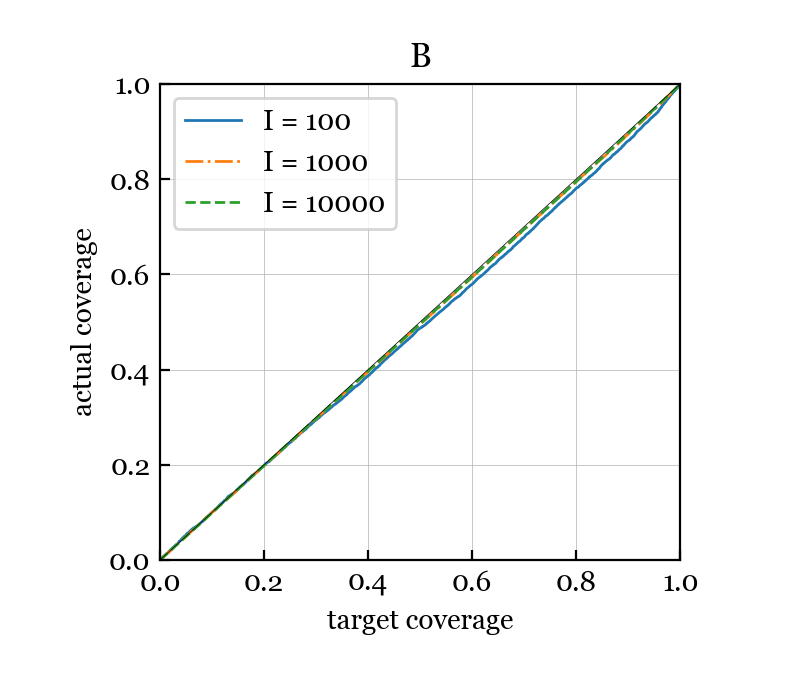}
  \includegraphics[trim=1.3cm 0 1.3cm 0, clip, height=0.2\textheight]{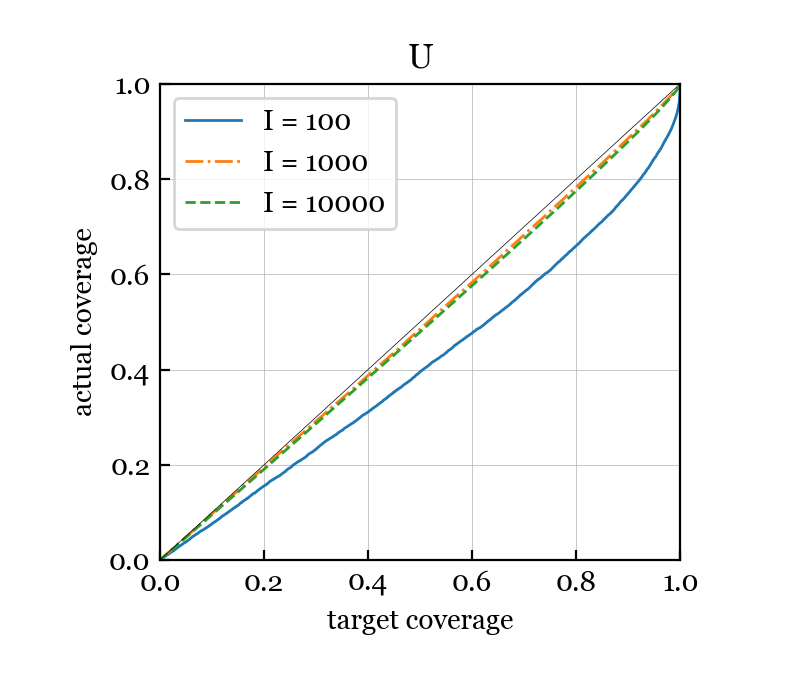}
  \includegraphics[trim=1.3cm 0 1.3cm 0, clip, height=0.2\textheight]{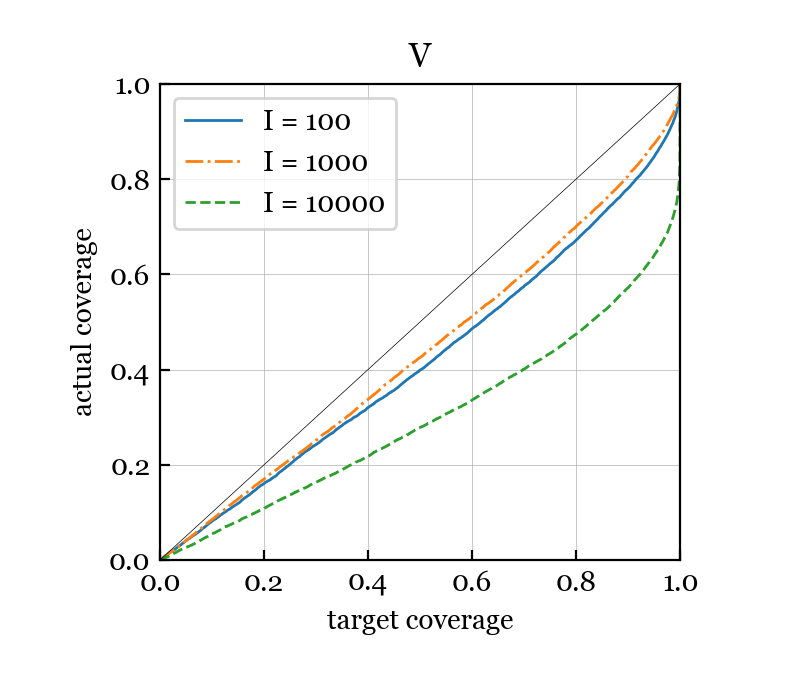}
  \caption{Robustness to outcome: $K = 4$, $L = 2$, $M = 3$, \tt{Poisson/Normal/Normal}.}
  \label{figure:Poisson-outcomes}
\end{figure}

\begin{figure}
  \centering
  \includegraphics[trim=0.6cm 0 1.5cm 0, clip, height=0.2\textheight]{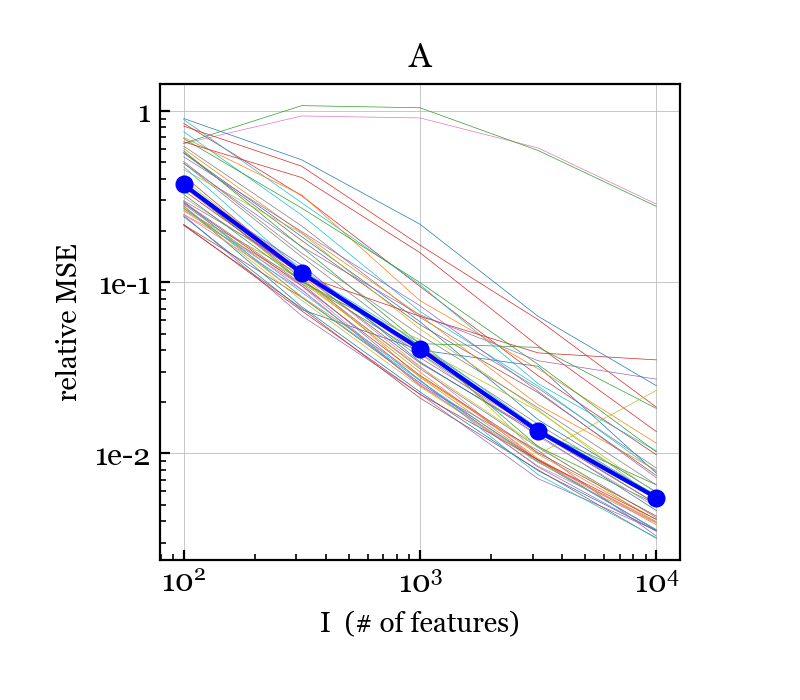}
  \includegraphics[trim=1.1cm 0 1.5cm 0, clip, height=0.2\textheight]{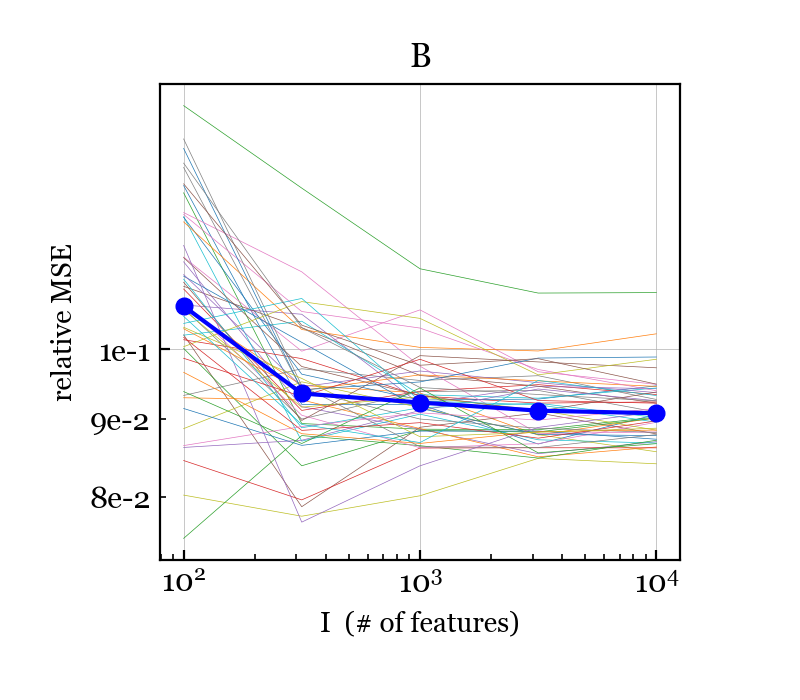}
  \includegraphics[trim=1.1cm 0 1.5cm 0, clip, height=0.2\textheight]{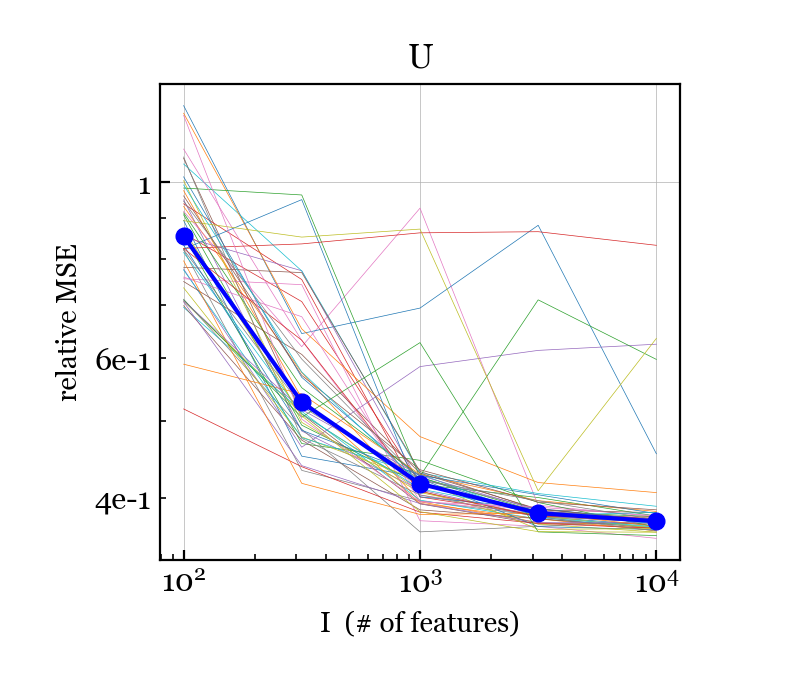}
  \includegraphics[trim=1.1cm 0 1.3cm 0, clip, height=0.2\textheight]{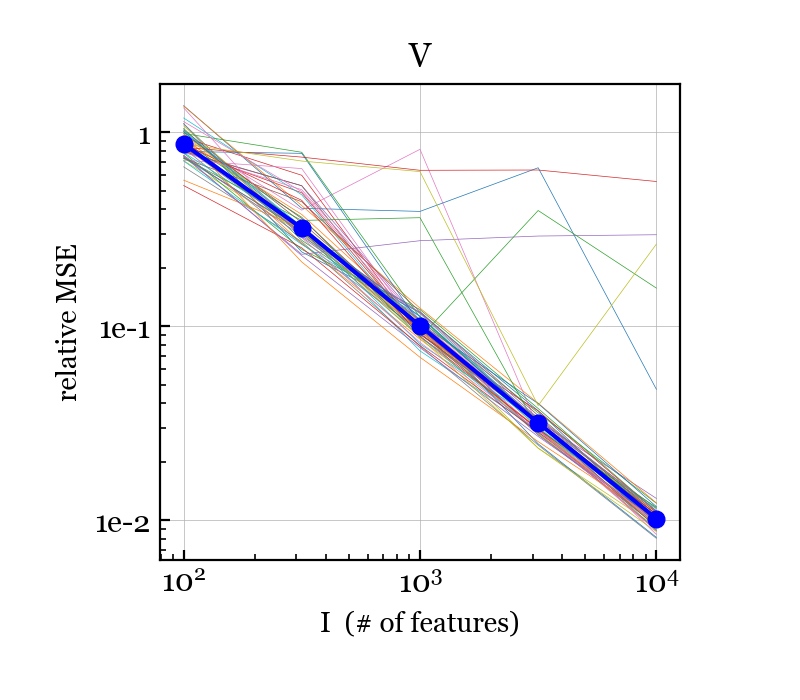}\\
  \includegraphics[trim=0.6cm 0 1.3cm 0, clip, height=0.2\textheight]{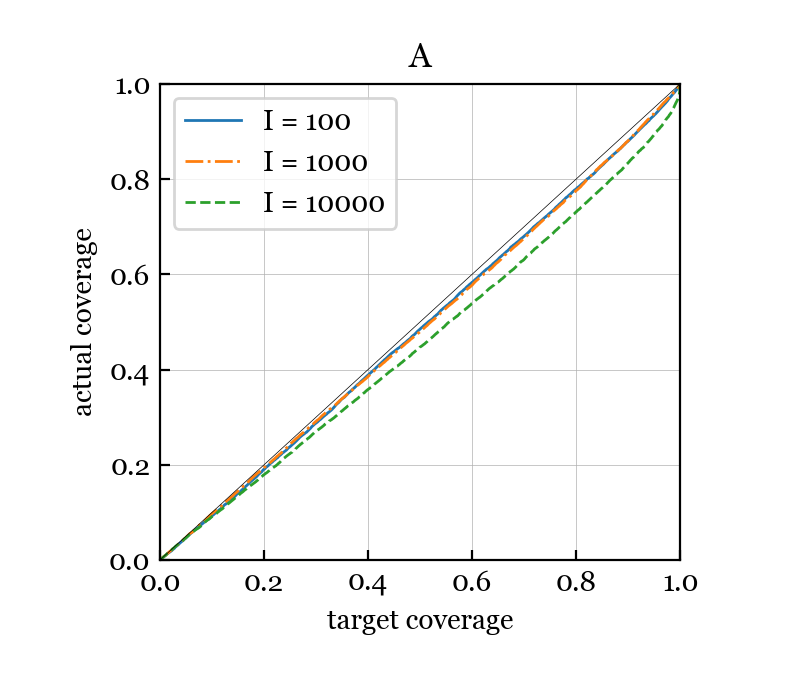}
  \includegraphics[trim=1.3cm 0 1.3cm 0, clip, height=0.2\textheight]{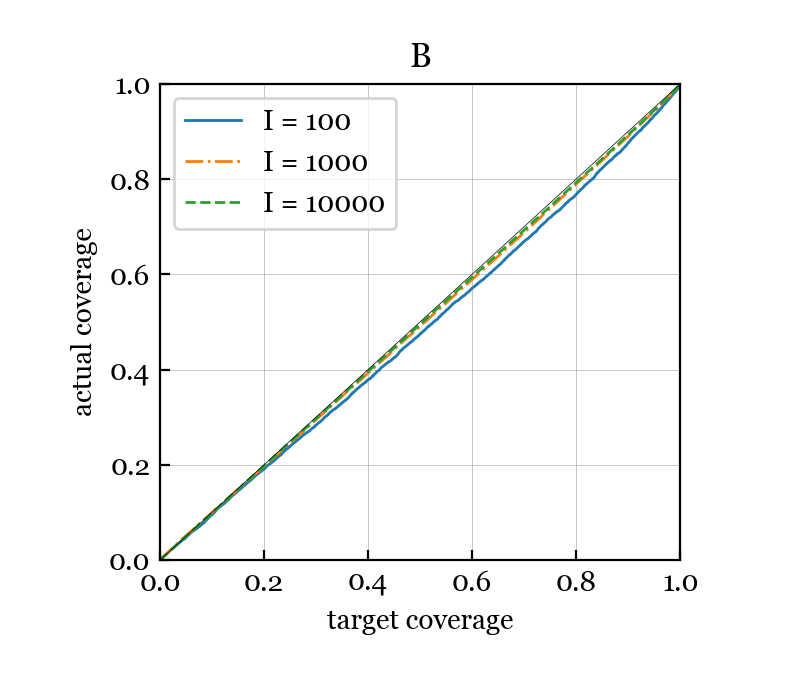}
  \includegraphics[trim=1.3cm 0 1.3cm 0, clip, height=0.2\textheight]{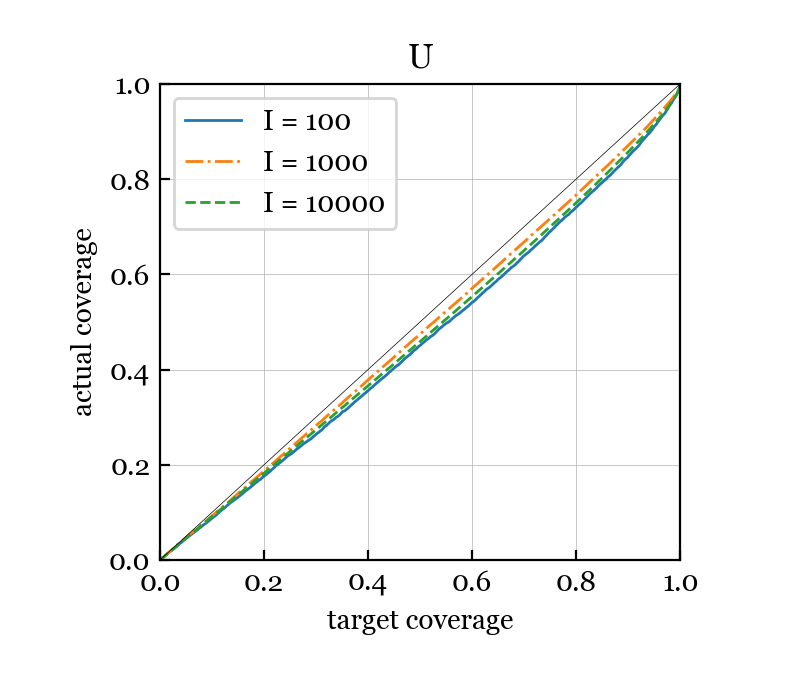}
  \includegraphics[trim=1.3cm 0 1.3cm 0, clip, height=0.2\textheight]{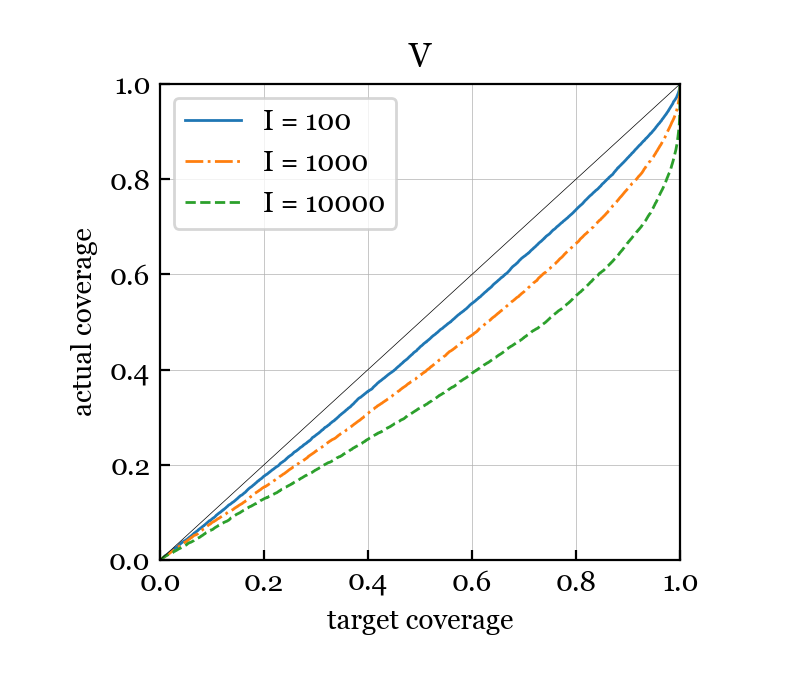}
  \caption{Robustness to outcome: $K = 4$, $L = 2$, $M = 3$, \tt{Geometric/Normal/Normal}.}
  \label{figure:Geometric-outcomes}
\end{figure}

\subsubsection*{Convergence to global optimum}
Based on the consistency experiments reported in Figure~\ref{figure:consistency-all}, it seems unlikely that the algorithm is getting stuck in a suboptimal local mode.
To more directly assess whether the algorithm is converging to a global optimum, we compare the estimates obtained when
(a) initializing the estimation algorithm to the true parameter values versus
(b) initializing using our proposed approach.
Table~\ref{table:initialization-error} shows that the difference between these estimates is negligible.
In detail, we generate $50$ data matrices using the \texttt{NB/Normal/Normal} scheme with $I=1000$, $J=100$, $K=4$, $L=2$, and $M=3$.
For each data matrix, we run the estimation algorithm with initialization approaches (a) and (b) and compute the relative MSE between the two resulting estimates.
In Table~\ref{table:initialization-error}, for each parameter we report the largest observed value of the relative MSE over these $50$ simulation runs.
The maximum relative MSE is extremely small in each case, meaning that initializing at the true parameter values yields nearly identical estimates 
as initializing with our proposed approach.
This suggests that our estimation algorithm is not getting stuck in a suboptimal local mode.

\begingroup
\renewcommand*{\arraystretch}{1.2}
\begin{table} 
\centering
\addtolength{\tabcolsep}{-1.5pt}
\footnotesize
\caption{Initialization error.
Maximum relative MSE between estimates when initializing at the true values versus the proposed initialization approach, 
over 50 runs with $I=1000$ and $J=100$.}%
\label{table:initialization-error}%
\begin{tabular}{|c|c|c|c|c|c|c|c|c|}%
\hline
$A$ & $B$ & $C$ & $D$ & $U$ & $V$ & $S$ & $T$ & $\omega$ \\
\hline
$2\times 10^{-7}$ & $9\times 10^{-7}$ & $7\times 10^{-9}$ & $1\times 10^{-8}$ & $4\times 10^{-6}$ & $3\times 10^{-7}$ & $3\times 10^{-7}$ & $4\times 10^{-8}$ & $2\times 10^{-9}$ \\
\hline
\end{tabular}
\end{table}
\endgroup


\subsubsection*{Choice of identifiability constraint on log-dispersions}

The choice of identifiability constraint on the log-dispersions $S$ and $T$ has a significant effect on estimation accuracy.
Perhaps the most obvious choice would be sum-to-zero constraints: $\sum_i s_i = 0$ and $\sum_j t_j = 0$.
However, it turns out that this leads to poor performance in terms of estimation accuracy.
The reason is that low log-dispersions (for instance, around $t_j \approx -2$ or less) are difficult to estimate accurately,
and noisy estimates of a few low values of $t_j$ can have an undue effect on $\sum_j t_j$, causing significant error in the rest of the $t_j$'s.
Figure~\ref{figure:log-dispersion-constraint} (left panel) illustrates the issue in a simulated example using the \tt{NB/Normal/Normal} scheme with $I = 10000$, $J = 100$, $K = 4$, $L = 2$, and $M = 3$.  Here, the sum-to-zero constraints are used on both the true parameters and the estimates.

\begin{figure}
  \centering
  \includegraphics[trim=0 0.7cm 0 0, clip, width=0.4\textwidth]{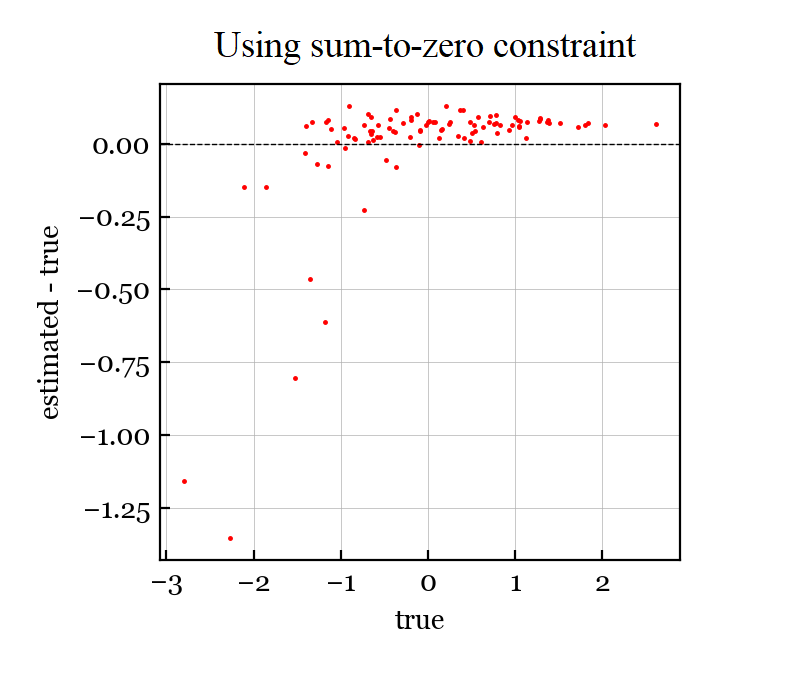}
  \includegraphics[trim=0 0.7cm 0 0, clip, width=0.4\textwidth]{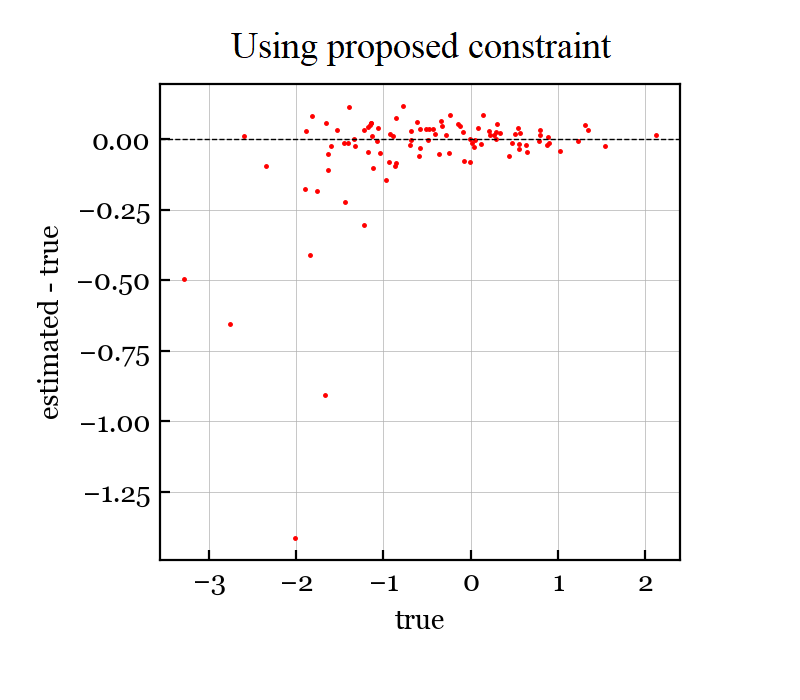}
  \caption{Scatterplots of $\hat{t}_j - t_{0 j}$ versus $t_{0 j}$ for a typical simulated data matrix.}
  \label{figure:log-dispersion-constraint}
\end{figure}

In our proposed algorithm, we instead constrain $\frac{1}{I}\sum_i e^{s_i} = 1$ and $\frac{1}{J}\sum_j e^{t_j} = 1$, which has the effect of downweighting the low values (which are harder to estimate) and upweighting the large values (which are easier to estimate).
As illustrated in Figure~\ref{figure:log-dispersion-constraint} (right panel), this effectively resolves the issue.
Here, we run the same algorithm on the same data matrix, but instead use our proposed constraints.
We can see that the undesirable shift is effectively removed.

\subsubsection*{Special case of no latent factors}
If the number of latent factors is zero, that is, $M = 0$, then the model is no longer ``bilinear'' since the $U D V^\T$ term is no longer present.
The resulting model with $g(\E(\bm{Y})) = X A^\T + B Z^\T + X C Z^\T$ can be viewed as a standard GLM via the vectorization approach discussed in Section~\ref{section:challenges}, however, computation using standard GLM methods becomes intractable when $I$ or $J$ is large.
Thus, our methods provide a computationally tractable approach to estimation and inference in this quite general class of GLMs for matrix data.
In Figure~\ref{figure:K=4-L=2-M=0}, we present a subset of simulation results using the \tt{NB/Normal/Normal} simulation scheme with $J = 100$, $K = 4$, $L = 2$, and $M = 0$.
As expected, the estimates for $A$ and $C$ appear to be consistent and rapidly converging to the true values.
Further, the coverage is nearly perfect for $A$ and $B$, and possibly slightly conservative for $C$.

\begin{figure}
  \centering
  \includegraphics[trim=0.6cm 0 1.5cm 0, clip, height=0.2\textheight]{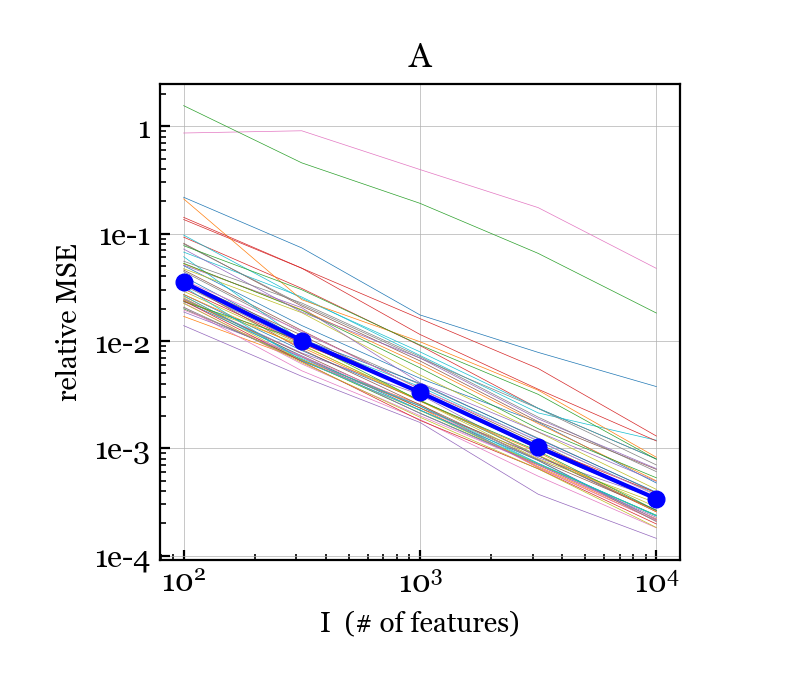}
  \includegraphics[trim=1.1cm 0 1.5cm 0, clip, height=0.2\textheight]{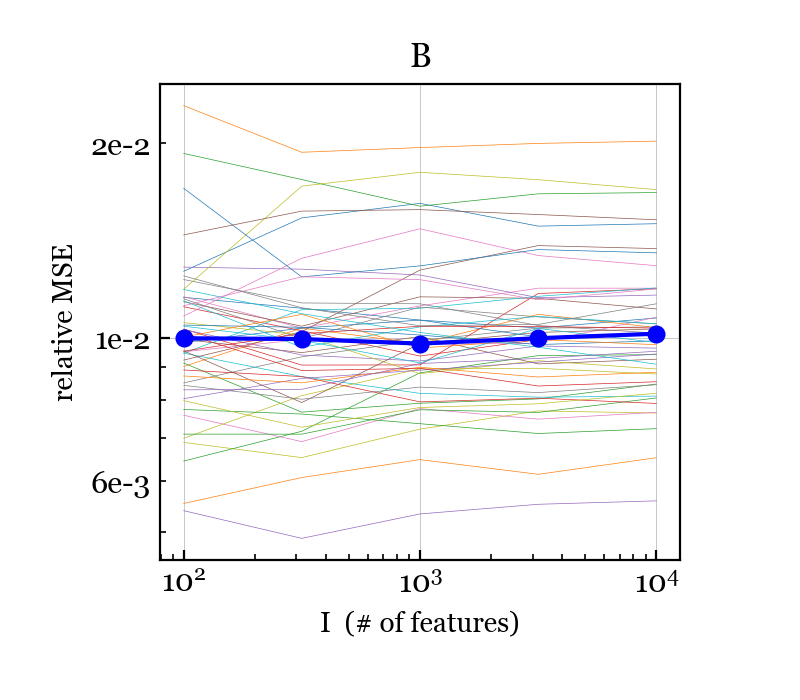}
  \includegraphics[trim=1.1cm 0 1.3cm 0, clip, height=0.2\textheight]{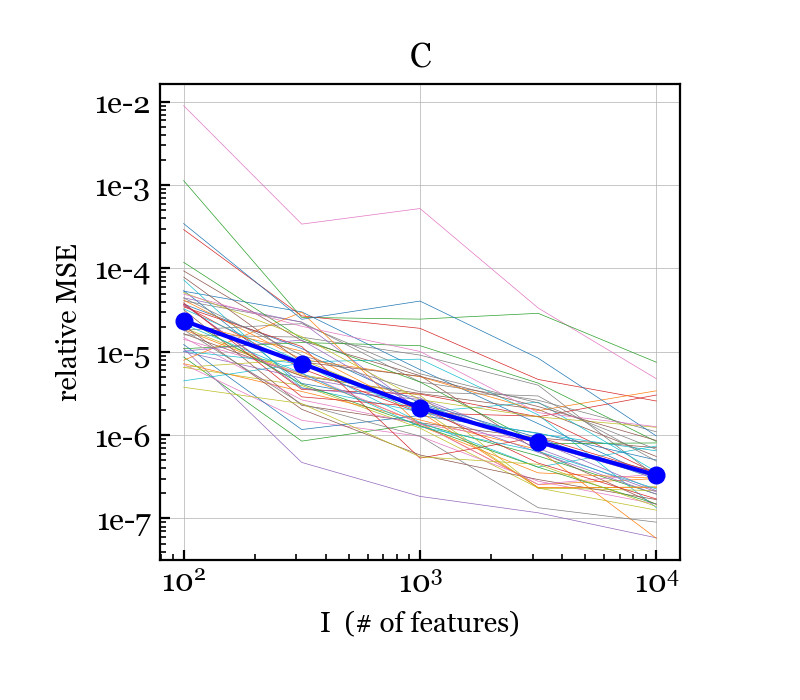}\\
  \includegraphics[trim=0.6cm 0 1.3cm 0, clip, height=0.2\textheight]{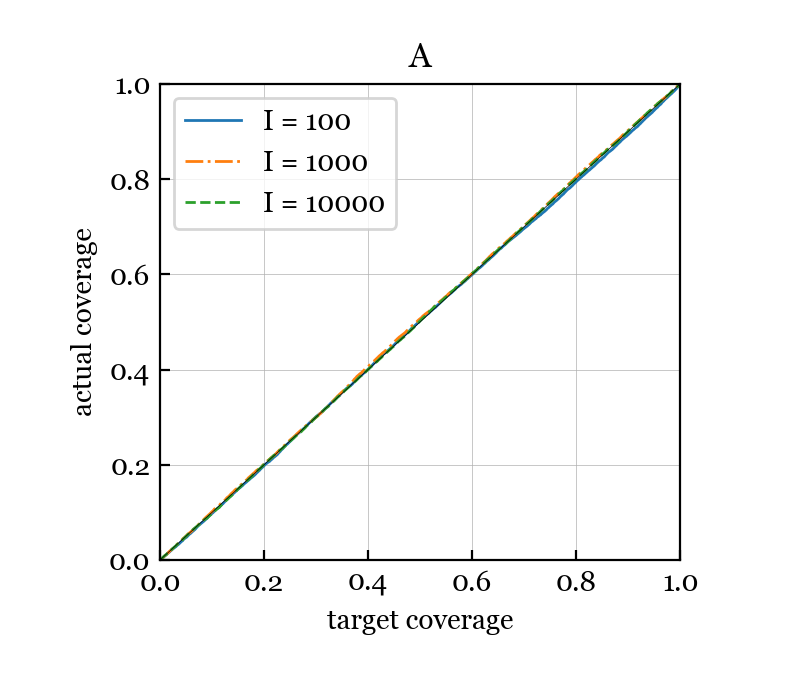}
  \includegraphics[trim=1.3cm 0 1.3cm 0, clip, height=0.2\textheight]{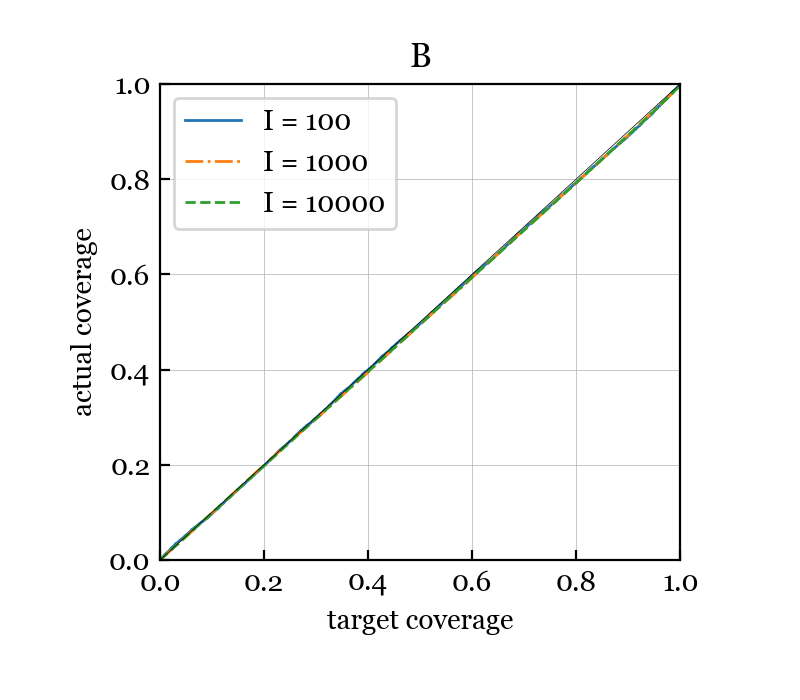}
  \includegraphics[trim=1.3cm 0 1.3cm 0, clip, height=0.2\textheight]{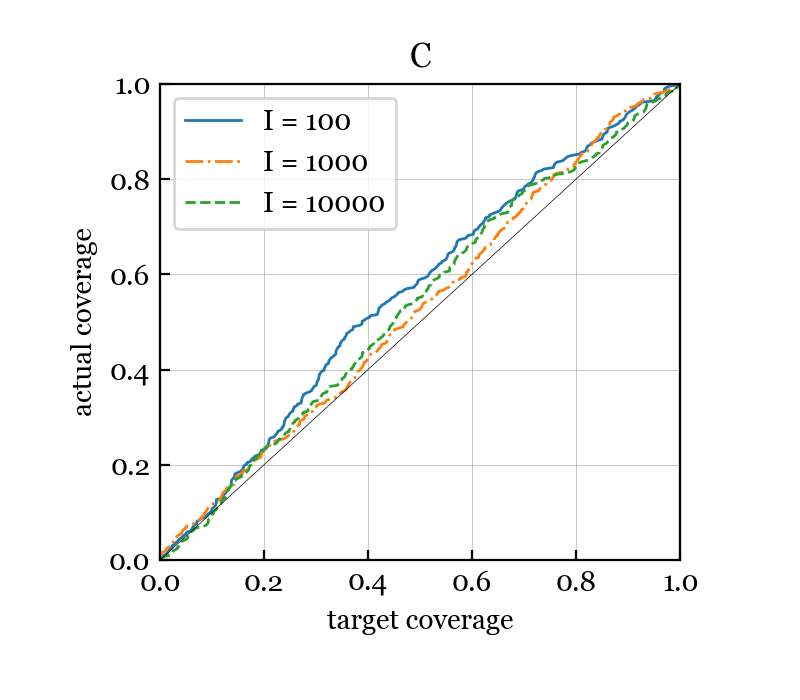}
  \caption{Special case of no latent factors: $K = 4$, $L = 2$, $M = 0$, \tt{NB/Normal/Normal}.}
  \label{figure:K=4-L=2-M=0}
\end{figure}

\begin{figure}
  \centering
  \includegraphics[trim=0.6cm 0 1.5cm 0, clip, height=0.2\textheight]{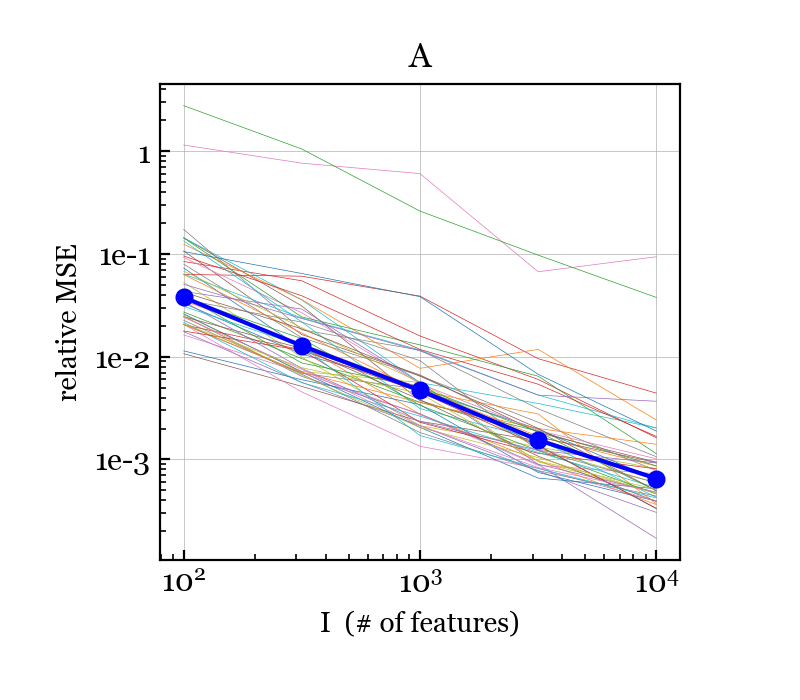}
  \includegraphics[trim=1.1cm 0 1.5cm 0, clip, height=0.2\textheight]{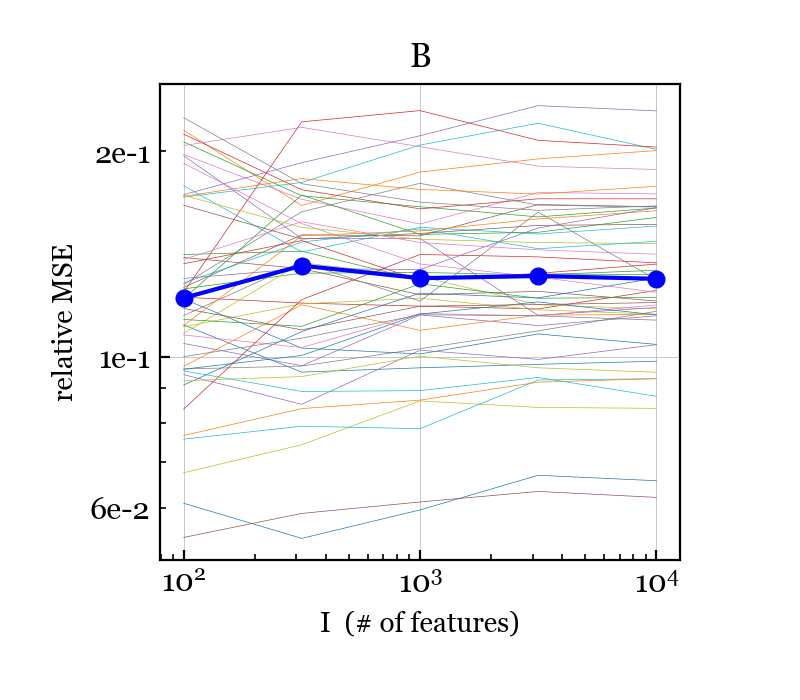}
  \includegraphics[trim=1.1cm 0 1.3cm 0, clip, height=0.2\textheight]{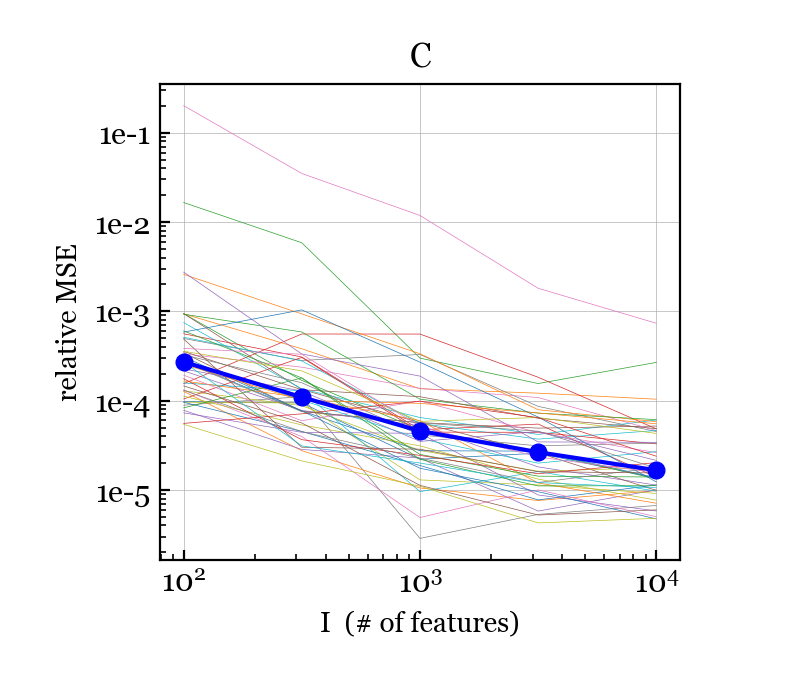}\\
  \includegraphics[trim=0.6cm 0 1.3cm 0, clip, height=0.2\textheight]{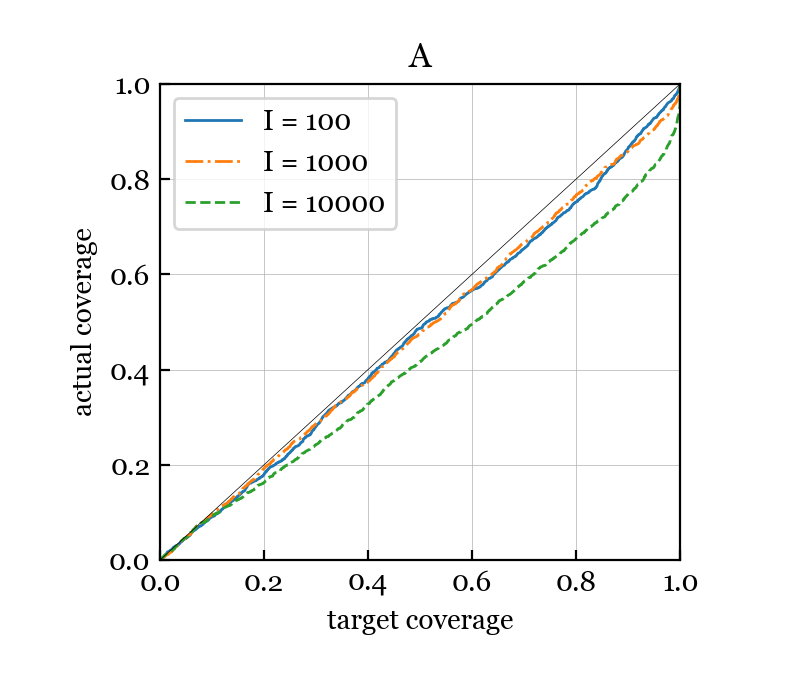}
  \includegraphics[trim=1.3cm 0 1.3cm 0, clip, height=0.2\textheight]{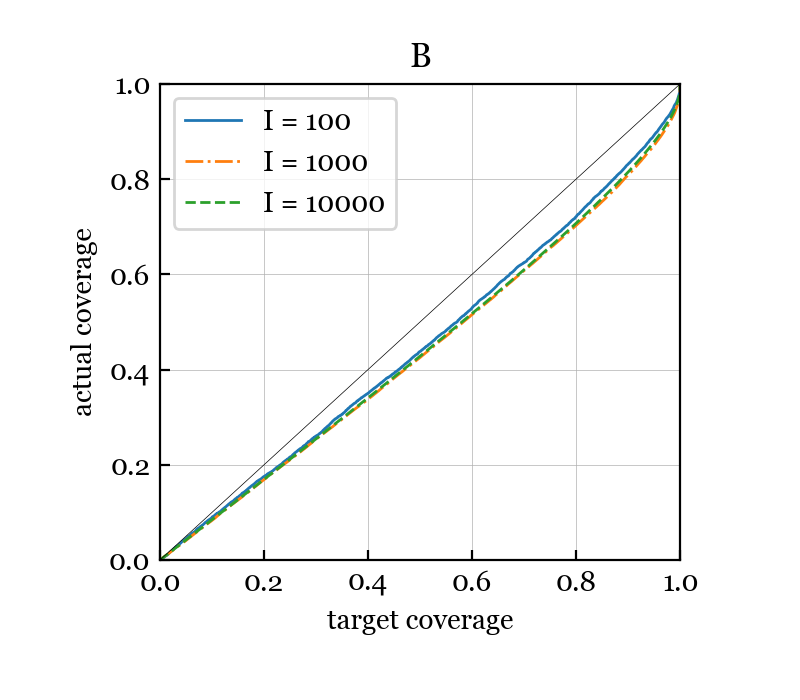}
  \includegraphics[trim=1.3cm 0 1.3cm 0, clip, height=0.2\textheight]{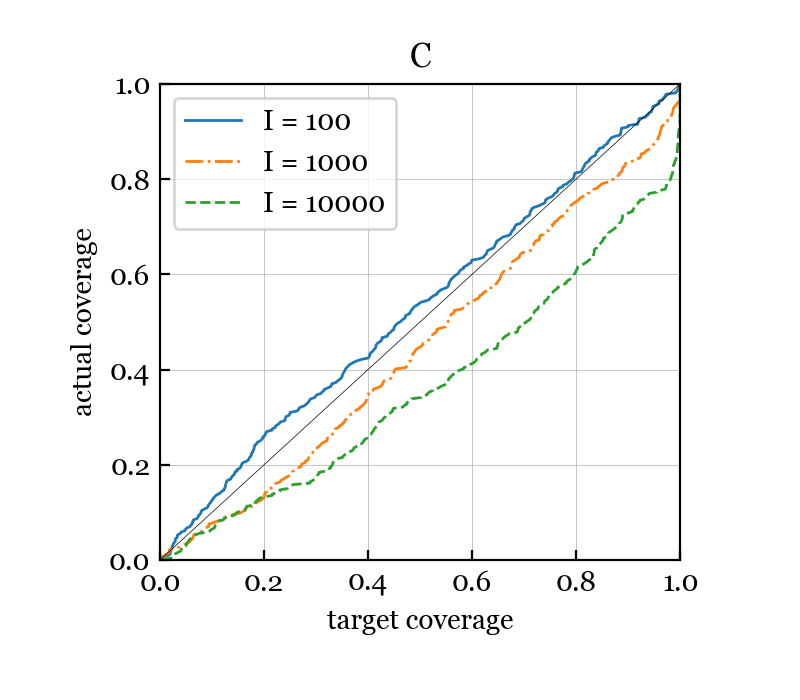}
  \caption{Effect of small sample size: $J = 10$, $K = 4$, $L = 2$, $M = 0$, \tt{NB/Normal/Normal}.}
  \label{figure:J=10-K=4-L=2-M=0}
\end{figure}

\begin{figure}
  \centering
  \includegraphics[trim=0.6cm 0 1.5cm 0, clip, height=0.2\textheight]{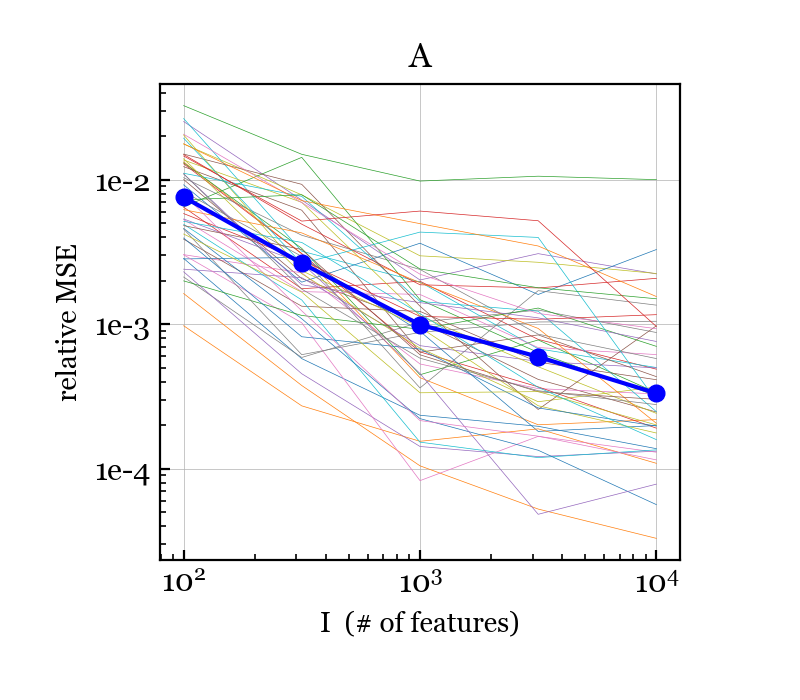}
  \includegraphics[trim=1.1cm 0 1.5cm 0, clip, height=0.2\textheight]{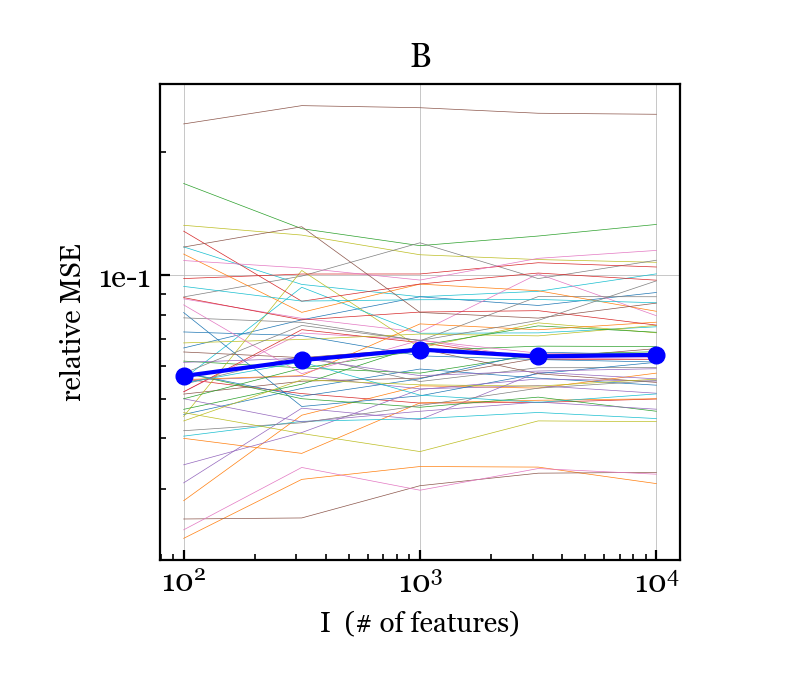}
  \includegraphics[trim=1.1cm 0 1.5cm 0, clip, height=0.2\textheight]{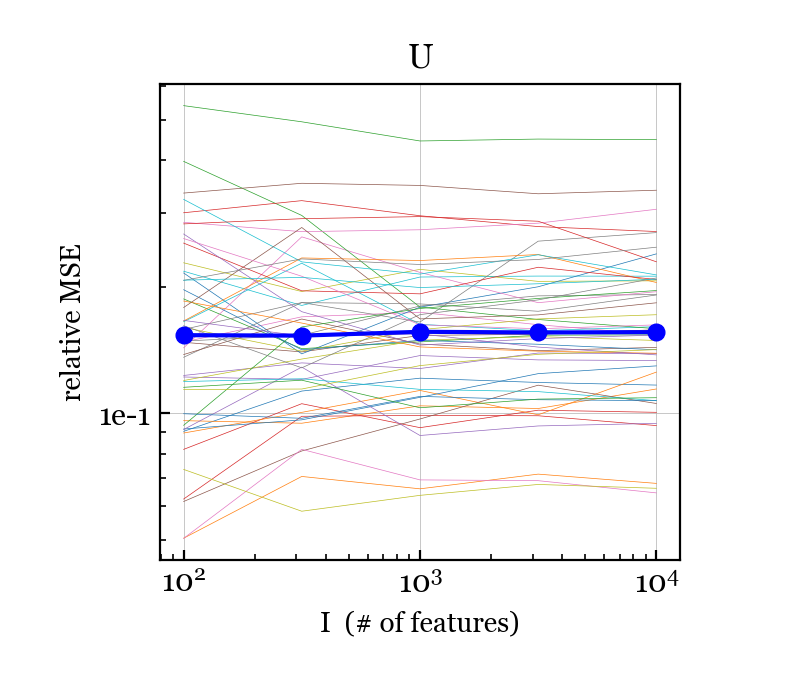}
  \includegraphics[trim=1.1cm 0 1.3cm 0, clip, height=0.2\textheight]{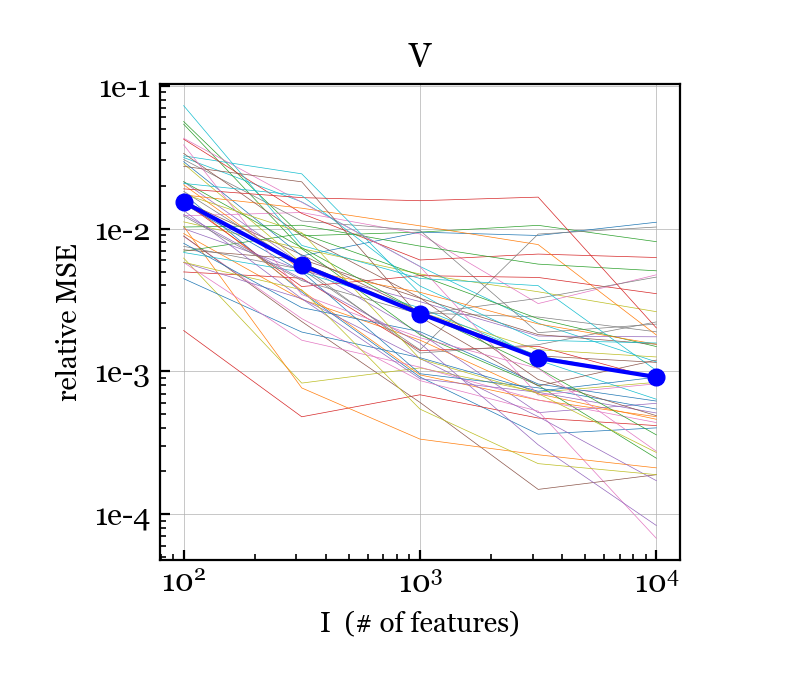}\\
  \includegraphics[trim=0.6cm 0 1.3cm 0, clip, height=0.2\textheight]{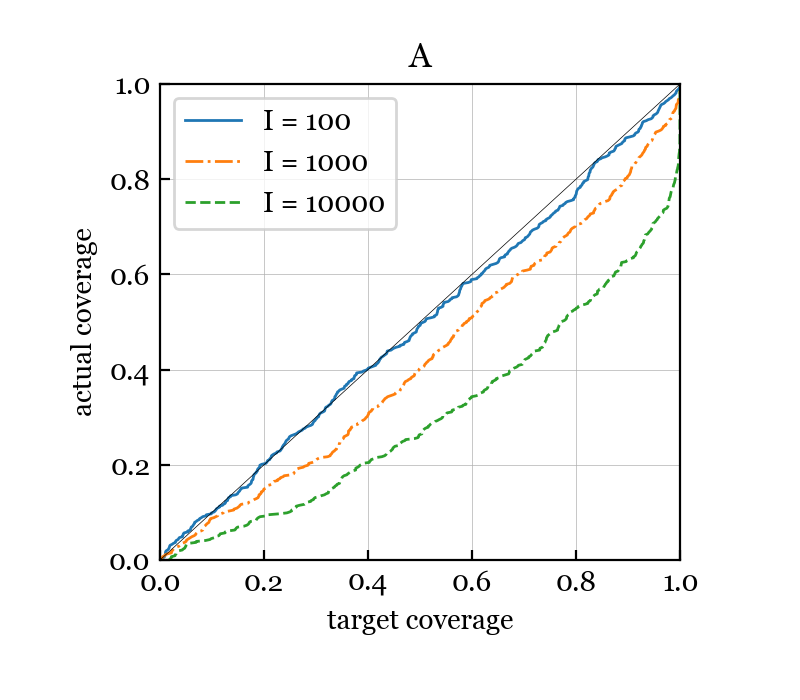}
  \includegraphics[trim=1.3cm 0 1.3cm 0, clip, height=0.2\textheight]{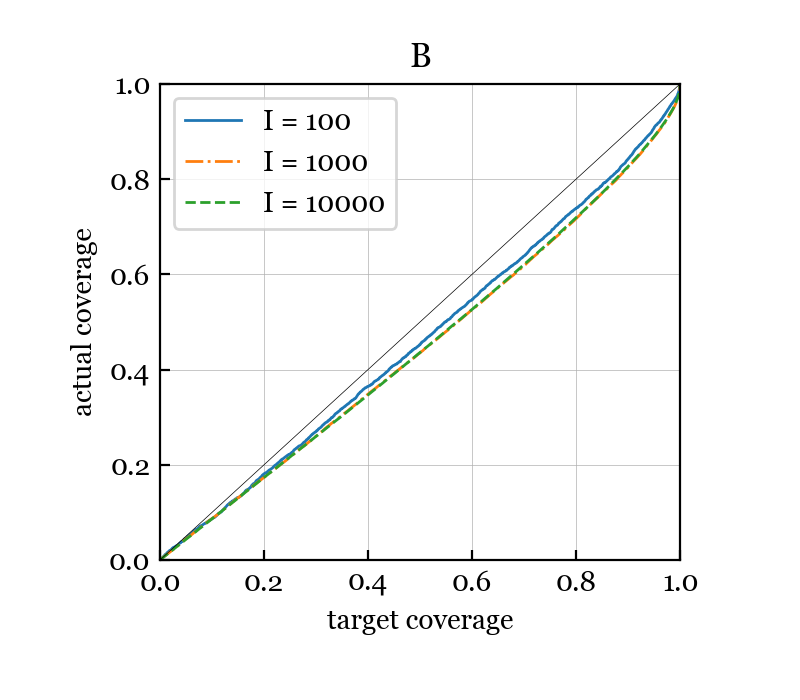}
  \includegraphics[trim=1.3cm 0 1.3cm 0, clip, height=0.2\textheight]{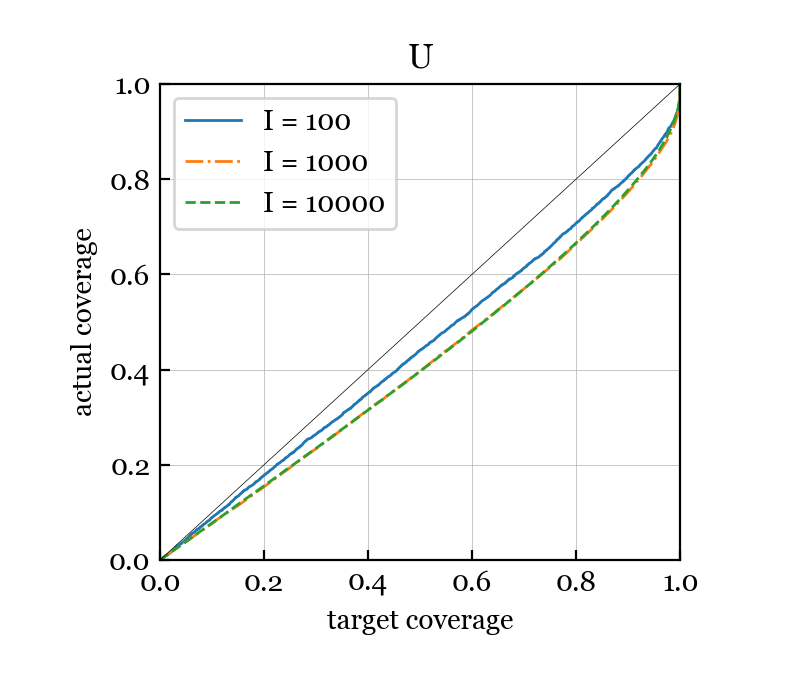}
  \includegraphics[trim=1.3cm 0 1.3cm 0, clip, height=0.2\textheight]{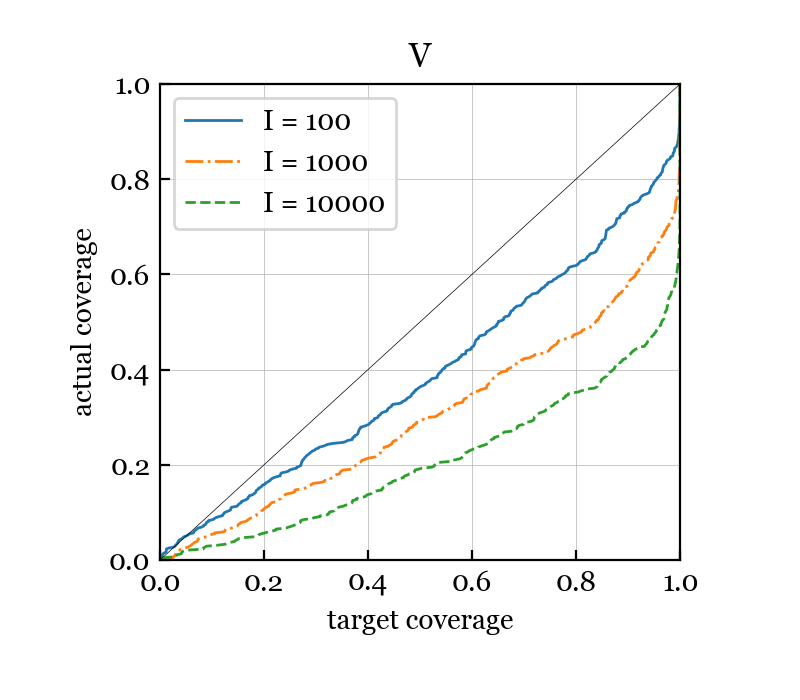}
  \caption{Effect of small sample size: $J = 10$, $K = 1$, $L = 1$, $M = 1$, \tt{NB/Normal/Normal}.}
  \label{figure:J=10-K=1-L=1-M=1}
\end{figure}

\subsubsection*{Effect of small sample size}
In applications such as gene expression analysis, sometimes a very small number of samples are available.
To explore this regime, we run simulations using the \tt{NB/Normal/Normal} scheme with $J = 10$ and
(a) $K = 4$, $L = 2$, $M = 0$, and
(b) $K = 1$, $L = 1$, $M = 1$.
As Figures~\ref{figure:J=10-K=4-L=2-M=0} and \ref{figure:J=10-K=1-L=1-M=1} show, consistency for the same parameters appears to hold, although in some cases the rate is less clear than when $J = 100$ (Figures~\ref{figure:K=4-L=2-M=0} and \ref{figure:K=1-L=1-M=1}).
Meanwhile, the coverage performance is noticably worse when $J = 10$ than when $J = 100$.

\subsubsection*{Effect of varying latent dimension}

The ``bilinear'' latent factorization term $U D V^\T$ is more challenging in terms of estimation and inference
compared to the linear terms $X A^\T$, $B Z^\T$, and $X C Z^\T$.
To assess the effect of the latent dimension $M$ on consistency, efficiency, and accuracy of standard errors,
we run the simulations
using: (a) $K = 1$, $L = 1$, $M = 1$, and (b) $K = 1$, $L = 1$, and $M = 5$.
This is similar to the GAMMI model \citep{van1995multiplicative}, but also includes $S$, $T$, and $\omega$.
We see in Figure~\ref{figure:K=1-L=1-M=1} that when $M=1$,
the estimates of $A$ and $V$ are converging to the true values as expected,
and the coverage for $A$, $B$, $U$ and $V$ is excellent, except that the coverage for $V$ degrades slightly when $I = 10000$.
When the latent dimension is increased to $M = 5$, Figure~\ref{figure:K=1-L=1-M=5} shows that the performance is similar,
except that the coverage for $V$ degrades more severely when $I = 10000$.

Based on these and other experiments, we find that as long as the entries of $D$ are not too small and are sufficiently well-spaced, the performance remains good.
The magnitude of each entry of $D$ controls the strength of the signal for the corresponding latent factor;
thus, a small entry in $D$ makes it difficult to estimate the corresponding columns of $U$ and $V$.
Meanwhile, if two entries of $D$ are similar in magnitude, then there is near non-identifiability of $U$ and $V$ with respect to those two latent factors
due to rotational invariance, leading to a loss of performance in terms of statistical efficiency and coverage.

\begin{figure}
  \centering
  \includegraphics[trim=0.6cm 0 1.5cm 0, clip, height=0.2\textheight]{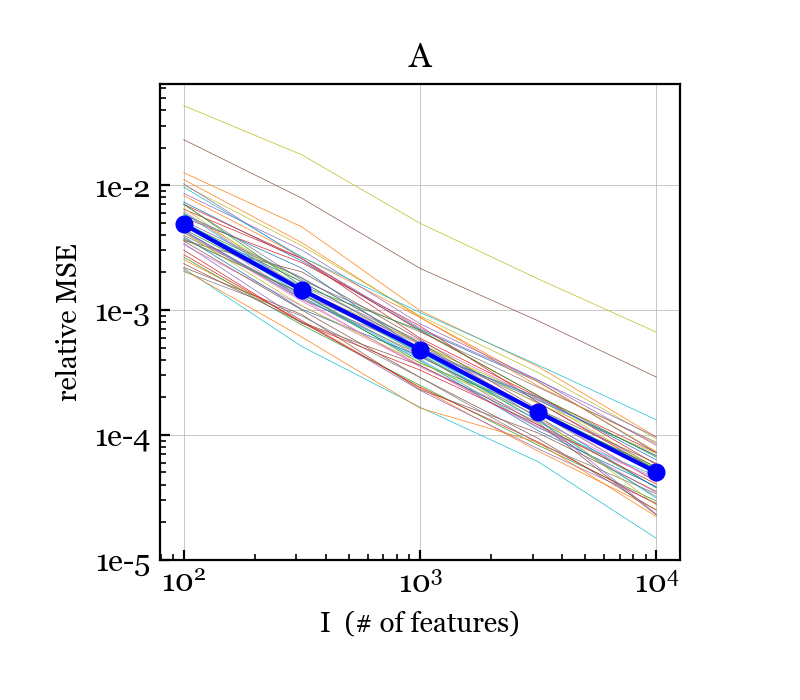}
  \includegraphics[trim=1.1cm 0 1.5cm 0, clip, height=0.2\textheight]{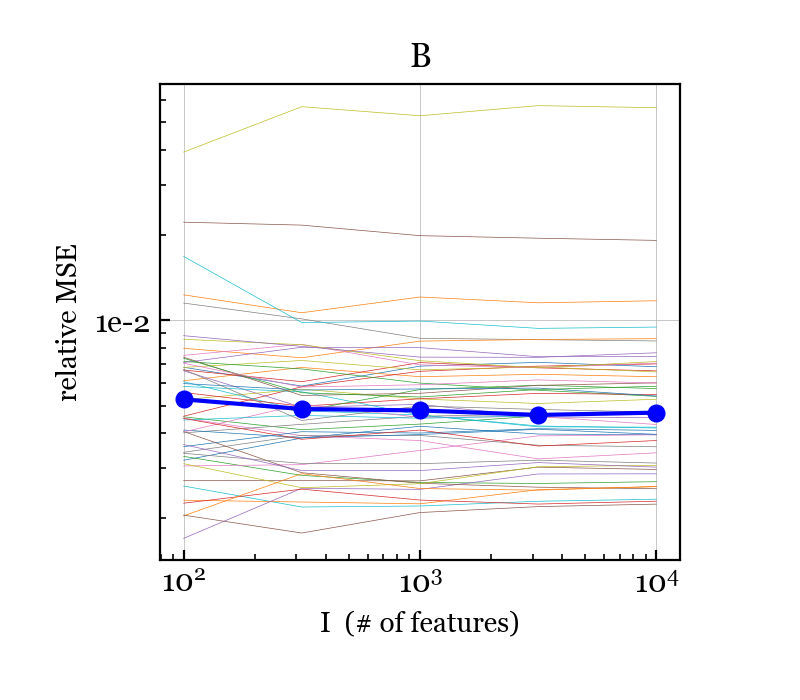}
  \includegraphics[trim=1.1cm 0 1.5cm 0, clip, height=0.2\textheight]{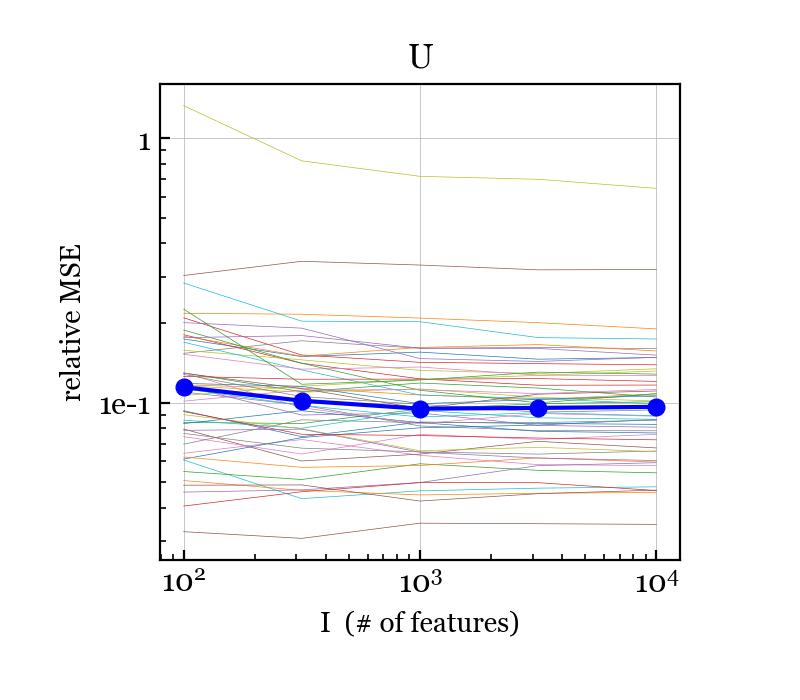}
  \includegraphics[trim=1.1cm 0 1.3cm 0, clip, height=0.2\textheight]{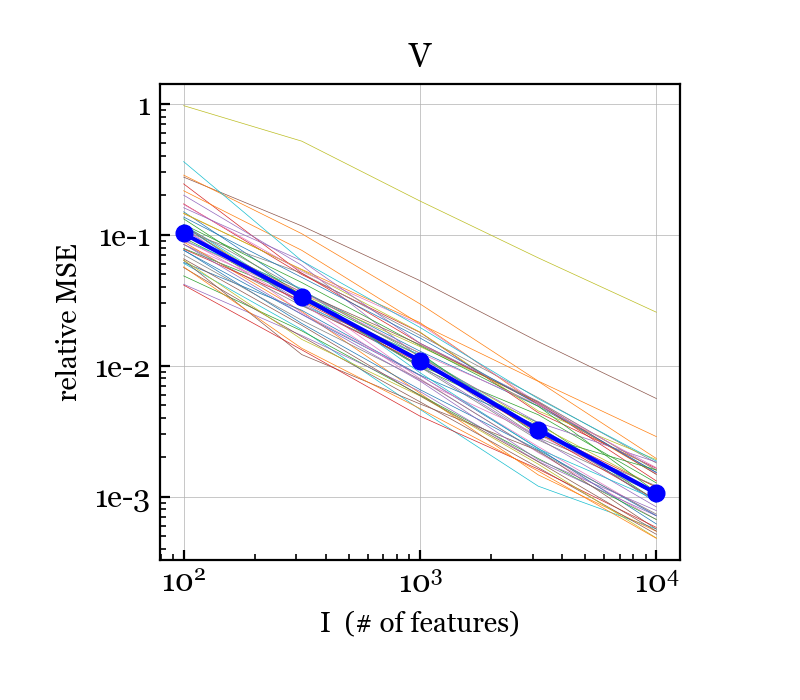}\\
  \includegraphics[trim=0.6cm 0 1.3cm 0, clip, height=0.2\textheight]{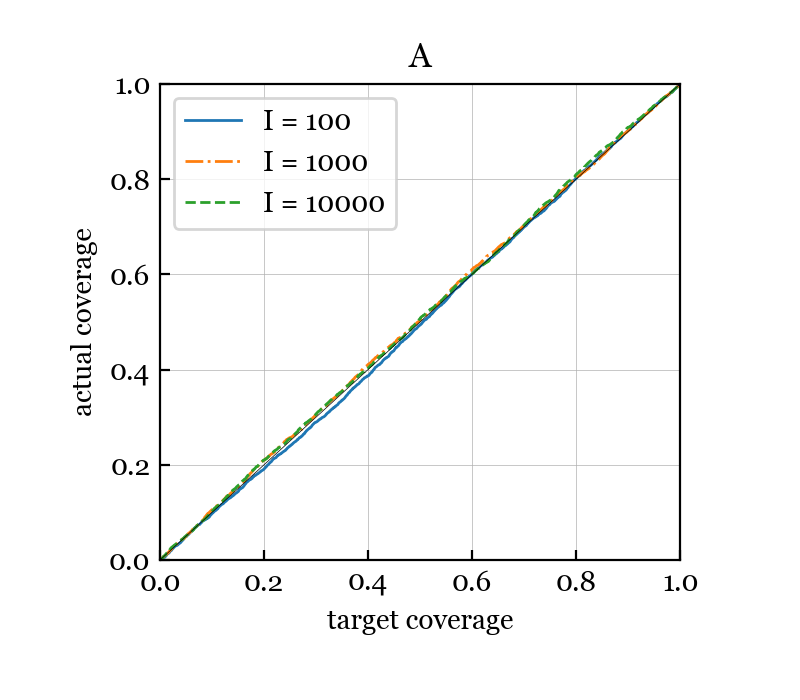}
  \includegraphics[trim=1.3cm 0 1.3cm 0, clip, height=0.2\textheight]{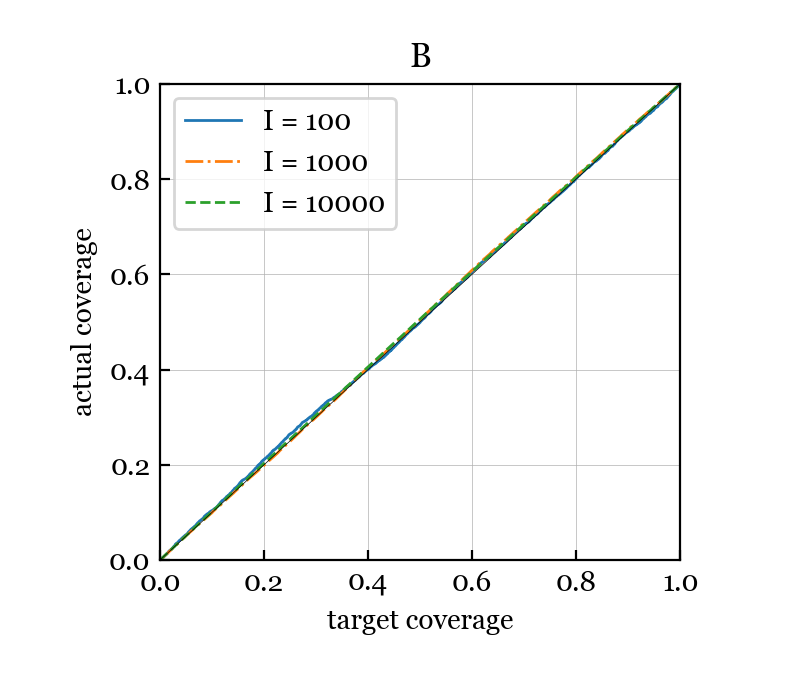}
  \includegraphics[trim=1.3cm 0 1.3cm 0, clip, height=0.2\textheight]{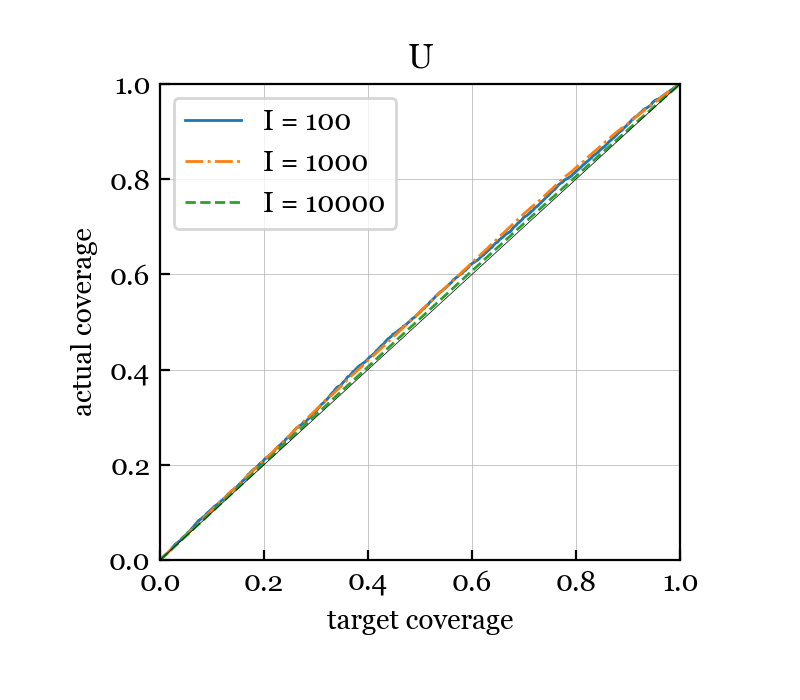}
  \includegraphics[trim=1.3cm 0 1.3cm 0, clip, height=0.2\textheight]{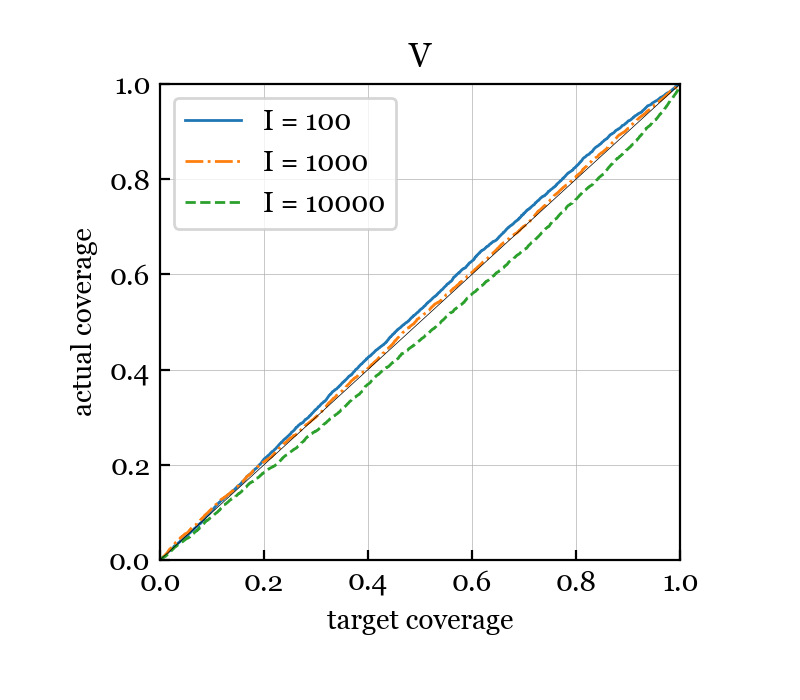}
  \caption{Effect of varying latent dimension: $K = 1$, $L = 1$, $M = 1$, \tt{NB/Normal/Normal}.}
  \label{figure:K=1-L=1-M=1}
\end{figure}

\begin{figure}
  \centering
  \includegraphics[trim=0.6cm 0 1.5cm 0, clip, height=0.2\textheight]{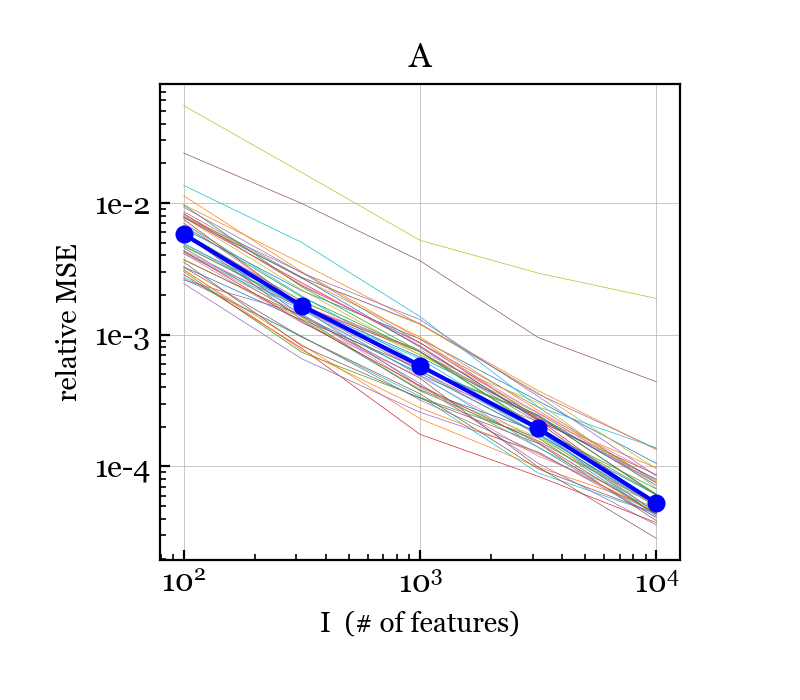}
  \includegraphics[trim=1.1cm 0 1.5cm 0, clip, height=0.2\textheight]{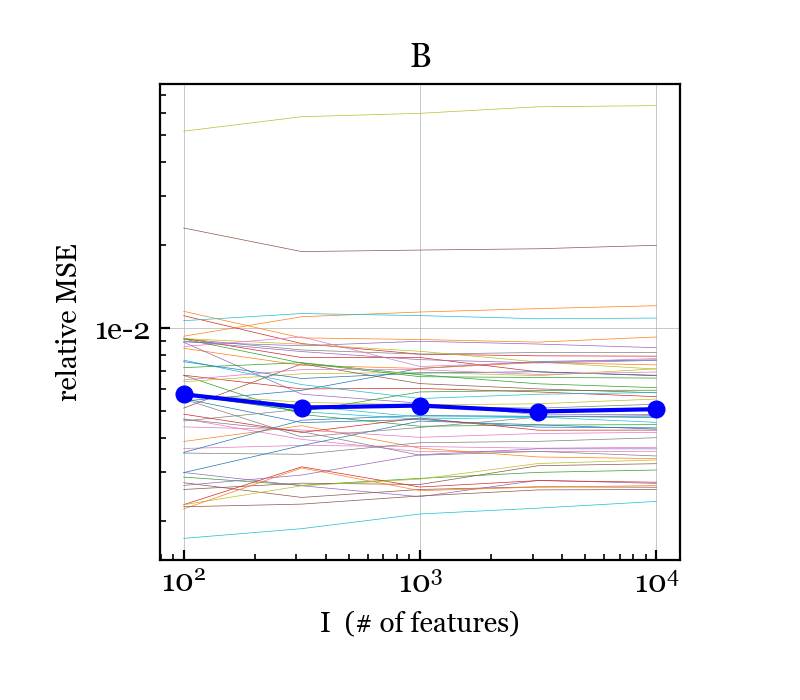}
  \includegraphics[trim=1.1cm 0 1.5cm 0, clip, height=0.2\textheight]{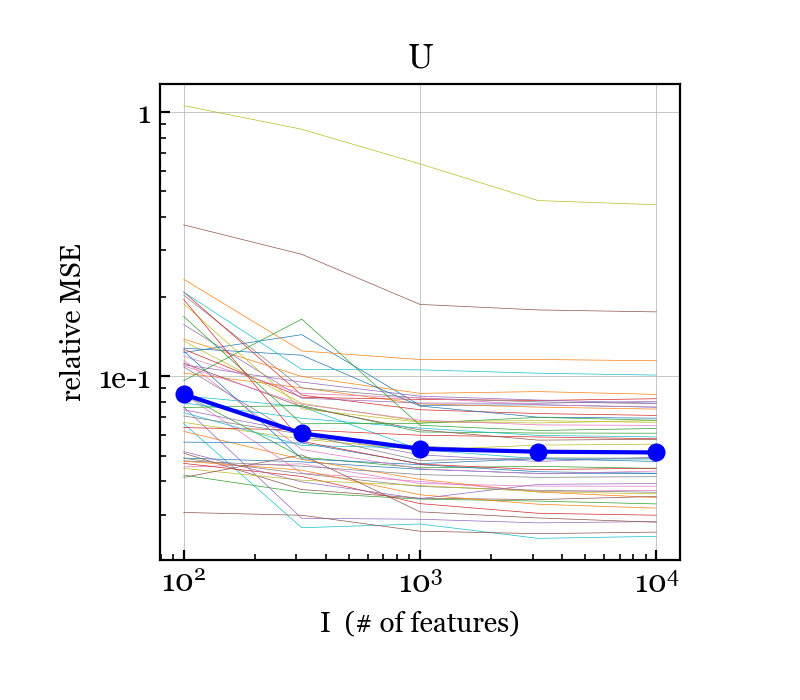}
  \includegraphics[trim=1.1cm 0 1.3cm 0, clip, height=0.2\textheight]{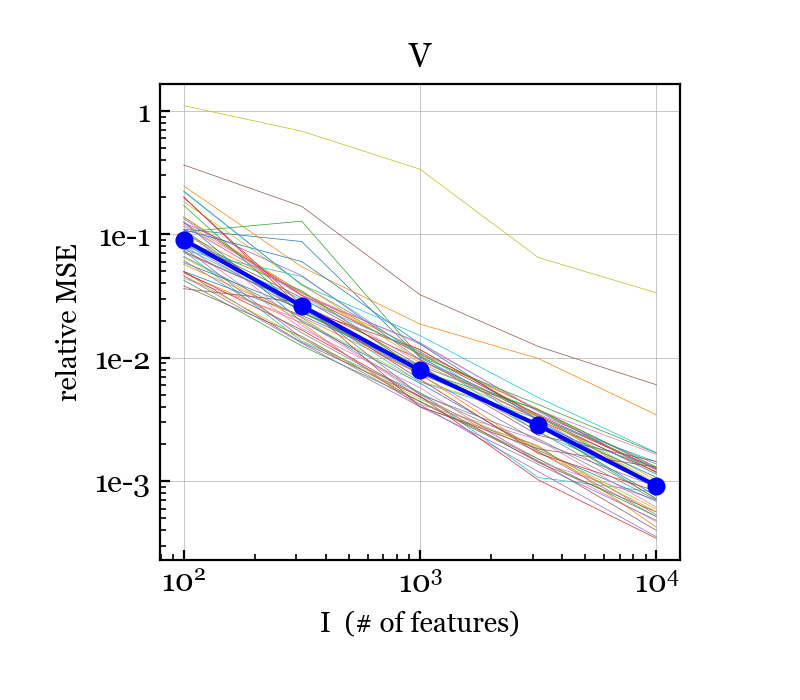}
  \includegraphics[trim=0.6cm 0 1.3cm 0, clip, height=0.2\textheight]{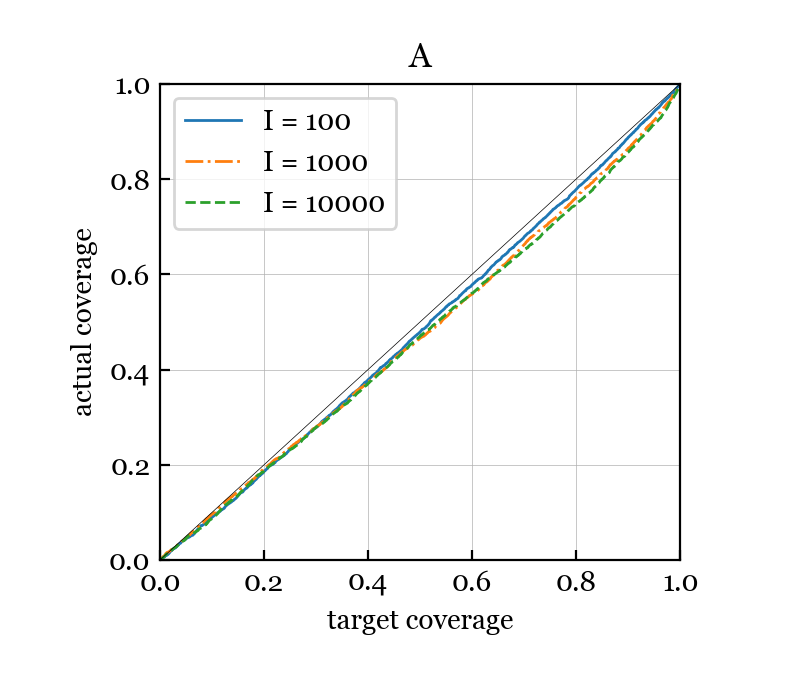}
  \includegraphics[trim=1.3cm 0 1.3cm 0, clip, height=0.2\textheight]{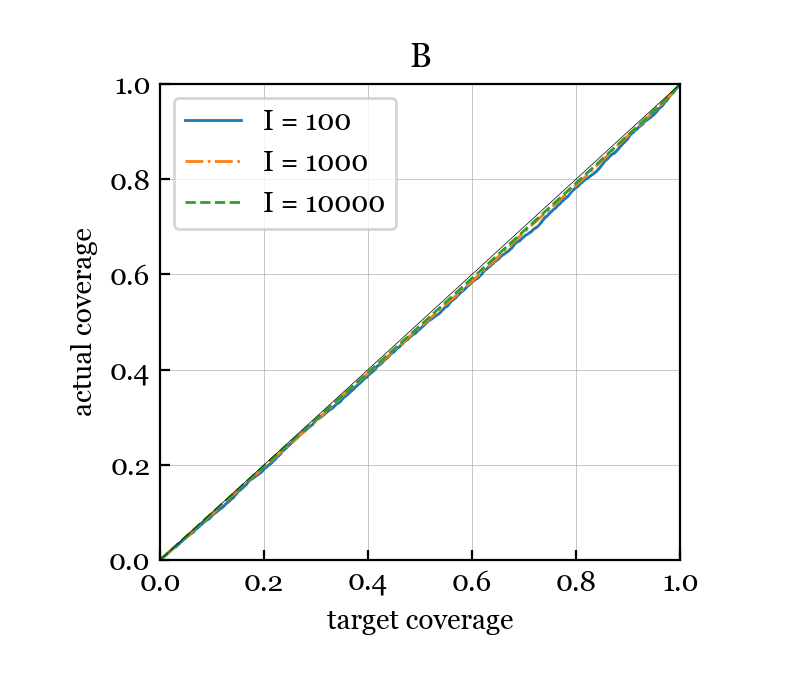}
  \includegraphics[trim=1.3cm 0 1.3cm 0, clip, height=0.2\textheight]{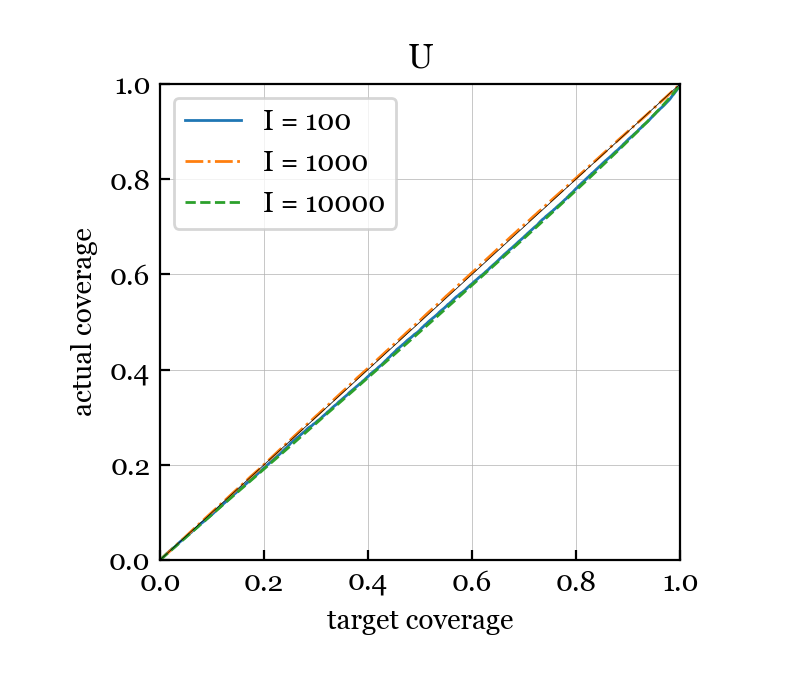}
  \includegraphics[trim=1.3cm 0 1.3cm 0, clip, height=0.2\textheight]{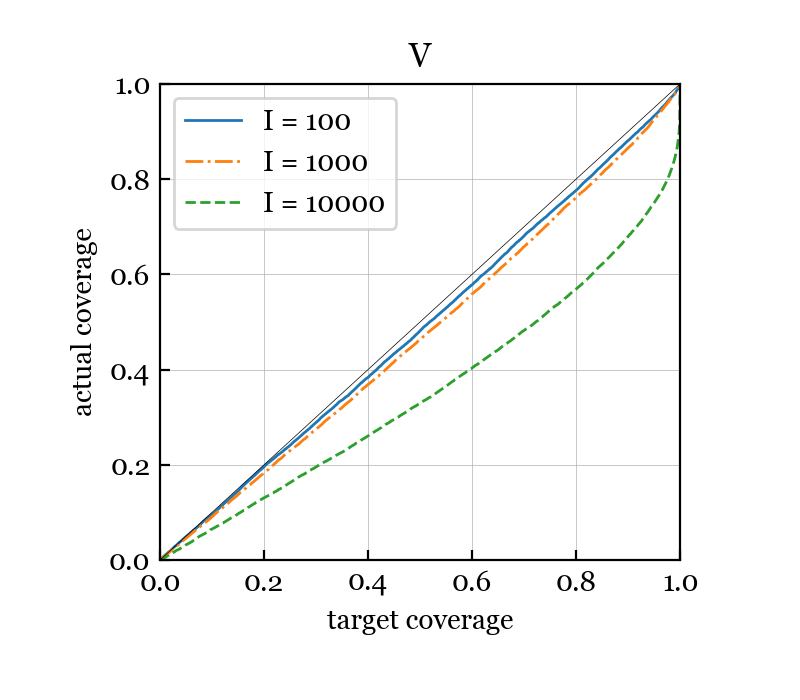}
  \caption{Effect of varying latent dimension: $K = 1$, $L = 1$, $M = 5$, \tt{NB/Normal/Normal}.}
  \label{figure:K=1-L=1-M=5}
\end{figure}

\subsubsection*{Varying the distribution of covariates and parameters}
As usual in regression, since the model is defined conditionally on the covariates, we would expect the results to be robust to the distribution of the covariates.
To verify that this is indeed the case, we run the simulations with the following simulation schemes as defined above:
(a) \tt{NB/Bernoulli/Normal} (Bernoulli-distributed covariates) and
(b) \tt{NB/Gamma/Normal} (Gamma-distributed covariates),
with $J = 100$, $K = 4$, $L = 2$, and $M = 3$;
see Figures~\ref{figure:Bernoulli-covariates} and \ref{figure:Gamma-covariates}.
Similarly, although we have a prior on the parameters, we would not expect the results to be highly sensitive to the distribution of the true parameters.
To check this, we also run the simulations with the \tt{NB/Normal/Gamma} scheme (Gamma-distributed true parameters)
with $J = 100$, $K = 4$, $L = 2$, and $M = 3$; see Figure~\ref{figure:Gamma-parameters}.
The results are nearly identical to the case of normally distributed covariates and true parameters.

\begin{figure}
  \centering
  \includegraphics[trim=0.6cm 0 1.5cm 0, clip, height=0.2\textheight]{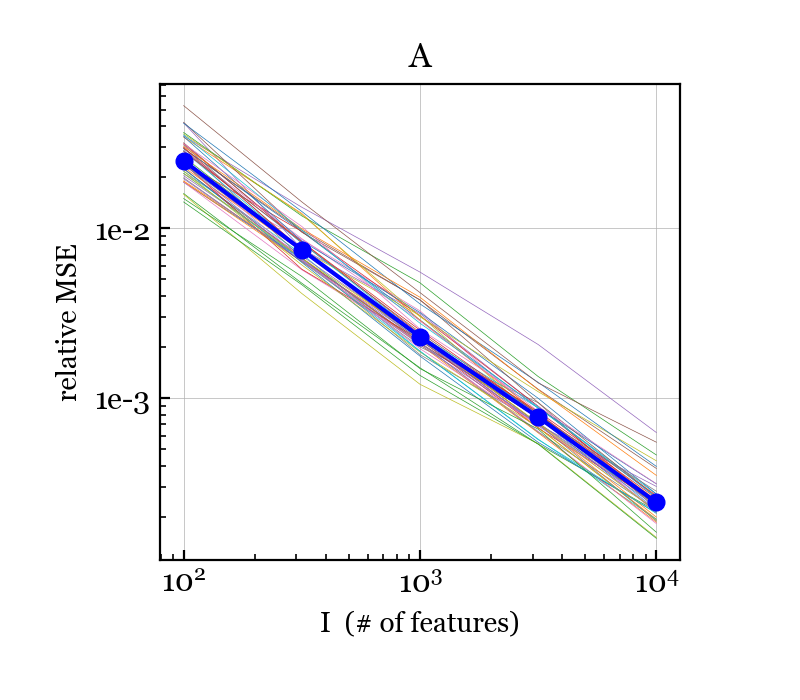}
  \includegraphics[trim=1.1cm 0 1.5cm 0, clip, height=0.2\textheight]{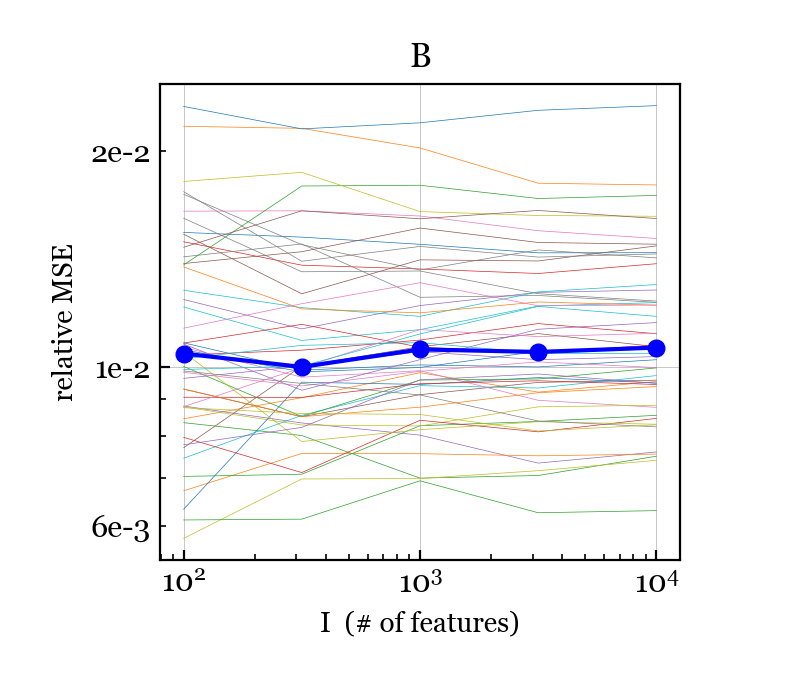}
  \includegraphics[trim=1.1cm 0 1.5cm 0, clip, height=0.2\textheight]{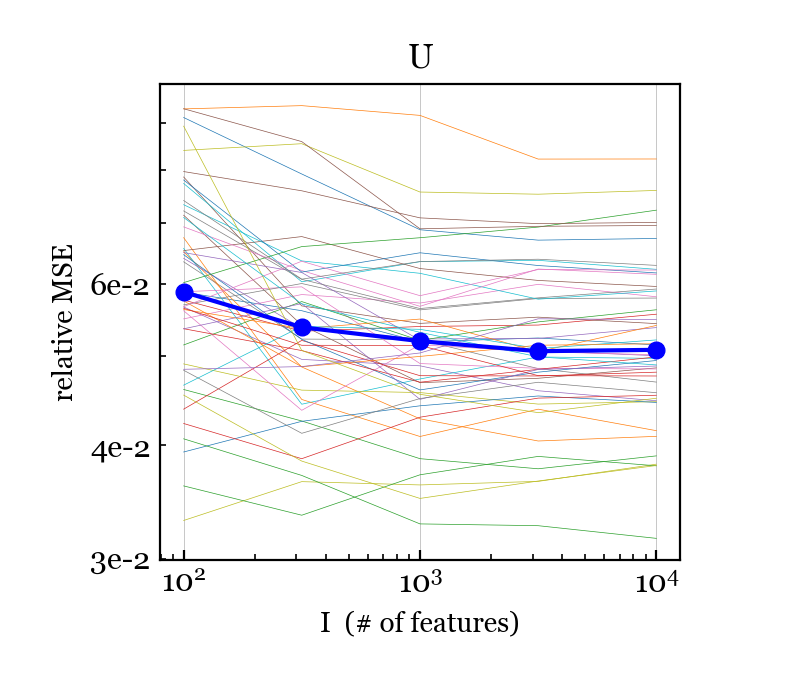}
  \includegraphics[trim=1.1cm 0 1.3cm 0, clip, height=0.2\textheight]{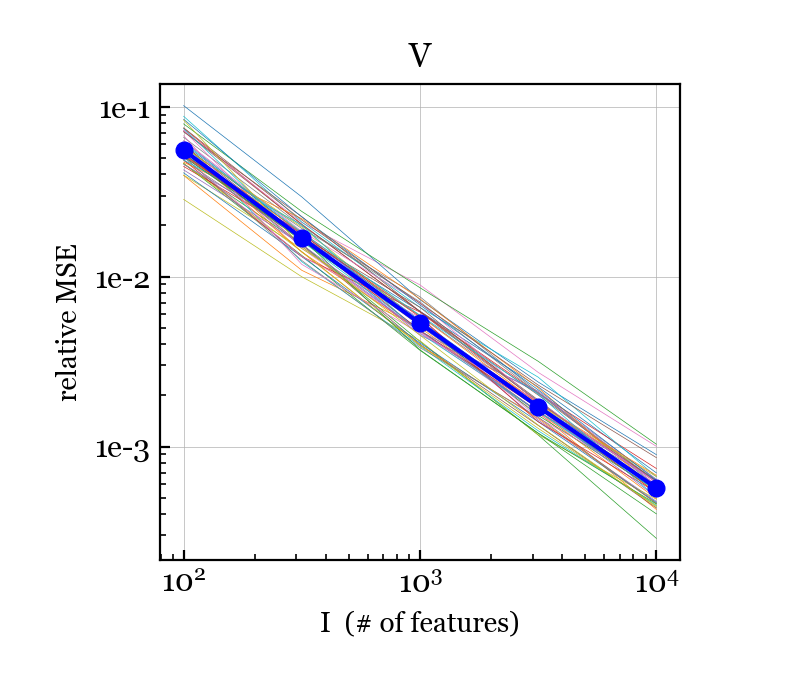}\\
  \includegraphics[trim=0.6cm 0 1.3cm 0, clip, height=0.2\textheight]{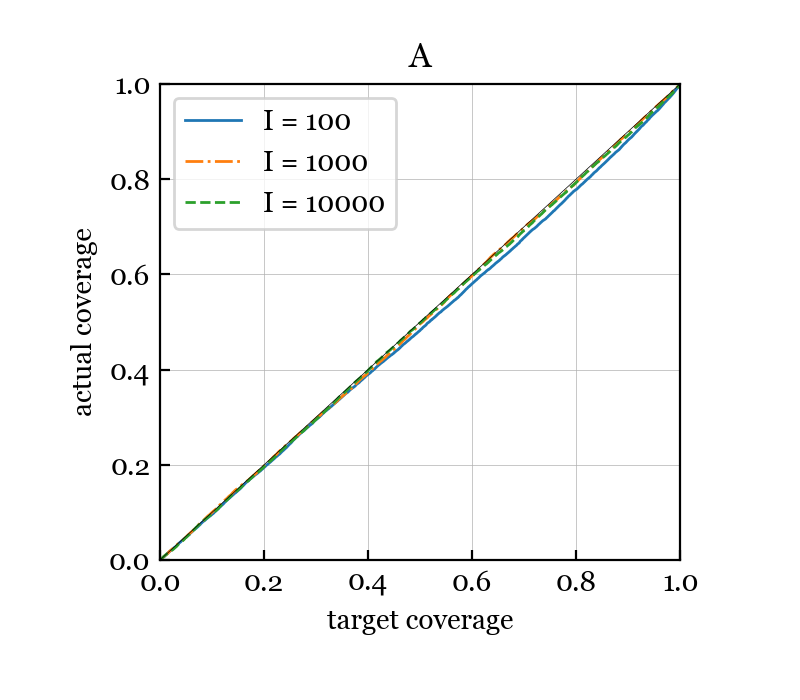}
  \includegraphics[trim=1.3cm 0 1.3cm 0, clip, height=0.2\textheight]{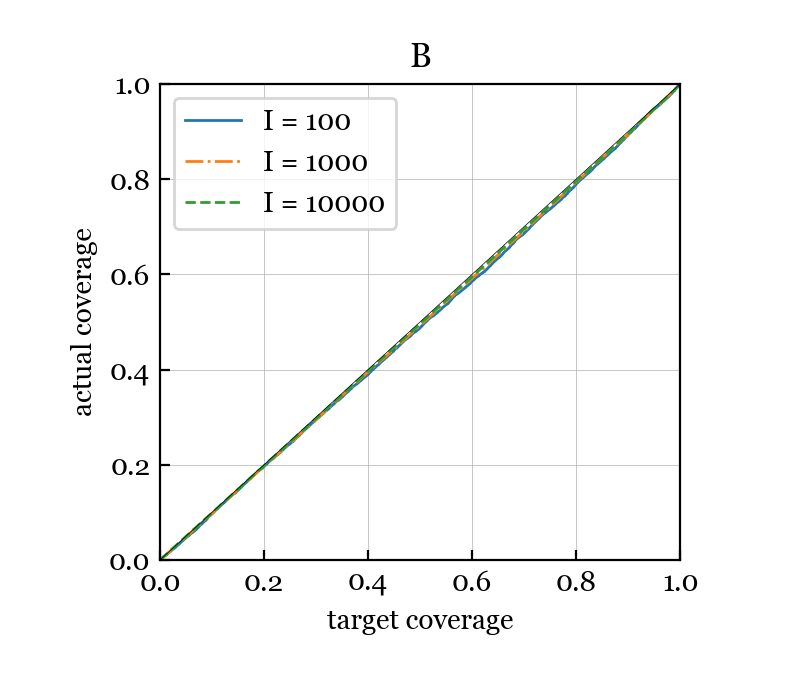}
  \includegraphics[trim=1.3cm 0 1.3cm 0, clip, height=0.2\textheight]{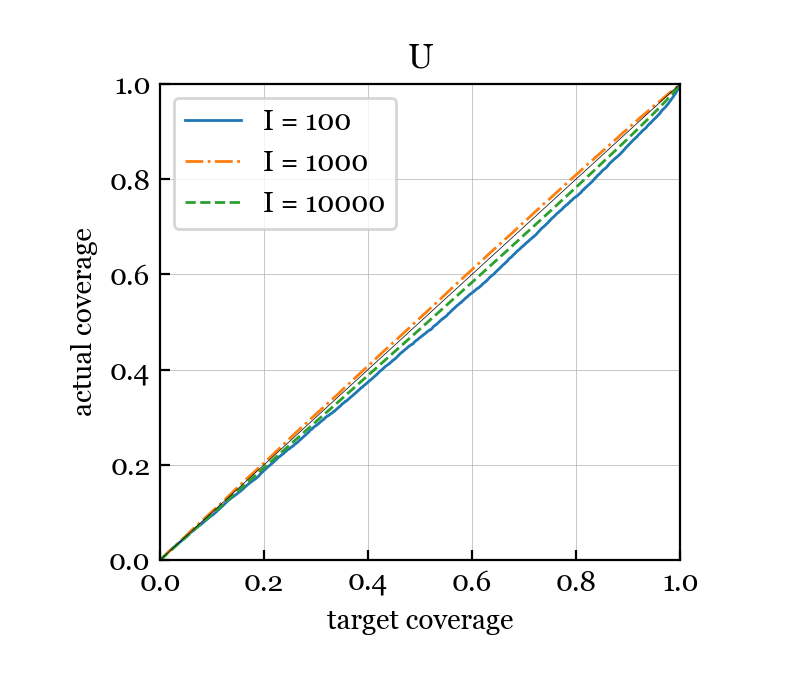}
  \includegraphics[trim=1.3cm 0 1.3cm 0, clip, height=0.2\textheight]{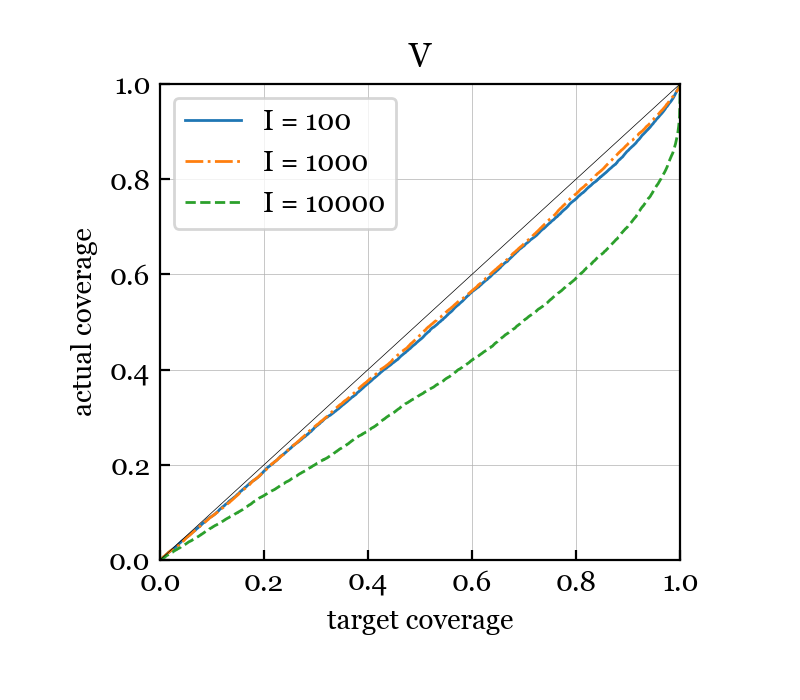}
  \caption{Varying the covariate distribution: $K = 4$, $L = 2$, $M = 3$, \tt{NB/Bernoulli/Normal}.}
  \label{figure:Bernoulli-covariates}
\end{figure}

\begin{figure}
  \centering
  \includegraphics[trim=0.6cm 0 1.5cm 0, clip, height=0.2\textheight]{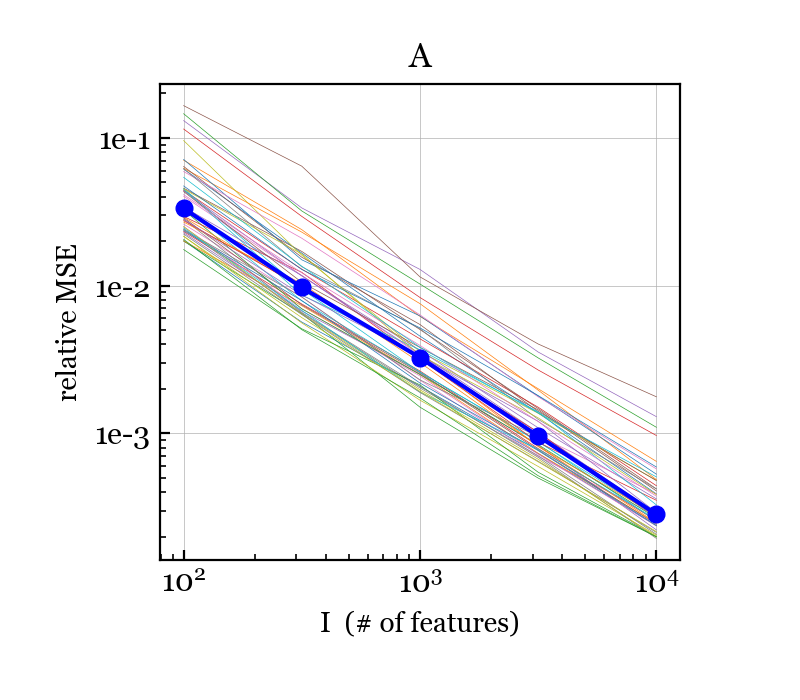}
  \includegraphics[trim=1.1cm 0 1.5cm 0, clip, height=0.2\textheight]{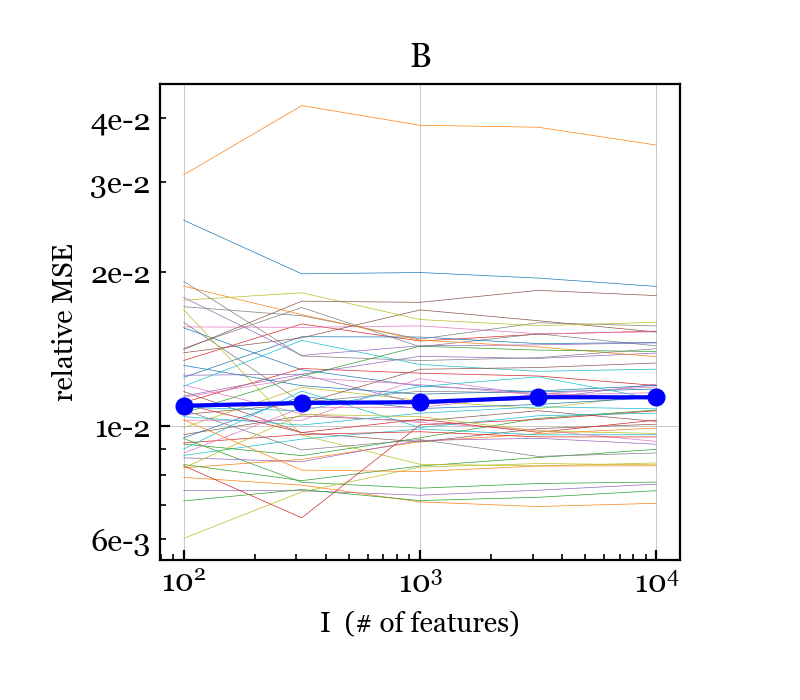}
  \includegraphics[trim=1.1cm 0 1.5cm 0, clip, height=0.2\textheight]{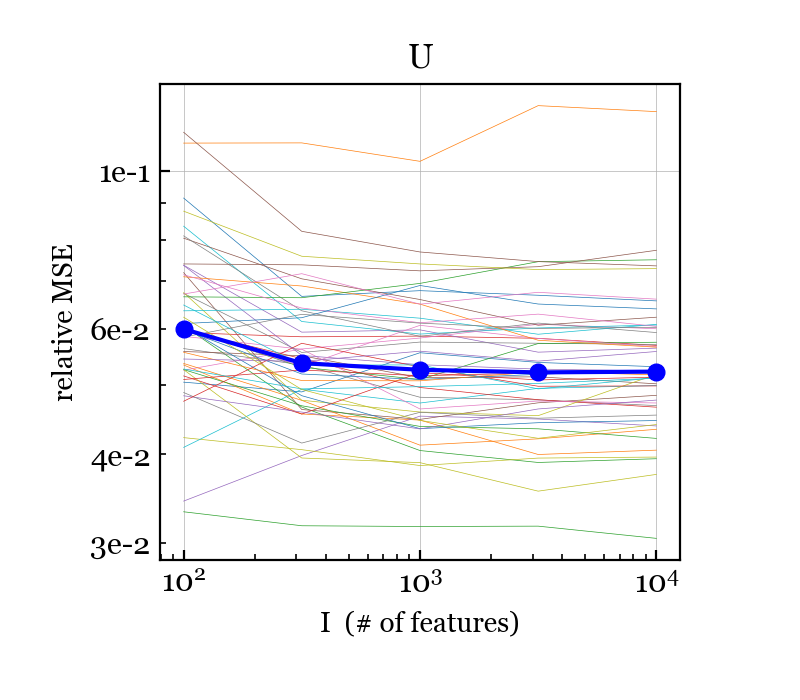}
  \includegraphics[trim=1.1cm 0 1.3cm 0, clip, height=0.2\textheight]{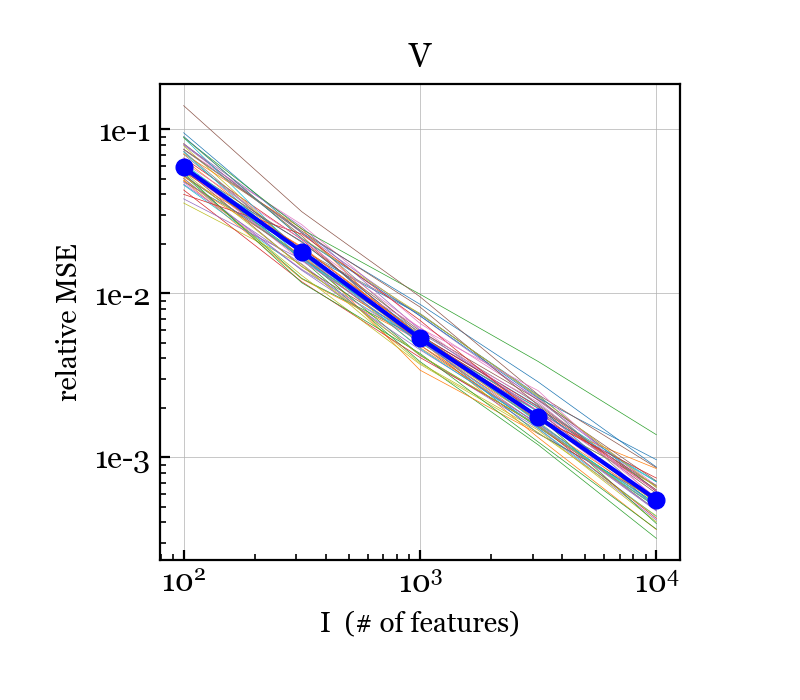}\\
  \includegraphics[trim=0.6cm 0 1.3cm 0, clip, height=0.2\textheight]{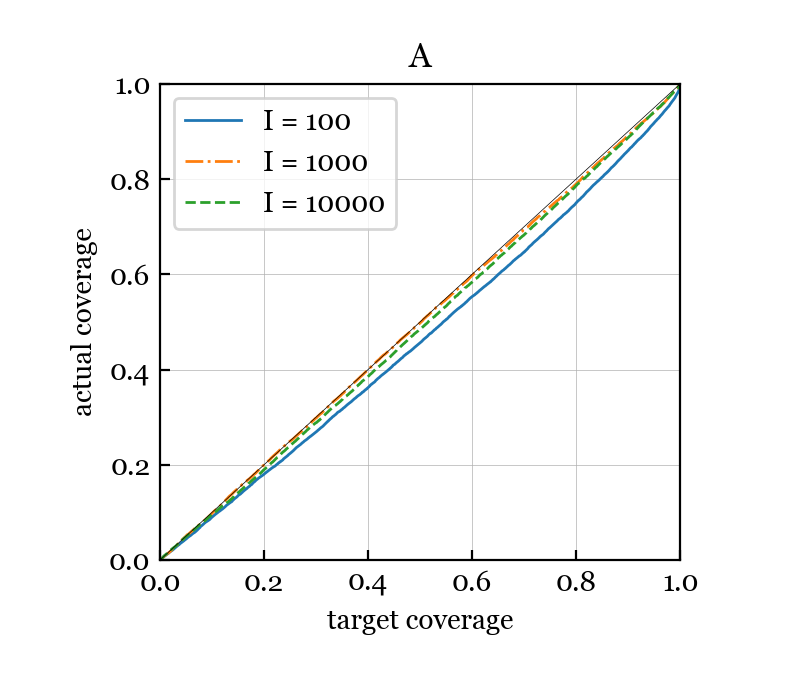}
  \includegraphics[trim=1.3cm 0 1.3cm 0, clip, height=0.2\textheight]{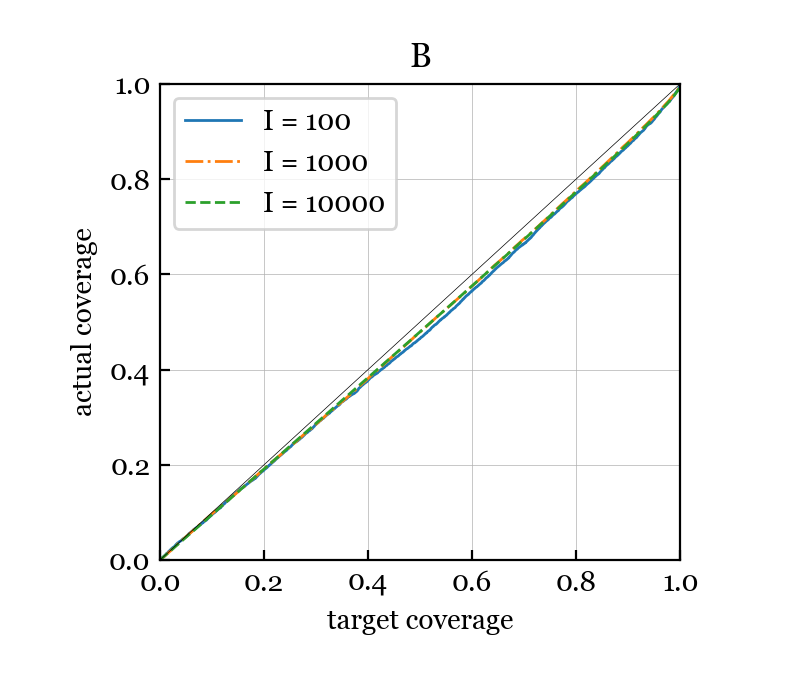}
  \includegraphics[trim=1.3cm 0 1.3cm 0, clip, height=0.2\textheight]{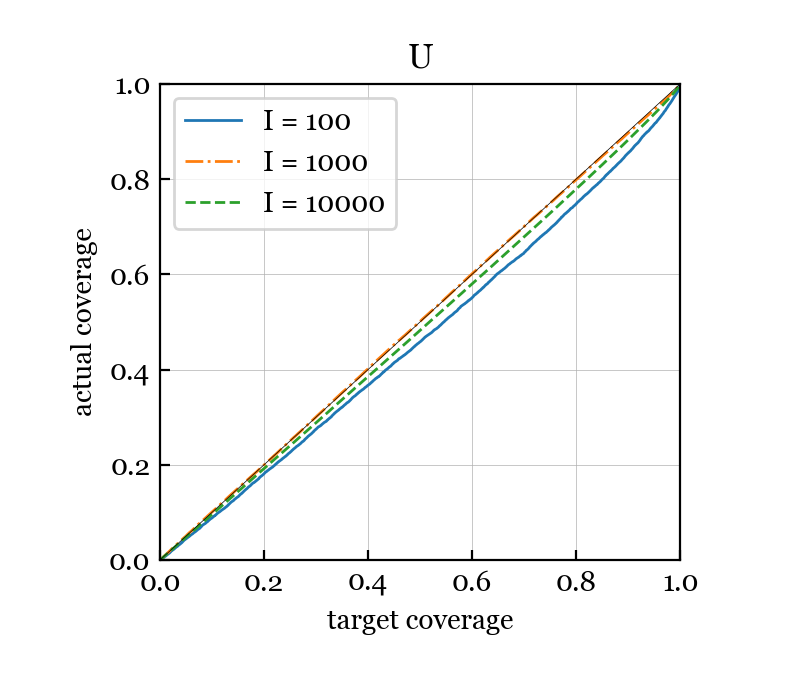}
  \includegraphics[trim=1.3cm 0 1.3cm 0, clip, height=0.2\textheight]{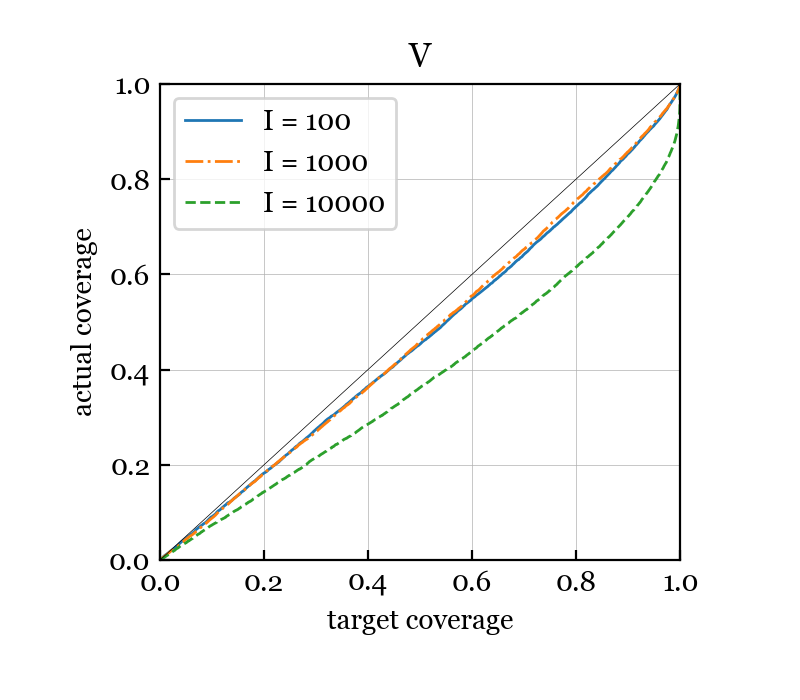}
  \caption{Varying the covariate distribution: $K = 4$, $L = 2$, $M = 3$, \tt{NB/Gamma/Normal}.}
  \label{figure:Gamma-covariates}
\end{figure}

\begin{figure}
  \centering
  \includegraphics[trim=0.6cm 0 1.5cm 0, clip, height=0.2\textheight]{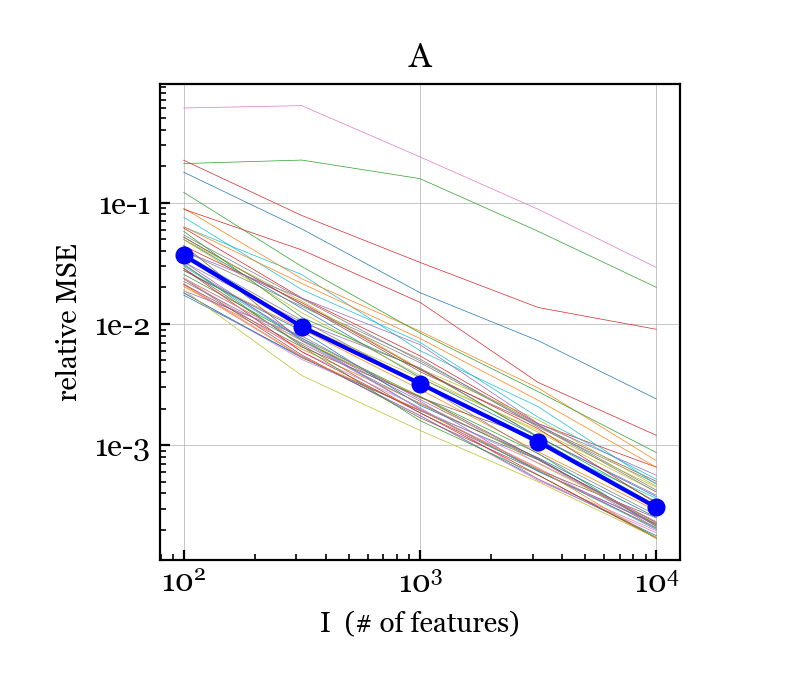}
  \includegraphics[trim=1.1cm 0 1.5cm 0, clip, height=0.2\textheight]{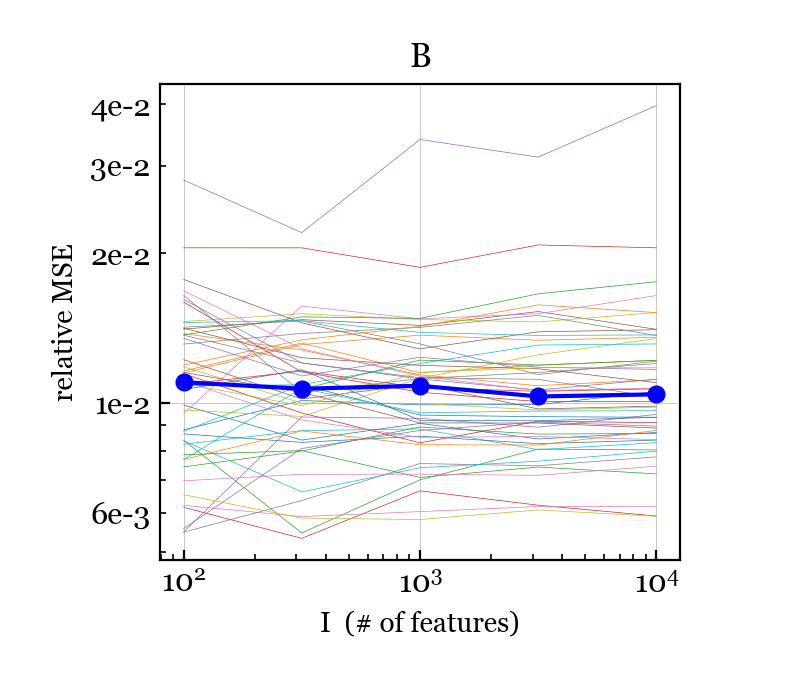}
  \includegraphics[trim=1.1cm 0 1.5cm 0, clip, height=0.2\textheight]{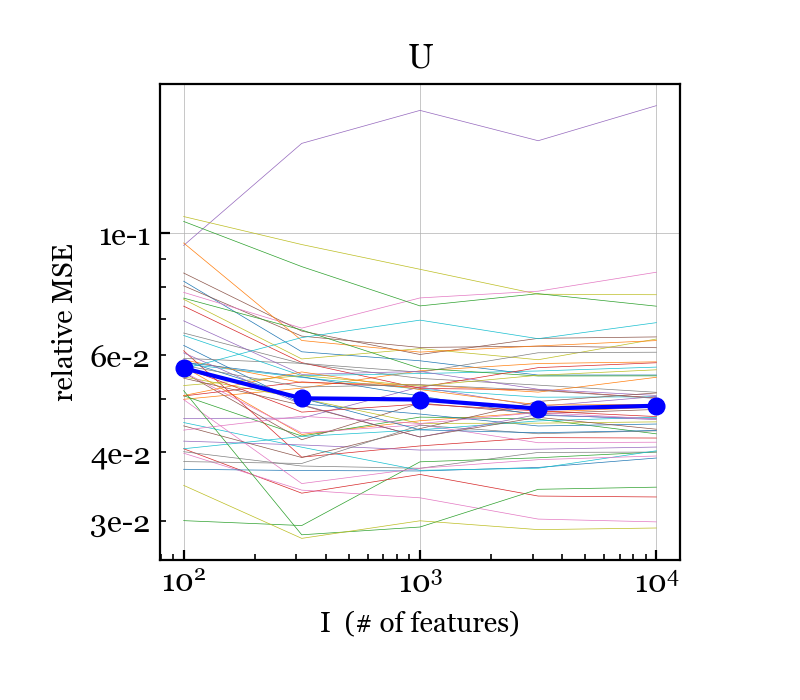}
  \includegraphics[trim=1.1cm 0 1.3cm 0, clip, height=0.2\textheight]{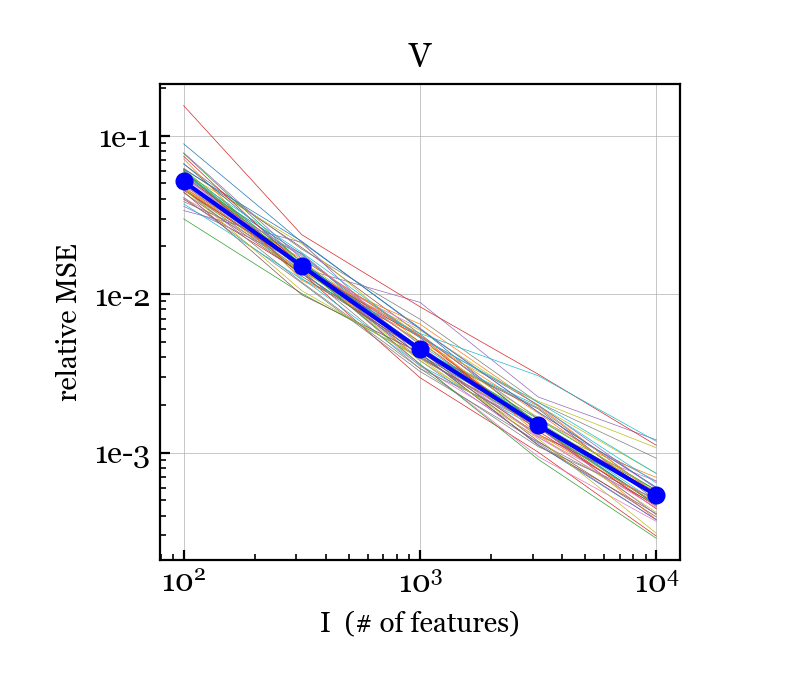}\\
  \includegraphics[trim=0.6cm 0 1.3cm 0, clip, height=0.2\textheight]{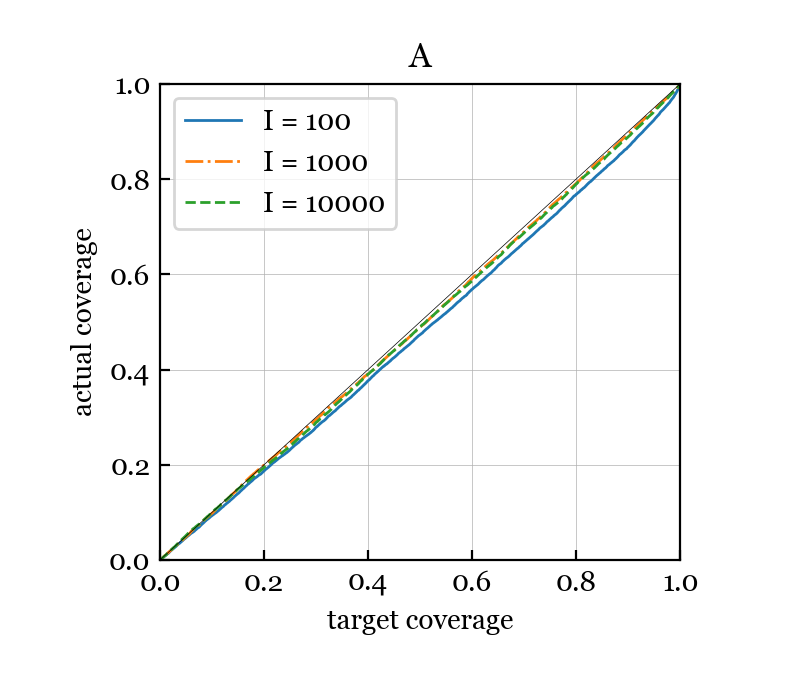}
  \includegraphics[trim=1.3cm 0 1.3cm 0, clip, height=0.2\textheight]{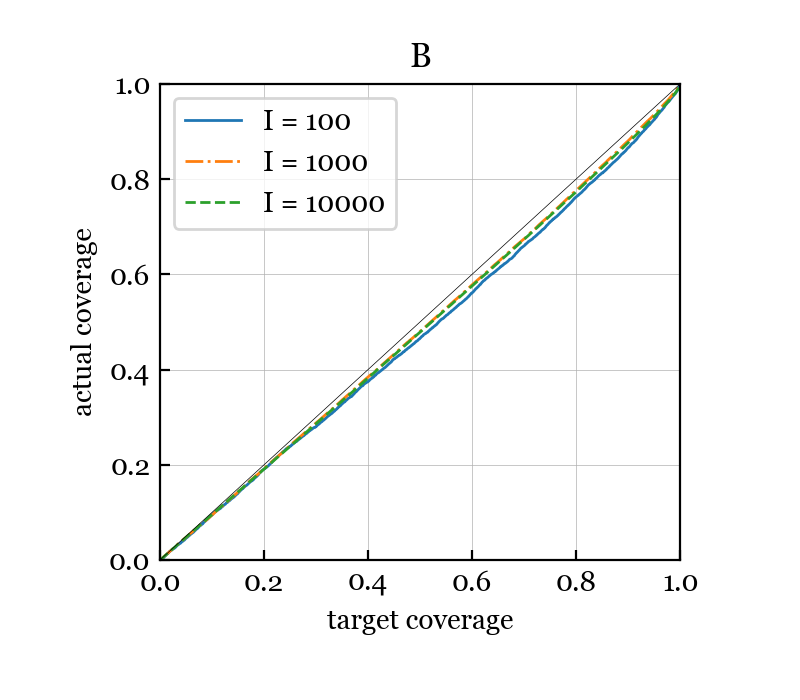}
  \includegraphics[trim=1.3cm 0 1.3cm 0, clip, height=0.2\textheight]{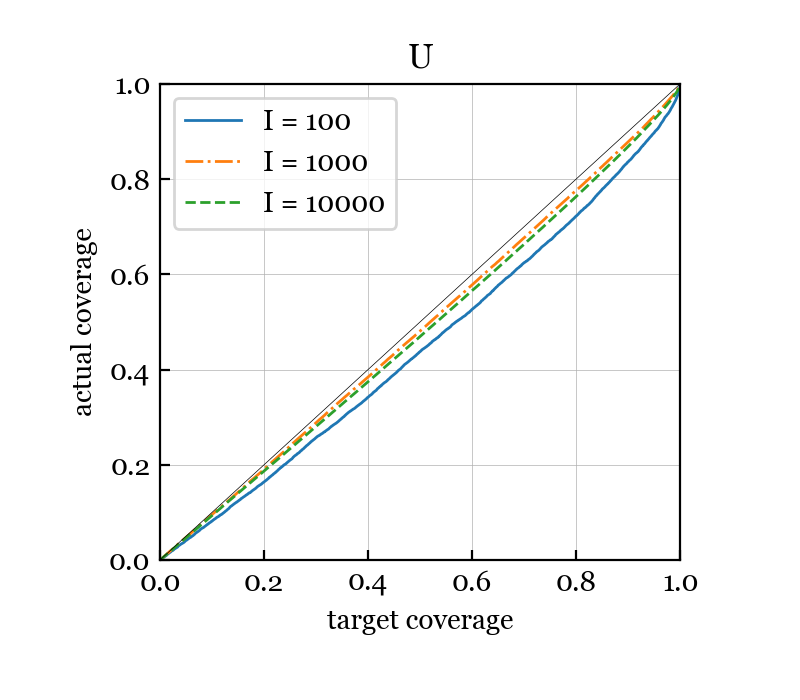}
  \includegraphics[trim=1.3cm 0 1.3cm 0, clip, height=0.2\textheight]{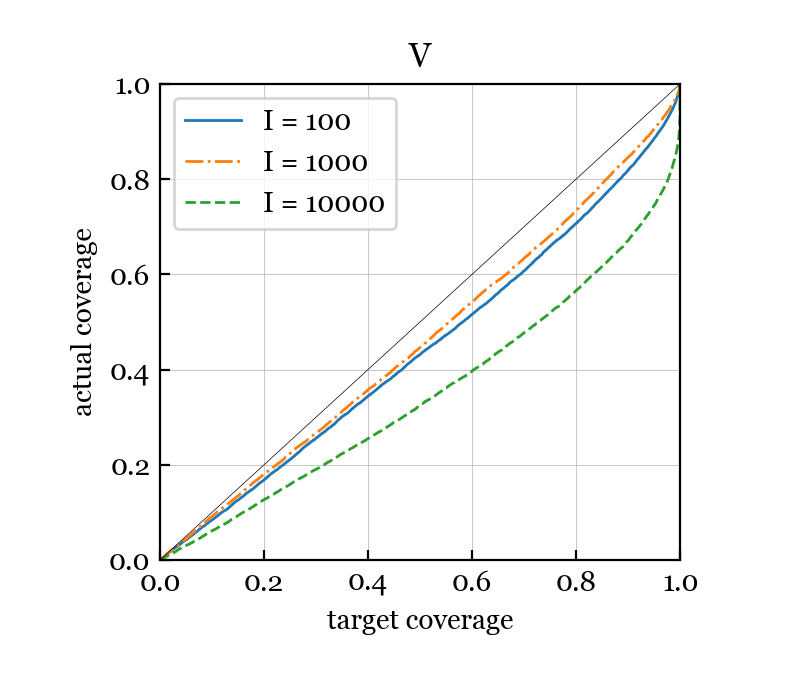}
  \caption{Varying the parameter distribution: $K = 4$, $L = 2$, $M = 3$, \tt{NB/Normal/Gamma}.}
  \label{figure:Gamma-parameters}
\end{figure}

\section{Additional application results and details}


\begin{figure}
  \centering
  \includegraphics[trim=1.8cm 1cm 1.8cm 1cm, clip, height=0.45\textheight]{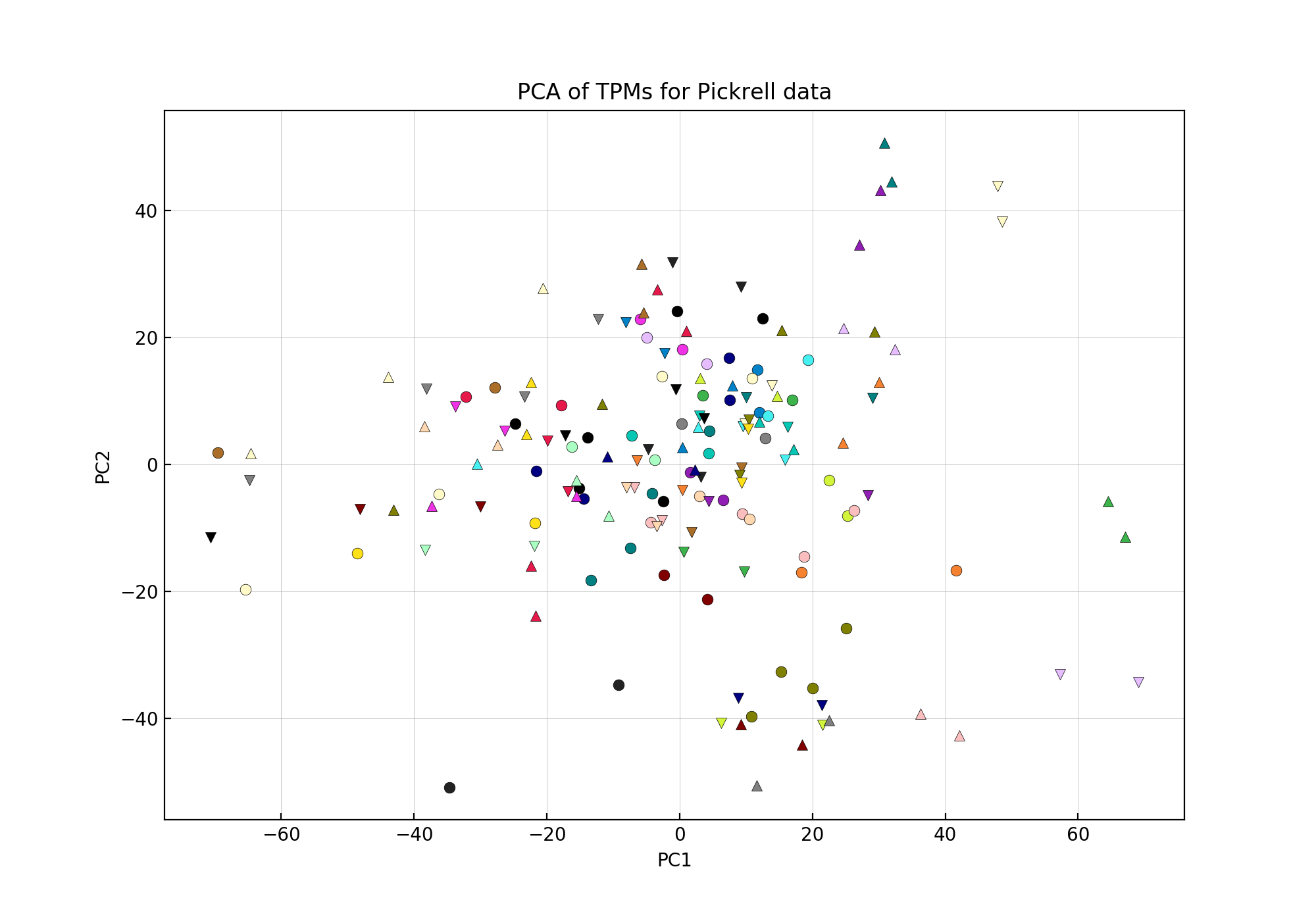}\\
  \includegraphics[trim=1.8cm 1cm 1.8cm 1cm, clip, height=0.45\textheight]{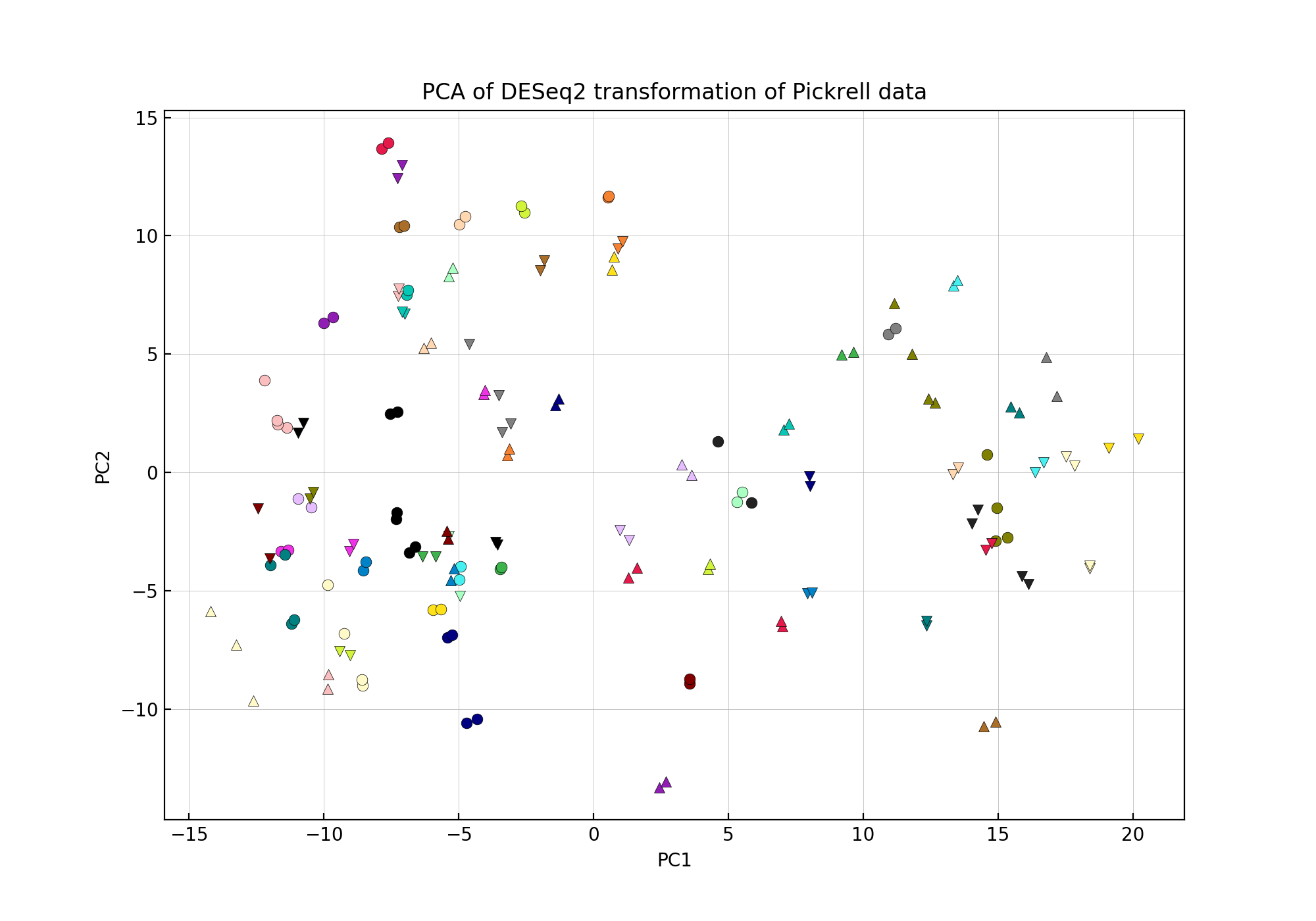}
  \caption{PCA of the Pickrell data using: (top) log-transformed TPMs and (bottom) the VST method in the DESeq2 software package, including the GC bias adjustment from CQN. Compare with the GBM approach in Figure~\ref{figure:pickrell-latent-factors}.}
  \label{figure:pickrell-pca}
\end{figure}

\begin{figure}
  \centering
  \includegraphics[trim=0.5cm 0cm 0cm 0cm, clip, height=0.5\textheight]{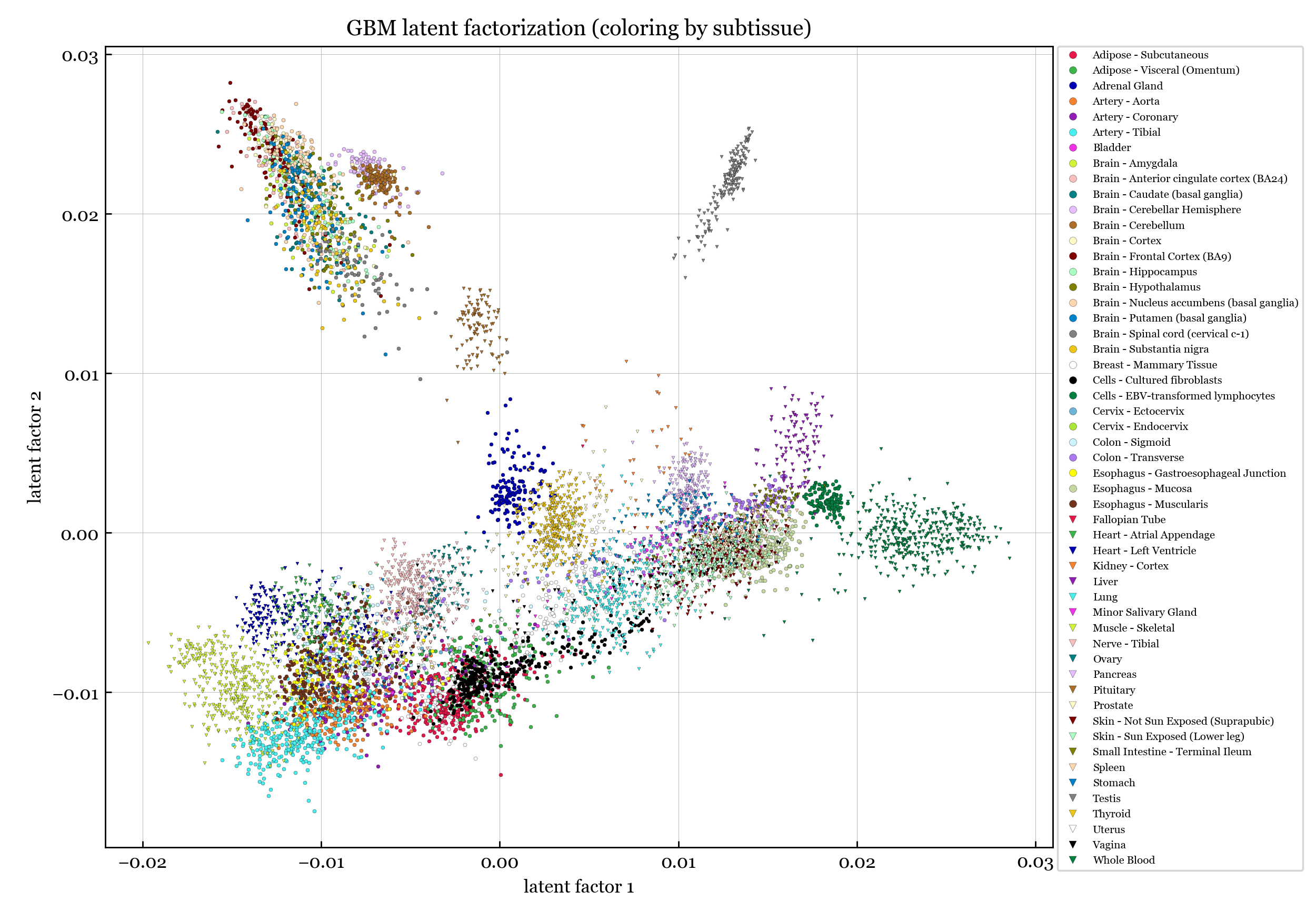}
  \caption{Visualization of GTEx data using NB-GBM latent factors, adjusting for covariates. Each dot represents one of the 8{,}551 samples, and the color indicates the subtissue type.}
  \label{figure:gtex-smtsd}
\end{figure}

\begin{figure}
  \centering
  \includegraphics[trim=0.5cm 0cm 0.5cm 0cm, clip, height=0.5\textheight]{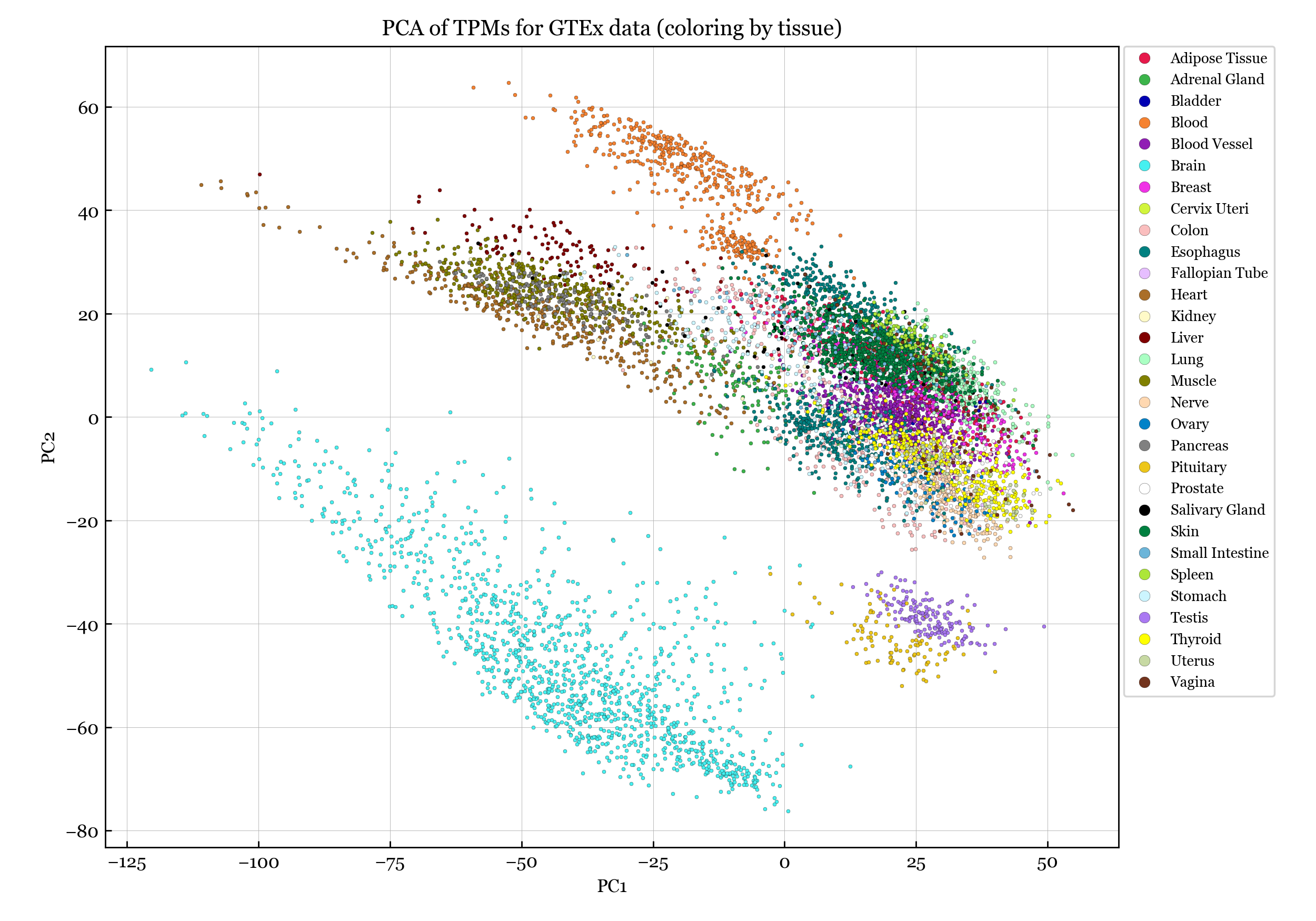}
  \caption{PCA of the GTEx data using log-transformed TPMs, specifically, $\log(\texttt{TPM}_{i j}+1)$. Each dot represents one of the 8{,}551 samples, and the color indicates the tissue type.}
  \label{figure:gtex-pca-tpms}
\end{figure}

\subsection{Gene expression application -- Details on Section~\ref{section:gene-expression}}

Here we provide additional results and details on the gene expression applications in Section~\ref{section:gene-expression}.
For the Pickrell data,
Figure~\ref{figure:pickrell-pca} shows the PCA plots based on (a) log-transformed TPMs, specifically, $\log(\texttt{TPM}_{i j}+1)$,
and (b) the variance stabilizing transform (VST) method in the DESeq2 software package, using the GC adjustment from CQN.
For the GTEx data, Figure~\ref{figure:gtex-smtsd} shows the latent factors ($v_{j 2}$ versus $v_{j 1}$) as in 
Figure~\ref{figure:gtex-smts}, but coloring the points according to tissue subtype instead tissue type.
We see that the samples tend to fall into clusters according to tissue subtype, further resolving the clustering in Figure~\ref{figure:gtex-smts}.
Figure~\ref{figure:gtex-pca-tpms} shows a PCA plot of the log TPMs of the GTEx data,
which is not nearly as clear as Figure~\ref{figure:gtex-smts} in terms of tissue type clustering.
For the analysis of aging-related genes in the Heart-LV subtissue using the GTEx data (Section~\ref{section:gtex}),
Tables~\ref{table:go-bp-terms} and \ref{table:go-cc-terms} show the top 20 enriched GO terms in the Biological Process and Cellular Component categories.

\begingroup
\renewcommand*{\arraystretch}{1.2}
\begin{table}  
\scriptsize
\centering
\caption{Top GO terms (Biological Process) for age-related expression in Heart-LV.}%
\begin{tabular}{|c|l|c|c|c|}%
\hline
GO term ID  &  Description  &  Count  &  p-value  &  Benjamini  \\
\hline
GO:0098609  &  cell-cell adhesion  &  48  &  $5.1\text{e-}12$  &  $1.5\text{e-}08$  \\
GO:0006418  &  tRNA aminoacylation for protein translation  &  16  &  $1.4\text{e-}09$  &  $2.0\text{e-}06$  \\
GO:0006099  &  tricarboxylic acid cycle  &  12  &  $3.7\text{e-}07$  &  $3.6\text{e-}04$  \\
GO:1904871  &  positive regulation of protein localization to Cajal body  &  7  &  $1.1\text{e-}06$  &  $6.1\text{e-}04$  \\
GO:1904851  &  positive regulation of establishment of protein localization to telomere  &  7  &  $1.1\text{e-}06$  &  $6.1\text{e-}04$  \\
GO:0006607  &  NLS-bearing protein import into nucleus  &  10  &  $1.3\text{e-}06$  &  $6.2\text{e-}04$  \\
GO:0006914  &  autophagy  &  22  &  $1.8\text{e-}05$  &  $7.6\text{e-}03$  \\
GO:0016192  &  vesicle-mediated transport  &  24  &  $2.6\text{e-}05$  &  $8.3\text{e-}03$  \\
GO:0006511  &  ubiquitin-dependent protein catabolic process  &  24  &  $2.6\text{e-}05$  &  $8.3\text{e-}03$  \\
GO:0006888  &  ER to Golgi vesicle-mediated transport  &  24  &  $3.5\text{e-}05$  &  $1.0\text{e-}02$  \\
GO:0006886  &  intracellular protein transport  &  31  &  $4.3\text{e-}05$  &  $1.1\text{e-}02$  \\
GO:1904874  &  positive regulation of telomerase RNA localization to Cajal body  &  7  &  $8.3\text{e-}05$  &  $2.0\text{e-}02$  \\
GO:0006090  &  pyruvate metabolic process  &  8  &  $9.6\text{e-}05$  &  $2.1\text{e-}02$  \\
GO:0070125  &  mitochondrial translational elongation  &  16  &  $1.1\text{e-}04$  &  $2.2\text{e-}02$  \\
GO:0006446  &  regulation of translational initiation  &  10  &  $1.5\text{e-}04$  &  $2.8\text{e-}02$  \\
GO:0043039  &  tRNA aminoacylation  &  5  &  $1.6\text{e-}04$  &  $3.0\text{e-}02$  \\
GO:0018107  &  peptidyl-threonine phosphorylation  &  10  &  $2.9\text{e-}04$  &  $4.9\text{e-}02$  \\
GO:0000462  &  maturation of SSU-rRNA from tricistronic rRNA transcript  &  9  &  $3.3\text{e-}04$  &  $5.4\text{e-}02$  \\
GO:0006610  &  ribosomal protein import into nucleus  &  5  &  $3.7\text{e-}04$  &  $5.6\text{e-}02$  \\
GO:0016236  &  macroautophagy  &  14  &  $4.0\text{e-}04$  &  $5.9\text{e-}02$  \\
\hline
\end{tabular}
\label{table:go-bp-terms}
\end{table}
\endgroup

\begingroup
\renewcommand*{\arraystretch}{1.2}
\begin{table}  
\scriptsize
\centering
\caption{Top GO terms (Cellular Component) for age-related expression in Heart-LV.}%
\begin{tabular}{|c|l|c|c|c|}%
\hline
GO term ID  &  Description  &  Count  &  p-value  &  Benjamini  \\
\hline
GO:0016020  &  membrane  &  220  &  $9.8\text{e-}21$  &  $3.7\text{e-}18$  \\
GO:0005739  &  mitochondrion  &  157  &  $1.2\text{e-}20$  &  $3.7\text{e-}18$  \\
GO:0070062  &  extracellular exosome  &  242  &  $4.3\text{e-}16$  &  $9.1\text{e-}14$  \\
GO:0005829  &  cytosol  &  282  &  $1.0\text{e-}15$  &  $1.6\text{e-}13$  \\
GO:0005913  &  cell-cell adherens junction  &  57  &  $9.5\text{e-}15$  &  $1.2\text{e-}12$  \\
GO:0005737  &  cytoplasm  &  380  &  $2.3\text{e-}13$  &  $2.4\text{e-}11$  \\
GO:0043209  &  myelin sheath  &  36  &  $4.7\text{e-}13$  &  $4.2\text{e-}11$  \\
GO:0005759  &  mitochondrial matrix  &  47  &  $5.7\text{e-}09$  &  $4.5\text{e-}07$  \\
GO:0005654  &  nucleoplasm  &  217  &  $1.1\text{e-}08$  &  $7.8\text{e-}07$  \\
GO:0000502  &  proteasome complex  &  18  &  $1.4\text{e-}08$  &  $8.0\text{e-}07$  \\
GO:0005743  &  mitochondrial inner membrane  &  56  &  $1.4\text{e-}08$  &  $8.0\text{e-}07$  \\
GO:0042645  &  mitochondrial nucleoid  &  14  &  $3.5\text{e-}07$  &  $1.8\text{e-}05$  \\
GO:0014704  &  intercalated disc  &  14  &  $8.5\text{e-}07$  &  $4.2\text{e-}05$  \\
GO:0005832  &  chaperonin-containing T-complex  &  7  &  $2.5\text{e-}06$  &  $1.1\text{e-}04$  \\
GO:0005643  &  nuclear pore  &  16  &  $5.2\text{e-}06$  &  $2.2\text{e-}04$  \\
GO:0043231  &  intracellular membrane-bounded organelle  &  55  &  $2.7\text{e-}05$  &  $1.1\text{e-}03$  \\
GO:0002199  &  zona pellucida receptor complex  &  6  &  $2.9\text{e-}05$  &  $1.1\text{e-}03$  \\
GO:0043034  &  costamere  &  8  &  $5.4\text{e-}05$  &  $1.9\text{e-}03$  \\
GO:0043234  &  protein complex  &  42  &  $7.8\text{e-}05$  &  $2.6\text{e-}03$  \\
GO:0045254  &  pyruvate dehydrogenase complex  &  5  &  $1.5\text{e-}04$  &  $4.6\text{e-}03$  \\
\hline
\end{tabular}
\label{table:go-cc-terms}
\end{table}
\endgroup

\subsection{Cancer genomics application -- Details on Section~\ref{section:cancer}}
\label{section:cancer-details}

\textbf{Data acquisition and preprocessing.}
We downloaded BAM files for the 326 CCLE whole-exome samples from the Genomic Data Commons (GDC) Legacy Archive of the National Cancer Institute
(\url{https://portal.gdc.cancer.gov/legacy-archive/}), using the GDC Data Transfer Tool v1.6.0.
Using the PreprocessIntervals tool from the Genome Analysis Toolkit (GATK) v4.1.8.1 (\url{https://github.com/broadinstitute/gatk/}) running on Java v1.8,
we preprocessed the CCLE exome target region interval list to pad the intervals by 250 base pairs on either side (options: \texttt{padding 250}, \texttt{bin-length 0}, \texttt{interval-merging-rule OVERLAPPING\_ONLY}).
Then, to convert each BAM file to a vector of counts, we counted the number of reads in each target region using the CollectReadCounts tool from GATK
(options: \texttt{interval-merging-rule OVERLAPPING\_ONLY}).
For analysis, we included all target regions in chromosomes 1--22 that have nonzero median across the 326 samples.

\textbf{De-segmenting the training samples.}
We randomly selected half of the samples to use as a training set, and the rest were used as a test set to evaluate performance.
To be able to treat the training samples as ``pseudo-normal'' (non-cancer) samples, we de-segment them as follows.
We first compute a rough estimate of copy ratio, defined as $\rho_{i j} := \tilde{Y}_{i j} / (\alpha_i \beta_j)$ 
where $\alpha_i = \frac{1}{J}\sum_{j=1}^J \tilde{Y}_{i j}$, $\beta_j = \frac{1}{I}\sum_{i=1}^I \tilde{Y}_{i j} / \alpha_i$,
and $\tilde{Y}_{i j} = Y_{i j} + 1/8$; here, $1/8$ is a pseudocount that avoids issues when taking logs.
For each sample $j$, we then run a standard binary segmentation algorithm \citep[Eqn 2]{killick2012optimal} on $\log \rho_{i j}$ to detect changepoints.
For binary segmentation, we use cost function $\mathcal{C}(x_{1:n}) = -(\frac{1}{\sqrt{n}}\sum_{i=1}^n x_i/\sigma_j)^2$
and penalty $\beta = 1000$
where $\sigma_j^2$ is the sample variance of $(\log \rho_{i j} - \log \rho_{i+1, j})/\sqrt{2}$.
Define $o_{i j}$ to be the average of $\log \rho_{i j}$ over the segment containing region $i$.

We then compute the de-segmented counts $Y_{i j}^\textrm{deseg} := \mathrm{round}\big(\alpha_i \beta_j \exp(\log(\rho_{i j}) - o_{i j})\big)$.
The idea is that this adjusts out the departures from copy neutral (that is, from normal diploid) as inferred by the segmentation algorithm,
to create a panel of pseudo-normals.


\textbf{Running the GBM to estimate copy ratios.}
We run the GBM estimation algorithm on the de-segmented training data using $M = 5$ latent factors.
To avoid overfitting of the latent factors on the training data, we fix the sample-specific log-dispersion $T$ after
running the initialization procedure -- that is, we do not update $T$ at each iteration of the algorithm.
We use the defaults for all other algorithm settings.
For covariates, we construct $X$ to include $\log(\texttt{length}_i)$, $\texttt{gc}_i$, and $(\texttt{gc}_i - \overline{\texttt{gc}})^2$,
and we use no sample covariates.

On the test data, we update $T$ at each iteration as usual.
We construct $X$ to include the same covariates as before, 
along with 5 additional covariates equal to the columns of the $U$ matrix that was estimated on the training data.
We use no sample covariates, and we set $M = 0$ on the test data.
We define the GBM copy ratio estimates as the exponentiated residuals $\rho_{i j}^\textrm{GBM} = \tilde{Y}_{i j}/\hat{\mu}_{i j}$ where $\tilde{Y}_{i j} = Y_{i j} + 1/8$; also see Section~\ref{section:residuals}.

\textbf{Running GATK to estimate copy ratios.}
We run the CreateReadCountPanelOfNormals GATK tool on the de-segmented training data, and to enable comparison with the GBM, 
we set the options to use 5 principal components and include all regions 
(\texttt{number-of-eigensamples 5}, \texttt{minimum-interval-median-percentile 0}, \texttt{maximum-zeros-in-interval-percentage 100}).
To estimate copy ratios for the test data, we run the DenoiseReadCounts GATK tool using the pon file from CreateReadCountPanelOfNormals.
On the test data, we use the original counts (not de-segmented).

\textbf{Performance metrics.}
Suppose $x,w\in\R^n$ and $k\in\{0,2,4,\ldots\}$, where $x_1,\ldots,x_n$ represent noisy measurements of a signal of interest, $w_i$ is a weight for point $x_i$, 
and $k$ is a smoothing bandwidth.
We define the local relative standard error as 
$$ \mathrm{LRSE}(x,w,k) = \sqrt{\frac{1}{n}\sum_{i=1}^n (w_i / \bar{w}_i)^2 (x_i - \bar{x}_i)^2} $$
where $\bar{w}_i$ is the moving average of $w$ using bandwidth $k$,
and $\bar{x}_i$ is the weighted moving average of $x$ using bandwidth $k$ and weights $w$.
More precisely, $\bar{w}_i = \frac{1}{|A_i|} \sum_{j \in A_i} w_j$ and 
\begin{align} \label{equation:weighted-moving-average}
\bar{x}_i = \frac{\sum_{j \in A_i} w_j x_j}{\sum_{j \in A_i} w_j}
\end{align}
where $A_i = \{\max(1,\,i - k/2),\ldots,\min(n,\,i+k/2)\}$.
The idea is that if one is trying to estimate the mean of the signal, then a natural approach would be to use a weighted moving average,
and the LRSE approximates the standard deviation (times $\sqrt{k}$) of this estimator.

We define the weighted median absolute difference as 
$$ \mathrm{WMAD}(x,w,k) = \mathrm{median} \big\{ k\,|\bar{x}_{i+1} - \bar{x}_i| \,:\, i = 1,\ldots,n-1\big\} $$
where $\bar{x}_i$ is the same as above.
The WMAD is similar to the median absolute deviation metric that is frequently used in this application,
but it allows one to account for weights.

To assess copy ratio estimation performance for sample $j$, we take $x_i$ to be the log copy ratio estimate for region $i$ 
and we use a bandwidth of $k = 100$ for both metrics.
We take $w_i = W_{i j}$ for the GBM (following Section~\ref{section:residuals}), and for GATK we take $w_i = 1$ since GATK does not provide weights/precisions.

\textbf{Estimated copy ratios for all test samples.}
Figures~\ref{figure:ccle-heatmap-gatk} and \ref{figure:ccle-heatmap-gbm} show heatmaps of the GATK and GBM copy ratio estimates on the $\log_2$ scale for all 163 samples in the test set.
For visualization, these heatmaps are smoothed using a moving weighted average 
as in Equation~\ref{equation:weighted-moving-average} with $k = 3$.
The GBM estimates are visibly less noisy and appear to infer copy neutral regions (white portions of the heatmap) more accurately than GATK.

\begin{figure}
  \centering
  \includegraphics[trim=1.9cm 2cm 0cm 0cm, clip, width=1\textwidth]{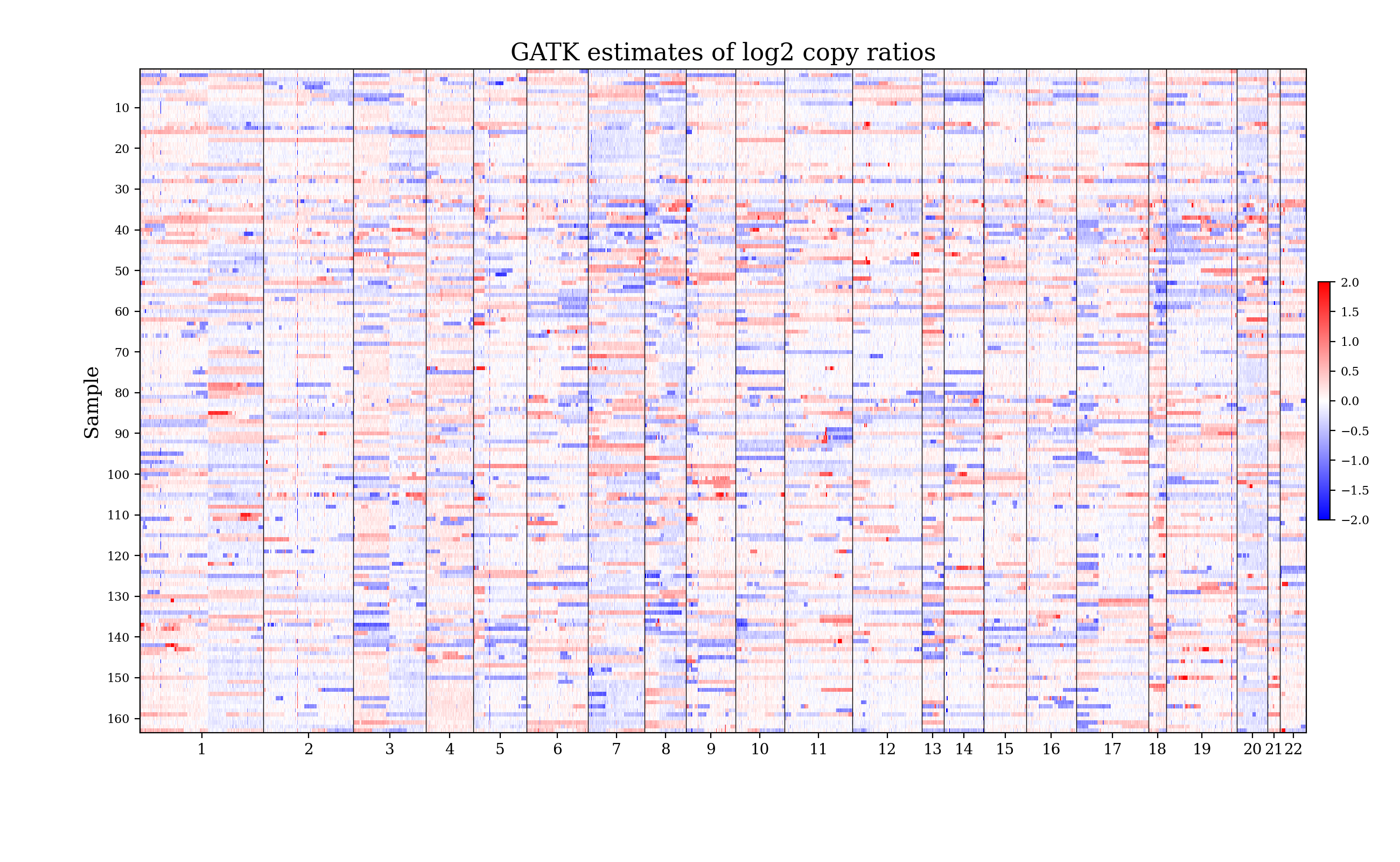}
  \caption{GATK copy ratio estimates (on the log2 scale) for all 163 test samples from the CCLE whole-exome sequencing dataset.
  The x-axis represents genomic position. Red and blue indicate copy gains and losses, respectively.}
  \label{figure:ccle-heatmap-gatk}
\end{figure}


\begin{figure}
  \centering
  \includegraphics[trim=1.9cm 2cm 0cm 0cm, clip, width=1\textwidth]{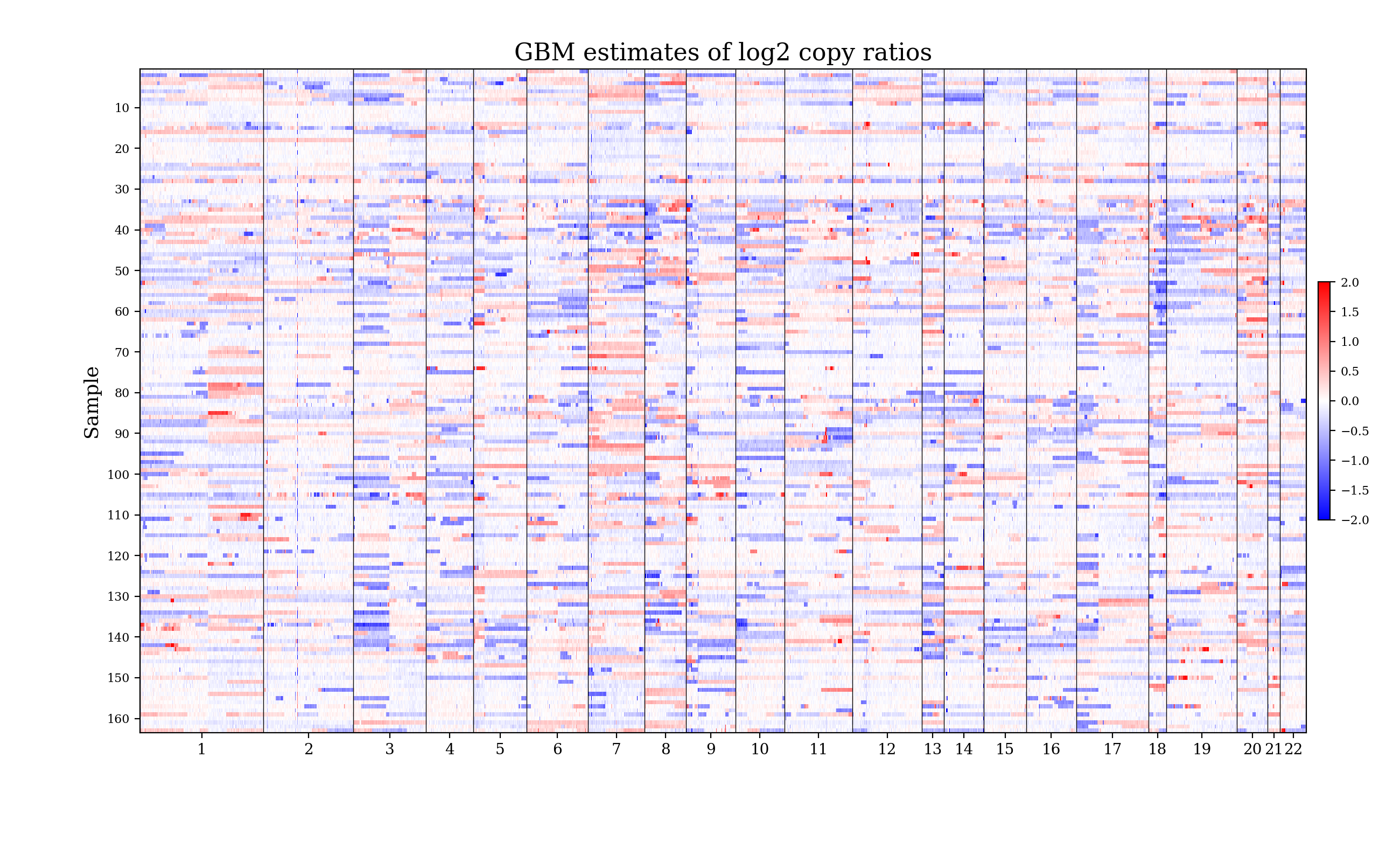}
  \caption{GBM copy ratio estimates (on the log2 scale) for all 163 test samples from the CCLE whole-exome sequencing dataset.
  The x-axis represents genomic position. Red and blue indicate copy gains and losses, respectively.}
  \label{figure:ccle-heatmap-gbm}
\end{figure}

\section{Exponential dispersion families}
\label{section:edfs}

For completeness, we state the basic results for discrete EDFs that we use in this paper.
See \citet{jorgensen1987exponential} or \citet{agresti2015foundations} for reference on this material.
Suppose
\begin{align}
\label{equation:edf-supp}
f(y\mid\theta,r) = \exp(\theta y - r\kappa(\theta)) h(y,r) 
\end{align}
is a p.m.f.\ on $y\in\Z$ for $\theta\in\Theta$ and $r\in\mathcal{R}$,
where $\Theta\subseteq\R$ and $\mathcal{R}\subseteq (0,\infty)$ are convex open sets.
If $Y\sim f(y\mid \theta,r)$, then
\begin{align}
\label{equation:edf-moments}
\E(Y\mid\theta,r) = r\kappa'(\theta) \text{~~ and ~~} \Var(Y\mid\theta,r) = r\kappa''(\theta).
\end{align}
These properties are straightforward to derive from the fact that 
$\log\sum_{y\in\Z} \exp(\theta y) h(y,r) = r\kappa(\theta)$.
For $y,\theta$, and $r$ such that $h(y,r) > 0$, let
\begin{align*}
\L &= \L(y,\theta,r) := \log f(y\mid \theta,r) = \theta y - r\kappa(\theta) + \log h(y,r), \\
\mu &= \mu(\theta,r) := \E(Y\mid\theta,r) = r\kappa'(\theta), \text{ and}\\
\sigma^2 &= \sigma^2(\theta,r) := \Var(Y\mid\theta,r) = r\kappa''(\theta).
\end{align*}
Assume $r$ is not functionally dependent on $\theta$.
Then we have 
\begin{align*}
\frac{\partial\L}{\partial\theta} = y - r\kappa'(\theta) = y - \mu \text{~~ and ~~}
\frac{\partial^2\L}{\partial\theta^2} = -r\kappa''(\theta) = -\sigma^2.
\end{align*}
Assume $\kappa'$ is invertible; this holds as long as $\Var(Y\mid \theta,r) > 0$ for all $\theta$ and $r$.
Then since $\mu = r\kappa'(\theta)$, 
we have $\theta = \kappa'^{-1}(\mu/r)$,
and it follows that $\partial\theta/\partial\mu = 1/(r\kappa''(\theta)) = 1/\sigma^2$
by the inverse function theorem.
Similarly, defining $\eta := g(\mu)$, where $g(\cdot)$ is a smooth function such that $g'$ is positive, we have $\partial\mu / \partial\eta = 1/g'(\mu)$. 
Therefore, by the chain rule,
$$
\frac{\partial\L}{\partial\eta} = \frac{\partial\L}{\partial\theta} \frac{\partial\theta}{\partial\mu} \frac{\partial\mu}{\partial\eta}
= (y - r\kappa'(\theta))\frac{1}{r\kappa''(\theta)}\frac{1}{g'(\mu)}
= \frac{y - \mu}{\sigma^2 g'(\mu)},
$$
and
$$\frac{\partial^2\L}{\partial\eta^2} = 
\Big(-\frac{\partial\mu}{\partial\eta}\Big)\frac{1}{\sigma^2 g'(\mu)} + (y-\mu)\frac{\partial}{\partial\eta}\Big(\frac{1}{\sigma^2 g'(\mu)}\Big).$$
Thus, if $Y\sim f(y\mid\theta,r)$ and $\L = \log f(Y\mid\theta,r)$, then
\begin{align}
\label{equation:edf-eta-derivatives}
\frac{\partial\L}{\partial\eta} = \frac{Y - \mu}{\sigma^2 g'(\mu)} \text{ ~~~and~~~ }
\E\Big(-\frac{\partial^2\L}{\partial\eta^2}\Big) = \frac{1}{\sigma^2 g'(\mu)^2}.
\end{align}
If $\eta$ depends on some parameters $\alpha$ and $\beta$ (and $r$ does not), then
$$ \frac{\partial\L}{\partial\alpha} = \frac{\partial\L}{\partial\eta} \frac{\partial\eta}{\partial\alpha}
\text{ ~~~and~~~ }
\frac{\partial^2\L}{\partial\alpha\partial\beta} = \frac{\partial^2\L}{\partial\eta^2} \frac{\partial\eta}{\partial\alpha}\frac{\partial\eta}{\partial\beta}
+ \frac{\partial\L}{\partial\eta} \frac{\partial^2\eta}{\partial\alpha\partial\beta}. $$
Therefore, by Equation~\ref{equation:edf-eta-derivatives},
\begin{align}
\frac{\partial\L}{\partial\alpha} &= \frac{Y - \mu}{\sigma^2 g'(\mu)} \,\frac{\partial\eta}{\partial\alpha} \label{equation:edf-dalpha} \\
\E\Big(-\frac{\partial^2\L}{\partial\alpha\partial\beta}\Big) &= \frac{1}{\sigma^2 g'(\mu)^2} \,\frac{\partial\eta}{\partial\alpha}\frac{\partial\eta}{\partial\beta} \label{equation:edf-ddalpha} 
\end{align}
since $\partial^2\eta/\partial\alpha\partial\beta$ does not depend on $Y$ and $\E(\partial\L/\partial\eta) = 0$.

\section{Gradient and Fisher information for EDF-GBMs}
\label{section:gbm-derivatives}

In Section~\ref{section:gbm-component-derivatives}, we derive the gradient and Fisher information matrix with respect to each of 
$A$, $B$, $C$, $D$, $U$, and $V$ individually, for a discrete EDF-GBM.
In Section~\ref{section:nb-gbm-derivatives}, we specialize these formulas to the case of NB-GBMs, 
and we derive the gradient and observed Fisher information for $S$ and $T$ in this case.
For completeness, in Section~\ref{section:gbm-cross-info} we also provide the cross-terms of the Fisher information between all pairs of components,
although not all of these are needed for our approach.
The formulas contain factors reminiscent of the gradient and Fisher information for a standard GLM,
which take a standard form based on the link function and the mean-variance relationship \citep{agresti2015foundations}.

For our estimation algorithm, we only require the Fisher information matrix with respect to each parameter matrix/vector individually, rather than jointly.
Meanwhile, for inference, we also use the constraint-augmented Fisher information matrix for $U$ and $V$ jointly; see Section~\ref{section:uv-inference}.
To enable comparison with the standard approach of using the full Fisher information matrix, 
in Section~\ref{section:full-info-constraint-blocks} we provide the constraint-augmented Fisher information matrix for all parameters jointly.

Consider a GBM with discrete EDF outcomes,
that is, suppose $Y_{i j} \sim f(y\mid \theta_{i j}, r_{i j})$ 
where $\theta_{i j} := \kappa'^{-1}(\mu_{i j}/r_{i j})$, $\mu_{i j} := g^{-1}(\eta_{i j})$,
and $\eta = \eta(A,B,C,D,U,V)\in\R^{I \times J}$ as in Equation~\ref{equation:eta}.
This makes $\mu_{i j} = \E(Y_{i j} \mid \theta_{i j},r_{i j})$; also, define $\sigma_{i j}^2 := \Var(Y_{i j} \mid \theta_{i j},r_{i j})$.
Let
\begin{align}
\label{equation:log-likelihood}
\L := \sum_{i=1}^I \sum_{j=1}^J \L_{i j} = \sum_{i=1}^I \sum_{j=1}^J \log f(Y_{i j} \mid \theta_{i j},r_{i j})
\end{align}
denote the overall log-likelihood, where
$\L_{i j} := \log f(Y_{i j}\mid \theta_{i j}, r_{i j})$.
By Equations~\ref{equation:edf-dalpha} and \ref{equation:edf-ddalpha},
for any two univariate entries of $A$, $B$, $C$, $D$, $U$, and $V$, say $\alpha$ and $\beta$, we have
\begin{align}
\label{equation:edf-gbm-derivatives}
\frac{\partial\L_{i j}}{\partial\alpha} = e_{i j} \,\frac{\partial\eta_{i j}}{\partial\alpha} \text{ ~~~and~~~ }
\E\Big(-\frac{\partial^2\L_{i j}}{\partial\alpha\partial\beta}\Big) = w_{i j} \,\frac{\partial\eta_{i j}}{\partial\alpha}\frac{\partial\eta_{i j}}{\partial\beta}
\end{align}
where we define the matrices $E\in\R^{I\times J}$ and $W\in\R^{I\times J}$ such that
\begin{align}
\label{equation:edf-gbm-ew}
e_{i j} := \frac{\partial\L_{i j}}{\partial\eta_{i j}} = \frac{Y_{i j} - \mu_{i j}}{\sigma_{i j}^2 g'(\mu_{i j})} \text{ ~~~and~~~ } 
w_{i j} := \E\Big(-\frac{\partial^2\L_{i j}}{\partial\eta_{i j}^2}\Big) = \frac{1}{\sigma_{i j}^2 g'(\mu_{i j})^2}.
\end{align}
The partial derivatives $\partial\eta_{i j}/\partial\alpha$ with respect to each entry of $A$, $B$, $C$, $D$, $U$, and $V$ are:
\begin{align}
\label{equation:eta-derivatives}
\frac{\partial\eta_{i j}}{\partial a_{j' k}} &= x_{i k}\,\I(j=j') \qquad &  \frac{\partial\eta_{i j}}{\partial v_{j' m}} &= u_{i m} d_{m m}\,\I(j=j') \notag\\
\frac{\partial\eta_{i j}}{\partial b_{i' \ell}} &= z_{j \ell}\,\I(i=i')   & \frac{\partial\eta_{i j}}{\partial u_{i' m}} &= v_{j m} d_{m m}\,\I(i=i') \\
\frac{\partial\eta_{i j}}{\partial c_{k \ell}} &= x_{i k} z_{j \ell}   & \frac{\partial\eta_{i j}}{\partial d_{m m}} &= u_{i m} v_{j m}. \notag
\end{align}


\subsection{Gradient and Fisher information for each component}
\label{section:gbm-component-derivatives}

By Equations~\ref{equation:edf-gbm-derivatives} and \ref{equation:eta-derivatives}, we have
\begin{align}
\label{equation:edf-gbm-gradients-univariate}
\frac{\partial\L}{\partial a_{j k}} &= \sum_{i=1}^I x_{i k} e_{i j} \qquad & 
\frac{\partial\L}{\partial v_{j m}} &= \sum_{i=1}^I u_{i m} d_{m m} e_{i j} \notag\\
\frac{\partial\L}{\partial b_{i \ell}} &= \sum_{j=1}^J z_{j \ell} e_{i j}  & 
\frac{\partial\L}{\partial u_{i m}} &= \sum_{j=1}^J v_{j m} d_{m m} e_{i j} \\
\frac{\partial\L}{\partial c_{k \ell}} &= \sum_{i=1}^I \sum_{j=1}^J x_{i k} e_{i j} z_{j \ell} & 
\frac{\partial\L}{\partial d_{m m}} &= \sum_{i=1}^I \sum_{j=1}^J u_{i m} e_{i j} v_{j m}. \notag
\end{align}

To express these equations in matrix notation, we vectorize the parameter matrices as in Section~\ref{section:inference-outline}: $\vec{a} := \mathrm{vec}(A^\T)$, $\vec{b} := \mathrm{vec}(B^\T)$, $\vec{c} := \mathrm{vec}(C)$, $\vec{d} := \diag(D)$, $\vec{u} := \mathrm{vec}(U^\T)$, and $\vec{v} := \mathrm{vec}(V^\T)$.
By Equation~\ref{equation:edf-gbm-gradients-univariate}, the gradient of $\L$ with respect to each vectorized component is then:
\begin{align}
\label{equation:edf-gbm-gradients}
\nabla_{\!\vec{a}}\,\L &= \mathrm{vec}(X^\T E) \qquad &  \nabla_{\!\vec{v}}\,\L &= \mathrm{vec}((U D)^\T E) \notag\\
\nabla_{\!\vec{b}}\,\L &= \mathrm{vec}(Z^\T E^\T)  &   \nabla_{\!\vec{u}}\,\L &= \mathrm{vec}((V D)^\T E^\T) \\
\nabla_{\!\vec{c}}\,\L &= \mathrm{vec}(X^\T E Z) & \nabla_{\!\vec{d}}\,\L &= \diag(U^\T E V). \notag
\end{align}

For each component of the model, the entries of the Fisher information matrices are as follows, 
by Equations~\ref{equation:edf-gbm-derivatives} and \ref{equation:eta-derivatives}:
\begin{align}
\label{equation:edf-gbm-fisher-univariate}
\E\Big(-\frac{\partial^2\L}{\partial a_{j k}\partial a_{j' k'}}\Big) &= \I(j=j')\sum_{i=1}^I w_{i j} x_{i k} x_{i k'} \\
\E\Big(-\frac{\partial^2\L}{\partial b_{i \ell}\partial b_{i' \ell'}}\Big) &= \I(i=i')\sum_{j=1}^J w_{i j} z_{j \ell} z_{j \ell'} \notag\\ 
\E\Big(-\frac{\partial^2\L}{\partial c_{k \ell}\partial c_{k' \ell'}}\Big) &= \sum_{i=1}^I \sum_{j=1}^J w_{i j} x_{i k} x_{i k'} z_{j \ell} z_{j \ell'} \notag\\
\E\Big(-\frac{\partial^2\L}{\partial v_{j m}\partial v_{j' m'}}\Big) &= \I(j=j')\sum_{i=1}^I w_{i j} u_{i m} d_{m m} u_{i m'} d_{m' m'} \notag\\
\E\Big(-\frac{\partial^2\L}{\partial u_{i m}\partial u_{i' m'}}\Big) &= \I(i=i')\sum_{j=1}^J w_{i j} v_{j m} d_{m m} v_{j m'} d_{m' m'} \notag\\
\E\Big(-\frac{\partial^2\L}{\partial d_{m m}\partial d_{m' m'}}\Big) &= \sum_{i=1}^I \sum_{j=1}^J w_{i j} u_{i m} v_{j m} u_{i m'} v_{j m'}. \notag
\end{align}

To express these equations in matrix form, we introduce the following notation.
We write $\Diag(Q_1,\ldots,Q_n)$ to denote the block diagonal matrix with blocks $Q_1,\ldots,Q_n$,
$$\Diag(Q_1,\ldots,Q_n) := 
\renewcommand{\arraystretch}{0.65}
\begin{bmatrix} Q_1 & 0 & \cdots & 0\\ 0 & Q_2 &  & 0 \\ \vdots & & \ddots & \vdots \\ 0 & 0 & \cdots & Q_n \end{bmatrix}, $$ 
and we write $\mathrm{block}(Q_{i j} : i,j\in\{1,\ldots,n\})$ to denote the block matrix with block $i,j$ equal to $Q_{i j}$.
For a matrix $Q\in\R^{m\times n}$, we write $Q_{i *}$ and $Q_{* j}$ to denote the diagonal matrices constructed from the $i$th row and $j$th column, respectively, that is, $Q_{i *} := \Diag(q_{i 1},\ldots,q_{i n})$ and $Q_{* j} := \Diag(q_{1 j},\ldots,q_{m j})$.
Then, by Equation~\ref{equation:edf-gbm-fisher-univariate}, the Fisher information matrices for each component of the model are:
\begin{align}
\label{equation:edf-gbm-fisher}
\E(-\nabla_{\!\vec{a}}^2\,\L) &= \Diag(X^\T W_{* 1} X,\, \ldots,\, X^\T W_{* J} X) \\
\E(-\nabla_{\!\vec{b}}^2\,\L) &= \Diag(Z^\T W_{1 *} Z,\, \ldots,\, Z^\T W_{I *} Z) \notag\\
\E(-\nabla_{\!\vec{c}}^2\,\L) &= \mathrm{block}\Big({\textstyle\sum_{j=1}^J} z_{j\ell}z_{j\ell'}(X^\T W_{* j} X) : \ell,\ell'\in\{1,\ldots,L\}\Big) \notag\\
\E(-\nabla_{\!\vec{v}}^2\,\L) &= \Diag((U D)^\T W_{* 1} (U D),\, \ldots,\, (U D)^\T W_{* J} (U D)) \notag\\
\E(-\nabla_{\!\vec{u}}^2\,\L) &= \Diag((V D)^\T W_{1 *} (V D),\, \ldots,\, (V D)^\T W_{I *} (V D))  \notag\\
\E(-\nabla_{\!\vec{d}}^2\,\L) &= \sum_{j=1}^J (U V_{j *})^\T W_{* j} (U V_{j *}). \notag
\end{align}

\subsection{NB-GBM with log link: Gradient and Fisher information}
\label{section:nb-gbm-derivatives}

The negative binomial distribution has probability mass function
$$\NegBin(y \mid \mu,r) = \frac{\Gamma(y + r)}{\Gamma(y+1)\Gamma(r)} \Big(\frac{\mu}{\mu + r}\Big)^y \Big(\frac{r}{\mu + r}\Big)^r $$
for $y\in\{0,1,2,\ldots\}$, given $\mu>0$ and $r>0$.
This is a discrete EDF of the form in Equation~\ref{equation:edf-supp} with 
$\theta = \log(\mu/(\mu + r))$ and $\kappa(\theta) = -\log(1 - \exp(\theta))$.
Observe that
\begin{align*}
\kappa'(\theta) &= \frac{e^\theta}{1 - e^\theta} = \frac{\mu}{r} \\
\kappa''(\theta) &= \frac{e^\theta}{(1 - e^\theta)^2} = \frac{\mu + \mu^2/r}{r}.
\end{align*}
Thus, letting $Y\sim \NegBin(\mu,r)$, we have $\E(Y) = \mu$ and $\Var(Y) = \mu + \mu^2/r$
by Equation~\ref{equation:edf-moments}.
For an NB-GBM with log link $g(\mu) = \log(\mu)$, the gradients and Fisher information matrices for $A$, $B$, $C$, $D$, $U$, and $V$ are given 
by Equations~\ref{equation:edf-gbm-gradients} and \ref{equation:edf-gbm-fisher} where, by Equation~\ref{equation:edf-gbm-ew},
\begin{align}
\label{equation:nb-gbm-ew}
\mu_{i j} = \exp(\eta_{i j}), ~~~~
w_{i j} = \frac{r_{i j} \mu_{i j}}{r_{i j} + \mu_{i j}},
\text{ ~~~and~~~ } 
e_{i j} = (Y_{i j} - \mu_{i j})\, \frac{w_{i j}}{\mu_{i j}}.
\end{align}

Next, we derive the gradient and observed information matrix for the log-dispersions.
First, consider a single entry. Letting $\psi(x)$ denote the digamma function, we have
\begin{align}
\label{equation:inv-disp-derivs-overflow}
\begin{split}
\frac{\partial}{\partial r} \log \NegBin(Y \mid  \mu,r) &= \psi(Y+r)-\psi(r) + \log(r) + 1 - \log(\mu+r) - \frac{Y+r}{\mu+r} \\
\frac{\partial^2}{\partial r^2} \log \NegBin(Y \mid  \mu,r) &= \psi'(Y+r)-\psi'(r) + \frac{1}{r} - \frac{1}{\mu+r} - \frac{\mu-Y}{(\mu+r)^2}.
\end{split}
\end{align}
We work with the observed information rather than the expected information since $\E(\psi'(r+Y))$ does not seem to have a simple expression.

It turns out that Equation~\ref{equation:inv-disp-derivs-overflow} tends to lead to arithmetic overflow/underflow in extreme cases.
Although these extreme cases occur only occasionally for individual entries,
large GBMs have so many entries that these failures occur persistently and must be addressed.
To avoid arithmetic overflow/underflow, we rewrite Equation~\ref{equation:inv-disp-derivs-overflow} as follows:
\begin{align}
\label{equation:inv-disp-derivs}
\begin{split}
\frac{\partial}{\partial r} \log \NegBin(Y \mid  \mu,r) &= \psi_\Delta(Y,r) - \mathrm{log1p}(\mu/r) - \frac{Y - \mu}{r + \mu} + \mathrm{err}(Y,r) \\
\frac{\partial^2}{\partial r^2} \log \NegBin(Y \mid  \mu,r) &= \psi'_\Delta(Y,r) + \frac{Y + \mu^2/r}{(r + \mu)^2} + \mathrm{err}'(Y,r)
\end{split}
\end{align}
where
\begin{align}
\label{equation:special-funcs}
\mathrm{log1p}(x) &= \branch{x + \log((1+x)/e^x)}{x<1}{\log(1+x)}{x\geq 1}\notag \\
\psi_\Delta(y,r) &= \branch{\psi(y+r) - \psi(r)}{r < 10^8}{\mathrm{log1p}(y/r)}{r \geq 10^8} \\
\psi'_\Delta(y,r) &= \branch{\psi'(y+r) - \psi'(r)}{r < 10^8}{-(y/r)/(y+r)}{r \geq 10^8} \notag
\end{align}
and the error terms $\mathrm{err}(y,r)$ and $\mathrm{err}'(y,r)$ are typically exceedingly small and can be safely ignored.
Mathematically, $\mathrm{log1p}(x) = \log(1+x)$,
however, numerically, the expression in Equation~\ref{equation:special-funcs} computes this value in a way that helps avoid arithmetic overflow and underflow.
Similarly, $\psi_\Delta(y,r)$ and $\psi'_\Delta(y,r)$ compute $\psi(y+r) - \psi(r)$ and $\psi'(y+r) - \psi'(r)$, respectively, 
to very high accuracy while avoiding overflow/underflow; the errors in these approximations are $\mathrm{err}(y,r)$ and $\mathrm{err}'(y,r)$, respectively.
To derive Equation~\ref{equation:inv-disp-derivs}, we group terms of similar magnitude and we use 
the asymptotics of $\psi(x)$ and $\psi'(x)$.


Now, we derive the gradient and observed information for $S$ and $T$ in the NB-GBM.
Recall that we parametrize $r_{i j} = \exp(-s_i - t_j - \omega)$
and we work in terms of the vector of feature-specific log-dispersion offsets $S\in\R^I$,
the vector of sample-specific log-dispersion offsets $T\in\R^J$,
and the overall log-dispersion $\omega$,
subject to the identifiability constraints $\frac{1}{I}\sum_i e^{s_i} = 1$ and $\frac{1}{J}\sum_j e^{t_j} = 1$.
Let $\L_{i j} = \log \NegBin(Y_{i j}\mid\mu_{i j},r_{i j})$ and $\L = \sum_{i=1}^I \sum_{j=1}^J \L_{i j}$ as before.
The derivatives with respect to $s_i$ and $t_j$ are then
\begin{align}
\label{equation:log-disp-derivs}
\frac{\partial\L}{\partial s_i} &= \sum_{j=1}^J \frac{\partial\L_{i j}}{\partial r_{i j}} (-r_{i j})  \qquad\qquad & 
\frac{\partial^2\L}{\partial s_i^2} &= \sum_{j=1}^J \Big(\frac{\partial^2\L_{i j}}{\partial r_{i j}^2} r_{i j}^2 + \frac{\partial\L_{i j}}{\partial r_{i j}} r_{i j}\Big) \\
\frac{\partial\L}{\partial t_j} &= \sum_{i=1}^I \frac{\partial\L_{i j}}{\partial r_{i j}} (-r_{i j})  \qquad\qquad &
\frac{\partial^2\L}{\partial t_j^2} &= \sum_{i=1}^I \Big(\frac{\partial^2\L_{i j}}{\partial r_{i j}^2} r_{i j}^2 + \frac{\partial\L_{i j}}{\partial r_{i j}} r_{i j}\Big)\notag
\end{align}
since 
$$
\frac{\partial r_{i j}}{\partial s_i} = \frac{\partial r_{i j}}{\partial t_j} = -r_{i j} \qquad\text{ and }\qquad
\frac{\partial^2 r_{i j}}{\partial s_i^2} = \frac{\partial^2 r_{i j}}{\partial t_j^2} = r_{i j}.
$$
By Equation~\ref{equation:inv-disp-derivs}, 
\begin{align}
\label{equation:loglik-derivs-wrt-r}
\begin{split}
\frac{\partial\L_{i j}}{\partial r_{i j}} &= \psi_\Delta(Y_{i j},r_{i j}) - \mathrm{log1p}(\mu_{i j}/r_{i j}) - \frac{Y_{i j}-\mu_{i j}}{r_{i j}+\mu_{i j}} + \mathrm{err}(Y_{i j},r_{i j}) \\
\frac{\partial^2\L_{i j}}{\partial r_{i j}^2} &= \psi'_\Delta(Y_{i j},r_{i j}) + \frac{Y_{i j}+\mu_{i j}^2/r_{i j}}{(r_{i j}+\mu_{i j})^2} + \mathrm{err}'(Y_{i j},r_{i j}) 
\end{split}
\end{align}
where the error terms are negligible.
Note that $\omega$ is implicitly optimized in the optimization-projection steps for $S$ and $T$ due to the likelihood-preserving projections,
making it unnecessary to optimize with respect to $\omega$ directly.

In practice, we do not use the off-diagonal terms of the observed information matrix for the log-dispersion parameters,
since they seem to have a very small effect on the results.
However, for completeness, we note here that 
$$ \frac{\partial^2\L}{\partial s_i \partial t_j} = \frac{\partial^2\L_{i j}}{\partial r_{i j}^2} r_{i j}^2 + \frac{\partial\L_{i j}}{\partial r_{i j}} r_{i j} $$
for all $i,j$, and 
$$ \frac{\partial^2\L}{\partial s_i \partial s_{i'}} = 0 \qquad\qquad \frac{\partial^2\L}{\partial t_j \partial t_{j'}} = 0 $$
for all $i\neq i'$ and $j\neq j'$.


\subsection{Fisher information between components of the model}
\label{section:gbm-cross-info}

In this section, we provide formulas for the off-block-diagonal entries of the full Fisher information matrix for a discrete EDF-GBM, that is, 
the entries that involve more than one of $A$, $B$, $C$, $D$, $U$, $V$, $S$, $T$, and $\omega$.
With the exception of the entries involving $U$ and $V$, we do not use these formulas in our proposed algorithms,
however, we provide the formulas here to enable comparison with the standard inference approach based on the full Fisher information matrix.
As in Equation~\ref{equation:edf-gbm-ew}, we define
\begin{align*}
e_{i j} = \frac{\partial\L_{i j}}{\partial\eta_{i j}} \text{~~~~and~~~~}
w_{i j} = \E\Big(-\frac{\partial^2\L_{i j}}{\partial\eta_{i j}^2}\Big).
\end{align*}
For any univariate entry of $A$, $B$, $C$, $D$, $U$, or $V$, say $\beta$, we have 
\begin{align}
\label{equation:deriv-e}
\E\Big(-\frac{\partial e_{i j}}{\partial \beta}\Big) = \E\Big(-\frac{\partial^2 \L_{i j}}{\partial \eta_{i j}^2}\Big) \frac{\partial\eta_{i j}}{\partial\beta} = 
w_{i j}  \frac{\partial\eta_{i j}}{\partial\beta}.
\end{align}
Using Equation~\ref{equation:deriv-e} along with Equations~\ref{equation:eta-derivatives} and \ref{equation:edf-gbm-gradients-univariate}, the cross terms among $A$, $B$, and $C$ are:
\begin{align}
\E\Big(-\frac{\partial^2\L}{\partial a_{j k} \partial b_{i \ell}}\Big) &= w_{i j} x_{i k} z_{j \ell} \\
\E\Big(-\frac{\partial^2\L}{\partial a_{j k} \partial c_{k' \ell}}\Big) &= \sum_{i=1}^I w_{i j} x_{i k} x_{i k'} z_{j \ell} \\
\E\Big(-\frac{\partial^2\L}{\partial b_{i \ell} \partial c_{k \ell'}}\Big) &= \sum_{j=1}^J w_{i j} x_{i k} z_{j \ell} z_{j \ell'}.
\end{align}
Similarly, the cross terms among $U$, $V$, and $D$ are:
\begin{align}
\label{equation:duv-cross-info}
\E\Big(-\frac{\partial^2\L}{\partial u_{i m} \partial v_{j m'}}\Big) &= w_{i j} u_{i m'} d_{m' m'} d_{m m} v_{j m} \\
\E\Big(-\frac{\partial^2\L}{\partial u_{i m} \partial d_{m' m'}}\Big) &= \sum_{j=1}^J w_{i j} u_{i m'} v_{j m'} v_{j m} d_{m m}  \\
\E\Big(-\frac{\partial^2\L}{\partial v_{j m} \partial d_{m' m'}}\Big) &=  \sum_{i=1}^I w_{i j} u_{i m} d_{m m} u_{i m'} v_{j m'}.
\end{align}
The cross terms between $A$, $B$, $C$ and $U$, $V$, and $D$ are:
\begin{align}
\E\Big(-\frac{\partial^2\L}{\partial a_{j k} \partial u_{i m}}\Big) &= w_{i j} x_{i k} v_{j m} d_{m m} \\
\E\Big(-\frac{\partial^2\L}{\partial a_{j k} \partial v_{j' m}}\Big) &= \I(j=j')\sum_{i=1}^I w_{i j} x_{i k} u_{i m} d_{m m} \\
\E\Big(-\frac{\partial^2\L}{\partial a_{j k} \partial d_{m m}}\Big) &= \sum_{i=1}^I w_{i j} x_{i k} u_{i m} v_{j m}\\
\E\Big(-\frac{\partial^2\L}{\partial b_{i \ell} \partial u_{i' m}}\Big) &= \I(i=i')\sum_{j=1}^J w_{i j} z_{j \ell} v_{j m} d_{m m} \\
\E\Big(-\frac{\partial^2\L}{\partial b_{i \ell} \partial v_{j m}}\Big) &= w_{i j} z_{j \ell} u_{i m} d_{m m} \\
\E\Big(-\frac{\partial^2\L}{\partial b_{i \ell} \partial d_{m m}}\Big) &= \sum_{j=1}^J w_{i j} z_{j \ell} u_{i m} v_{j m} \\
\E\Big(-\frac{\partial^2\L}{\partial c_{k \ell} \partial u_{i m}}\Big) &= \sum_{j=1}^J w_{i j} x_{i k} z_{j \ell} v_{j m} d_{m m}  \\
\E\Big(-\frac{\partial^2\L}{\partial c_{k \ell} \partial v_{j m}}\Big) &= \sum_{i=1}^I w_{i j} x_{i k} z_{j \ell} u_{i m} d_{m m} \\
\E\Big(-\frac{\partial^2\L}{\partial c_{k \ell} \partial d_{m m}}\Big) &= \sum_{i=1}^I \sum_{j=1}^J w_{i j} x_{i k} z_{j \ell} u_{i m} v_{j m}.
\end{align}

All of the cross terms between $(S,T,\omega)$ and $(A,B,C,D,U,V)$ are zero.
To see this, first note that by differentiating Equation~\ref{equation:inv-disp-derivs-overflow} with respect to $\mu$,
$$ \frac{\partial^2}{\partial\mu\partial r} \log \NegBin(Y \mid  \mu,r) = \frac{Y - \mu}{(r + \mu)^2}. $$
Hence, for any entry of $A$, $B$, $C$, $D$, $U$, or $V$, say $\beta$, we have $\E(-\partial^2\L/\partial\beta\partial s_i) = 0$ since
$$ \E\Big(-\frac{\partial^2\L_{i j}}{\partial\beta\partial s_i}\Big) 
= \E\Big(-\frac{\partial^2\L_{i j}}{\partial\mu_{i j}\partial r_{i j}} \frac{\partial \mu_{i j}}{\partial\beta}\frac{\partial r_{i j}}{\partial s_i}\Big)
= -\E\Big(\frac{Y_{i j} - \mu_{i j}}{(r_{i j} + \mu_{i j})^2}\Big) \frac{\partial \mu_{i j}}{\partial\beta}\frac{\partial r_{i j}}{\partial s_i} = 0. $$
Similarly, $\E(-\partial^2\L/\partial\beta\partial t_j) = 0$ and $\E(-\partial^2\L/\partial\beta\partial \omega) = 0$.

\section{Constraint-augmented Fisher information}
\label{section:constraint-augmented-fisher}

During estimation, we enforce constraints that ensure identifiability of the parameters, 
however, these constraints are not accounted for in the Fisher information matrix.
Consequently, it turns out that to appropriately quantify uncertainty in the parameters, 
it is necessary to augment the Fisher information matrix to account for the identifiability constraints;
see Section~\ref{section:augmented-fisher-technique}.

In our proposed algorithm, we use this constraint-augmentation technique only for $U$ and $V$ jointly (see Section~\ref{section:uv-inference}),
and for the remaining components we use our delta propagation technique, which does not involve special handling of the constraints.
We also compare our proposed method to the standard approach of inverting the full constraint-augmented Fisher information matrix.
In Section~\ref{section:augmented-fisher-technique}, we review the constraint-augmentation technique in general, and in Section~\ref{section:full-info-constraint-blocks}, we provide formulas for the full constraint-augmented Fisher information for GBMs.

\subsection{Constraint-augmentation technique}
\label{section:augmented-fisher-technique}

In this section, we review the constraint-augmentation technique of \citet{aitchison1958maximum} and \citet{silvey1959lagrangian}
in the setting of a finite-dimensional parametric model and we extend it to our setting; also see \citet[Section 4.7]{silvey1975statistical}.
Consider an i.i.d.\ model with parameter $\theta\in\R^d$, and suppose we want to perform estimation and inference
subject to the constraint that $g(\theta) = 0$, where $g(\theta) = (g_1(\theta),\ldots,g_k(\theta))^\T\in\R^k$ is a differentiable function.
Suppose $\hat\theta$ is the maximum likelihood estimator subject to $g(\theta) = 0$;
this is referred to as a ``restricted maximum likelihood estimator'' \citep{silvey1975statistical}.

Let $J_\theta \in \R^{k\times d}$ be the Jacobian matrix of $g$, that is, $J_{\theta, i j} = \partial g_i / \partial \theta_j$.
Let $\mathcal{I}_\theta \in \R^{d\times d}$ be the Fisher information matrix for $\theta$, that is, $\mathcal{I}_\theta = \E(-\nabla_\theta^2 \mathcal{L})$ 
where $\mathcal{L}$ is the log-likelihood.
Suppose $\theta_0$ is the true value of the parameter.
\citet{aitchison1958maximum} show that under regularity conditions, when $\mathcal{I}_{\theta_0}$ is invertible,
the covariance matrix of $\hat\theta$ is approximately equal to the 
leading $d\times d$ submatrix of 
{\footnotesize $\begin{bmatrix} \mathcal{I}_{\theta_0} & J_{\theta_0}^\T \\ J_{\theta_0} & 0 \end{bmatrix}^{-1}$}.
However, in GBMs, we need to consider situations in which $\mathcal{I}_{\theta_0}$ is not invertible since the model is overparametrized unless the identifiability constraints are imposed.
When $\mathcal{I}_{\theta_0}$ is not invertible, \citet{silvey1959lagrangian} extends the technique by showing that 
the covariance matrix of $\hat\theta$ is approximately equal to the leading $d\times d$ submatrix of 
{\footnotesize $\begin{bmatrix} \mathcal{I}_{\theta_0} + J_{\theta_0}^\T J_{\theta_0} & J_{\theta_0}^\T \\ J_{\theta_0} & 0 \end{bmatrix}^{-1}$}.

Since we employ \textit{maximum a posteriori} estimates, we modify the technique to use
the regularized Fisher information matrix $F_\theta = \E(-\nabla_\theta^2 (\mathcal{L} + \log \pi))$ in place of $\mathcal{I}_\theta$.
For our choice of prior, $F_{\theta_0}$ is invertible even when $\mathcal{I}_{\theta_0}$ is not invertible, and consequently it turns out that 
the leading $d\times d$ submatrices of 
{\footnotesize $\begin{bmatrix} F_{\theta_0} & J_{\theta_0}^\T \\ J_{\theta_0} & 0 \end{bmatrix}^{-1}$}
and
{\footnotesize $\begin{bmatrix} F_{\theta_0} + J_{\theta_0}^\T J_{\theta_0} & J_{\theta_0}^\T \\ J_{\theta_0} & 0 \end{bmatrix}^{-1}$}
coincide;
see Proposition~\ref{proposition:augmented-fisher}.
Thus, when using $F_\theta$ instead of $\mathcal{I}_\theta$, it is not necessary to employ the extended version provided by \citet{silvey1959lagrangian};
this provides a big advantage, computationally, when we apply the method to perform inference for $(U,V)$ in GBMs.
Although \citet{aitchison1958maximum} and \citet{silvey1959lagrangian} justify the technique in the i.i.d.\ setting,
empirically we find that it works well in our non-i.i.d.\ setting also.

\begin{proposition}
\label{proposition:augmented-fisher}
Let $F\in\R^{d\times d}$ and $J\in\R^{k\times d}$.  If $F$ and $J F^{-1} J^\T$ are invertible, then the leading $d\times d$ submatrices of 
{\footnotesize $\begin{bmatrix} F & J^\T \\ J & 0 \end{bmatrix}^{-1}$} and 
{\footnotesize $\begin{bmatrix} F + J^\T J  & J^\T \\ J & 0 \end{bmatrix}^{-1}$} are equal.
\end{proposition}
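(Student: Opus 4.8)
The plan is to avoid inverting either matrix explicitly and instead exploit a block-triangular factorization that relates the two augmented matrices directly. Write
\[
M_1 = \begin{bmatrix} F & J^\T \\ J & 0 \end{bmatrix}, \qquad
M_2 = \begin{bmatrix} F + J^\T J & J^\T \\ J & 0 \end{bmatrix},
\]
and set $L = \begin{bmatrix} \Id & J^\T \\ 0 & \Id \end{bmatrix}$, where the identity blocks are $d\times d$ and $k\times k$. The first step is the elementary block computation $L M_1 = M_2$, which one verifies block by block: the $(1,1)$ block yields $F + J^\T J$, while the other three blocks are unchanged because the bottom-left block of $L$ is zero.

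Next I would record that $L$ is invertible, being block upper-triangular with identity diagonal blocks, with $L^{-1} = \begin{bmatrix} \Id & -J^\T \\ 0 & \Id \end{bmatrix}$. I also need that $M_1$ is invertible so that the leading submatrices are well defined; this follows from the hypotheses via the Schur complement, since $F$ is invertible and the Schur complement of $F$ in $M_1$ is $-J F^{-1} J^\T$, which is invertible by assumption. Consequently $M_2 = L M_1$ is invertible as well, and $M_2^{-1} = M_1^{-1} L^{-1}$.

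The final step is to read off the leading block. Partition $M_1^{-1} = \begin{bmatrix} G & H \\ K & N \end{bmatrix}$ with $G\in\R^{d\times d}$. Then
\[
M_2^{-1} = M_1^{-1} L^{-1} = \begin{bmatrix} G & H \\ K & N \end{bmatrix}\begin{bmatrix} \Id & -J^\T \\ 0 & \Id \end{bmatrix} = \begin{bmatrix} G & H - G J^\T \\ K & N - K J^\T \end{bmatrix}.
\]
Since the first block-column of $L^{-1}$ is $\begin{bmatrix}\Id\\0\end{bmatrix}$, the $(1,1)$ block of $M_2^{-1}$ is again $G$; that is, the leading $d\times d$ submatrices of $M_1^{-1}$ and $M_2^{-1}$ coincide, as claimed.

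I do not expect a serious obstacle here: the only point requiring care is justifying that $M_1$ (and hence $M_2$) is invertible, which is exactly where the hypotheses that $F$ and $J F^{-1} J^\T$ are invertible enter. A more computational alternative would be to evaluate both leading blocks through the Schur complement formula and verify that each equals $F^{-1} - F^{-1} J^\T (J F^{-1} J^\T)^{-1} J F^{-1}$; the factorization route is cleaner and, notably, never requires $F + J^\T J$ itself to be invertible.
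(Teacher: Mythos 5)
Your proof is correct, and it takes a genuinely different route from the paper's. The paper partitions both augmented matrices via the $2\times 2$ block-inversion formula and then grinds through the Woodbury identity to show that the two Schur-complement expressions $A^{-1} - A^{-1}J^\T(JA^{-1}J^\T)^{-1}JA^{-1}$ (with $A = F + J^\T J$) and $F^{-1} - F^{-1}J^\T(JF^{-1}J^\T)^{-1}JF^{-1}$ coincide. Your observation that $M_2 = L M_1$ with $L = \bigl[\begin{smallmatrix}\Id & J^\T\\ 0 & \Id\end{smallmatrix}\bigr]$, so that $M_2^{-1} = M_1^{-1}L^{-1}$ and the first block-column of $L^{-1}$ is $\bigl[\begin{smallmatrix}\Id\\0\end{smallmatrix}\bigr]$, replaces all of that algebra with a one-line factorization, and your Schur-complement argument for the invertibility of $M_1$ correctly isolates where the hypotheses on $F$ and $JF^{-1}J^\T$ are used. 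Your closing remark is also apt and slightly understated: the paper's route needs $A^{-1}$ and $(\Id + JF^{-1}J^\T)^{-1}$ to exist in intermediate steps (harmless in the paper's application, where $F$ is positive definite, but not literally implied by the stated hypotheses), whereas your factorization never touches $A^{-1}$ and so proves the proposition exactly as stated. What the paper's computation buys in exchange is the explicit closed form of the common leading block, which is reused elsewhere in Section~S6.2; your argument establishes equality without exhibiting the block.
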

\begin{proof}
Let $A := F + J^\T J$.
By the formula for inverting a $2\times 2$ block matrix,
\begin{align}
\label{equation:block-matrix-inversion}
\begin{bmatrix} A & J^\T \\ J & 0 \end{bmatrix}^{-1} = \begin{bmatrix} A^{-1} - A^{-1} J^\T (J A^{-1} J^\T)^{-1} J A^{-1} & ~\ast \\ \ast & ~\ast \end{bmatrix},
\end{align}
and the same formula applies for {\footnotesize $\begin{bmatrix} F & J^\T \\ J & 0 \end{bmatrix}^{-1}$}, but with $F$ in place of $A$.
Here, $\ast$ denotes unneeded entries.
By the Woodbury matrix inversion formula, 
\begin{align}
\label{equation:woodbury}
A^{-1} = F^{-1} - F^{-1} J^\T (\Id + J F^{-1} J^\T)^{-1} J F^{-1}.
\end{align}
Defining $C := J F^{-1} J^\T$, this implies
$J A^{-1} J^\T = C - C (\Id + C)^{-1} C = C(\Id + C)^{-1}$, hence
\begin{align}
\label{equation:sandwich-inverse}
(J A^{-1} J^\T)^{-1} = (\Id + C) C^{-1}.
\end{align}
Using Equation~\ref{equation:woodbury} again, we see that 
$J A^{-1} = J F^{-1} - C (\Id + C)^{-1} J F^{-1}$, and thus
\begin{align}
\label{equation:sandwich-inverse-plus}
(J A^{-1} J^\T)^{-1} J A^{-1} = (\Id + C) C^{-1} J F^{-1} - J F^{-1} = C^{-1} J F^{-1}.
\end{align}
Likewise, $A^{-1} J^\T = F^{-1} J^\T - F^{-1} J^\T (\Id + C)^{-1} C$, so combining this with Equations~\ref{equation:sandwich-inverse-plus}
and \ref{equation:woodbury} and canceling, we have
$$ A^{-1} - A^{-1} J^\T (J A^{-1} J^\T)^{-1} J A^{-1} = F^{-1} - F^{-1} J^\T (J F^{-1} J^\T)^{-1} J F^{-1}. $$
The result follows by Equation~\ref{equation:block-matrix-inversion}.
\end{proof}

\subsection{Constraint-augmented Fisher information for $(U,V)$}
\label{section:uv-inference}

We have found that it is important to quantify uncertainty in $U$ and $V$ jointly and account for the identifiability constraints.
In this section, we derive a computationally efficient method for computing the diagonal of 
the inverse of the constraint-augmented Fisher information matrix (Section~\ref{section:augmented-fisher-technique})
to obtain approximate standard errors for $U$ and $V$.

Let $J_u$ and $J_v$ denote the constraint Jacobian matrices for $U$ and $V$; see Section~\ref{section:full-info-constraint-blocks}.
The regularized, constraint-augmented Fisher information matrix for $(U,V)$ is then
$$ \tilde{F}_{(u,v)} := \begin{bmatrix} F_u & F_{u v} & J_u^\T & 0 \\ F_{u v}^\T & F_v & 0 & J_v^\T \\ J_u & 0 & 0 & 0 \\ 0 & J_v & 0 & 0 \end{bmatrix} $$
where $F_u = \E(-\nabla_{\!\vec{u}}^2\,\L) + \lambda_u \Id$, $F_v = \E(-\nabla_{\!\vec{v}}^2\,\L) + \lambda_v \Id$, and $F_{u v} = \E(-\nabla_{\!\vec{u}}\nabla_{\!\vec{v}}^\T\,\L)$; formulas for each of these expectations are given in Equations~\ref{equation:edf-gbm-fisher} and \ref{equation:duv-cross-info}.

We need the first $I M + J M$ entries of $\diag(\tilde{F}_{(u,v)}^{-1})$ 
in order to obtain approximate standard errors for the entries of $U$ and $V$,
however, naively performing this matrix inversion is computationally intractable when $I$ (or $J$) is large.
To compute this efficiently, we structure the calculation as follows.
First, let $P$ be the permutation matrix such that 
$$ P\tilde{F}_{(u,v)}P^\T = \begin{bmatrix} F_u & J_u^\T & F_{u v} & 0 \\ J_u & 0 & 0 & 0 \\ F_{u v}^\T & 0 & F_v & J_v^\T \\ 0 & 0 & J_v & 0 \end{bmatrix}
= \begin{bmatrix} \mathsf{A} & \mathsf{B} \,\\ \mathsf{B}^\T & \mathsf{C} \,\end{bmatrix} $$
where we define
$$ 
\mathsf{A} = \begin{bmatrix} F_u & J_u^\T \,\\ J_u & 0 \,\end{bmatrix}, \qquad
\mathsf{B} = \begin{bmatrix} F_{u v} & 0 \,\\ 0 & 0 \,\end{bmatrix}, \qquad
\mathsf{C} = \begin{bmatrix} F_v & J_v^\T \,\\ J_v & 0 \,\end{bmatrix}.
$$
(We use sans serif font for these block matrices, such as $\mathsf{A}$, to distinguish them from parameter matrices such as $A$.)
Since $\diag((P\tilde{F}_{(u,v)}P^\T)^{-1}) = \diag(P\tilde{F}_{(u,v)}^{-1} P^\T) = P \diag(\tilde{F}_{(u,v)}^{-1})$, we can 
compute the diagonal of the inverse of $P\tilde{F}_{(u,v)}P^\T$ and then permute back to get $\diag(\tilde{F}_{(u,v)}^{-1})$.
By the formula for inversion of a $2\times 2$ block matrix,
$$ (P\tilde{F}_{(u,v)}P^\T)^{-1} = \begin{bmatrix} \mathsf{A} & \mathsf{B} \,\\ \mathsf{B}^\T & \mathsf{C} \,\end{bmatrix}^{-1} 
= \begin{bmatrix} \mathsf{A}^{-1} + \mathsf{A}^{-1} \mathsf{B} (\mathsf{C} - \mathsf{B}^\T \mathsf{A}^{-1} \mathsf{B})^{-1} \mathsf{B}^\T \mathsf{A}^{-1} & \ast \\ \ast & (\mathsf{C} - \mathsf{B}^\T \mathsf{A}^{-1} \mathsf{B})^{-1} \end{bmatrix} $$
where $\ast$ denotes entries that are not needed.  Similarly, by the same formula,
$$ \mathsf{A}^{-1} = \begin{bmatrix} F_u & J_u^\T \,\\ J_u & 0 \,\end{bmatrix}^{-1} = \begin{bmatrix} \mathsf{D} & \mathsf{E} \,\\ \mathsf{E}^\T & \ast \,\end{bmatrix} $$
where $\mathsf{D} = F_u^{-1} - F_u^{-1} J_u^\T (J_u F_u^{-1} J_u^\T)^{-1} J_u F_u^{-1}$ and $\mathsf{E} = F_u^{-1} J_u^\T (J_u F_u^{-1} J_u^\T)^{-1}$.
Thus, 
$$ (\mathsf{C} - \mathsf{B}^\T \mathsf{A}^{-1} \mathsf{B})^{-1} = \begin{bmatrix} F_v - F_{u v}^\T \mathsf{D} F_{u v} & J_v^\T \,\\ J_v & 0 \,\end{bmatrix}^{-1}
= \begin{bmatrix} \,\mathsf{G} & \,\ast \,\\ \ast & \ast \,\end{bmatrix}, $$
where $\mathsf{G}$ is defined to be the leading $J M \times J M$ block of this matrix.
Putting these pieces together justifies defining the approximate variance of each entry of $\vec{v} = \mathrm{vec}(V^\T)$ as
$$\widehat{\mathrm{var}}(\vec{v}) := \diag(G)$$
since these are the entries of $\diag((P\tilde{F}_{(u,v)}P^\T)^{-1})$ corresponding to $\vec{v}$.
To approximate the variance of the entries of $\vec{u} = \mathrm{vec}(U^\T)$, observe that
\begin{align*}
\mathsf{A}^{-1} \mathsf{B} (\mathsf{C} - \mathsf{B}^\T \mathsf{A}^{-1} \mathsf{B})^{-1} \mathsf{B}^\T \mathsf{A}^{-1} &=
\begin{bmatrix} \mathsf{D} & \mathsf{E} \,\\ \mathsf{E}^\T & \ast \,\end{bmatrix} 
\begin{bmatrix} F_{u v} & 0 \,\\ 0 & 0 \,\end{bmatrix}
\begin{bmatrix} \mathsf{G} & \ast \,\\ \ast & \ast \,\end{bmatrix}
\begin{bmatrix} F_{u v}^\T & 0 \,\\ 0 & 0 \,\end{bmatrix}
\begin{bmatrix} \mathsf{D} & \mathsf{E} \,\\ \mathsf{E}^\T & \ast \,\end{bmatrix} \\
& = 
\begin{bmatrix} \mathsf{D} F_{u v} \mathsf{G} F_{u v}^\T \mathsf{D} & \ast \,\\ \ast & \ast \,\end{bmatrix}.
\end{align*}
Therefore, we define
$$\widehat{\mathrm{var}}(\vec{u}) := \diag(\mathsf{D} + \mathsf{D} F_{u v} \mathsf{G} F_{u v}^\T \mathsf{D})$$
since these are the entries of $\diag((P\tilde{F}_{(u,v)}P^\T)^{-1})$ corresponding to $\vec{u}$.

\subsection{Full constraint-augmented Fisher information for GBMs}
\label{section:full-info-constraint-blocks}

To facilitate comparison with the classical approach, we derive the constraint-augmented Fisher information matrix for all of the components $(A,B,C,D,U,V)$ jointly, even though our approach only requires the matrix for $(U,V)$.
It is not necessary to include the log-dispersion parameters $(S,T,\omega)$ since the Fisher information (as well as the constraint Jacobian) between these parameters and $(A,B,C,D,U,V)$ is zero (see Section~\ref{section:gbm-cross-info}); thus, in the classical approach, inference for $(S,T,\omega)$ and $(A,B,C,D,U,V)$ can be performed independently since the constraint-augmented Fisher information decomposes.

For each of $A$, $B$, $C$, $D$, $U$, and $V$, we vectorize both the parameter matrix and the corresponding constraints, as follows.
For $A$, the constraint $Z^\T A = 0$ can be written as $g(A) = 0$ where $g(A) = \mathrm{vec}(A^\T Z)\in\R^{K L}$.
The constraint Jacobian for $\vec{a} = \mathrm{vec}(A^\T)$ is then $J_a := Z^\T \otimes \Id_K$, where $\otimes$ denotes the Kronecker product and $\Id_K$ is the $K\times K$ identity matrix.
Likewise, vectorizing the constraint on $B$ as $\mathrm{vec}(B^\T X) = 0$, the constraint Jacobian for $\vec{b} = \mathrm{vec}(B^\T)$ is $J_b := X^\T \otimes \Id_L$.

For uncertainty quantification, the key constraints on $U$ and $V$ are $X^\T U = 0$, $Z^\T V = 0$, $U^\T U = \Id$, and $V^\T V = \Id$. 
Vectorizing, the constraints on $U$ can be written as $\mathrm{vec}(U^\T X) = 0$ and $\mathrm{vec}(U^\T U - \Id) = 0$.
Thus, the constraint Jacobian for $\vec{u}$ is 
\begin{align*}
J_u := \begin{bmatrix} X_{1 \bc} \otimes \Id_M  & \cdots & X_{I \bc} \otimes \Id_M   \\ (U_{1 \bc} \otimes \Id_M) + (\Id_M \otimes U_{1 \bc}) & \cdots & (U_{I \bc} \otimes \Id_M) + (\Id_M \otimes U_{I \bc}) \end{bmatrix} 
\in \R^{(M K + M^2)\times I M}.
\end{align*}
Here, for a matrix $Q\in\R^{m\times n}$, we write $Q_{i \bc}$ and $Q_{\bc j}$ to denote the $i$th row and $j$th column as column vectors, respectively, that is, $Q_{i \bc} := (q_{i 1},\ldots,q_{i n})^\T \in \R^n$ and $Q_{\bc j} := (q_{1 j},\ldots,q_{m j})^\T \in \R^m$.
The constraint Jacobian for $\vec{v}$, namely $J_v$, is computed the same way as $J_u$ but with $V$, $Z$, $J$, and $L$ in place of $U$, $X$, $I$, and $K$, respectively.
There are no constraints on $C$, and the remaining constraints on $D$, $U$, and $V$ do not reduce the dimensionality of the parameter space.
Thus, altogether, the constraint Jacobian for $(A,B,C,D,U,V)$ is
$$ \mathcal{J} := \begin{bmatrix} J_a & 0 & 0 & 0 & 0 \\ 0 & J_b & 0 & 0 & 0 \\ 0 & 0 & 0 & J_u & 0 \\ 0 & 0 & 0 & 0 & J_v \end{bmatrix} $$
where the column of zero blocks corresponding to $(C,D)$ has width $K L + M$ and height $J K + I L + I M + J M$.

Let $\mathcal{I} = \E(-\nabla^2 \L)$ denote the full Fisher information matrix for $(A,B,C,D,U,V)$.
Formulas for all of the entries of $\mathcal{I}$ are given in Equation~\ref{equation:edf-gbm-fisher} and in Section~\ref{section:gbm-cross-info}.
The full constraint-augmented Fisher information matrix for all of the parameters is
$$ \tilde{\mathcal{I}} := \begin{bmatrix} \mathcal{I} & \mathcal{J}^\T \\ \mathcal{J} & 0 \end{bmatrix}. $$
The classical approach is then to define approximate standard errors as the square roots of the entries of $\diag(\tilde{\mathcal{I}}^{-1})$ corresponding to each component (Section~\ref{section:augmented-fisher-technique} and \citealp{silvey1975statistical}); for instance, the first $J K$ entries of $\diag(\tilde{\mathcal{I}}^{-1})$ are the variances of the entries of $\mathrm{vec}(A^\T)$.


\section{Step-by-step estimation algorithm}
\label{section:estimation-details}

Given the inputs and preprocessing as described in Section~\ref{section:estimation}, the algorithm is as follows.

\subsection{Initialization procedure}
\label{section:initialization}

We initialize the GBM estimation algorithm by (a) solving for values of $A$, $B$, and $C$ 
to minimize the sum-of-squares of the GBM residuals $\varepsilon_{i j}$,
(b) randomly initializing $D$, $U$, and $V$, and in the NB-GBM case, (c) iteratively updating $S$, $T$, and $\omega$ for a few iterations.
This approach has the advantage of being simple, fast, and effective; see below for details.

It is somewhat tricky to initialize the algorithm well due to a chicken-and-egg problem.
The issue is that having decent estimates of $S$ and $T$ is important to avoid overfitting to outliers,
but we need a reasonable estimate of the mean matrix in order to estimate $S$ and $T$.
Our solution to this problem is to exclude $D$, $U$, and $V$ from the initial fitting of $A$, $B$, and $C$ in step (a) above.
This prevents the latent factors from overfitting to outlier samples or outlier features, thus helping avoid getting stuck at a suboptimal point.
In detail, we initialize as follows.

\begin{enumerate}[label=\textup{(\arabic*)},noitemsep,topsep=0pt]
\item Compute $\check{Y}_{i j} = g(Y_{i j} + \epsilon)$ where $\epsilon = 1/8$, and set $\check{Y} = (\check{Y}_{i j})\in\R^{I\times J}$.
\item $C \gets X^\ps \check{Y} (Z^\ps)^\T$
\item $A \gets (X^\ps \check{Y} - C Z^\T)^\T$
\item $B \gets \check{Y} (Z^\ps)^\T - X C$
\item Sample $q_{i j}\sim\N(0, 10^{-16})$ i.i.d.\ for all $i,j$, and set $Q = (q_{i j})\in\R^{I\times J}$.
\item Compute $U$, $D$, and $V$ such that $U D V^\T = Q$ is the compact SVD of rank $M$. 
\item Initialize $S = 0$, $T = 0$, and $\omega = 0$, and then run the updates to $S$ and $T$ as defined in Section~\ref{section:detailed-updates} for 4 iterations, using the current values of $A$, $B$, $C$, $D$, $U$, and $V$. Note that this also modifies $\omega$.
\end{enumerate}

\subsection{Updates to each component of the model}
\label{section:detailed-updates}

In this section, we provide step-by-step algorithms for updating each component of the model
using the optimization-projection approach.
The optimization part of each update is based on the bounded regularized Fisher scoring step in Equation~\ref{equation:generic-fisher-update},
using the formulas for the gradients and Fisher information matrices derived in Section~\ref{section:gbm-derivatives}.
The projection part of each update is based on Theorem~\ref{theorem:projections}.
In the updates to $G = U D$ and $H = V D$, we use the priors on $G$ and $H$ induced by the priors on $U$ and $V$, given $D$;
see Section~\ref{section:priors}.

For a matrix $Q\in\R^{m\times n}$, we denote 
$Q_{i \bc} := (q_{i 1},\ldots,q_{i n})^\T \in \R^n$,
$Q_{\bc j} := (q_{1 j},\ldots,q_{m j})^\T \in \R^m$,
$Q_{i *} = \Diag(Q_{i \bc})$,
$Q_{* j} = \Diag(Q_{\bc j})$,
$\mathrm{vec}(Q)$ is the column-wise vectorization of $Q$, and
$\mathrm{block}(Q_{i j} : i,j\in\{1,\ldots,n\})$ is the block matrix with blocks $Q_{i j}$.
When multiplying by a diagonal matrix such as $Q_{i *}$ or $Q_{* j}$, we do not allocate the full diagonal matrix and perform matrix multiplication. Instead, it is much more efficient to simply multiply each row (when left-multiplying) or column (when right-multiplying) by the corresponding diagonal entry.

\subsubsection*{Computing $\eta$, $\mu$, $W$, and $E$}
    \begin{enumerate}[label=\textup{(\arabic*)},noitemsep,topsep=0pt]
    \item $\eta\gets X A^\T + B Z^\T + X C Z^\T + U D V^\T$
    \item For $i = 1,\ldots,I$ and $j = 1,\ldots,J$,
        \begin{enumerate}[label=\textup{(\alph*)},noitemsep,topsep=0pt]
        \item $\mu_{i j} \gets \exp(\eta_{i j})$
        \item $w_{i j} \gets r_{i j} \mu_{i j} / (r_{i j} + \mu_{i j})$
        \item $e_{i j} \gets (Y_{i j} - \mu_{i j}) w_{i j} / \mu_{i j}$
        \end{enumerate}
    \end{enumerate}

\subsubsection*{Updating $A$}
    \begin{enumerate}[label=\textup{(\arabic*)},noitemsep,topsep=0pt]
    \item Recompute $W$ and $E$ using the current parameter estimates.
    \item For $j = 1,\ldots,J$
        \begin{enumerate}[label=\textup{(\alph*)},noitemsep,topsep=0pt]
        \item $\xi \gets (X^\T W_{* j} X + \lambda_a \Id)^{-1} (X^\T E_{\bc j} - \lambda_a A_{j \bc})$ ~~~~(compute Fisher scoring step)
        \item $A_{j\bc} \gets A_{j\bc} + \xi \min\{1,\, \rho\sqrt{K} / \|\xi\|\}$ ~~~~(apply modified step to $A_{j\bc}$)
        \end{enumerate}
    \item $Q \gets Z^\ps A$ ~~~(efficiently structure computation of projection)
    \item $A \gets A - Z Q$ ~~~(enforce $Z^\T A = 0$ by projecting onto nullspace of $Z^\T$)
    \item $C \gets C + Q^\T$ ~~~(compensate to preserve likelihood)
    \end{enumerate}
    
\subsubsection*{Updating $B$}
    \begin{enumerate}[label=\textup{(\arabic*)},noitemsep,topsep=0pt]
    \item Recompute $W$ and $E$ using the current parameter estimates.
    \item For $i = 1,\ldots,I$
        \begin{enumerate}[label=\textup{(\alph*)},noitemsep,topsep=0pt]
        \item $\xi \gets (Z^\T W_{i *} Z + \lambda_b \Id)^{-1} (Z^\T E_{i \bc} - \lambda_b B_{i \bc})$ ~~~~(compute Fisher scoring step)
        \item $B_{i\bc} \gets B_{i\bc} + \xi \min\{1,\, \rho\sqrt{L} / \|\xi\|\}$ ~~~~(apply modified step to $B_{i\bc}$)
        \end{enumerate}
    \item $Q \gets X^\ps B$ ~~~(efficiently structure computation of projection)
    \item $B \gets B - X Q$  ~~~(enforce $X^\T B = 0$ by projecting onto nullspace of $X^\T$)
    \item $C \gets C + Q$ ~~~(compensate to preserve likelihood)
    \end{enumerate}
    
\subsubsection*{Updating $C$}
    \begin{enumerate}[label=\textup{(\arabic*)},noitemsep,topsep=0pt]
    \item Recompute $W$ and $E$ using the current parameter estimates.
    \item $F \gets \mathrm{block}\Big({\textstyle\sum_{j=1}^J} z_{j\ell}z_{j\ell'}(X^\T W_{* j} X) : \ell,\ell'\in\{1,\ldots,L\}\Big)$ ~~~~(compute Fisher info)
    \item $\xi \gets (F + \lambda_c \Id)^{-1} (\mathrm{vec}(X^\T E Z) - \lambda_c \mathrm{vec}(C))$ ~~~~(compute Fisher scoring step)
    \item $\mathrm{vec}(C) \gets \mathrm{vec}(C) + \xi \min\{1,\, \rho\sqrt{K L} / \|\xi\|\}$ ~~~(apply modified step to $C$)
    \end{enumerate}
    
\subsubsection*{Updating $D$}
    \begin{enumerate}[label=\textup{(\arabic*)},noitemsep,topsep=0pt]
    \item Recompute $W$ and $E$ using the current parameter estimates.
    \item $F \gets \sum_{j=1}^J (U V_{j *})^\T W_{* j} (U V_{j *})$ ~~~~(compute Fisher information)
    \item $\xi \gets (F + \lambda_d \Id)^{-1} (\diag(U^\T E V) - \lambda_d\, \mathrm{diag}(D))$ ~~~~(compute Fisher scoring step)
    \item $\mathrm{diag}(D) \gets \mathrm{diag}(D) + \xi \min\{1,\, \rho\sqrt{M} / \|\xi\|\}$ ~~~(apply modified step to $D$)
    \end{enumerate}
    
\subsubsection*{Updating $G = U D$}
    \begin{enumerate}[label=\textup{(\arabic*)},noitemsep,topsep=0pt]
    \item Recompute $W$ and $E$ using the current parameter estimates.
    \item $G \gets U D$
    \item $\Lambda \gets (D^2 / \lambda_u)^{-1}$ ~~~~(precision matrix for prior on each row of $G$)
    \item For $i = 1,\ldots,I$
        \begin{enumerate}[label=\textup{(\alph*)},noitemsep,topsep=0pt]
        \item $\xi \gets (V^\T W_{i *} V + \Lambda)^{-1} (V^\T E_{i \bc} - \Lambda G_{i \bc})$ ~~~~(compute Fisher scoring step)
        \item $G_{i\bc} \gets G_{i\bc} + \xi \min\{1,\, \rho\sqrt{M} / \|\xi\|\}$ ~~~~(apply modified step to $G_{i\bc}$)
        \end{enumerate}
    \item $Q \gets X^\ps G$ ~~~(efficiently structure computation of projection)
    \item $G \gets G - X Q$ ~~~(enforce $X^\T G = 0$ by projecting onto nullspace of $X^\T$)
    \item $A \gets A + V Q^\T$  ~~~(compensate to preserve likelihood)
    \item $Q \gets Z^\ps A$ ~~~(efficiently structure computation of projection)
    \item $A \gets A - Z Q$ ~~~(enforce $Z^\T A = 0$ by projecting onto nullspace of $Z^\T$)
    \item $C \gets C + Q^\T$ ~~~(compensate to preserve likelihood)
    \item Run compact SVD of rank $M$ on $G V^\T$, yielding $U$, $D$, $V$ such that $U D V^\T = G V^\T$.
    \end{enumerate}
    
\subsubsection*{Updating $H = V D$}
    \begin{enumerate}[label=\textup{(\arabic*)},noitemsep,topsep=0pt]
    \item Recompute $W$ and $E$ using the current parameter estimates.
    \item $H \gets V D$
    \item $\Lambda \gets (D^2 / \lambda_v)^{-1}$ ~~~~(precision matrix for prior on each row of $H$)
    \item For $j = 1,\ldots,J$
        \begin{enumerate}[label=\textup{(\alph*)},noitemsep,topsep=0pt]
        \item $\xi \gets (U^\T W_{* j} U + \Lambda)^{-1} (U^\T E_{\bc j} - \Lambda H_{j \bc})$ ~~~~(compute Fisher scoring step)
        \item $H_{j\bc} \gets H_{j\bc} + \xi \min\{1,\, \rho\sqrt{M} / \|\xi\|\}$ ~~~~(apply modified step to $H_{j\bc}$)
        \end{enumerate}
    \item $Q \gets Z^\ps H$ ~~~(efficiently structure computation of projection)
    \item $H \gets H - Z Q$ ~~~(enforce $Z^\T H = 0$ by projecting onto nullspace of $Z^\T$)
    \item $B \gets B + U Q^\T$  ~~~(compensate to preserve likelihood)
    \item $Q \gets X^\ps B$ ~~~(efficiently structure computation of projection)
    \item $B \gets B - X Q$ ~~~(enforce $X^\T B = 0$ by projecting onto nullspace of $X^\T$)
    \item $C \gets C + Q$ ~~~(compensate to preserve likelihood)
    \item Run compact SVD of rank $M$ on $U H^\T$, yielding $U$, $D$, $V$ such that $U D V^\T = U H^\T$.
    \end{enumerate}
    
\subsubsection*{Updating $S$ in an NB-GBM}
    For the updates to $S$ and $T$, we employ adaptive maximum step sizes $\rho_{s i}$ and $\rho_{t j}$ for $s_i$ and $t_j$, respectively.
    This helps prevent occasional lack of convergence due to oscillating estimates.
    At the start of the algorithm, we initialize $\rho_{s i} \gets \rho$ and $\rho_{t j} \gets \rho$.
    Define $\mathrm{log1p}(x)$, $\psi_\Delta(y,r)$, and $\psi'_\Delta(y,r)$ as in Equation~\ref{equation:special-funcs}.
    Note that we do not explicitly update $\omega$, since $\omega$ is implicitly updated in the projection part of the updates to $S$ and $T$.  
    \begin{enumerate}[label=\textup{(\arabic*)},noitemsep,topsep=0pt]
    \item Compute $\mu$ using the current parameter estimates.
    \item For $i = 1,\ldots,I$ and $j = 1,\ldots,J$, ~~~~(differentiate each term in the log-likelihood)
        \begin{enumerate}[label=\textup{(\alph*)},noitemsep,topsep=0pt]
        \item $\delta_{i j} \gets -r_{i j} \Big(\psi_\Delta(Y_{i j},r_{i j}) - \mathrm{log1p}(\mu_{i j}/r_{i j}) - (Y_{i j}-\mu_{i j})/(r_{i j}+\mu_{i j})\Big)$
        \item $\delta'_{i j} \gets -\delta_{i j} + r_{i j}^2 \psi'_\Delta(Y_{i j},r_{i j}) + (Y_{i j} + \mu_{i j}^2 / r_{i j})/(1 + \mu_{i j}/r_{i j})^2$
        \end{enumerate}
    \item\label{item:updating-s-loop} For $i = 1,\ldots,I$
        \begin{enumerate}[label=\textup{(\alph*)},noitemsep,topsep=0pt]
        \item $g \gets -\lambda_s (s_i - m_s) + \sum_{j=1}^J \delta_{i j}$ ~~~~(derivative of log-posterior with respect to $s_i$)
        \item $h \gets -\lambda_s + \sum_{j=1}^J \delta'_{i j}$ ~~~~(second derivative of log-posterior with respect to $s_i$)
        \item If $h < 0$ then $\xi \gets -g/h$, otherwise, $\xi \gets g$. (Newton if valid, otherwise gradient)
        \item $s_i \gets s_i + \xi \min\{1,\, \rho_{s i} / |\xi|\}$ ~~~(apply modified optimization step to $s_i$)
        \item If $|\xi| > \rho_{s i}$ then $\rho_{s i} \gets \rho_{s i}/2$, otherwise, $\rho_{s i} \gets \rho$. ~~~(adapt maximum step size)
        \end{enumerate}
    \item\label{item:updating-s-shift} $c \gets \log(\frac{1}{I} \sum_{i=1}^I e^{s_i})$
    \item\label{item:updating-s-project} $S \gets S - c$ ~~~(enforce constraint by projecting)
    \item $\omega \gets \omega + c$ ~~~(compensate to preserve likelihood)
    \item $r_{i j} \gets \exp(-s_i-t_j-\omega)$ for $i=1,\ldots,I$ and $j=1,\ldots,J$ ~~~(update inverse dispersions)
    \end{enumerate}
    
\subsubsection*{Updating $T$ in an NB-GBM}
Steps (1)-(2) and (7) are the same as in the update to $S$.  Steps (3)-(6) become:
    \begin{enumerate}[label=\textup{(\arabic*)},noitemsep,topsep=0pt]
    \item[(3)] For $j = 1,\ldots,J$
        \begin{enumerate}[label=\textup{(\alph*)},noitemsep,topsep=0pt]
        \item $g \gets -\lambda_t (t_j - m_t) + \sum_{i=1}^I \delta_{i j}$ ~~~~(derivative of log-posterior with respect to $t_j$)
        \item $h \gets -\lambda_t + \sum_{i=1}^I \delta'_{i j}$ ~~~~(second derivative of log-posterior with respect to $t_j$)
        \item If $h < 0$ then $\xi \gets -g/h$, otherwise, $\xi \gets g$. (Newton if valid, otherwise gradient)
        \item $t_j \gets t_j + \xi \min\{1,\, \rho_{t j} / |\xi|\}$ ~~~(apply modified optimization step to $s_i$)
        \item If $|\xi| > \rho_{t j}$ then $\rho_{t j} \gets \rho_{t j}/2$, otherwise, $\rho_{t j} \gets \rho$. ~~~(adapt maximum step size)
        \end{enumerate}
    \item[(4)] $c \gets \log(\frac{1}{J} \sum_{j=1}^J e^{t_j})$
    \item[(5)] $T \gets T - c$ ~~~(enforce constraint by projecting)
    \item[(6)] $\omega \gets \omega + c$ ~~~(compensate to preserve likelihood)
    \end{enumerate}

\subsubsection*{Bias correction for $S$ and $T$ in an NB-GBM}
Empirically, when the true values of $S$ and $T$ are low, the maximum likelihood estimates tend to exhibit a downward bias.
Occasionally, this leads to massive underestimation of some of the log-dispersion values.
This issue is mitigated somewhat by using a prior to shrink the estimates toward zero, however, 
it seems difficult to tune the prior to appropriately balance the bias.
Thus, we employ the following simple bias correction procedure, applied after the final iteration of the estimation algorithm.
Choose lower bounds $s_*$ and $t_*$ on $s_i$ and $t_j$, respectively; we use $s_* = t_* = -4$ as defaults.
    \begin{enumerate}[label=\textup{(\arabic*)},noitemsep,topsep=0pt]
    \item $s_i \gets s_* + \log(\exp(s_i - s_*) + 1)$ for $i = 1,\ldots,I$ ~~~~(apply bias correction to $S$)
    \item $c \gets \log(\frac{1}{I} \sum_{i=1}^I e^{s_i})$
    \item $S \gets S - c$ ~~~(enforce constraint by projecting)
    \item $\omega \gets \omega + c$ ~~~(compensate to preserve likelihood)
    \end{enumerate}
The same procedure is applied to $T$, with $t_*$ in place of $s_*$.
We find that this improves the accuracy of the log-dispersion estimates when the true values are at the low end.

\subsection{Remarks on the estimation algorithm}
We continue iterating until either
(a) the relative change in log-likelihood+log-prior (Equations~\ref{equation:log-likelihood} and \ref{equation:log-prior}) from one iteration to the next is less than the convergence tolerance $\tau$
or (b) the maximum number of iterations has been reached.

In the updates to $G = U D$ and $H = V D$, we use the compact SVD to enforce the constraints on $D$, $U$, and $V$.
Fast computation of the compact SVD can be done using procedures for the truncated SVD, which allows one to specify the rank (that is, the number of latent factors $M$).
Procedures for the truncated SVD are available in many programming languages.
It is not necessary to enforce the ordering and sign constraints on $D$, $U$, and $V$ (Conditions~\ref{condition:diagonal} and \ref{condition:sign}) during the iterative updates since both the likelihood and the prior are invariant to the order and sign of the latent factors.

Note that, due to the symmetry of the model, the updates for $A$ and $B$ are similar enough that a single function can be used to compute both of them,
with an option to handle the transpose for $C$.
Likewise, a single function can be used to compute both the $U D$ and $V D$ updates, with an option to handle transposes appropriately.

\section{Step-by-step inference algorithm}
\label{section:inference-details}

\textbf{Notation.}
For matrices $A$ and $B$, we use
$A \otimes B$ to denote the Kronecker product and $A \odot B$ for the element-wise product.
Normally, we write $A B$ for matrix multiplication, but for improved clarity we use $A \times B$ to denote matrix multiplication when multi-letter variables such as $\tt{invFc}$ are involved.
We write $\mathrm{hcat}(A_1,\ldots,A_n)$ for the horizontal concatenation of matrices $A_1,\ldots,A_n$, 
that is, $\mathrm{hcat}(A_1,\ldots,A_n) := [A_1 \; \cdots \; A_n]$.
Likewise, $\mathrm{vcat}$ denotes vertical concatenation.
We define $\mathrm{block}(j,K) := ((j-1) K + 1, (j-1) K + 2, \ldots, j K)$.
For a matrix $A\in\R^{m\times n}$, $\mathrm{colsums}(A)$ denotes the vector of column sums, that is, 
$\mathrm{colsums}(A) = (\sum_i a_{i 1},\ldots,\sum_i a_{i n})^\T \in \R^n$.
Likewise, $\mathrm{rowsums}$ denotes the row sums.
For a vector $x\in\R^{m n}$, we define $\mathrm{reshape}(x,m,n)$ to be the matrix $A\in\R^{m\times n}$ such that $x = \mathrm{vec}(A)$.
For a vector $x\in\R^n$, we write $\Diag(x)$ to denote the $n\times n$ diagonal matrix with $x$ on the diagonal.
For a matrix $A\in\R^{m\times n}$ and vectors $x\in\R^m$, $y\in\R^n$, with $m\neq n$, we extend the $\odot$ operator as follows:
$A \odot x = x \odot A := \Diag(x) A$ and $A \odot y = y \odot A := A\, \Diag(y)$.
We write $\mathrm{sqrt}(\cdot)$ to denote the element-wise square root.
We use $\Id_n$ to denote the $n\times n$ identity matrix.
    
\bigskip
\noindent\textbf{Preprocessing.}
    \begin{enumerate}[label=\textup{(\arabic*)},noitemsep,topsep=0pt]
    \item Compute the inverse dispersions $r_{i j} \gets \exp(-s_i-t_j-\omega)$ for all $i,j$.
    \item Compute $\mu$, $W$, and $E$ as in Section~\ref{section:estimation-details}.
    \item Compute $\tt{dWM} \in \R^{I\times J}$ where $\tt{dWM}_{i j} \gets \mu_{i j} r_{i j}^2 / (r_{i j} + \mu_{i j})^2$.
    \item Compute $\tt{dEM} \in \R^{I\times J}$ where $\tt{dEM}_{i j} \gets -\mu_{i j} r_{i j} (r_{i j} + Y_{i j}) / (r_{i j} + \mu_{i j})^2$.
    \item $\tt{gradA} \gets E^\T X$
    \item $\tt{gradB} \gets E Z$
    \item $\tt{gradC} \gets X^\T E Z$
    \item Compute $\delta_{i j}$ and $\delta'_{i j}$ for all $i,j$ using the formula from the $S$ update during estimation.
    \item $\tt{gradS}_i \gets -\lambda_s (s_i - m_s) + \sum_{j = 1}^J \delta_{i j}$ for $i = 1,\ldots,I$.
    \item $\tt{gradT}_j \gets -\lambda_t (t_j - m_t) + \sum_{i = 1}^I \delta_{i j}$ for $j = 1,\ldots,J$.
    \end{enumerate}
\bigskip
\noindent\textbf{Compute conditional uncertainty for each component.}
    \begin{enumerate}[label=\textup{(\arabic*)},noitemsep,topsep=0pt]
    \item $\tt{invFa}_j \gets (X^\T W_{* j} X + \lambda_a \Id)^{-1}$ for $j = 1,\ldots,J$.
    \item $\tt{invFb}_i \gets (Z^\T W_{i *} Z + \lambda_b \Id)^{-1}$ for $i = 1,\ldots,I$.
    \item $\tt{invFc} \gets (\E(-\nabla_{\!\vec{c}}^2\,\L) + \lambda_c \Id)^{-1}$ where $\E(-\nabla_{\!\vec{c}}^2\,\L)$ is given in Equation~\ref{equation:edf-gbm-fisher}.
    \item $\tt{invFu}_i \gets ((V D)^\T W_{i *} (V D) + \lambda_u \Id)^{-1}$ for $i = 1,\ldots,I$.
    \item $\tt{invFv}_j \gets ((U D)^\T W_{* j} (U D) + \lambda_v \Id)^{-1}$ for $j = 1,\ldots,J$.
    \item $\tt{invFs}_i \gets 1/(\lambda_s - \sum_{j=1}^J \delta'_{i j})$ for all $i = 1,\ldots,I$.
    \item $\tt{invFt}_j \gets 1/(\lambda_t - \sum_{i=1}^I \delta'_{i j})$ for all $j = 1,\ldots,J$.
    \item $\tt{invFs} \gets (\tt{invFs}_1,\ldots,\tt{invFs}_I)^\T$
    \item $\tt{invFt} \gets (\tt{invFt}_1,\ldots,\tt{invFt}_J)^\T$
    \end{enumerate}
\bigskip
\noindent\textbf{Compute constraint Jacobians for $U$ and $V$.}
    \begin{enumerate}[label=\textup{(\arabic*)},noitemsep,topsep=0pt]
    \item $\tt{Ju}_i \gets \mathrm{vcat}(X_{i \bc} \otimes \Id_M, \; (U_{i \bc} \otimes \Id_M) + (\Id_M \otimes U_{i \bc}))$ for $i = 1,\ldots,I$.
    \item $\tt{Ju} \gets \mathrm{hcat}(\tt{Ju}_1, \ldots, \tt{Ju}_I)$
    
    \item $\tt{Jv}_j \gets \mathrm{vcat}(Z_{j \bc} \otimes \Id_M, \; (V_{j \bc} \otimes \Id_M) + (\Id_M \otimes V_{j \bc}))$ for $j = 1,\ldots,J$.
    \item $\tt{Jv} \gets \mathrm{hcat}(\tt{Jv}_1, \ldots, \tt{Jv}_J)$
    
    \end{enumerate}
\bigskip
\noindent\textbf{Compute joint uncertainty in $(U,V)$ accounting for constraints.}
    \begin{enumerate}[label=\textup{(\arabic*)},noitemsep,topsep=0pt]
    \item $\tt{Fuv}_i \gets \mathrm{hcat}(w_{i 1} (D V_{1\bc}) (D U_{i\bc})^\T, \ldots, w_{i J} (D V_{J\bc}) (D U_{i\bc})^\T)$ for $i = 1,\ldots,I$
    \item $\tt{Fuv} \gets \mathrm{vcat}(\tt{Fuv}_1, \ldots, \tt{Fuv}_I)$
    \item $\tt{FJ} \gets \mathrm{vcat}(\tt{invFu}_1 \times \tt{Ju}_1^\T, \ldots, \tt{invFu}_I \times \tt{Ju}_I^\T)$
    \item $\tt{FuvFJ} \gets \tt{Fuv}^\T \times \tt{FJ}$
    \item $\tt{invJFJ} \gets (\tt{Ju} \times \tt{FJ})^{-1}$
    \item $\tt{FFuv} \gets \mathrm{vcat}(\tt{invFu}_1 \times \tt{Fuv}_1, \ldots, \tt{invFu}_I \times \tt{Fuv}_I)$
    \item $\tt{FuvFFuv} \gets \tt{Fuv}^\T \times \tt{FFuv}$
    \item $\tt{Fv} \gets$ block diagonal matrix with $j$th block equal to $(U D)^\T W_{* j} (U D) + \lambda_v \Id$.
    \item $\tt{A} \gets \tt{Fv} - \tt{FuvFFuv} + \tt{FuvFJ} \times \tt{invJFJ} \times \tt{FuvFJ}^\T$
    \item $\tt{B} \gets \displaystyle \begin{bmatrix}\tt{A} & \tt{Jv}^\T \,\\ \tt{Jv} & 0 \,\end{bmatrix}^{-1}$
    \item $\tt{C} \gets \tt{B}[1 \! : \! J M, \, 1 \! : \! J M]$ (that is, $\tt{C}$ is the leading $J M\times J M$ block of $\tt{B}$)
    \item $\tt{FuvD} \gets \tt{FFuv}^\T - \tt{FuvFJ}\times \tt{invJFJ} \times \tt{FJ}^\T$
    \item $\tt{d} \gets \mathrm{vcat}(\diag(\tt{invFu}_1), \ldots, \diag(\tt{invFu}_I))$
    \item $\tt{f} \gets \mathrm{colsums}(\tt{FJ}^\T \odot (\tt{invJFJ} \times \tt{FJ}^\T))$
    \item $\tt{g} \gets \mathrm{colsums}(\tt{FuvD} \odot (\tt{C} \times \tt{FuvD}))$
    \item $\tt{varU} \gets \tt{d} - \tt{f} + \tt{g}$
    \item $\tt{varV} \gets \diag(\tt{C})$
    \end{enumerate}
\bigskip
\noindent\textbf{Propagate uncertainty from $U$ to $A$.}
    \begin{enumerate}[label=\textup{(\arabic*)},noitemsep,topsep=0pt]
    \item For $j = 1,\ldots,J$,
        \begin{enumerate}[label=\textup{(\alph*)},noitemsep,topsep=0pt]
        \item $\tt{Q}_j \gets (-\tt{invFa}_j \times (X \odot \tt{dWM}_{\bc j})^\T) \odot (X \times \tt{invFa}_j \times \tt{gradA}_{j\bc})^\T + \tt{invFa}_j \times (X \odot \tt{dEM}_{\bc j})^\T$
        \end{enumerate}
    \item $\tt{dA} \gets \mathrm{vcat}(\tt{Q}_1 \otimes (D V_{1 \bc})^\T, \ldots, \tt{Q}_J \otimes (D V_{J \bc})^\T)$
    \item $\tt{varAfromU} \gets \mathrm{colsums}(\tt{dA}^\T \odot (\tt{varU} \odot \tt{dA}^\T))$
    \end{enumerate}
\bigskip
\noindent\textbf{Propagate uncertainty from $V$ to $A$.}
    \begin{enumerate}[label=\textup{(\arabic*)},noitemsep,topsep=0pt]
    \item For $j = 1,\ldots,J$,
    \begin{enumerate}[label=\textup{(\alph*)},noitemsep,topsep=0pt]
        \item Initialize $\tt{dA} \in \R^{K \times M}$ to all zeros.
        \item For $m = 1,\ldots,M$,
            \begin{enumerate}[label=\textup{(\roman*)},noitemsep,topsep=0pt]
            \item $\tt{XdE} \gets X^\T (\tt{dEM}_{\bc j} \odot (D_m U_{\bc m}))$
            \item $\tt{XdWX} \gets (X^\T ((\tt{dWM}_{\bc j} \odot (D_m U_{\bc m})) \odot X))$
            \item $\tt{dA}_{\bc m} \gets (-\tt{invFa}_j \times \tt{XdWX}) \times (\tt{invFa}_j \times \tt{gradA}_{j\bc}) + \tt{invFa}_j \times \tt{XdE}$
            \end{enumerate}
        \item $\tt{varAfromV}_j \gets \mathrm{colsums}(\tt{dA}^\T \odot (\tt{varV}[\mathrm{block}(j,M)] \odot \tt{dA}^\T))$
        \end{enumerate}
    \item $\tt{varAfromV} \gets \mathrm{vcat}(\tt{varAfromV}_1,\ldots,\tt{varAfromV}_J)$
    \end{enumerate}
\bigskip
\noindent\textbf{Propagate uncertainty from $U$ and $V$ to $B$.}
    \begin{enumerate}[label=\textup{(\arabic*)},noitemsep,topsep=0pt]
    \item Computing $\tt{varBfromU}$ is identical to calculating $\tt{varAfromV}$, but with 
    $Z$, $V$, $\tt{varU}$, $\tt{invFb}$, $\tt{gradB}$, $\tt{dWM}^\T$, and $\tt{dEM}^\T$ in place of 
    $X$, $U$, $\tt{varV}$, $\tt{invFa}$, $\tt{gradA}$, $\tt{dWM}$, and $\tt{dEM}$, respectively.
    \item Computing $\tt{varBfromV}$ is identical to calculating $\tt{varAfromU}$, but with 
    $Z$, $U$, $\tt{varV}$, $\tt{invFb}$, $\tt{gradB}$, $\tt{dWM}^\T$, and $\tt{dEM}^\T$ in place of 
    $X$, $V$, $\tt{varU}$, $\tt{invFa}$, $\tt{gradA}$, $\tt{dWM}$, and $\tt{dEM}$, respectively.
    \end{enumerate}
\bigskip
\noindent\textbf{Propagate uncertainty from $A$ to $C$.}
    \begin{enumerate}[label=\textup{(\arabic*)},noitemsep,topsep=0pt]
    \item Initialize $\tt{dC} \in \R^{K L \times J K}$ to all zeros.
    \item For $j = 1,\ldots,J$ and $k = 1,\ldots,K$,
        \begin{enumerate}[label=\textup{(\alph*)},noitemsep,topsep=0pt]
        \item $\tt{dF} \gets (Z_{j\bc} Z_{j\bc}^\T) \otimes (X^\T ((\tt{dWM}_{\bc j} \odot X_{\bc k}) \odot X))$
        \item $\tt{dgradC} \gets (X^\T \times (\tt{dEM}_{\bc j} \odot X_{\bc k})) Z_{j \bc}^\T$
        \item $\tt{dC}[\bc,(j-1)K + k] \gets \tt{invFc}\times (-\tt{dF} \times (\tt{invFc}\times \mathrm{vec}(\tt{gradC})) + \mathrm{vec}(\tt{dgradC}))$
        \end{enumerate}
    \item $\tt{invFdC}_j \gets \tt{invFa}_j \times \tt{dC}[\bc,\mathrm{block}(j,K)]^\T$ for $j = 1,\ldots,J$
    \item $\tt{invFdC} \gets \mathrm{vcat}(\tt{invFdC}_1,\ldots,\tt{invFdC}_J)$
    \item $\tt{varCfromA} \gets \mathrm{colsums}(\tt{dC}^\T \odot \tt{invFdC})$
    \end{enumerate}
\bigskip
\noindent\textbf{Propagate uncertainty from $B$ to $C$.}
    \begin{enumerate}[label=\textup{(\arabic*)},noitemsep,topsep=0pt]
    \item Initialize $\tt{dC} \in \R^{K L \times I L}$ to all zeros.
    \item For $i = 1,\ldots,I$ and $\ell = 1,\ldots,L$,
        \begin{enumerate}[label=\textup{(\alph*)},noitemsep,topsep=0pt]
        \item $\tt{dF} \gets (Z^\T ((\tt{dWM}_{i \bc} \odot Z_{\bc \ell}) \odot Z)) \otimes (X_{i\bc} X_{i\bc}^\T)$
        \item $\tt{dgradC} \gets X_{i \bc} ((\tt{dEM}_{i \bc} \odot Z_{\bc \ell})^\T \times Z)$
        \item $\tt{dC}[\bc,(i-1)L + \ell] \gets \tt{invFc}\times (-\tt{dF} \times (\tt{invFc}\times \mathrm{vec}(\tt{gradC})) + \mathrm{vec}(\tt{dgradC}))$
        \end{enumerate}
    \item $\tt{invFdC}_i \gets \tt{invFb}_i \times \tt{dC}[\bc,\mathrm{block}(i,L)]^\T$ for $i = 1,\ldots,I$
    \item $\tt{invFdC} \gets \mathrm{vcat}(\tt{invFdC}_1,\ldots,\tt{invFdC}_I)$
    \item $\tt{varCfromB} \gets \mathrm{colsums}(\tt{dC}^\T \odot \tt{invFdC})$
    \end{enumerate}
\bigskip
\noindent\textbf{Compute approximate variances for $A$, $B$, and $C$.}
    \begin{enumerate}[label=\textup{(\arabic*)},noitemsep,topsep=0pt]
    \item $\tt{varA} \gets \mathrm{vcat}(\diag(\tt{invFa}_1), \ldots, \diag(\tt{invFa}_J)) + \tt{varAfromU} + \tt{varAfromV}$
    \item $\tt{varB} \gets \mathrm{vcat}(\diag(\tt{invFb}_1), \ldots, \diag(\tt{invFb}_I)) + \tt{varBfromU} + \tt{varBfromV}$
    \item $\tt{varC} \gets \diag(\tt{invFc}) + \tt{varCfromA} + \tt{varCfromB}$
    \end{enumerate}
\bigskip
\noindent\textbf{Propagate uncertainty from $(A,B,U,V)$ to $S$.}\\
First, we describe how to compute $\tt{varSfromU}$ and $\tt{varSfromV}$.
    \begin{enumerate}[label=\textup{(\arabic*)},noitemsep,topsep=0pt]
    \item Compute $Q\in\R^{I\times J}$ where $q_{i j} \gets -w_{i j} e_{i j} / r_{i j}$.
    \item Compute $P\in\R^{I\times J}$ where $p_{i j} \gets 2 w_{i j} q_{i j} / \mu_{i j}$.
    \item $\tt{dgradS} \gets Q V D$
    \item $\tt{dF} \gets \tt{dgradS} - P V D$
    \item $\tt{dS} \gets (-\tt{invFs} \odot \tt{dF} \odot \tt{invFs} \odot \tt{gradS}) + (\tt{invFs}\odot \tt{dgradS})$
    \item $\tt{varSfromU} \gets \mathrm{rowsums}(\tt{dS} \odot \mathrm{reshape}(\tt{varU},M,I)^\T \odot \tt{dS})$
    \item For $i = 1,\ldots,I$,
        \begin{enumerate}[label=\textup{(\alph*)},noitemsep,topsep=0pt]
        \item $\tt{dgradS} \gets Q_{i\bc} (D U_{i \bc})^\T$
        \item $\tt{dF} \gets \tt{dgradS} - P_{i\bc} (D U_{i \bc})^\T$
        \item $\tt{dS} \gets -\tt{invFs}_i \cdot \tt{dF} \cdot \tt{invFs}_i \cdot \tt{gradS}_i + \tt{invFs}_i \cdot \tt{dgradS}$
        \item $\tt{varSfromV}_i \gets \tt{varV}^\T \times \mathrm{vec}((\tt{dS} \odot \tt{dS})^\T)$
        \end{enumerate}
    \item $\tt{varSfromV} \gets (\tt{varSfromV}_1,\ldots,\tt{varSfromV}_I)^\T$
    \end{enumerate}
Next, $\tt{varSfromB}$ and $\tt{varSfromA}$ are computed in exactly the same way as $\tt{varSfromU}$ and $\tt{varSfromV}$, respectively,
but with $X$, $Z$, $\Id$, $\tt{varB}$, and $\tt{varA}$ in place of $U$, $V$, $D$, $\tt{varU}$, and $\tt{varV}$, respectively.

\bigskip
\noindent\textbf{Propagate uncertainty from $(A,B,U,V)$ to $T$.}
    \begin{enumerate}[label=\textup{(\arabic*)},noitemsep,topsep=0pt]
    \item We compute $\tt{varTfromV}$ and $\tt{varTfromU}$ in exactly the same way as $\tt{varSfromU}$ and $\tt{varSfromV}$, respectively, but with
    $Y^\T$, $\mu^\T$, $W^\T$, $E^\T$, $r^\T$, $V$, $U$, $\tt{gradT}$, $\tt{invFt}$, $\tt{varV}$, and $\tt{varU}$ in place of
    $Y$, $\mu$, $W$, $E$, $r$, $U$, $V$, $\tt{gradS}$, $\tt{invFs}$, $\tt{varU}$, and $\tt{varV}$, respectively.
    \item We compute $\tt{varTfromA}$ and $\tt{varTfromB}$ in exactly the same way as $\tt{varSfromU}$ and $\tt{varSfromV}$, respectively, but with
    $Y^\T$, $\mu^\T$, $W^\T$, $E^\T$, $r^\T$, $Z$, $X$, $\Id$, $\tt{gradT}$, $\tt{invFt}$, $\tt{varA}$, and $\tt{varB}$ in place of
    $Y$, $\mu$, $W$, $E$, $r$, $U$, $V$, $D$, $\tt{gradS}$, $\tt{invFs}$, $\tt{varU}$, and $\tt{varV}$, respectively.
    \end{enumerate}
    
\bigskip
\noindent\textbf{Compute approximate standard errors.}
    \begin{enumerate}[label=\textup{(\arabic*)},noitemsep,topsep=0pt]
    \item $\hat{\mathrm{se}}_A \gets \mathrm{reshape}(\mathrm{sqrt}(\tt{varA}), K,J)^\T$
    \item $\hat{\mathrm{se}}_B \gets \mathrm{reshape}(\mathrm{sqrt}(\tt{varB}), L,I)^\T$
    \item $\hat{\mathrm{se}}_C \gets \mathrm{reshape}(\mathrm{sqrt}(\tt{varC}), K,L)$
    \item $\hat{\mathrm{se}}_U \gets \mathrm{reshape}(\mathrm{sqrt}(\tt{varU}), M,I)^\T$
    \item $\hat{\mathrm{se}}_V \gets \mathrm{reshape}(\mathrm{sqrt}(\tt{varV}), M,J)^\T$
    \item $\hat{\mathrm{se}}_S \gets \mathrm{sqrt}(\tt{invFs} + \tt{varSfromA} + \tt{varSfromB} + \tt{varSfromU} + \tt{varSfromV})$
    \item $\hat{\mathrm{se}}_T \gets \mathrm{sqrt}(\tt{invFt} + \tt{varTfromA} + \tt{varTfromB} + \tt{varTfromU} + \tt{varTfromV})$
    \end{enumerate}
We do not attempt to provide standard errors for $D$, since it seems difficult to estimate $D$ without significant bias.
Note that here, we reshape the vectorized standard errors to matrices having the same dimensions as the corresponding components,
for instance, $\hat{\mathrm{se}}_A$ has the same dimensions as $A$, namely $J\times K$.

\section{Priors for regularization}
\label{section:priors}

We place independent normal priors on all the entries of $A$, $B$, $C$, $D$, $U$, and $V$, 
and in the NB-GBM, on the entries of $S$ and $T$ as well.
Specifically, the prior is $\pi(A,B,C,D,U,V,S,T) = \pi_a(A)\pi_b(B)\pi_c(C)\pi_d(D)\pi_u(U)\pi_v(V)\pi_s(S)\pi_t(T)$ where
\begin{align}
\label{equation:priors}
\pi_a(A) &= {\textstyle\prod_{j,k}\;}\N(a_{j k}\mid 0,\lambda_a^{-1})       \qquad & \pi_u(U) &= {\textstyle\prod_{i,m}\;}\N(u_{i m}\mid 0,\lambda_u^{-1}) \notag\\
\pi_b(B) &= {\textstyle\prod_{i,l}\;}\N(b_{i \ell}\mid 0,\lambda_b^{-1})    \qquad & \pi_v(V) &= {\textstyle\prod_{j,m}\;}\N(v_{j m}\mid 0,\lambda_v^{-1}) \\
\pi_c(C) &= {\textstyle\prod_{k,\ell}\;}\N(c_{k \ell}\mid 0,\lambda_c^{-1}) \qquad & \pi_s(S) &= {\textstyle\prod_{i}\;}\N(s_{i}\mid m_s,\lambda_s^{-1})\notag\\
\pi_d(D) &= {\textstyle\prod_{m}\;}\N(d_{m m}\mid 0,\lambda_d^{-1})         \qquad & \pi_t(T) &= {\textstyle\prod_{j}\;}\N(t_{j}\mid m_t,\lambda_t^{-1}).\notag
\end{align}
Thus, the log-prior is
\begin{align}
\label{equation:log-prior}
\begin{split}
\log\pi &= \mathrm{const}
- \tfrac{1}{2}\lambda_a \sum_{j,k} a_{j k}^2
- \tfrac{1}{2}\lambda_b \sum_{i,\ell} b_{i \ell}^2
- \tfrac{1}{2}\lambda_c \sum_{k,\ell} c_{k \ell}^2
- \tfrac{1}{2}\lambda_d \sum_{m} d_{m m}^2 \\
&~~~~ - \tfrac{1}{2}\lambda_u \sum_{i,m} u_{i m}^2
- \tfrac{1}{2}\lambda_v \sum_{j,m} v_{j m}^2
- \tfrac{1}{2}\lambda_s \sum_{i} (s_i - m_s)^2
- \tfrac{1}{2}\lambda_t \sum_{j} (t_j - m_t)^2.
\end{split}
\end{align}
For the prior parameters, we use the following default settings:
$\lambda_a = \lambda_b = \lambda_c = \lambda_d = \lambda_u = \lambda_v = 1$,
$m_s = m_t = 0$, and $\lambda_s = \lambda_t = 1$.
These defaults are fairly generally applicable 
since they are acting on coefficients that are on the same scale in terms of units, 
due to the fact that we standardize the covariates to have zero mean and unit variance,
that is, $\frac{1}{I}\sum_{i=1}^I x_{i k} = 0$ and $\frac{1}{I}\sum_{i=1}^I x_{i k}^2 = 1$ for all $k\geq 2$
and $\frac{1}{J}\sum_{j=1}^J z_{j \ell} = 0$ and $\frac{1}{J}\sum_{j=1}^J z_{j \ell}^2 = 1$ for all $\ell\geq 2$.
However, specific applications may call for departures from these defaults.

For the updates to $G = U D$ and $H = V D$ in the GBM estimation algorithm (Section~\ref{section:estimation-details}),
we use the priors on $G$ and $H$ induced by the priors on $U$ and $V$, given $D$.
First consider $G$.
For any fixed $D$, the induced prior on $g_{i m} = u_{i m} d_{m m}$ is $\pi(g_{i m}) = \N(g_{i m} \mid 0, d_{m m}^2 / \lambda_u)$.
Thus, given $D$, the prior on each row of $G$ is $G_{i\bc} \sim \N(0,\Lambda^{-1})$ where $\Lambda = (D^2 / \lambda_u)^{-1}$.
The gradient and Hessian of the log-prior on $G_{i\bc}$ are therefore $-\Lambda G_{i \bc}$ and $-\Lambda$, respectively.
Hence, with this prior, the regularized Fisher scoring approach (Equation~\ref{equation:generic-fisher-update})
yields the $G$ update formulas used in the algorithm (Section~\ref{section:estimation-details}).
The $H$ update is similar, except that the induced prior is $H_{j\bc} \sim \N(0,\Lambda^{-1})$ where $\Lambda = (D^2 / \lambda_v)^{-1}$.
It seems reasonable to hold $D$ fixed when computing the induced priors on $G$ and $H$, rather than integrating it out, 
since $D$ tends to be more accurately estimated than $U$ or $V$ (in terms of relative MSE), presumably due to the fact that $D$ has only $M$ nonzero entries, 
each of which is informed by all of the data; see Figure~\ref{figure:consistency-all} for an empirical example.


\section{Proofs}
\label{section:proofs}

\subsection{Identifiability and interpretation}

\begin{proof}[Proof of Theorem~\ref{theorem:identifiability}]
Left-multiplying both sides of Equation~\ref{equation:identifiability} by $X^\T$, we have 
\begin{equation}
\label{equation:identifiability-1}
X^\T X A_1^\T + X^\T X C_1 Z^\T = X^\T X A_2^\T + X^\T X C_2 Z^\T
\end{equation}
by Condition~\ref{condition:orthogonal}.  Since $X^\T X$ is invertible, this implies 
\begin{equation}
\label{equation:identifiability-2}
A_1^\T + C_1 Z^\T = A_2^\T + C_2 Z^\T.
\end{equation}
Right-multiplying Equation~\ref{equation:identifiability-2} by $Z$, we have $C_1 Z^\T Z = C_2 Z^\T Z$ by Condition~\ref{condition:orthogonal}.
Since $Z^\T Z$ is invertible, this implies that $C_1 = C_2$.
Plugging $C_1 = C_2$ into Equation~\ref{equation:identifiability-2} yields $A_1 = A_2$.
Plugging $A_1 = A_2$ and $C_1 = C_2$ into Equation~\ref{equation:identifiability}, we have
\begin{equation}
\label{equation:identifiability-3}
B_1 Z^\T + U_1 D_1 V_1^\T = B_2 Z^\T + U_2 D_2 V_2^\T.
\end{equation}
Right-multiplying Equation~\ref{equation:identifiability-3} by $Z$, using Condition~\ref{condition:orthogonal}, and using the fact that $Z^\T Z$ is invertible, we obtain $B_1 = B_2$.
This implies that $U_1 D_1 V_1^\T = U_2 D_2 V_2^\T$.
By the uniqueness properties of the singular value decomposition, Conditions~\ref{condition:orthonormal} and \ref{condition:diagonal}
imply that $D_1 = D_2$, $U_1 = U_2 \mathsf{S}$, and $V_1^\T = \mathsf{S} V_2^\T$ for a diagonal matrix $\mathsf{S}$ of the form $\mathsf{S} = \Diag(\pm 1,\ldots,\pm 1)$ \citep{blum2020foundations}.
By Condition~\ref{condition:sign}, $\mathsf{S} = \Id$.  Therefore, $U_1 = U_2$ and $V_1 = V_2$.
This proves that $A$, $B$, $C$, $D$, $U$, and $V$ are uniquely determined by $\E(\bm{Y})$ for any given $X$, $Z$, $M$.
\end{proof}

\begin{proof}[Proof of Theorem~\ref{theorem:interpretation}]
First, $\sum_{j = 1}^J a_{j k} = 0$ follows from the fact that $z_{j 1} = 1$ for all $j$ by Condition~\ref{condition:interpretation}(a)
and $Z^\T A = 0$ by Condition~\ref{condition:orthogonal}.
Likewise, $\sum_{i = 1}^I b_{i \ell} = 0$ follows from $x_{i 1} = 1$ and $X^\T B = 0$.
In the same way, $\sum_{i = 1}^I u_{i m} = 0$ and $\sum_{j = 1}^J v_{j m} = 0$
since $x_{i 1} = 1$, $z_{j 1} = 1$, $X^\T U = 0$, and $Z^\T V = 0$.

When Condition~\ref{condition:interpretation}(a) holds, we can rearrange Equation~\ref{equation:model-univariate} as
\begin{align}
\label{equation:interpretation}
g(\mu_{i j}) &= 
c_{1 1} + a_{j 1} + b_{i 1} + \sum_{k = 2}^K (c_{k 1} + a_{j k}) x_{i k} + \sum_{\ell = 2}^L (c_{1 \ell} + b_{i \ell}) z_{j \ell} \notag \\
&~~~~ + \sum_{k = 2}^K \sum_{\ell = 2}^L c_{k \ell} x_{i k} z_{j \ell} + \sum_{m = 1}^M u_{i m} d_{m m} v_{j m}. 
\end{align}
Averaging Equation~\ref{equation:interpretation} over all $i$, and using these sum-to-zero properties
(specifically, using that $\sum_{i = 1}^I x_{i k} = 0$ for $k\geq 2$, $\sum_{i = 1}^I b_{i \ell} = 0$, and $\sum_{i = 1}^I u_{i m} = 0$),
\begin{align}
\label{equation:interpretation-feature-mean}
\frac{1}{I}\sum_{i = 1}^I g(\mu_{i j}) &= c_{1 1} + a_{j 1} + ({\textstyle \frac{1}{I}\sum_i}\, b_{i 1}) + \sum_{k = 2}^K (c_{k 1} + a_{j k}) ({\textstyle \frac{1}{I}\sum_i}\, x_{i k}) + \sum_{\ell = 2}^L \big(c_{1 \ell} + ({\textstyle \frac{1}{I}\sum_i}\, b_{i \ell})\big) z_{j \ell}  \notag\\
&~~~~ + \sum_{k = 2}^K \sum_{\ell = 2}^L c_{k \ell} ({\textstyle \frac{1}{I}\sum_i}\, x_{i k}) z_{j \ell} + \sum_{m = 1}^M ({\textstyle \frac{1}{I}\sum_i}\, u_{i m}) d_{m m} v_{j m} \notag\\
&= c_{1 1} + a_{j 1} + \sum_{\ell = 2}^L c_{1 \ell} z_{j \ell}.
\end{align}
In the same way, averaging Equation~\ref{equation:interpretation} over all $j$ 
(and using that $\sum_{j = 1}^J z_{j \ell} = 0$ for $\ell\geq 2$, $\sum_{j = 1}^J a_{j k} = 0$, and $\sum_{j = 1}^J v_{j m} = 0$),
we have
$$ \frac{1}{J}\sum_{j = 1}^J g(\mu_{i j}) = c_{1 1} + b_{i 1} + \sum_{k = 2}^K c_{k 1} x_{i k}. $$
Finally, averaging Equation~\ref{equation:interpretation-feature-mean} over all $j$, we have
$\frac{1}{I J}\sum_{i = 1}^I \sum_{j = 1}^J g(\mu_{i j}) = c_{1 1}$.
\end{proof}

\begin{proof}[Proof of Theorem~\ref{theorem:sum-of-squares}]
For any $Q\in\R^{m\times n}$, we have $\mathrm{SS}(Q) = \mathrm{tr}(Q^\T Q)$, where $\mathrm{tr}(\cdot)$ denotes the trace.
Define $Q = X A^\T + B Z^\T + X C Z^\T + U D V^\T$.  By using $X^\T B = 0$, $Z^\T A = 0$, $X^\T U = 0$, and $Z^\T V = 0$, we have that
\begin{align*} Q^\T Q &= (A X^\T + Z B^\T + Z C^\T X^\T + V D U^\T) (X A^\T + B Z^\T + X C Z^\T + U D V^\T) \\
&= A X^\T X A + A X^\T X C Z^\T + Z B^\T B Z^\T + Z B^\T U D V^\T + Z C^\T X^\T X A^\T \\
&~~~ + Z C^\T X^\T X C Z^\T + V D U^\T B Z^\T + V D U^\T U D V^\T.
\end{align*}
By the cyclic property of the trace, 
\begin{align*}
\mathrm{tr}(A X^\T X C Z^\T) &= \mathrm{tr}(X C Z^\T A X^\T) = 0, \\
\mathrm{tr}(Z B^\T U D V^\T) &= \mathrm{tr}(B^\T U D V^\T Z) = 0, \\
\mathrm{tr}(Z C^\T X^\T X A^\T) &= \mathrm{tr}(X A^\T Z C^\T X^\T) = 0, \\
\mathrm{tr}(V D U^\T B Z^\T) &= \mathrm{tr}(B Z^\T V D U^\T) = 0.
\end{align*}
Therefore, by the linearity of the trace,
\begin{align*}
\mathrm{SS}(Q) &= \mathrm{tr}(Q^\T Q) = \mathrm{tr}(A X^\T X A^\T) + \mathrm{tr}(Z B^\T B Z^\T) + \mathrm{tr}(Z C^\T X^\T X C Z^\T) + \mathrm{tr}(V D U^\T U D V^\T)  \\
&= \mathrm{SS}(X A^\T) + \mathrm{SS}(B Z^\T) + \mathrm{SS}(X C Z^\T) + \mathrm{SS}(U D V^\T). 
\end{align*}
\end{proof}

\subsection{Likelihood-preserving projections}

\begin{proof}[Proof of Theorem~\ref{theorem:projections}]
(1.) For the projection of $\tilde{A}$, plugging in the definitions of $A_1$ and $C_1$, we have
$$ X A_1^\T + X C_1 Z^\T = X \tilde{A}^\T - X(Z^\ps \tilde{A})^\T Z^\T + X C Z^\T + X (Z^\ps \tilde{A})^\T Z^\T = X \tilde{A}^\T + X C Z^\T, $$
and therefore, $\eta(A_1,B,C_1,D,U,V) = \eta(\tilde{A},B,C,D,U,V)$.
To see that Condition~\ref{condition:identifiability} is satisfied, first note that 
$Z^\T (\Id - Z Z^\ps) = Z^\T - Z^\T Z (Z^\T Z)^{-1} Z^\T = 0$, and therefore 
$$ Z^\T A_1 = Z^\T (\tilde{A} - Z Z^\ps \tilde{A}) = Z^\T (\Id - Z Z^\ps) \tilde{A} = 0. $$
(2.) Similarly, for the projection of $\tilde{B}$, we have $B_1 Z^\T + X C_1 Z^\T = \tilde{B} Z^\T + X C Z^\T$ and $X^\T B_1 = 0$.
(3.) For the projection of $\tilde{G}$, we have
\begin{align*}
X A_1^\T + X C_1 Z^\T + U_1 D_1 V_1^\T &= X A_0^\T - X(Z^\ps A_0)^\T Z^\T + X C Z^\T + X(Z^\ps A_0)^\T Z^\T + G_0 V^\T \\
&= X A^\T + X(X^\ps \tilde{G}) V^\T + X C Z^\T + \tilde{G} V^\T - X(X^\ps \tilde{G})V^\T \\
&= X A^\T + X C Z^\T + \tilde{G} \Id V^\T,
\end{align*}
and thus, $\eta(A_1,B,C_1,D_1,U_1,V_1) = \eta(A,B,C,\Id,\tilde{G},V)$. 
To check that Condition~\ref{condition:identifiability} is satisfied, first observe that 
$Z^\T A_1 = Z^\T (\Id - Z Z^\ps) A_0 = 0$ and $X^\T G_0 = X^\T (\Id - X X^\ps) \tilde{G} = 0$.
Hence,
$$ 0 \stackrel{\text{(a)}}{=} X^\T G_0 V^\T V_1 D_1^{-1} \stackrel{\text{(b)}}{=} X^\T U_1 D_1 V_1^\T V_1 D_1^{-1} \stackrel{\text{(c)}}{=} X^\T U_1 $$
where we have used (a) $X^\T G_0 = 0$, (b) $G_0 V^\T = U_1 D_1 V_1^\T$, 
and (c)  $V_1^\T V_1 = \Id$ and $D_1 D_1^{-1} = \Id$. Likewise, since $V^\T Z = 0$ by assumption,
$$ 0 = D_1^{-1} U_1^\T G_0 V^\T Z = D_1^{-1} U_1^\T U_1 D_1 V_1^\T Z = V_1^\T Z $$
since $U_1^\T U_1 = \Id$ and $D_1^{-1} D_1 = \Id$.

(4.) For the projection of $\tilde{H}$, in an altogether similar way, we have 
$$ B_1 Z^\T + X C_1 Z^\T + U_1 D_1 V_1^\T = B Z^\T + X C Z^\T + U \Id \tilde{H}^\T. $$
Further, $X^\T B_1 = X^\T (\Id - X X^\ps) B_0 = 0$ and $Z^\T H_0 = 0$, thus
\begin{align*}
0 &= Z^\T H_0 U^\T U_1 D_1^{-1} = Z^\T V_1 D_1 U_1^\T U_1 D_1^{-1} = Z^\T V_1 \\
0 &= X^\T U H_0^\T V_1 D_1^{-1} = X^\T U_1 D_1 V_1^\T V_1 D_1^{-1} = X^\T U_1.
\end{align*}
\end{proof}

Computationally, it is highly advantageous to structure the calculation of the projections in Theorem~\ref{theorem:projections} as follows.
First, one can precompute the pseudoinverses $X^\ps$ and $Z^\ps$ since $X$ and $Z$ are fixed throughout the algorithm.
In the updates to $A$ (or $B$), it is advantageous to first compute $Z^\ps \tilde{A}$ (or $X^\ps \tilde{B}$, respectively)
in order to avoid explicitly computing and storing $X X^\ps \in \R^{I\times I}$ and $Z Z^\ps \in \R^{J\times J}$.
Likewise, in the projection of $\tilde{G}$ (or $\tilde{H}$), first compute $X^\ps \tilde{G}$ and $Z^\ps A_0$ (or $Z^\ps \tilde{H}$ and $X^\ps B_0$, respectively).
We use this approach in the step-by-step algorithm provided in Section~\ref{section:estimation-details}.

The interpretation of the operations in Theorem~\ref{theorem:projections} is as follows.
For $\tilde{A}$, the idea is that $A_1 = (\Id - Z Z^\ps) \tilde{A}$ is an orthogonal projection of the columns of $\tilde{A}$ onto the nullspace of $Z^\T$,
and $C_1$ is a shifted version of $C$ to compensate for the shift from $\tilde{A}$ to $A_1$.  Likewise for $\tilde{B}$, but with $X$ instead of $Z$.
For $\tilde{G}$, the idea is that (a) $G_0$ is a projection of $\tilde{G}$ onto the nullspace of $X^\T$, (b) the SVD enforces the orthonormality, ordering, and sign constraints on $U_1$, $D_1$, and $V_1$ while maintaining $X^\T U_1 = 0$ and $Z^\T V_1 = 0$, (c) $A_0$ compensates for the shift from $\tilde{G}$ to $G_0$,
(d) $A_1$ projects $A_0$ onto the nullspace of $Z^\T$, and (e) $C_1$ compensates for the shift from $A_0$ to $A_1$.
For $\tilde{H}$, the interpretation is similar.

\subsection{Estimation time complexity}
\label{section:estimation-time-derivations}

In this section, we justify the expressions in Section~\ref{section:theory}
giving the time complexity of the estimation algorithm
as a function of $I$, $J$, $K$, $L$, and $M$, assuming $\max\{K^2,L^2,M\} \leq \min\{I,J\}$ (Equation~\ref{equation:cost-dimension-bound}).
The outline of the estimation algorithm is in Section~\ref{section:estimation}, and the step-by-step details are in Section~\ref{section:estimation-details}.
Denote $x \wedge y = \min\{x,y\}$ and $x \vee y = \max\{x,y\}$.
For the updates to each of $A$, $B$, $C$, $D$, $U D$, $V D$, $S$, and $T$, we report the computation time complexity after $\eta$, $\mu$, $W$, and $E$ have been recomputed.

\textbf{Cost of preprocessing and initialization.}
Precomputing the pseudoinverses $X^\ps$ and $Z^\ps$ takes $O(I K^2)$ and $O(J L^2)$ time, respectively;
thus, both are $O(I J)$ by Equation~\ref{equation:cost-dimension-bound}.
Computing $C \gets X^\ps \check{Y} (Z^\ps)^\T$ takes $O(I J (K \wedge L))$ time,
$A \gets (X^\ps \check{Y} - C Z^\T)^\T$ takes $O(I J K)$ time,
and $B \gets \check{Y} (Z^\ps)^\T - X C$ takes $O(I J L)$ time.
Computing $D$, $U$, and $V$ takes $O(I J M)$ time, since
the truncated SVD of rank $M$ for an $I\times J$ matrix can be done in $O(I J M)$ time \citep{halko2011finding}.
Finally, each update to $S$ and $T$ takes $O(I J)$ time (see below).
Thus, overall, preprocessing and initialization takes $O(I J (K \vee L \vee M))$ time.

\textbf{Cost of computing $\eta$, $\mu$, $W$, and $E$.}
Computing $\eta = X A^\T + B Z^\T + X C Z^\T + U D V^\T$ takes $O(I J (K \vee L \vee M))$ time, 
since $X A^\T$, $B Z^\T$, $X C Z^\T$, and $U D V^\T$ take $O(I J K)$, $O(I J L)$, $O(I J (K \wedge L))$, and $O(I J M)$ time, respectively.
Computing $\mu$, $W$, and $E$ takes $O(I J)$ time given $\eta$.

\textbf{Cost of updating $A$.} For each $j$, computing the Fisher scoring step takes $O(I K^2)$ time, so altogether the $J$ steps take $O(I J K^2)$ time.
For the projection, we compute $Q \gets Z^\ps A$ and $Z Q$, which takes $O(J K L)$ time.  
By Equation~\ref{equation:cost-dimension-bound}, we have $L \leq I$, so the cost of computing the projection 
can be absorbed into the cost of the Fisher scoring steps.

\textbf{Cost of updating $B$.} By symmetry, this takes $O(I J L^2)$ time.

\textbf{Cost of updating $C$.} 
Computing the Fisher information matrix $F$ takes $O(I J K^2 + J K^2 L^2)$ time, which is $O(I J K^2)$ 
since $L^2 \leq I$ by Equation~\ref{equation:cost-dimension-bound}.
Inverting $F + \lambda_c \Id$ takes $O((K L)^3) = O(I J K L)$ time (using Equation~\ref{equation:cost-dimension-bound}), 
and computing $X^\T E Z$ takes $O(I J K)$ time,
so the update to $C$ can be done in $O(I J (K^2 \vee L^2)$ time.

\textbf{Cost of updating $D$.}
Computing the Fisher information matrix $F$ takes $O(I J M^2)$ time, inverting $F + \lambda_d \Id$ takes $O(M^3)$ time,
and computing $\diag(U^\T E V)$ takes $O(I J M)$ time.
Thus, the update to $D$ takes $O(I J M^2)$ time.

\textbf{Cost of updating $G = U D$.} By comparison with the $B$ update, the Fisher scoring steps cost $O(I J M^2)$ time.
The projection steps (except for the SVD) take $O(I K M + J K M + J K L)$ time,
and by Equation~\ref{equation:cost-dimension-bound}, this is $O(I J M)$.
Computing the truncated SVD of rank $M$ for an $I\times J$ matrix can be done in $O(I J M)$ time \citep{halko2011finding}.
Therefore, the cost of the projection can be absorbed into the Fisher scoring steps.

\textbf{Cost of updating $H = V D$.} By symmetry, this takes $O(I J M^2)$ time.

\textbf{Cost of updating $S$ and $T$.}
Given $\mu$, updating $S$ and $T$ takes $O(I J)$ time,
since computing $\delta$ and $\delta'$ involves a loop over all $i$ and $j$.

\subsection{Inference time complexity}
\label{section:inference-time-derivations}

Here, we justify the expressions in Section~\ref{section:theory}
giving the time complexity of the inference algorithm,
assuming $\max\{K^2,L^2,M\} \leq \min\{I,J\}$ (Equation~\ref{equation:cost-dimension-bound}) and also assuming $I \geq J$.
The outline of the inference algorithm is in Section~\ref{section:inference-outline}, and the step-by-step details are in Section~\ref{section:inference-details}.
Denote $x \wedge y = \min\{x,y\}$ and $x \vee y = \max\{x,y\}$.

\textbf{Cost of preprocessing.}
Computing $\eta$, $\mu$, $W$, and $E$ takes $O(I J (K\vee L \vee M))$ time. 
Computing $\tt{gradA}$, $\tt{gradB}$, and $\tt{gradC}$ takes $O(I J K)$, $O(I J L)$, and $O(I J K)$ time, respectively.
All of the other preprocessing steps take $O(I J)$ time.

\textbf{Cost of computing conditional uncertainty for each component.}
Computing $\tt{invFa}$, $\tt{invFb}$, and $\tt{invFc}$ take $O(I J K^2)$, $O(I J L^2)$, and $O(I J (K^2 \vee L^2))$ time, respectively.
Both $\tt{invFu}$ and $\tt{invFv}$ take $O(I J M^2)$ time, and $\tt{invFs}$ and $\tt{invFt}$ take $O(I J)$ time.
Thus, overall, this part is $O(I J (K^2 \vee L^2 \vee M^2))$.

\textbf{Cost of computing constraint Jacobians for $U$ and $V$.}
Computing $\tt{Ju}$ and $\tt{Jv}$ take $O(I(K M^2 + M^3))$ and $O(J (L M^2 + M^3))$ time, respectively.

\textbf{Cost of computing joint uncertainty in $(U,V)$ accounting for constraints.}
The most expensive steps are computing $\tt{FuvFFuv} \gets \tt{Fuv}^\T \times \tt{FFuv}$,
$$\tt{B} \gets \displaystyle \begin{bmatrix}\tt{A} & \tt{Jv}^\T \,\\ \tt{Jv} & 0 \,\end{bmatrix}^{-1},$$
and $\tt{g} \gets \mathrm{colsums}(\tt{FuvD} \odot (\tt{C} \times \tt{FuvD}))$, which take $O(I J^2 M^3)$, $O(J^3 M^3)$, and $O(I J^2 M^3)$ time, respectively.
Since we assume $I \geq J$, these are all $O(I J^2 M^3)$.
It is tedious but straightforward to check that 
all of the other steps in this part take less than $O(I J^2 M^3)$ time, 
assuming $I \geq J$ and $\max\{K^2,L^2,M\} \leq \min\{I,J\}$ (Equation~\ref{equation:cost-dimension-bound}).

\textbf{Cost of propagating uncertainty from $U$ and $V$ to $A$ and $B$.}
Computing $\tt{varAfromU}$ and $\tt{varAfromV}$ take $O(I J K (K\vee M))$ and $O(I J K^2 M)$ time, respectively; combined, this is $O(I J K^2 M)$.
By symmetry, $\tt{varBfromU}$ and $\tt{varBfromV}$ take $O(I J L^2 M)$ time.

\textbf{Cost of propagating uncertainty from $A$ and $B$ to $C$.}
First, consider computing $\tt{varCfromA}$.
Each step in the loop over $j$ and $k$ takes $O(I K^2)$ time (since $L^2 \leq I$), thus, computing $\tt{dC}$ takes $O(I J K^3)$ time altogether.
Computing $\tt{invFdC}$ takes $O(J K^3 L)$ time, and the last step is $O(J K^2 L)$.
Thus, overall, $\tt{varCfromA}$ takes $O(I J K^3)$ time.
By symmetry, $\tt{varCfromB}$ takes $O(I J L^3)$ time.

\textbf{Cost of propagating uncertainty from $(A,B,U,V)$ to $S$.}
Computing $\tt{varSfromA}$, $\tt{varSfromB}$, $\tt{varSfromU}$, and $\tt{varSfromV}$ take $O(I J K)$, $O(I J L)$, $O(I J M)$, and $O(I J M)$ time, respectively.
Thus, overall this takes $O(I J (K \vee L \vee M))$ time.

\textbf{Cost of propagating uncertainty from $(A,B,U,V)$ to $T$.}
By symmetry, $\tt{varTfromA}$, $\tt{varTfromB}$, $\tt{varTfromU}$, and $\tt{varTfromV}$ takes $O(I J (K \vee L \vee M))$ time overall.

\textbf{Cost of computing approximate standard errors.}
Given the approximate element-wise variances, this is only takes as much time as the dimension of the each of the parameter matrices/vectors; 
namely, $O(J K)$, $O(I L)$, $O(K L)$, $O(I M)$, $O(J M)$, $O(I)$, and $O(J)$ for each of $A$, $B$, $C$, $U$, $V$, $S$, and $T$, respectively.
Using Equation~\ref{equation:cost-dimension-bound} to easily upper bound each of these shows that, overall, this is $O(I J)$.


\end{document}